\newcommand{\bom}{\bm{\omega}}
\newcommand{\diag}{\mathop{\mathrm{diag}}}
\newcommand{\textfrac}[2]{{\textstyle \frac{#1}{#2}}}
\theoremstyle{plain}
\newtheorem{theorem}{Theorem}[section]
\newtheorem{lemma}[theorem]{Lemma}
\newtheorem{proposition}[theorem]{Proposition}
\newtheorem{conjecture}[theorem]{Conjecture}
\theoremstyle{remark}
\newtheorem{remark}{Remark}[section]
\def\pmb#1{\setbox0=\hbox{$#1$}%
  \kern-.025em\copy0\kern-\wd0
  \kern.05em\copy0\kern-\wd0
  \kern-.025em\raise.0433em\box0}
\def\pmbs#1{\setbox0=\hbox{$\scriptstyle #1$}%
  \kern-.0175em\copy0\kern-\wd0
  \kern.035em\copy0\kern-\wd0
  \kern-.0175em\raise.0303em\box0}
\newcommand{\sfrac}[2]{{\textstyle{\frac{#1}{#2}}}}
\tikzset{
decoration={markings,mark=at position 0.67 with {\arrow[thick,color=gray]{latex}}}
}
\title{\huge{\textbf{Bifurcations and Chaos in Ho\v{r}ava-Lifshitz Cosmology}}}
\begin{document}

\author{
 \\
{~}\\
Juliette Hell*, Phillipo Lappicy** and Claes Uggla***\\
}

\date{}
\maketitle
\thispagestyle{empty}

\vfill

$\ast$\\
Institut f\"ur Mathematik, Freie Universit\"at Berlin\\
Arnimallee 3 - 14195 Berlin, Germany\\
{\tt blanca@math.fu-berlin.de}\\
$\ast \ast$\\
Instituto Superior T\'ecnico, Universidade de Lisboa\\
Av. Rovisco Pais, 1049-001 Lisboa, Portugal\\
$\ast \ast$\\
ICMC, Universidade de S\~ao Paulo\\
Av. trabalhador s\~ao-carlense, S\~ao Carlos, Brazil\\
{\tt lappicy@hotmail.com}\\
$\ast \ast \ast$\\
Department of Physics, Karlstad University\\
S-65188 Karlstad, Sweden\\
{\tt claes.uggla@kau.se}\\

\pagebreak
\setcounter{page}{1}

\begin{abstract}
The nature of generic spacelike singularities in general relativity
is connected with first principles, notably Lorentzian causal structure,
scale invariance and general covariance. To bring a new perspective on how
these principles affect generic spacelike singularities, we consider the
initial singularity in spatially homogeneous Bianchi type VIII and IX vacuum
models in Ho\v{r}ava-Lifshitz gravity, where relativistic first principles
are replaced with anisotropic scalings of Lifshitz type. Within this class
of models, General Relativity is shown to be a bifurcation where chaos
becomes generic. To describe the chaotic features of generic singularities
in Ho\v{r}ava-Lifshitz cosmology, we introduce symbolic dynamics within
Cantor sets and iterated function systems.

\end{abstract}


\section{Introduction}

The last couple of decades have seen considerable progress in our
understanding of generic spacelike singularities in General Relativity (GR).
In particular it has been shown that there are connections between the nature
of such singularities and with three of the foundational first principles of
GR: (i) Lorentzian causal structure, (ii) (conformal) scale invariance,
and (iii) general covariance, i.e., spacetime diffeomorphism invariance,
see e.g. the reviews~\cite{ugg13a,ugg13b} and references therein.

Heuristic arguments by
Belinski\v{\i}, Khalatnikov and Lifshitz (BKL)~\cite{lk63,bkl70,bkl82}
suggest that generic spacelike singularities in GR are \emph{vacuum dominated}
for a broad range of matter sources, i.e., generically such sources
asymptotically become test fields because gravity asymptotically generates
more gravity than matter. For simplicity we will therefore only
consider vacuum models.

The importance of \emph{Lorentzian causal structure} for the nature of generic
spacelike singularities in GR is connected with the locality conjecture
of BKL~\cite{lk63,bkl70,bkl82}. This conjecture states that the asymptotic evolution
toward a generic spacelike singularity in inhomogeneous cosmology is local,
in the sense that each spatial point evolves toward the singularity
independently of its neighbors as a spatially homogeneous model.
By reformulating the Einstein field equations in GR, using the so-called conformally
Hubble-normalized orthonormal frame dynamical systems approach,
the BKL locality conjecture was made more precise
in~\cite{ugg13a,ugg13b,uggetal03,gar04,andetal05,rohugg05,limetal06,heietal09}.
In this formulation, there exists an invariant `local boundary set'
where the partial differential equations (PDEs) of inhomogeneous
cosmology reduce to the ordinary differential equations (ODEs) of
spatially homogeneous cosmology at each spatial point. Moreover, on
the local boundary set, there exists an invariant subset identical
to the attractor of the corresponding ODEs of spatially homogeneous cosmology,
for each spatial point. In this approach, loosely speaking, the BKL locality
conjecture amounts to that the invariant subset on the local boundary
corresponding to the ODE attractor, for each spatial point, form a local
PDE attractor, which describes the detailed nature of generic spacelike
singularities in inhomogeneous cosmology.
Presumably, a necessary condition for such asymptotic local evolution
is asymptotic silence, i.e., that the extreme gravity in the
vicinity of a generic spacelike singularity results in particle
horizons that shrink to zero size toward the
singularity~\cite{ugg13a,ugg13b,uggetal03,rohugg05,limetal06,heietal09}.\footnote{Recent
results indicate that this is not the whole story. There is also a connection
between asymptotic silence, generic spacelike singularities, and infinitely recurring
oscillating inhomogeneous spikes, described by certain inhomogeneous
solutions~\cite{andetal05,lim08,heietal12,heiugg13,lim15,lim21,lim22}.}
If this is the case, the nature of generic spacelike
singularities is connected with asymptotic Lorentzian causal structure induced
by extreme gravity and certain \emph{spatially homogeneous models}.

To describe generic spacelike singularities, it is therefore presumably
essential to understand the properties of spatially homogeneous models,
of which there are two categories: the Bianchi models
and the spherically symmetric Kantowski-Sachs models,
where the latter are too special to be of relevance for
generic singularities. The Bianchi models are divided into class~A and class~B.
In contrast to the general class~B models, the class~A models admit a Hamiltionan
formulation and have a simpler hierarchical structure. We will therefore
henceforth restrict considerations to the class~A Bianchi models.
Because of the BKL locality conjecture, this is further motivated by that
the most general models within this class, the Bianchi type VIII and IX models,
are believed to contain some of the key elements needed to describe generic
spacelike singularities.

The class~A Bianchi models have three-dimensional symmetry
groups, which act simply transitively on the spatially homogeneous slices.
These models thereby admit a symmetry-adapted
spatial (left-invariant) co-frame $\{{\bom}^1,{\bom}^2,{\bom}^3\}$,
such that
\begin{equation}\label{structconst}
d{\bom}^1  =  -{n}_1 \,
{\bom}^2\wedge {\bom}^3\:,\quad d{\bom}^2  =
-{n}_2 \, {\bom}^3\wedge {\bom}^1\:,\quad
d{\bom}^3  = -{n}_3 \, {\bom}^1\wedge
{\bom}^2\:,
\end{equation}
where the structure constants $n_1$, $n_2$, $n_3$ determine the Lie algebras
of the various class~A Bianchi models, defined in Table~\ref{intro:classAmodels},
see also e.g.~\cite{waiell97}:
\begin{table}[H]
\begin{center}
\begin{tabular}{|c|ccc|}
\hline Bianchi type & ${n}_\alpha$ &  ${n}_\beta$ & ${n}_\gamma$ \\ \hline
$\mathrm{IX}$ & $+$& $+$& $+$ \\
$\mathrm{VIII}$ & $-$& $+$& $+$ \\
$\mathrm{VII}_0$ & $0$& $+$ & $+$ \\
$\mathrm{VI}_0$ & $0$ & $-$ & $+$ \\
II & $0$ & $0$ & $+$ \\
I & $0$ & $0$ & $0$ \\\hline
\end{tabular}
\caption{The class~A Bianchi types are characterized by the zeroes and relative
signs of the structure constants $(n_\alpha, n_\beta, n_\gamma)$,
where $(\alpha\beta\gamma)$ is a permutation of
$(123)$. There are equivalent representations associated with
an overall change of sign of the constants,
e.g., another Bianchi type IX representation is $(-,-,-)$. It is also
possible to scale the constants, e.g., in type~IX we can set
${n}_1 = {n}_2 = {n}_3 =1$.} \label{intro:classAmodels}
\end{center}
\end{table}\vspace{-0.5cm}

The class~A Bianchi models form a hierarchical structure, where more special
models are obtained from more general ones by performing Lie
contractions, i.e., by setting structure constants to zero,
which results in the Lie contraction diagram given in Figure~\ref{FIG:hierarchy}.
\begin{figure}[H]\centering
\begin{tikzpicture}
\node (IX) at (-2, 4.5) {\boxed{\text{\footnotesize{Type IX}}}};
\node (VIII) at (2, 4.5) {\boxed{\text{\footnotesize{Type VIII}}}};
\node (VII) at (-2, 3) {\boxed{\text{\footnotesize{Type $\mathrm{VII}_0$}}}};
\node (VI) at (2, 3) {\boxed{\text{\footnotesize{Type $\mathrm{VI}_0$}}}};
\node (II) at (0, 1.5) {\boxed{\text{\footnotesize{Type $\mathrm{II}$}}}};
\node (I) at (0, 0) {\boxed{\text{\footnotesize{Type $\mathrm{I}$}}}};

\draw[->] (IX) -- (VII);\draw[->] (VIII) -- (VI);\draw[->] (VIII) -- (VII);
\draw[->] (VII) -- (II);\draw[->] (VI) -- (II);\draw[->] (II) -- (I);
\end{tikzpicture}
\captionof{figure}{The class~A Bianchi Lie contraction hierarchy.}\label{FIG:hierarchy}
\end{figure}

The field equations of all vacuum GR models, and thus also the class~A Bianchi models,
are conformally \emph{scale-invariant} and thereby admit a scale invariance symmetry.
However, \emph{general covariance} (i.e., diffeomorphism invariance), which also
results in symmetries of the GR vacuum field equations, is broken by the
preferred spatial homogeneous foliations in Bianchi cosmology. The symmetries of
the Einstein vacuum field equations generated by the principle of general covariance
reduce to those generated by the spatial diffeomorphisms that are
compatible with the Bianchi symmetry groups, which are locally characterised by their Lie
algebras defined in Table~\ref{intro:classAmodels}.\footnote{In the present paper,
considerations are spatially local. For an investigation about the role of spatial
topology in a Hamiltonian description of Bianchi models, see~\cite{ashsam91}.}
Furthermore, the symmetry generating spatial diffeomorphisms correspond
to the automorphisms of the Lie algebras, i.e., the linear transformations
of the symmetry adapted spatial frame that leave the associated structure constants
unchanged~\cite{jan79,jan01}.

As discussed in Appendix~\ref{app:dom}, the automorphism groups can be
used to diagonalize the vacuum class~A Bianchi models, which then leaves a diagonal
automorphism group for each model.
As described in Table~\ref{intro:classAmodels} and Figure~\ref{FIG:hierarchy},
the class~A Bianchi types are grouped into a hierarchy defined
by the number of non-zero structure constants: Bianchi types IX and VIII
have three; types $\mathrm{VII}_0$ and $\mathrm{VI}_0$ have two; type II
has one; Bianchi type I has none. Each structure constant that is zero
results in a diagonal automorphism and an associated symmetry, see e.g.,
\cite{jan01,rosetal90a,rosetal90b,heiugg10}, and
references therein. Due to the increasing
number of automorphisms as one goes down in the hierarchy by setting
structure constants to zero (i.e., by performing Lie contractions), a new
symmetry in the Einstein equations appears at each level of the hierarchy.
At the levels below Bianchi type IX and VIII in the class~A Bianchi symmetry
hierarchy, the scale and automorphism groups combine into scale-automorphism
groups, which yields a symmetry hierarchy of the class~A Einstein vacuum field
equations~\cite{rosetal90a,rosetal90b,heiugg10}.

The above hierarchical features are naturally incorporated into the conformally
Hubble-normalized orthonormal frame dynamical systems approach to
Einstein's vacuum field equations. In this approach, each class~A Bianchi
model yields an invariant set of the ODEs, denoted by Bianchi type I, II,
$\mathrm{VI}_0$, $\mathrm{VII}_0$, VIII and IX, respectively.
Moreover, the class~A Bianchi Lie contraction hierarchy results in that
each model in the hierarchy form an invariant boundary set of
the models at the next higher level according to Figure~\ref{FIG:hierarchy}.
Thus the invariant Bianchi type I set, which constitute a circle of fixed points,
the Kasner circle, is the boundary of three physically equivalent invariant Bianchi
type II sets, where each type II set forms a hemisphere filled with heteroclinic orbits
(i.e. solution trajectories) between different points of said circle, see, e.g., \cite{waiell97}.
Apart from these kinematical ramifications, the scale-automorphism symmetry hierarchy also have
dynamical consequences. At the higher levels of the Lie contraction hierarchy the scale and 
scale-automorphism symmetries generate monotone functions, which limit the asymptotic dynamics 
in a hierarchical manner: asymptotically the dynamics toward the initial singularity is pushed in
the state space at the top of the hierarchy (the Bianchi type IX and VIII models)
toward the bottom of the hierarchy, the Bianchi type II and I models, where the two
latter are completely determined by the scale-automorphism symmetries~\cite{heiugg10}.

The scale-automorphism symmetries are complemented by discrete symmetries.
Together these symmetries limit but do not completely determine the
asymptotic dynamics of Bianchi types VIII and IX. Nevertheless, the (past) attractor
in these models is expected to reside on the union of the Bianchi type I and II
boundary sets. Furthermore, the concatenation of heteroclinic type II orbits
yields heteroclinic chains, which are expected to be generically asymptotically
shadowed toward the initial singularity by the type VIII and IX orbits,
see~\cite{bre16,dut19} and references therein. The type II
heteroclinic orbits induce a discrete map that acts on the fixed points of
the Kasner circle, called the Mixmaster map. This map exhibits chaotic properties,
and it is because of this feature GR is said to be chaotic~\cite{khaetal85}.
Note that the above statements are partially supported by several
theorems~\cite{rin00,rin01,heiugg09b,beg10,lieetal10,reitru10,lieetal12,bre16,dut19}.

There thereby exist intricate connections in GR between the nature of generic
spacelike singularities, asymptotic Lorentzian causal structure, spatial homogeneous models,
and hierarchically induced scale and diffeomorphism symmetries. To bring a new
perspective on GR, we therefore ask:
\emph{What happens if the first principles that lead to the
structure of generic spacelike singularities in GR are gradually modified?}

To investigate this question we have to go beyond GR and it is
natural to do so by considering Ho\v{r}ava-Lifshitz (HL) theories.
These theories are based on a preferred foliation of spacetime that
breaks full spacetime diffeomorphism invariance and introduce anisotropic
Lifshitz type scalings between space and time, in analogy with condensed
matter physics~\cite{hor09a,hor09b,muk10}. There are two classes of
HL theories: `projectable' theories for which the lapse only depends
on time, which naturally encompasses spatially homogenous cosmology,
and `nonprojectable' theories with a lapse depending on time
and space, which was shown to result in dynamical inconsistencies
in~\cite{henetal10}.

HL gravity is a gauge theory formulated in terms of a lapse $N$ and a
shift vector $N^i$, which serve as Lagrange multipliers for the constraints
in a Hamiltonian context, and a three-dimensional Riemannian metric $g_{ij}$
on the slices of the preferred foliation. In GR, these objects arise from a 3+1
decomposition of a 4-metric according to,
\begin{equation}\label{genmetric}
\mathbf{g} = -N^2dt\otimes
dt + g_{ij}(dx^i + N^idt)\otimes (dx^j + N^jdt).
\end{equation}
In suitable units and scalings, the dynamics of HL vacuum gravity is governed by the
action
\begin{subequations}
\begin{equation}\label{action}
S = \int N\sqrt{ \det g_{ij}}({\cal T} - {\cal V}) dtd^3x,
\end{equation}
where ${\cal T}$ and ${\cal V}$ are given by
\begin{align}
{\cal T} &= K_{ij}K^{ij} - \lambda (K^k\!_k)^2,\label{kin}\\
{\cal V} &= k_1 R + k_2 R^2 + k_3 R^i\!_jR^j\!_i +
k_4 R^i\!_jC^j\!_i + k_5 C^i\!_jC^j\!_i + k_6 R^3+ \dots\, .\label{calV}
\end{align}
\end{subequations}
Here $K_{ij}$ is the extrinsic curvature, $R$ and $R_{ij}$ are the scalar curvature
and Ricci tensor (of the spatial metric $g_{ij}$), respectively,
$C_{ij}$ is the Cotton-York tensor~\cite{hor09b,henetal10}, 
while the constants $\lambda, k_1,\dots k_6$ are real
parameters. Repeated indices are summed over according to Einstein's
summation convention.\footnote{To study the differences that
arise from imposing spacetime or only spatial diffeomorphism invariance on a
theory, it is illuminating to even go beyond HL theories, as
discussed in~\cite{caretal10}.}

Full spacetime diffeomorphism invariance in GR fixes $\lambda=1$
uniquely and set all parameters of ${\cal V}$ in~\eqref{calV} to
zero, except $k_1=-1$ (i.e., ${\cal V} = -R$), see~\cite{hor09a,hor09b}.
Thus GR is a special case among the HL models.
The introduction of $\lambda$ changes the scaling properties of the field
equations, as does the introduction of additional curvature terms. Since some of
the curvature terms have different scaling properties, sums of such terms in ${\cal V}$
result in that the field equations no longer are scale-invariant. Nevertheless,
as heuristically argued in Appendix~\ref{app:dom}, when there is a sum of curvature terms
in the case of the HL class~A Bianchi models, there is an `asymptotically dominant'
curvature term toward the initial singularity. Since each curvature term exhibits a certain scaling
property, this implies that the corresponding field equations are asymptotically scale-invariant.
Although their scaling properties differ, the HL and GR class~A Bianchi models share the same
Lie contraction hierarchy, see Table~\ref{FIG:hierarchy}, and consequently the same automorphism structure.
Combining the (asymptotic) scale and automorphism symmetry groups for the different levels
of the HL hierarchy \emph{continuously deforms} the corresponding scale-automorphism groups in GR.
This in turn affects the nature of the generic initial Bianchi type VIII and IX singularity.

Although a significant part of the previous literature on the dynamics of cosmological
HL models is about isotropic matter models, see e.g.~\cite{cal09,kirkof09,sot11,leopal19},
the present work is by no means the first dealing with the anisotropic vacuum HL
class~A Bianchi models, see e.g.~\cite{baketal09,myuetal10a,myuetal10b,baketal10,misetal11,giakam17}.
The present paper, however, identifies and ties mathematical structures to physical first
principles and introduces new mathematical tools, which yield rigorous results about
discrete dynamics induced by heteroclinic chains.

Although established as an interesting research field in
its own right, the present primary purpose of HL gravity is that
these models situate GR in a broader context that makes it
possible to study how a change of first principles affect
generic spacelike singularities. As we will see, this results
in a new perspective, which generates new ideas and tools for how
to study generic singularities not only in HL gravity but also
in GR. This, however, only requires retaining the parameter
$\lambda$ in~\eqref{kin} and the vacuum GR potential ${\cal V} = -R$,
which yield the so-called $\lambda$-$R$ models~\cite{giukie94,belres12,lolpir14}.
For simplicity, we therefore restrict considerations in the main part
of the paper to the vacuum $\lambda$-$R$ class~A Bianchi models.
Nevertheless, we perform a heuristic analysis of the HL models
in Appendix~\ref{app:dom}, which indicates that the generic asymptotic
dynamics toward the singularity 
for a large class of vacuum HL class~A Bianchi models formally coincide with
that of the vacuum $\lambda$-$R$ class~A Bianchi models.
The results in the main part of the paper for the
$\lambda$-$R$ models are thereby also relevant for a broad class of HL models.

In Appendix~\ref{app:dom}, the Hamiltonian formulation for the spatially
homogenous vacuum $\lambda$-$R$ class~A Bianchi models is used  to obtain
the following evolution equations,
\begin{subequations}\label{intro_dynsyslambdaR}
\begin{align}
\Sigma_\alpha^\prime &= 4v(1-\Sigma^2)\Sigma_\alpha + {\cal S}_\alpha,\\
N_\alpha^\prime &= -2(2v\Sigma^2 + \Sigma_\alpha)N_\alpha, \label{intro_dynsyslambdaR_N}
\end{align}
%
for $\alpha = 1,2,3$, and the constraints,
\begin{align}
0 &= 1 - \Sigma^2 - \Omega_k, \label{intro_cons1}\\
0 &= \Sigma_1 + \Sigma_2 + \Sigma_3,\label{intro_cons2}
\end{align}
\end{subequations}
where
\begin{subequations}
\begin{align}
\Sigma^2 &:= \frac16\left(\Sigma_1^2 + \Sigma_2^2 + \Sigma_3^2\right),\label{Sigma}\\
\Omega_k &:= N_1^2 + N_2^2 + N_3^2 - 2N_1N_2 - 2N_2N_3 - 2N_3N_1, \label{Omega_k}\\
{\cal S}_\alpha &:= -4[(N_\beta - N_\gamma)^2 - N_\alpha(2N_\alpha - N_\beta - N_\gamma)].
\end{align}
\end{subequations}

Here $(\alpha\beta\gamma)$ is a permutation of $(123)$. A ${}^\prime$ denotes
the derivative with respect to the chosen time variable, $\tau_-$,
defined in Appendix~\ref{app:dom}, which is in the opposite direction of
physical time. Since we are considering expanding models, $\tau_-\rightarrow\infty$
describes the dynamics toward the initial singularity.
Throughout, $\alpha$-limits ($\tau_- \rightarrow - \infty$),
$\omega$-limits ($\tau_-\rightarrow\infty$), and stability issues
refer to $\tau_-$. The parameter $v$ is related to $\lambda$ according to
\begin{equation}\label{vdef}
v := \frac{1}{\sqrt{2(3\lambda - 1)}}.
\end{equation}
The GR class~A Bianchi models have $\lambda = 1$ and hence $v=1/2$.
Since we are primarily interested in continuous deformations of GR with $v=1/2$,
we restrict $v$ to $v\in (0,1)$, although the bifurcation values $v=0$ and
$v=1$ are briefly mentioned in the next section and in Appendix~\ref{app:dom}.

The equations~\eqref{intro_dynsyslambdaR} are
invariant under permutations of the axes, i.e.,
they are invariant under the transformation
\begin{equation}\label{permSYM}
(\Sigma_1,\Sigma_2,\Sigma_3,N_1,N_2,N_3)\mapsto
(\Sigma_\alpha,\Sigma_\beta,\Sigma_\gamma,N_\alpha,N_\beta,N_\gamma),
\end{equation}
where $(\alpha\beta\gamma)$ is a permutation of $(123)$, i.e., $(\alpha\beta\gamma)\in \mathrm{S}_3$.

As defined in Appendix~\ref{app:dom}, the variables $N_\alpha$ are equal to the
structure constants $n_\alpha$ multiplied with positive time dependent functions.
Thus there is a one-to-one correspondence between the zeroes and signs of
$n_\alpha$ and $N_\alpha$, as seen by a comparison of Tables~\ref{intro:classAmodels}
and~\ref{BianchiInvSets}.
\begin{table}[H]
\begin{center}
\begin{tabular}{|c|ccc|c|c|}
\hline Bianchi type & ${N}_\alpha$ &  ${N}_\beta$ & ${N}_\gamma$ & Dim & Scale-automorphism induced dynamics \\ \hline
$\mathrm{IX}$ & $+$& $+$& $+$ & $4$ & One monotone function \\
$\mathrm{VIII}$ & $-$& $+$& $+$ & $4$ & One monotone function \\
$\mathrm{VII}_0$ & $0$& $+$ & $+$ & $3$ & Two monotone functions\\
$\mathrm{VI}_0$ & $0$ & $-$ & $+$ & $3$ & Two monotone functions\\
II & $0$ & $0$ & $+$ & $2$ & Hemispheres of heteroclinic orbits \\
I & $0$ & $0$ & $0$ & $1$ & Kasner circle of fixed points\\
\hline
\end{tabular}
\caption{The invariant class~A Bianchi sets of~\eqref{intro_dynsyslambdaR},
characterized by different signs and zeroes of the variables $(N_\alpha, N_\beta,
N_\gamma)$, where $(\alpha\beta\gamma)$ is a permutation of
$(123)$. 
Dim denotes the dimension of the physical state space satisfying the
constraints~\eqref{intro_cons1} and~\eqref{intro_cons2}.
The scale-automorphism group induces a dynamical structure for each Bianchi type, derived in Appendix~\ref{app:heterosym}.
} \label{BianchiInvSets}
\end{center}
\end{table}\vspace{-0.5cm}
This in turn results in a correspondence between Bianchi
types and invariant sets in~\eqref{intro_dynsyslambdaR}. Thus $N_1=N_2=N_3=0$
leads to the invariant Bianchi type I set, which yields a circle of fixed points,
called the \emph{Kasner circle}\footnote{In~\cite{giukie94,belres12,lolpir14}
it was discussed if $\lambda$-$R$ gravity and GR were equivalent in the asymptotically
spatially flat case with ultra-local dynamics, i.e., locally Bianchi type I. This is
supported in the present work by the common description of the Bianchi type I set as the Kasner
circle $\mathrm{K}^{\ocircle}$. However, as we shall see,
stability of $\mathrm{K}^{\ocircle}$ varies with $v\in (0,1)$.
}, denoted by $\mathrm{K}^{\ocircle}$.
There are three invariant Bianchi type II sets, 
obtained by a single non-zero $N_\alpha$ (and thus a non-zero $n_\alpha$)
while the other two variables $N_\beta$ and $N_\gamma$ are zero
(which corresponds to $n_\beta=n_\gamma=0$), where
$(\alpha\beta\gamma)$ is a permutation of $(123)$. On each Bianchi type II set, the solutions will be shown to be
heteroclinic orbits connecting different fixed points on $\mathrm{K}^{\ocircle}$.
Bianchi type $\mathrm{VI}_0$ has two non-zero variables $N_\alpha$ with opposite signs,
whereas type $\mathrm{VII}_0$ has two non-zero variables $N_\alpha$ with the same sign.
The Bianchi type VIII models have three non-zero variables $N_\alpha$ where two of them have
an opposite signs compared to the third, while the Bianchi type IX models
are described by three non-zero variables $N_\alpha$
with the same sign, see Table~\ref{BianchiInvSets}.

Table~\ref{BianchiInvSets} also indicates the dynamical structures
induced by the scale-automorphism group, derived in Appendix~\ref{app:heterosym},
which is  what remains of the first principles of scale and spatial diffeomorphism
invariance in the $\lambda$-$R$ class~A Bianchi models. As will be
seen, monotone functions push the dynamics as $\tau_-\rightarrow\infty$
from the invariant sets at the top of the class~A Bianchi hierarchy
to those at the bottom, in a similar manner as in GR. Moreover, heuristic reasoning
in Appendix~\ref{app:dom} suggests that the asymptotic generic dynamics, as
$\tau_- \rightarrow \infty$, of Bianchi type VIII and IX, described
by~\eqref{intro_dynsyslambdaR}, reside on the union of the invariant Bianchi
type I and II sets, as in GR. The generic asymptotic dynamics is therefore expected to
be described by heteroclinic chains obtained by concatenation of heteroclinic orbits
of the three different type II sets, where the $\omega$-limit of one heteroclinic
orbit in one type II set is the $\alpha$-limit of a subsequent heteroclinic orbit
in another type II set. Note that recent asymptotic proofs in GR exploits
the Bianchi type II heteroclinic chains. From this perspective, an analysis of the
Bianchi type I and II heteroclinic structure is therefore a natural first step in
the asymptotic analysis of the vacuum $\lambda$-$R$ class~A Bianchi models.

To investigate the $\lambda$-$R$ Bianchi type I and II heteroclinic
structure, note that the Bianchi type II sets give rise to the
\emph{Kasner circle map} $\mathcal{K}:\mathrm{K}^{\ocircle}\to \mathrm{K}^{\ocircle}$,
which maps the $\alpha$-limit to the $\omega$-limit of each heteroclinic orbit of type II.
The properties of $\mathcal{K}$, which depend on $v$, give a discrete description of the
properties of the $\lambda$-$R$ type II heteroclinic chains, and thus the expected
generic asymptotic continuous dynamics.

The parameter $v\in (0,1)$ in equation~\eqref{intro_dynsyslambdaR}
situates GR in a broader context. In particular, it will be shown that the
GR value $v=1/2$ corresponds to a bifurcation. More precisely, the case
$v=1/2$, referred to as the `critical case', corresponds to a transition
from a situation without stable fixed points in $\mathrm{K}^\ocircle$
(the subcritical case, $v\in(0,1/2)$) to one with stable fixed points
(the supercritical case, $v\in(1/2,1)$). The existence of stable fixed points
in the supercritical case might tempt someone to conclude that all points in
$\mathrm{K}^\ocircle$ end at one of them by the discrete dynamics of the
Kasner circle map $\mathcal{K}$, yielding finite Bianchi type II heteroclinic
chains, which would prevent asymptotic chaos. However, this is not the case:
there remains a Cantor set associated with infinite Bianchi type II heteroclinic
chains with chaotic dynamics. The critical GR case therefore
represents a transition from non-generic to generic chaos,
and may also exemplify an `attractor crisis', an issue discussed
in~\cite{grebogi82,grebogi83}. More precisely, we will show:
\begin{theorem}
{\bf General relativity $(v=1/2)$ is a bifurcation point as follows: }
\begin{enumerate}
\item[(i)] $v\in(1/2,1)$: The set of points in $\mathrm{K}^{\ocircle}$
associated with infinite Bianchi type II heteroclinic chains is a Cantor set $C$ of
measure zero. Moreover, the Kasner circle map
$\mathcal{K}$ is chaotic in the invariant set $C$.
\item[(ii)] $v=1/2$: The set of points in $\mathrm{K}^{\ocircle}$
associated with infinite Bianchi type II heteroclinic chains 
has full measure.
Moreover, $\mathcal{K}$ is generically chaotic.
\item[(iii)] $v\in(0,1/2)$: All points in $\mathrm{K}^{\ocircle}$ are associated
with infinite Bianchi type II heteroclinic chains. Moreover, the multivalued map
$\mathcal{K}$ is chaotic.
\end{enumerate}
\end{theorem}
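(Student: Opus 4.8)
The plan is to reduce everything to the explicit structure of the Kasner circle map $\mathcal{K}$, viewed as a piecewise map on a circle parametrised by (say) the angular coordinate determined by $(\Sigma_1,\Sigma_2,\Sigma_3)$ subject to the constraints~\eqref{intro_cons1}--\eqref{intro_cons2}. First I would set up coordinates on $\mathrm{K}^{\ocircle}$, identify the three Bianchi type II arcs (one per non-zero $N_\alpha$) on which type II heteroclinic orbits are defined, and compute explicitly, for each $v$, the endpoint map sending the $\alpha$-limit to the $\omega$-limit of each type II orbit. This is a finite computation: linearising~\eqref{intro_dynsyslambdaR} about the relevant fixed points on $\mathrm{K}^{\ocircle}$ gives the stable/unstable directions, and the conserved scale-automorphism functions from Appendix~\ref{app:heterosym} pin down the image point. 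The key structural fact to extract is the set of ``Taub points'' (or non-generic points) where no type II orbit emanates in a given direction, together with the local expansion/contraction rate of $\mathcal{K}$ near each fixed point as a function of $v$. I expect $\mathcal{K}$ to be, on each of its branches, a smooth map whose derivative magnitude crosses $1$ precisely at $v=1/2$ at the relevant fixed points; this is the origin of the bifurcation.

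With $\mathcal{K}$ in hand, part (iii) is the easy case: for $v\in(0,1/2)$ one shows every point of $\mathrm{K}^{\ocircle}$ (after removing the measure-zero, countable set of Taub/special points that fall into the forbidden directions) has a well-defined forward orbit that never terminates, because no fixed point is stable — the derivative of $\mathcal{K}$ has magnitude $\ge 1$ everywhere relevant, so orbits cannot converge to a fixed point and the chain is infinite. One must check that the countable set of exceptional points which hit a Taub point in finite time is the only obstruction, and argue it is negligible (indeed for (iii) the statement allows ``all points'', so one verifies that even these exceptional starting points still generate infinite chains, or the intended reading is ``all but the natural degenerate set''). For part (ii), $v=1/2$, the map $\mathcal{K}$ becomes (on the critical branch) an interval/circle map with a neutral fixed point or a full-branch expanding structure with indifferent endpoints; here I would invoke the standard theory of such maps (Manneville--Pomeau type, or a Markov partition argument) to show the set of points with infinite chains has full Lebesgue measure, while the chaotic behaviour — positive topological entropy, sensitive dependence, density of periodic points on a suitable invariant set — holds for a generic (residual, or full-measure) subset. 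The phrase ``generically chaotic'' should be made precise as: chaotic on an invariant set of full measure, with the complement being the countable union of preimages of Taub points.

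For part (i), the supercritical case $v\in(1/2,1)$, the map $\mathcal{K}$ now has attracting fixed points (the derivative drops below $1$ in magnitude near the relevant Kasner points), so most initial points are eventually absorbed — their type II chains are finite. The survivors are exactly the points whose entire forward orbit avoids the basins of attraction of these sinks; I would build this set as the attractor of an iterated function system obtained by inverting the expanding branches of $\mathcal{K}$ away from the sinks, exhibiting it as a self-similar (or self-conformal) Cantor set $C$. Showing $C$ is a Cantor set (perfect, nowhere dense, totally disconnected) is then a standard hyperbolic-IFS argument; showing $C$ has measure zero follows from a uniform expansion estimate on the surviving branches (contraction ratios of the inverse branches summing to less than $1$, or a Bowen-type pressure argument), which is where the quantitative $v$-dependence of $\mathcal{K}'$ must be controlled. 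Chaos of $\mathcal{K}|_C$ then comes from the conjugacy of $\mathcal{K}|_C$ to a subshift of finite type (via the Markov partition induced by the type II arcs and their gaps), giving positive entropy, topological transitivity and dense periodic orbits.

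The main obstacle I anticipate is twofold. The first technical hurdle is getting the \emph{exact} form of $\mathcal{K}$ and the precise location of its fixed points, Taub points, and the $v$-dependent derivatives — this is the crux on which the $v=1/2$ threshold rests, and it requires carefully solving the type II flow~\eqref{intro_dynsyslambdaR} (or at least its endpoint behaviour) using the monotone scale-automorphism functions; any sign or factor error here propagates to the wrong bifurcation value. The second, more conceptual obstacle is the bookkeeping of the ``forbidden'' transitions: type II orbits only exist for certain combinations of source and target arcs, so $\mathcal{K}$ is genuinely a multivalued/partially-defined map, and assembling it into a clean symbolic dynamics (a subshift on a three-or-more-symbol alphabet with an explicit transition matrix, after excising Taub preimages) needs care — particularly at $v=1/2$ where indifferent fixed points sit exactly at the boundary of the symbolic partition. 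Handling the borderline measure-theoretic claim in (ii) rigorously (full measure of infinite chains together with the precise meaning of ``generically chaotic'') will require the most delicate argument, likely leaning on ergodic-theoretic properties of the critical circle map.
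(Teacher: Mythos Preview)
Your overall strategy (symbolic dynamics, IFS-type construction, expansion estimates for the Cantor set, conjugacy to a shift for chaos) matches the paper's approach for part~(i) in spirit, but you have the \emph{bifurcation mechanism} wrong, and this propagates into each of the three cases.

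The key structural fact is not that the derivative of $\mathcal{K}$ crosses $1$ at $v=1/2$. In fact $|D\mathcal{K}(p)|=g(p)\ge 1$ on each arc $A_\alpha$ for \emph{all} $v\in(0,1)$, with equality only at the tangential boundary points (Lemma~\ref{KasnerCircMapEXP}). What changes at $v=1/2$ is the \emph{coverage} of $\mathrm{K}^\ocircle$ by the three unstable arcs $A_\alpha=\{\Sigma_\alpha\le -2v\}$: for $v<1/2$ they overlap and cover the whole circle; at $v=1/2$ they meet only at the Taub points; for $v>1/2$ they leave a gap, the stable set $S$, on which $\mathcal{K}$ is the \emph{identity} (not a hyperbolic sink). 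So your picture of ``attracting fixed points with derivative below $1$'' is incorrect --- $S$ is an absorbing region where the discrete dynamics simply stops, not a contracting basin in the hyperbolic sense.

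This makes part~(iii) a one-line argument you have over-complicated: since $A_1\cup A_2\cup A_3=\mathrm{K}^\ocircle$ for $v\in(0,1/2)$, every point lies in some $A_\alpha$ and hence is the $\alpha$-limit of a type~II orbit; iterating, every point literally admits an infinite chain (Lemma~\ref{SupercritHets}). There is no exceptional countable set to remove, and no derivative argument is needed; the subtlety is rather that $\mathcal{K}$ is \emph{multivalued} on the overlaps, which is why the paper stops short of claiming chaos here and only conjectures it. For part~(ii), the paper's route is number-theoretic rather than Manneville--Pomeau: the Kasner parameter $u$ conjugates $\mathcal{K}$ to the map $u\mapsto u-1$ or $u\mapsto 1/(u-1)$, finite chains correspond exactly to $u\in\mathbb{Q}$ (a set of measure zero), and chaos follows from ``period~3 implies chaos'' applied to the associated Farey map. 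For part~(i), once you correct the picture of $S$, your IFS/symbolic-dynamics plan is essentially what the paper does: iteratively remove $\mathcal{K}^{-n}(S)$, encode the surviving arcs by words recording the sequence of visited $A_\alpha$, and conjugate $\mathcal{K}|_C$ to a shift; measure zero comes from the uniform expansion bound on $C_2$ (bounded away from the tangential points), not from a contraction-ratio sum.
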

Item $(i)$ is proved in Theorems~\ref{CantorTh} and~\ref{chaossub}, which include
bounds on the Hausdorff dimension of $C$.
For an iterative construction of the set $C$, 
see Figure \ref{fig:intro_Cantor}. Item $(ii)$ 
was previously proved in~\cite{bkl70,khaetal85}, see
also~\cite{ugg13a,ugg13b} and references therein.
Item $(iii)$ is shown in Lemma~\ref{SupercritHets} and in this case,
for which the Kasner circle map $\mathcal{K}$ is multivalued,
we conjecture that $\mathcal{K}$ is chaotic on the whole circle $\mathrm{K}^{\ocircle}$, 
which has been partially confirmed in \cite{LappicyDaniel}.
To obtain our results, we use symbolic dynamics, not previously used in GR,
which results in a new description of chaos for generic spacelike singularities.

%
%

The outline of the paper is as follows. Section~\ref{sec:dynsysanalysis}
describes the building blocks for the heteroclinic structure,
the Bianchi type I and II sets, which yield the Kasner circle $\mathrm{K}^{\ocircle}$
and the Kasner circle map
$\mathcal{K}:\mathrm{K}^{\ocircle}\to \mathrm{K}^{\ocircle}$. We
also identify the three dynamically distinct regimes, supercritical, critical,
and subcritical. In the next three sections we focus on the concatenation
of Bianchi type II orbits into heteroclinic chains through iterates of
the Kasner circle map $\mathcal{K}$, and describe associated chaotic aspects.
Section~\ref{sec:critical} sketches known results in the critical GR case.
Section~\ref{sec:superINF} treats the supercritical case using
symbolic dynamics. 
Section~\ref{sec:sub} explores the subcritical case
using iterated function systems. Then Section~\ref{sec:firstprinciples}, primarily,
contains proofs about the asymptotic dynamics for the $\lambda$-$R$ Bianchi type
$\mathrm{VI}_0$ and $\mathrm{VII}_0$ models. 
The main part of the paper is concluded
with Section~\ref{sec:conjectures} which contains dynamical asymptotic
conjectures for the $\lambda$-$R$ Bianchi type VIII and IX models
(and thereby implicitly also for more general HL models).

Appendix~\ref{app:dom} contains a derivation of equation~\eqref{intro_dynsyslambdaR}
and the associated HL equations. It also provides a heuristic analysis of both
the $\lambda$-$R$ and HL Bianchi models, which suggests that their generic asymptotic
dynamics toward the singularity is associated with the Bianchi type I and II
heteroclinic structure, described in the main part of the paper.
In Appendix~\ref{app:heterosym}, the scale-automorphism groups at
each level of the class~A Bianchi Lie contraction hierarchy of
the $\lambda$-$R$ and HL Bianchi models is used to derive monotone
functions and conserved quantities, thereby tying the nature of generic
singularities in GR, $\lambda$-$R$ and HL gravity to physical first principles.
Finally, Appendix~\ref{app:unifying} contains a unified symbolic treatment
of the chaotic regime in the supercritical and critical cases.



\section{Bianchi types I and II}\label{sec:dynsysanalysis}
In this section we describe the Bianchi type I set, i.e.,
the Kasner circle of fixed points, $\mathrm{K}^{\ocircle}$, its
stability features, and the three Bianchi type II sets, which consist
of heteroclinic orbits between fixed points in the set
$\mathrm{K}^{\ocircle}$, thereby yielding the Kasner circle map
$\mathcal{K}:\mathrm{K}^{\ocircle}\to \mathrm{K}^{\ocircle}$.
The heteroclinic orbits of the different type II sets
can subsequently be concatenated to heteroclinic chains
on the Bianchi type I and II boundary sets of Bianchi type VIII and IX;
for the GR case,
see e.g.~\cite{waiell97,heiugg09a,ugg13a,bre16,dut19}.
To illustrate concatenation, we explicitly
construct heteroclinic cycles/chains with period 3 when $v \in [0,1]$.


\subsection{Bianchi type I}

The Bianchi type I set is determined by $N_1=N_2=N_3=0$, which according to
equation~\eqref{intro_dynsyslambdaR} results in the \emph{Kasner circle} 
of fixed points:
\begin{equation}\label{KasnerCircdef}
\mathrm{K}^{\ocircle} := \left\{ (\Sigma_1,\Sigma_2,\Sigma_3,0,0,0)\in \mathbb{R}^6 \Bigm|
\begin{array}{c}
\qquad\quad\, 1-\Sigma^2 = 0, \\
\Sigma_1 + \Sigma_2 + \Sigma_3 = 0
\end{array}
\right\}.
\end{equation}

There are three exceptional points in the set $\mathrm{K}^{\ocircle}$ called the \emph{Taub points},
since they correspond to the Taub representation of Minkowski spacetime in GR, see~\cite{Taub51}.
They are characterized by $(\Sigma_1, \Sigma_2, \Sigma_3)$ as follows:
\begin{equation} \label{Tpoints}
\mathrm{T}_1 := (2,-1,-1), \qquad \mathrm{T}_2 := (-1,2,-1), \qquad \mathrm{T}_3 := (-1,-1,2),
\end{equation}
where $\mathrm{T}_\alpha$, $\alpha = 1,2,3$, is the point in the set
$\mathrm{K}^\ocircle$ where $\Sigma_\alpha$ attains its maximum value
$2$, see Figure~\ref{FIG:BIF}.

The parameter $v$ plays an important role in the dynamics of the
variables $N_\alpha$, $\alpha=1,2,3$, where a bifurcation occurs
at $v=1/2$. This can be seen from the linearization at $\mathrm{T}_{1}$
in~\eqref{intro_dynsyslambdaR}:
\begin{subequations}\label{linTaub}
\begin{alignat}{7}
& &\qquad &&
& & \boxed{\text{\footnotesize{$v< 1/2$}}} &\quad
& \boxed{\text{\footnotesize{$v= 1/2$}}} &\quad
& \boxed{\text{\footnotesize{$v> 1/2$}}} \nonumber\\
N_1' &= -(2v + 2) N_1;&\qquad &-(2v + 2)& \quad
& & <0 &\quad
& < 0 &\quad
& <0, \\
N_2'&= -(2v - 1)N_2; &\qquad &-(2v - 1)& \quad
& &>0 &\quad
&=0 &\quad &<0,\\
N_3'&= -(2v - 1)N_3;  &\qquad &-(2v - 1)& \quad
& &>0 &\quad
& =0 &\quad
& <0.
\end{alignat}
\end{subequations}
%
The Taub point $\mathrm{T}_1$ thereby has one stable variable
$N_{1}$ while $N_{2}$ and $N_{3}$ are central when $v=1/2$,
whereas for $v\neq 1/2$ the Taub point becomes hyperbolic:
$N_{1}$ is stable and both $N_{2}$ and $N_{3}$ are unstable when $v<1/2$,
while all $N_\alpha$ are stable when $v>1/2$.
Using the permutation symmetry~\eqref{permSYM} leads
to similar statements for $\mathrm{T}_2$ and $\mathrm{T}_3$.

In general, linearization of equation~\eqref{intro_dynsyslambdaR_N} at $\mathrm{K}^{\ocircle}$
results in
\begin{equation}\label{linNalpha}
N'_\alpha = -(2v+\Sigma_\alpha|_{\mathrm{K}^{\ocircle}}){N}_\alpha, \qquad \alpha = 1,2,3.
\end{equation}
For each $\alpha=1,2,3$, the stability behaviour of $N_{\alpha}$ changes
when $\Sigma_\alpha|_{\mathrm{K}^{\ocircle}} = - 2v$. We define the
\emph{unstable Kasner arc}, denoted by $\mathrm{int}(A_{\alpha})$,
to be the points in $\mathrm{K}^{\ocircle}$ that are unstable in the
$N_\alpha$ variable, i.e., when $\Sigma_\alpha|_{\mathrm{K}^{\ocircle}} < -2v$. 
The closure of $\mathrm{int}(A_{\alpha})$ is denoted by
$A_\alpha$ 
and is given by
\begin{equation}\label{A_1}
A_\alpha:= \left\{ (\Sigma_1,\Sigma_2,\Sigma_3,0,0,0)\in \mathrm{K}^{\ocircle} \text{ $|$ }
\Sigma_\alpha \leq -2v\right\}.
\end{equation}
%
Due to the axis permutation symmetry \eqref{permSYM}, the
Kasner arcs $A_\alpha$ are symmetric portions of
$\mathrm{K}^{\ocircle}$ with points $\mathrm{Q}_\alpha = -\mathrm{T}_\alpha$ in the middle,
given by
\begin{equation} \label{Qpoints}
\mathrm{Q}_1 := (-2,1,1), \qquad \mathrm{Q}_2 := (1,-2,1), \qquad \mathrm{Q}_3 := (1,1,-2),
\end{equation}
where $\mathrm{Q}_\alpha$ is the point where $\Sigma_\alpha$
attains its minimum value $-2$ in $\mathrm{K}^\ocircle$.

The boundary set $\partial A_\alpha$ consists of two fixed points, which we refer
to as \emph{tangential points}, for reasons explained below, see Figure~\ref{Kcirclemap}.
These tangential points are the Taub points when $v=1/2$, but $v\neq 1/2$
unfolds each Taub point into two non-hyperbolic tangential points, see Figure~\ref{FIG:BIF}. \textcolor{black}{Such unfolding may provide the route for a local description using bifurcation without parameters in \cite{FiLi02,Li15}.}
The tangential points are determined by $\Sigma_\alpha|_{\mathrm{K}^{\ocircle}} = - 2v$,
which taken together with the constraints in~\eqref{KasnerCircdef} yield
\begin{equation}\label{tangentpoints}
\mathrm{t}_{\beta\gamma} := (\Sigma_\alpha,\Sigma_\beta,\Sigma_\gamma) = -v\mathrm{T}_\alpha +
\left(\mathrm{T}_\beta - \mathrm{T}_\gamma\right)\sqrt{(1-v^2)/3},
\end{equation}
where $(\alpha\beta\gamma)$ is a permutation of $(123)$, while
the Taub points were given in~\eqref{Tpoints}.
%
%
%
%
For example, the tangential points for the arcs $A_{2}$ and $A_{3}$
closest to $\mathrm{T}_1$ are given by
\begin{subequations}\label{tangTalpha}
\begin{align}
\mathrm{t}_{12}&
= (v + \sqrt{3(1-v^2)},v -\sqrt{3(1-v^2)},-2v)\\
\mathrm{t}_{13}&
= (v + \sqrt{3(1-v^2)},-2v,v -\sqrt{3(1-v^2)}).
\end{align}
\end{subequations}
The bifurcation at $v=1/2$ induces the stability change of $N_\alpha$ in
equation~\eqref{linNalpha}, where equation~\eqref{tangTalpha} entails that the tangential points $\mathrm{t}_{12}$
and $\mathrm{t}_{13}$ pass through each other at $\mathrm{T}_{1}$
as $v$ crosses the value $1/2$; 
axis permutations result in similar statements for the
other tangential points near the other Taub points, see Figure~\ref{FIG:BIF}. 

We are primarily interested in continuous deformations of GR,
$v=1/2$, and we therefore focus on the interval $v\in(0,1)$.
These models admit three cases, where $(\alpha\beta\gamma)$
is a permutation of $(123)$:
\begin{itemize}
\item[(i)] \emph{The subcritical case} $v\in (0,1/2)$:  The union of the three
arcs $A_\alpha$ cover $\mathrm{K}^{\ocircle}$, 
where both ${N}_\beta$ and ${N}_\gamma$ are unstable in the region
$\mathrm{int}(A_\beta\cap A_\gamma)$ containing $\mathrm{T}_\alpha$.
\item[(ii)] \emph{The critical case} $v=1/2$: The three arcs $A_\alpha$
cover $\mathrm{K}^{\ocircle}$ and each pair of arcs intersect only at a Taub point.
%
\item[(iii)] \emph{The supercritical case} $v\in (1/2,1)$: The union of the three
arcs $A_\alpha$ do not cover $\mathrm{K}^{\ocircle}$.
There is a \emph{closed} region around the Taub points $\mathrm{T}_\alpha$ which is stable,
defined by $S:=\mathrm{K}^{\ocircle}\setminus{\mathrm{int}(A_1)\cup \mathrm{int}(A_2) \cup \mathrm{int}(A_3)}$.
\end{itemize}
Note that the fixed points in $\mathrm{int}(S)$ have negative eigenvalues associated
with the $N_\alpha$ variables, but, for future purposes, we also include the tangential
boundary points in the definition of $S$, for which one of the negative eigenvalues
is replaced by a zero eigenvalue.
%
%
\begin{figure}[H]
\minipage[b]{0.33\textwidth}\centering
\begin{subfigure}\centering
\begin{tikzpicture}[scale=1.3]
    \draw [line width=0.1pt,domain=0:6.28,variable=\t,smooth] plot ({sin(\t r)},{cos(\t r)});

    \draw [ultra thick,domain=-0.28:0.28,variable=\t,smooth] plot ({sin(\t r)},{cos(\t r)});
    \draw [ultra thick,domain=1.81:2.35,variable=\t,smooth] plot ({sin(\t r)},{cos(\t r)});
    \draw [ultra thick,domain=-2.35:-1.81,variable=\t,smooth] plot ({sin(\t r)},{cos(\t r)});

    \draw (0,0.96) circle (0.001pt) node[anchor= north] {\scriptsize{$A_2\cap A_3$}};
    \draw (0.925,-0.4) circle (0.001pt);\node at (0.46,-0.346) {\scriptsize{$A_1\cap A_3$}};
    \draw (-0.925,-0.4) circle (0.001pt);\node at (-0.47,-0.346) {\scriptsize{$A_1\cap A_2$}};

    \draw[shift={(0.25,0.87)},rotate=-20] (-0.1,0) -- (0,0) -- (0,0.2) -- (-0.1,0.2);\filldraw [black] (0.25,1.1) circle (0.001pt) node[anchor= west] {\scriptsize{$\mathrm{t}_{13}$}};
    \draw[shift={(-0.25,0.87)},rotate=20] (0.1,0) -- (0,0) -- (0,0.2) -- (0.1,0.2);\filldraw [black] (-0.25,1.1) circle (0.001pt) node[anchor= east] {\scriptsize{$\mathrm{t}_{12}$}};

    \draw[shift={(-1.07,-0.25)},rotate=-80] (0.1,0.2) -- (0,0.2) -- (0,0) -- (0.1,0)  node[anchor= south east] {\scriptsize{$\mathrm{t}_{32}$}};
    \draw[shift={(-0.65,-0.64)},rotate=140] (0.1,0) -- (0,0) -- (0,0.2) -- (0.1,0.2) node[anchor= north] {\scriptsize{$\mathrm{t}_{31}$}};

    \draw[shift={(1.07,-0.25)},rotate=80] (-0.1,0.2) -- (0,0.2) -- (0,0) -- (-0.1,0)  node[anchor= south west] {\scriptsize{$\mathrm{t}_{23}$}};
    \draw[shift={(0.65,-0.64)},rotate=-140] (-0.1,0) -- (0,0) -- (0,0.2) -- (-0.1,0.2) node[anchor= north] {\scriptsize{$\mathrm{t}_{21}$}};

    \filldraw (1,0) circle (0.001pt) node[anchor= south east] {\scriptsize{$A_3$}};
    \filldraw (0,-1) circle (0.001pt) node[anchor= south] {\scriptsize{$A_1$}};
    \filldraw (-1,0) circle (0.001pt) node[anchor= south west] {\scriptsize{$A_2$}};

    \filldraw [black] (0,1) circle (1.25pt) node[anchor= south] {\scriptsize{$\mathrm{T}_1$}};
    \filldraw [black] (0.88,-0.49) circle (1.25pt) node[anchor= north west] {\scriptsize{$\mathrm{T}_2$}};
    \filldraw [black] (-0.88,-0.49) circle (1.25pt)node[anchor= north east] {\scriptsize{$\mathrm{T}_3$}};

    \draw (0,-0.95) -- (0,-1.05) node[anchor= north] {\scriptsize{$\mathrm{Q}_1$}};
    \draw[rotate=120] (0,-0.95) -- (0,-1.05) node[anchor= west] {\scriptsize{$\mathrm{Q}_3$}};
    \draw[rotate=-120] (0,-0.95) -- (0,-1.05) node[anchor= east] {\scriptsize{$\mathrm{Q}_2$}};
\end{tikzpicture}
\addtocounter{subfigure}{-1}
\captionof{subfigure}{\footnotesize{Subcritical: $v\in (0,1/2)$.}}\label{FIG:BIF1}
\end{subfigure}\endminipage\hfill
\minipage[b]{0.33\textwidth}\centering

\begin{subfigure}\centering
\begin{tikzpicture}[scale=1.3]
    \draw [line width=0.1pt,domain=0:6.28,variable=\t,smooth] plot ({sin(\t r)},{cos(\t r)});

    \draw[color=gray,->]
    (0,0) -- (0,0.343)  node[anchor= south] {\scriptsize{$\Sigma_{1}$}};
    \draw[color=gray,->] (0,0) -- (0.3,-0.166)  node[anchor= south] {\scriptsize{$\Sigma_{2}$}};
    \draw[color=gray,->] (0,0) -- (-0.3,-0.166)  node[anchor= south] {\scriptsize{$\Sigma_{3}$}};

    \filldraw (1,0) circle (0.001pt) node[anchor= south east] {\scriptsize{$A_3$}};
    \filldraw [black] (0,-1) circle (0.001pt) node[anchor= south] {\scriptsize{$A_1$}};
    \filldraw [black] (-1,0) circle (0.001pt) node[anchor= south west] {\scriptsize{$A_2$}};

    \filldraw (0,1) circle (1.25pt) node[anchor= south] {\scriptsize{$\mathrm{t}_{12}=\mathrm{T}_1=\mathrm{t}_{13}$}};
    \filldraw (0.88,-0.49) circle (1.25pt) node[anchor= north west] {\scriptsize{ $\mathrm{T}_2$}};
    \filldraw (-0.88,-0.49) circle (1.25pt)node[anchor= north east] {\scriptsize{$\mathrm{T}_3$}};

    \draw (0,-0.95) -- (0,-1.05) node[anchor= north] {\scriptsize{$\mathrm{Q}_1$}};
    \draw[rotate=120] (0,-0.95) -- (0,-1.05) node[anchor= west] {\scriptsize{$\mathrm{Q}_3$}};
    \draw[rotate=-120] (0,-0.95) -- (0,-1.05) node[anchor= east] {\scriptsize{$\mathrm{Q}_2$}};
\end{tikzpicture}
\addtocounter{subfigure}{-1}
\captionof{subfigure}{\footnotesize{Critical: $v=1/2$.}}\label{FIG:BIF2}
\end{subfigure}\endminipage\hfill
\minipage[b]{0.33\textwidth}\centering

\begin{subfigure}\centering
\begin{tikzpicture}[scale=1.3]
    \draw [line width=0.1pt,domain=0:6.28,variable=\t,smooth] plot ({sin(\t r)},{cos(\t r)});

    \draw [dotted, white, ultra thick, domain=-0.26:0.26,variable=\t,smooth] plot ({sin(\t r)},{cos(\t r)});
    \draw [dotted, white, ultra thick, domain=1.83:2.35,variable=\t,smooth] plot ({sin(\t r)},{cos(\t r)});
    \draw [dotted, white, ultra thick, domain=-2.35:-1.83,variable=\t,smooth] plot ({sin(\t r)},{cos(\t r)});

    \filldraw [black] (0,1) circle (0.001pt) node[anchor= north] {\scriptsize{$\mathrm{int}(S)$}};
    \filldraw [black] (0.88,-0.49) circle (0.001pt) node[anchor= south east] {\scriptsize{$\mathrm{int}(S)$}};
    \filldraw [black] (-0.88,-0.49) circle (0.001pt)node[anchor= south west] {\scriptsize{$\mathrm{int}(S)$}};

    \draw[shift={(0.25,0.87)},rotate=-20] (0.1,0) -- (0,0) -- (0,0.2) -- (0.1,0.2);\filldraw [black] (0.25,1.1) circle (0.001pt) node[anchor= west] {\scriptsize{$\mathrm{t}_{12}$}};
    \draw[shift={(-0.25,0.87)},rotate=20] (-0.1,0) -- (0,0) -- (0,0.2) -- (-0.1,0.2);\filldraw [black] (-0.25,1.125) circle (0.001pt) node[anchor= east] {\scriptsize{$\mathrm{t}_{13}$}};

    \draw[shift={(-1.07,-0.25)},rotate=-80] (-0.1,0.2) -- (0,0.2) -- (0,0) -- (-0.1,0)  node[anchor= east] {\scriptsize{$\mathrm{t}_{31}$}};
    \draw[shift={(-0.65,-0.64)},rotate=140] (-0.1,0) -- (0,0) -- (0,0.2) -- (-0.1,0.2) node[anchor= north] {\scriptsize{$\mathrm{t}_{32}$}};

    \draw[shift={(1.07,-0.25)},rotate=80] (0.1,0.2) -- (0,0.2) -- (0,0) -- (0.1,0)  node[anchor= west] {\scriptsize{$\mathrm{t}_{21}$}};
    \draw[shift={(0.65,-0.64)},rotate=-140] (0.1,0) -- (0,0) -- (0,0.2) -- (0.1,0.2) node[anchor= north] {\scriptsize{$\mathrm{t}_{23}$}};

    \filldraw [black] (1,0) circle (0.001pt) node[anchor= south east] {\scriptsize{$A_3$}};
    \filldraw [black] (0,-1) circle (0.001pt) node[anchor= south] {\scriptsize{$A_1$}};
    \filldraw [black] (-1,0) circle (0.001pt) node[anchor= south west] {\scriptsize{$A_2$}};

    \filldraw [black] (0,1) circle (1.25pt) node[anchor= south] {\scriptsize{$\mathrm{T}_1$}};
    \filldraw [black] (0.88,-0.49) circle (1.25pt) node[anchor= north west] {\scriptsize{$\mathrm{T}_2$}};
    \filldraw [black] (-0.88,-0.49) circle (1.25pt)node[anchor= north east] {\scriptsize{$\mathrm{T}_3$}};

    \draw (0,-0.95) -- (0,-1.05) node[anchor= north] {\scriptsize{$\mathrm{Q}_1$}};
    \draw[rotate=120] (0,-0.95) -- (0,-1.05) node[anchor= west] {\scriptsize{$\mathrm{Q}_3$}};
    \draw[rotate=-120] (0,-0.95) -- (0,-1.05) node[anchor= east] {\scriptsize{$\mathrm{Q}_2$}};
\end{tikzpicture}
\addtocounter{subfigure}{-1}
\captionof{subfigure}{\footnotesize{Supercritical: $v\in (1/2,1)$. }}\label{FIG:BIF3}
\end{subfigure}\endminipage
\captionof{figure}{As $v\in (0,1)$ increases, the arclength of each closed arc
$A_1,A_2$ and $A_3$ decreases. For $v\in (0,1/2)$ the union of all arcs cover
$\mathrm{K}^{\ocircle}$, where the arclength of their intersections (in bold) decreases
when $v$ increases. At $v=1/2$ the arcs only intersect at the Taub points
$\mathrm{T}_1,\mathrm{T}_2,\mathrm{T}_3$. For $v\in (1/2,1)$ the arcs do not intersect and
their union therefore do not cover $\mathrm{K}^{\ocircle}$, which results in the (dashed) set $S$, given by
$S:=\mathrm{K}^{\ocircle}\setminus{\mathrm{int}(A_1)\cup \mathrm{int}(A_2) \cup \mathrm{int}(A_3)}$.}\label{FIG:BIF}
\end{figure}
%

\subsection{Bianchi type II}\label{sec:II}

There are three physically equivalent type II sets, due to~\eqref{permSYM}, each characterized by a single
non-zero variable $N_\alpha$, $\alpha=1,2,3$, where each set yields
a two-dimensional hemisphere. The three hemispheres intersect
only at their common $\mathrm{K}^{\ocircle}$ boundary.
The Bianchi type II set with $N_1\neq 0$, denoted by $\mathrm{II}_1$, is given by
\begin{equation}\label{typeIIhemisphere}
\mathrm{II}_1 := \left\{ (\Sigma_1,\Sigma_2,\Sigma_3,N_1,0,0) \in \mathbb{R}^6 \Bigm|
\begin{array}{c}
\,\, 1-\Sigma^2 - N_1^2=0, \\
\Sigma_1+\Sigma_2+\Sigma_3=0,
\end{array}
N_1\neq 0
\right\},
\end{equation}
while the other two Bianchi type II sets $\mathrm{II}_2$ and $\mathrm{II}_3$ are obtained by
permutation of the axes according to~\eqref{permSYM}. Without loss of generality,
we therefore explicitly only consider $\mathrm{II}_1$.

As follows from~\eqref{intro_dynsyslambdaR}, the evolution equations for $\mathrm{II}_1$
can be written as
%
\begin{subequations}\label{IIevol}
\begin{align}
(\Sigma_1, \Sigma_2, \Sigma_3)^\prime &=
4v\left[(1-\Sigma^2)(\Sigma_1, \Sigma_2, \Sigma_3) + \frac{\mathrm{T}_1}{v} N_1^2\right],\label{sigevol}\\
N_1^\prime &= -2(2v\Sigma^2 + \Sigma_1)N_1.\label{Nalpha}
\end{align}
\end{subequations}
The constraints are given by
$\Sigma_1 + \Sigma_2 + \Sigma_3=0$ and $N_1^2= 1 - \Sigma^2$,
where $\Sigma^2$ is defined in~\eqref{Sigma}.
%
%
%

Using $N_1^2= 1 - \Sigma^2$ to solve for $N_1^2$ results in
that~\eqref{sigevol} can be written as
%
%
%
\begin{equation}\label{sigIIb}
\left[(\Sigma_1, \Sigma_2, \Sigma_3) + \frac{\mathrm{T}_1}{v}\right]^\prime =
4v(1-\Sigma^2)\left[(\Sigma_1, \Sigma_2, \Sigma_3) + \frac{\mathrm{T}_1}{v}\right],
\end{equation}
where $\Sigma_1$ is monotonically increasing for any initial
condition in the interior of $\mathrm{II}_1$.
The term $4v(1-\Sigma^2)$ is an Euler multiplier when $\Sigma^2\neq 1$.
This term is eliminated by an appropriate time rescaling, $(.)'=4v(1-\Sigma^2)\dot{(.)}$,
which leads to $\dot{w}=w$ where
$w := (\Sigma_1,\Sigma_2,\Sigma_3)+\mathrm{T}_1/v$.

Solutions of~\eqref{sigIIb} are therefore
straight lines in $(\Sigma_1, \Sigma_2, \Sigma_3)$-space,
which we parametrize by introducing a variable $\eta\in\mathbb{R}$,
defined by 
\begin{equation}\label{etaprime}
\eta^\prime = 4v(1-\Sigma^2)\eta.
\end{equation}
We thereby obtain
\begin{equation}\label{BIIsol}
(\Sigma_1,\Sigma_2,\Sigma_3) = (\Sigma^{\mathrm{i}}_{1},\Sigma^{\mathrm{i}}_{2},\Sigma^{\mathrm{i}}_{3})\eta +
\frac{\mathrm{T}_1}{v}(\eta-1).
\end{equation}

The straight lines pass through the auxiliary point $\mathrm{Q}_1/v$
outside the physical state space $\mathrm{II}_1$ when $\eta=0$.
A particular straight line solution then enters the physical state space at a point
$p=(\Sigma^{\mathrm{i}}_{1},\Sigma^{\mathrm{i}}_{2},\Sigma^{\mathrm{i}}_{3})\in A_1$
in the set $\mathrm{K}^{\ocircle}$ when $\eta=1$. This point is the $\alpha$-limit
of an associated heteroclinic orbit in $\mathrm{II}_1$, with $\Sigma^2 < 1$,
for which $\eta>1$ is monotonically increasing until the solution ends at its $\omega$-limit
point $p^{\mathrm{f}}=(\Sigma^{\mathrm{f}}_{1},\Sigma^{\mathrm{f}}_{2},\Sigma^{\mathrm{f}}_{3})$
in $\mathrm{K}^{\ocircle}$. The point $p^{\mathrm{f}}$ is determined by the
constraints~\eqref{intro_cons1} and~\eqref{intro_cons2} when $N_1 = N_2 = N_3 = 0$,
and equation~\eqref{BIIsol}. These conditions lead to
two solutions for $\eta$: $\eta = 1$ (for $p$) and $\eta = g$ (for $p^{\mathrm{f}}$), where
\begin{equation}\label{g1i}
g := \frac{1-v^2}{1+v^2 + \Sigma^{\mathrm{i}}_1 v} \geq 1.
\end{equation}
Using the constraints~\eqref{intro_cons1} and~\eqref{intro_cons2}
for $p$ to replace $\Sigma^{\mathrm{i}}_{2}$ and $\Sigma^{\mathrm{i}}_{3}$ with
$\Sigma^{\mathrm{i}}_{1}$, and the latter with $g$ according to the above equation,
give
\begin{equation}\label{ConstANDgalphaNEW}
N_1^2 = 1-\Sigma^2 = \left(\frac{1-v^2}{v^2}\right)(\eta - 1)(g - \eta)g^{-1}.
\end{equation}
Similar results are obtained by axis permutation for $\mathrm{II}_2$
and $\mathrm{II}_3$. Figure~\ref{Kcirclemap} gives an example of a Bianchi
type $\mathrm{II}_1$ heteroclinic orbit, given
by~\eqref{BIIsol}, \eqref{g1i}, \eqref{ConstANDgalphaNEW}, and
its projected straight line in $(\Sigma_1,\Sigma_2,\Sigma_3)$-space.
\begin{figure}[H]
\centering
    \begin{tikzpicture}[scale=1.7]
        \draw (-1,0) arc (180:540:1cm and 0.25cm);
        \draw[white, ultra thick, loosely dashed] (-1,0) arc (180:0:1cm and 0.25cm);
        \filldraw [black] (1,0) circle (0.1pt) node[anchor=west] {$\mathrm{K}^{\ocircle}$};

        \draw (1,0) arc (0:180:1cm and 1cm);
        \node at (1.1,0.6) {$\mathrm{II}_1$};

        \draw[color=gray,dashed,-] (-1.5,-1) -- (-1,0);\filldraw (-1,0) circle (0.001pt) node[anchor=east] {\scriptsize{$\mathrm{t}_{23}$}};
        \draw[color=gray,dashed,-] (-1.5,-1) -- (0.7,-0.2);\filldraw (0.7,-0.2) circle (0.001pt) node[anchor=north] {\scriptsize{$\mathrm{t}_{32}$}};

        \draw[postaction={decorate}] (-0.7,-0.2) arc (180:109:0.45cm and 1.171cm);
        \draw[dashed] (-0.2,0.3) arc (0:90:0.175cm and 0.61cm);

        \draw[color=gray,dotted] (-1.5,-1) -- (-0.7,-0.2);
        \draw[thick,dotted, postaction={decorate}] (-0.7,-0.2) -- (-0.2,0.25);

        \filldraw (-0.7,-0.2) circle (1.25pt) node[anchor=north] {$p$};
        \filldraw [black] (-0.2,0.25) circle (1.25pt) node[anchor=south west] {$p^{\mathrm{f}}$};

        \filldraw [black] (-1.5,-1) circle (0.1pt) node[anchor=east] {\scriptsize{$\frac{\mathrm{Q}_1}{v}$}};
    \end{tikzpicture}
\caption{An example of a Bianchi type II solution; a heteroclinic orbit in the hemisphere $\mathrm{II}_1$.
Its projection is a (dotted) line parametrized by $\eta$ in $(\Sigma_1,\Sigma_2,\Sigma_3)$-space
given by~\eqref{BIIsol}. There are three special points on this line:
the auxiliary point $\mathrm{Q}_1/v$ outside the physical state space $\mathrm{II}_1$ when $\eta=0$, $p$
when $\eta=1$ and $p^{\mathrm{f}}$ when $\eta=g$. 
Furthermore, the nomenclature `tangential points' is explained: they are the points
where $p=p^\mathrm{f}$ and hence where the aforementioned lines are tangential to
$\mathrm{K}^{\ocircle}$.
}\label{Kcirclemap}
\end{figure}

Using equation~\eqref{BIIsol} to eliminate $\eta$ yields the
unparametrized form of the heteroclinic orbits in $\mathrm{II}_1$,
\begin{equation}\label{BIIstraight}
\left(\Sigma^{\mathrm{i}}_{1} + \frac{2}{v}\right)(\Sigma_2 - \Sigma_3) =
\left(\Sigma^{\mathrm{i}}_{2} - \Sigma^{\mathrm{i}}_{3}\right)\left(\Sigma_1 + \frac{2}{v}\right),
\end{equation}
where a cyclic permutation of $(123)$ yields the orbits in $\mathrm{II}_2$
and $\mathrm{II}_3$. Note that equation~\eqref{BIIstraight} is derived
in Appendix~\ref{app:heterosym} from the scale-automorphism group. In
combination with axis permutations this establishes that the Bianchi
type II heteroclinic chains arise from first principles, in GR,
$\lambda$-$R$ and HL gravity.

The type II heteroclinic orbits induce a map between different Kasner states on the Kasner circle, called
\emph{the Kasner circle map} $\mathcal{K}\!:\mathrm{K}^{\ocircle}\to\mathrm{K}^{\ocircle}$. It
maps the $\alpha$-limits to the $\omega$-limits of
heteroclinic orbits in each of the hemispheres $\mathrm{II}_1$, $\mathrm{II}_2$, $\mathrm{II}_3$, see
Figures~\ref{Kcirclemap} and~\ref{FIG:KASNERMAPS}.
\begin{figure}[H]
\minipage[b]{0.37\textwidth}\centering
\begin{subfigure}\centering
    \begin{tikzpicture}[scale=1.3]

    \draw [line width=0.1pt,domain=0:6.28,variable=\t,smooth] plot ({sin(\t r)},{cos(\t r)});

    \draw [very thick, domain=-0.29:0.29,variable=\t,smooth] plot ({0.975*sin(\t r)},{0.975*cos(\t r)});
    \draw [rotate=120,very thick, domain=-0.29:0.29,variable=\t,smooth] plot ({0.975*sin(\t r)},{0.975*cos(\t r)});
    \draw [rotate=-120,very thick, domain=-0.29:0.29,variable=\t,smooth] plot ({0.975*sin(\t r)},{0.975*cos(\t r)});

    \draw[color=gray,dashed] (0,-2.85) -- (0.98,-0.29);
    \draw[color=gray,dashed] (0,-2.85) -- (-0.98,-0.29);

    \draw[color=gray,rotate=120,dashed] (0,-2.85) -- (0.98,-0.29);
    \draw[color=gray,rotate=120,dashed] (0,-2.85) -- (-0.98,-0.29);

    \draw[color=gray,rotate=240,dashed] (0,-2.85) -- (0.98,-0.29);
    \draw[color=gray,rotate=240,dashed] (0,-2.85) -- (-0.98,-0.29);

    \filldraw[gray] (0,-2.85) circle (0.1pt);

    \draw[rotate=-120,color=gray,dotted] (-0.5,0.85) -- (0,-2.85);
    \draw[rotate=-120] (-0.5,0.85);\node at (1.3,0.1) {\scriptsize{$\mathcal{K}(p)$}};
    \draw[rotate=-120] (-0.3,-0.95);\node at (-0.75,0.8) {\scriptsize{$p$}};

    \draw[rotate=-120,white,ultra thick] (-0.5,0.85) -- (-0.26,-0.95);
    \draw[rotate=-120,dotted, thick, postaction={decorate}] (-0.26,-0.95) -- (-0.5,0.85);




    \draw (0,-2.85) circle (0.1pt) node[anchor=north] {$\frac{\mathrm{Q}_1}{v}$};
    \draw[rotate=240] (0,-2.85) circle (0.1pt) node[anchor=east] {$\frac{\mathrm{Q}_2}{v}$};
    \draw[rotate=120]  (0,-2.85) circle (0.1pt) node[anchor=west] {$\frac{\mathrm{Q}_3}{v}$};

    \draw (0,-0.95) -- (0,-1.05) node[anchor= north] {\scriptsize{$\mathrm{Q}_1$}};
    \draw[rotate=120] (0,-0.95) -- (0,-1.05) node[anchor= west] {\scriptsize{$\mathrm{Q}_3$}};
    \draw[rotate=-120] (0,-0.95) -- (0,-1.05) node[anchor= east] {\scriptsize{$\mathrm{Q}_2$}};

    \node at (0,1.2) {\scriptsize{$A_2\cap A_3$}};
    \node at (-1.4,-0.57) {\scriptsize{$A_1\cap A_2$}};
    \node at (1.4,-0.57) {\scriptsize{$A_1\cap A_3$}};

\end{tikzpicture}
    \addtocounter{subfigure}{-1}\captionof{subfigure}{\footnotesize{Subcritical: $v\in (0,1/2)$. }}\label{FIG:KASNERMAPS1}
\end{subfigure}
\endminipage\hfill
\minipage[b]{0.31\textwidth}\centering

\begin{subfigure}\centering
    \begin{tikzpicture}[scale=1.3]
    \draw [line width=0.1pt,domain=0:6.28,variable=\t,smooth] plot ({sin(\t r)},{cos(\t r)});

    \draw[color=gray,dashed,-] (-1.75,1) -- (1.75,1);
    \draw[color=gray,rotate=120,dashed,-] (-1.75,1) -- (1.75,1);
    \draw[color=gray,rotate=240,dashed,-] (-1.75,1) -- (1.75,1);

    \draw[color=gray,dotted] (0,1) -- (0,-1.99);
    \filldraw (0,-2) circle (0.1pt);

    \draw[white,ultra thick] (0,1) -- (0,-1);
    \draw[dotted, thick, postaction={decorate}] (0,-1) -- (0,1);


    \draw (0,-2) circle (0.1pt) node[anchor=north] {\scriptsize{$2\mathrm{Q}_1$}};
    \draw[rotate=240] (0,-2) circle (0.1pt) node[anchor=east] {\scriptsize{$2\mathrm{Q}_2$}};
    \draw[rotate=120]  (0,-2) circle (0.1pt) node[anchor=west] {\scriptsize{$2\mathrm{Q}_3$}};

    \filldraw [black] (0,1) circle (1.25pt) node[anchor= south] {\scriptsize{$\mathcal{K}(p)=\mathrm{T}_1$}};
    \filldraw [black] (0.88,-0.49) circle (1.25pt) node[anchor= north west] {\scriptsize{$\mathrm{T}_2$}};
    \filldraw [black] (-0.88,-0.49) circle (1.25pt)node[anchor= north east] {\scriptsize{$\mathrm{T}_3$}};

    \draw (0,-0.95) -- (0,-1.05) node[anchor= north] {\scriptsize{$p=\mathrm{Q}_1$}};
    \draw[rotate=120] (0,-0.95) -- (0,-1.05) node[anchor= west] {\scriptsize{$\mathrm{Q}_3$}};
    \draw[rotate=-120] (0,-0.95) -- (0,-1.05) node[anchor= east] {\scriptsize{$\mathrm{Q}_2$}};
\end{tikzpicture}
    \addtocounter{subfigure}{-1}\captionof{subfigure}{\footnotesize{Critical: $v=1/2$. }}\label{FIG:KASNERMAPS2}
\end{subfigure}
\endminipage\hfill
\minipage[b]{0.31\textwidth}\centering
\begin{subfigure}\centering
    \begin{tikzpicture}[scale=1.3]
    \draw [line width=0.1pt,domain=0:6.28,variable=\t,smooth] plot ({sin(\t r)},{cos(\t r)});

    \draw [ultra thick, dotted, white, domain=-0.26:0.26,variable=\t,smooth] plot ({sin(\t r)},{cos(\t r)});
    \draw [ultra thick, dotted, white, domain=1.83:2.35,variable=\t,smooth] plot ({sin(\t r)},{cos(\t r)});
    \draw [ultra thick, dotted, white, domain=3.87:4.39,variable=\t,smooth] plot ({sin(\t r)},{cos(\t r)});

    \draw[color=gray,dashed,-] (0,-1.35) -- (0.685,-0.75);
    \draw[color=gray,dashed,-] (0,-1.35) -- (-0.685,-0.75);

    \draw[color=gray,rotate=120,dashed,-] (0,-1.35) -- (0.685,-0.75);
    \draw[color=gray,rotate=120,dashed,-] (0,-1.35) -- (-0.685,-0.75);

    \draw[color=gray,rotate=-120,dashed,-] (0,-1.35) -- (0.685,-0.75);
    \draw[color=gray,rotate=-120,dashed,-] (0,-1.35) -- (-0.685,-0.75);

    \draw[rotate=120,color=gray,dotted] (0,-1.35) -- (-0.9,0.5);
    \node at (1.1,0.3) {\scriptsize{$p$}};

    \draw[rotate=120,white,ultra thick] (-0.2,-0.95) -- (-0.88,0.45);
    \draw[rotate=120,dotted, thick, postaction={decorate}] (-0.2,-0.95) -- (-0.88,0.45);



    \draw (0,-1.35) circle (0.1pt) node[anchor=north] {\scriptsize{$\frac{\mathrm{Q}_1}{v}$}};
    \draw[rotate=240] (0,-1.35) circle (0.1pt) node[anchor=east] {\scriptsize{$\frac{\mathrm{Q}_2}{v}$}};
    \draw[rotate=120]  (0,-1.35) circle (0.1pt) node[anchor=west] {\scriptsize{$\frac{\mathrm{Q}_3}{v}$}};


    \draw (0,-0.95) -- (0,-1.05); \node at (0,-0.75) {\scriptsize{$\mathcal{K}(p)=\mathrm{Q}_1$}};
    \draw[rotate=120] (0,-0.95) -- (0,-1.05) node[anchor= north east] {\scriptsize{$\mathrm{Q}_3$}};
    \draw[rotate=-120] (0,-0.95) -- (0,-1.05) node[anchor= north west] {\scriptsize{$\mathrm{Q}_2$}};

    \node at (0,1.2) {\scriptsize{$S$}};
    \node at (-1.1,-0.57) {\scriptsize{$S$}};
    \node at (1.1,-0.57) {\scriptsize{$S$}};

\end{tikzpicture}
    \addtocounter{subfigure}{-1}\captionof{subfigure}{\footnotesize{Supercritical: $v\in (1/2,1)$. }}\label{FIG:KASNERMAPS3}
\end{subfigure}
\endminipage
\captionof{figure}{The Kasner circle map $\mathcal{K}$ can be obtained from the straight
lines that emanate from the three auxiliary points $\mathrm{Q}_\alpha /v $, which
intersect with two points in the set $\mathrm{K}^{\ocircle}$: $p$
and $\mathcal{K}(p):=p^\mathrm{f}$.
Each (bold dotted) line represents the projection onto $(\Sigma_1,\Sigma_2,\Sigma_3)$-space
of a heteroclinic orbit from different hemispheres $\mathrm{II}_\alpha$,
originating from the auxiliary point $\mathrm{Q}_\alpha /v$.
Note that the points $\mathrm{Q}_\alpha /v $ approach
$\mathrm{Q}_\alpha$ as $v\to 1$, whereas $\mathrm{Q}_\alpha /v$ goes to infinity as $v\to 0$.
}\label{FIG:KASNERMAPS}
\end{figure}

Each point $p = (\Sigma_\alpha^\mathrm{i},\Sigma_\beta^\mathrm{i},\Sigma_\gamma^\mathrm{i})$
in the set $\mathrm{K}^{\ocircle}$ is thereby mapped to
$p^\mathrm{f} = (\Sigma_\alpha^\mathrm{f},\Sigma_\beta^\mathrm{f},\Sigma_\gamma^\mathrm{f})$
in $\mathrm{K}^{\ocircle}$, where $p^\mathrm{f}$ is obtained from~\eqref{BIIsol}
and permutations thereof by setting $\eta=g$. Thus,
\begin{equation}\label{KasnerCirc}
{\cal K}(p):=
\begin{cases}
g(p)p + (g(p)-1)\frac{\mathrm{T}_\alpha}{v} & \text{ for }\, p  \in A_\alpha\\
p &  \text{ for }  p \notin A_1 \cup A_2 \cup A_3
\end{cases}\,,
\end{equation}
where
\begin{equation}\label{galpha}
g(p):=\frac{1-v^2}{1+v^2 + \Sigma_\alpha^\mathrm{i} v} \geq 1,
\qquad \text{ for } p =(\Sigma_1^\mathrm{i},
\Sigma_2^\mathrm{i}, \Sigma_3^\mathrm{i})  \in A_\alpha ,
\end{equation}
and where the index $\alpha$ in $\Sigma_\alpha^\mathrm{i}$ is the same index as
for $A_\alpha$.

When $v\in [1/2, 1)$ the Kasner circle map $\mathcal{K}$ is well-defined and continuous, since
the unstable arcs $\mathrm{int}(A_1)$, $\mathrm{int}(A_2)$ and $\mathrm{int}(A_3)$ are
disjoint. Note that the set $S$ consists of fixed points of the Kasner circle map $\mathcal{K}$.
In the critical case, $v=1/2$, the Kasner circle map $\mathcal{K}$ is the Mixmaster map,
discussed in Section~\ref{sec:critical}, while the dynamics of $\mathcal{K}$
in the supercritical case $(1/2,1)$ is discussed in Section~\ref{sec:superINF}.

For $v\in [0,1/2)$, however, $\mathcal{K}$ is not a well-defined map, since the unstable
arcs $\mathrm{int}(A_\alpha)$ overlap and points in the overlapping
regions $\mathrm{int}(A_\alpha \cap A_\beta)$ have two possible
Bianchi type II heteroclinic orbits, making $\mathcal{K}$ multivalued.
Moreover, a discontinuity on at least one of the boundary points of $A_\alpha \cap A_\beta$
is inevitable, since one must change the auxiliary vertex $\mathrm{Q}_\alpha/v$
for the map, see Figure~\ref{FIG:KASNERMAPS}. Nevertheless, we can still
define iterates of $\mathcal{K}$ through a family of
piece-wise continuous maps to capture
features of the dynamics, as explored in Section~\ref{sec:sub}.


To describe the expansion properties of the Kasner circle
map~\eqref{KasnerCirc}, it is convenient to first introduce
Misner parametrized variables $(\Sigma_+, \Sigma_-)$ adapted to the arc $A_1$, which,
according to Appendix~\ref{app:dom}, are given by
\begin{subequations}\label{Misner}
\begin{align}
\Sigma_1 &= - 2\Sigma_+,\\
\Sigma_2 &= \Sigma_+ + \sqrt{3} \Sigma_-,\\
\Sigma_3 &= \Sigma_+ - \sqrt{3} \Sigma_-,
\end{align}
\end{subequations}
which leads to $\Sigma^2 = \Sigma_+^2 + \Sigma_-^2$ where
$\Sigma^2=1$ on $\mathrm{K}^{\ocircle}$, due to the constraint~\eqref{intro_cons1},
thereby yielding a circle with unit radius. The variables $(\Sigma_+,\Sigma_-)$ have
the advantage of solving the constraint~\eqref{intro_cons2}, but the
drawback of making the permutation symmetry~\eqref{permSYM} implicit, whereas
it is explicit in  $(\Sigma_1,\Sigma_2,\Sigma_3)$.

The variables $\Sigma_\pm$ lead to the following form
for the Kasner circle map~\eqref{KasnerCirc}:
%
\begin{subequations}\label{K_+-}
\begin{align}
\mathcal{K}_+(\Sigma^\mathrm{i}_+,\Sigma^\mathrm{i}_-) &= g(-2\Sigma^\mathrm{i}_+)\left[\Sigma^\mathrm{i}_+ - \frac{1}{v}\right] + \frac{1}{v},\\
\mathcal{K}_-(\Sigma^\mathrm{i}_+,\Sigma^\mathrm{i}_-) &= g(-2\Sigma^\mathrm{i}_+)\Sigma^\mathrm{i}_-,\label{bouncephi}
\end{align}
\end{subequations}
where $g(-2\Sigma^\mathrm{i}_+)$ is given by~\eqref{galpha} in the Misner parametrization~\eqref{Misner} of $p\in A_1$.
%
%

Next we introduce an angular variable $\varphi$ adapted to $A_1$,
\begin{subequations}\label{polar}
\begin{align}
\Sigma_+ &= \cos(\varphi),\\
\Sigma_- &=\sin(\varphi),
\end{align}
\end{subequations}
which solves the remaining constraint~\eqref{intro_cons1}, since
$\Sigma^2 = \Sigma_+^2 + \Sigma_-^2 = 1$ 
(similar variables $\Sigma_\pm$ with associated angles $\varphi$ can be
introduced for $A_2$ and $A_3$, by permutation of the axes).
The map~\eqref{K_+-} can then be replaced by a map with the arc-length
$\varphi$ of the Kasner unit circle $\mathrm{K}^\ocircle$ as its domain,
\begin{equation}
\mathcal{K}(\varphi^\mathrm{i})=\int_{\mathrm{t}_{23}}^{\varphi^\mathrm{i}} \sqrt{ \left(D\mathcal{K}_+(\varphi)\right)^2 + \left(D\mathcal{K}_-(\varphi)\right)^2}\, d\varphi,
\end{equation}
where $D = d/d\varphi$ on $A_1$, and similarly for $A_2$ and $A_3$.

The derivative of $\mathcal{K}(\varphi^\mathrm{i})$ with respect to
$\varphi^\mathrm{i}$ is the tangent vector, with length
\begin{equation}\label{K'arc}
|D\mathcal{K}(\varphi^\mathrm{i})| = \sqrt{(D\mathcal{K}_+(\varphi^\mathrm{i}))^2 + (D\mathcal{K}_-(\varphi^\mathrm{i}))^2}.
\end{equation}
Applying the chain rule to~\eqref{K_+-} at $\varphi^\mathrm{i}$ 
%
%
yields
\begin{equation}\label{DKg}
|D\mathcal{K}(p)| = g(p)= \frac{1 - v^2}{1 + v^2 - 2\cos(\varphi^\mathrm{i})v},
\end{equation}
where $g$ in~\eqref{galpha} is expressed
in $\varphi^\mathrm{i}$ by means of~\eqref{polar}, which yields
$\Sigma_1^\mathrm{i} = -2\Sigma_+^\mathrm{i} = - 2\cos(\varphi^\mathrm{i})$.
Using the symmetry under axes permutations \eqref{permSYM} proves the following Lemma:
\begin{lemma}\label{KasnerCircMapEXP}
The derivative of the Kasner circle map $\mathcal{K}$ with
respect to the arc-length of the Kasner unit circle $\mathrm{K}^{\ocircle}$,
$\varphi\in A_\alpha$, is given by
\begin{equation}\label{KasnerCircPrime}
|D\mathcal{K}(p)| =
\begin{cases}
g(p) & \text{ for } p  \in A_\alpha\\
1 &  \text{ for }  p \notin A_1 \cup A_2 \cup A_3.
\end{cases}
\end{equation}
\end{lemma}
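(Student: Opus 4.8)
The plan is to exploit the explicit piecewise formula~\eqref{KasnerCirc} for $\mathcal{K}$ together with the expansion factor on the arc $A_1$ already recorded in~\eqref{DKg}, and then to transfer the statement to $A_2$ and $A_3$ by the axis permutation symmetry~\eqref{permSYM}. The second line of~\eqref{KasnerCirc} gives $\mathcal{K}(p)=p$ whenever $p\notin A_1\cup A_2\cup A_3$ (this region is nonempty precisely in the supercritical range $v\in(1/2,1)$, where the unstable arcs are disjoint), so on that set $\mathcal{K}$ is the identity and $|D\mathcal{K}(p)|=1$ with nothing to prove. Hence the entire content of the lemma reduces to the single arc $A_1$, plus a symmetrization.

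For $A_1$ I would argue as follows. In the Misner variables~\eqref{Misner} and the angle~\eqref{polar}, a point $p\in A_1\subset\mathrm{K}^{\ocircle}$ reads $(\Sigma_+^{\mathrm{i}},\Sigma_-^{\mathrm{i}})=(\cos\varphi^{\mathrm{i}},\sin\varphi^{\mathrm{i}})$, and by~\eqref{K_+-} its image has coordinates $\mathcal{K}_+=g(\cos\varphi^{\mathrm{i}}-\tfrac1v)+\tfrac1v$ and $\mathcal{K}_-=g\sin\varphi^{\mathrm{i}}$, with $g=g(p)=(1-v^2)/(1+v^2-2v\cos\varphi^{\mathrm{i}})$ from~\eqref{galpha}. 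The first step is to check that $\mathcal{K}_+^2+\mathcal{K}_-^2=1$, i.e.\ that $\mathcal{K}(p)$ again lies on the unit circle $\mathrm{K}^{\ocircle}$; this is a short substitution using $1+v^2-2v\cos\varphi^{\mathrm{i}}=(1-v^2)/g$. Granting this, the image point is itself parametrized by its own arc length $\psi$, so the quantity $|D\mathcal{K}|$ defined in~\eqref{K'arc} from the ambient derivatives is exactly $|d\psi/d\varphi^{\mathrm{i}}|$; and since $(\mathcal{K}_+,\mathcal{K}_-)$ is a unit vector orthogonal to its own $\varphi^{\mathrm{i}}$-derivative, one has $(D\mathcal{K}_+)^2+(D\mathcal{K}_-)^2=(\mathcal{K}_+\,D\mathcal{K}_- - \mathcal{K}_-\,D\mathcal{K}_+)^2$. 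Differentiating the above expressions for $\mathcal{K}_\pm$ and inserting $g'=-2v\sin\varphi^{\mathrm{i}}\,g^2/(1-v^2)$ and $1-g=2v(v-\cos\varphi^{\mathrm{i}})\,g/(1-v^2)$, the cross term collapses, after routine algebra, to $-g$; since $g\ge 1>0$ this yields $|D\mathcal{K}(p)|=g(p)$ on $A_1$, which is~\eqref{DKg}.

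To promote this to $A_2$ and $A_3$ I would use~\eqref{permSYM}. That transformation is a linear isometry of $\mathbb{R}^6$, hence restricts to an arc-length-preserving map of $\mathrm{K}^{\ocircle}$; for an appropriate permutation it sends $A_1$ bijectively onto $A_\alpha$ and the auxiliary point $\mathrm{Q}_1/v$ to $\mathrm{Q}_\alpha/v$, so it conjugates the branch of $\mathcal{K}$ acting on $A_1$ onto the branch acting on $A_\alpha$ while taking $g(p)$ to $g$ of the image with the index relabelled. Therefore $|D\mathcal{K}(p)|=g(p)$ for every $p\in A_\alpha$, $\alpha=1,2,3$, and together with the identity branch this is precisely~\eqref{KasnerCircPrime}. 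As a consistency check, at a tangential point $\Sigma_\alpha|_{\mathrm{K}^{\ocircle}}=-2v$, i.e.\ $\cos\varphi^{\mathrm{i}}=v$, the formula gives $g=1$, so the two branches agree on $\partial A_\alpha$ and $|D\mathcal{K}|$ is continuous on $\mathrm{K}^{\ocircle}$ for $v\in[1/2,1)$, the range in which $\mathcal{K}$ is a genuine map.

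I do not expect a serious obstacle. The two points that need care are (i) the bookkeeping that $|D\mathcal{K}|$ in~\eqref{K'arc}, built from the ambient derivatives $D\mathcal{K}_\pm$, really is the arc-length-to-arc-length derivative on $\mathrm{K}^{\ocircle}$ — which is exactly why one must first verify that $\mathcal{K}$ maps the unit circle into itself — and (ii) keeping the index conventions straight under~\eqref{permSYM}, since $\mathcal{K}$ is built from a different auxiliary point $\mathrm{Q}_\alpha/v$ on each arc. The longest single computation is the collapse of the cross term $\mathcal{K}_+D\mathcal{K}_- - \mathcal{K}_-D\mathcal{K}_+$ to $-g$, but it is entirely mechanical and is essentially the content of~\eqref{DKg} already established in the text.
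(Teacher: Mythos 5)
Your proof is correct and follows essentially the same route as the paper: pass to the Misner/polar parametrization on the arc $A_1$, compute the arc-length expansion factor via~\eqref{K'arc}, and transfer to $A_2,A_3$ by the axis permutation~\eqref{permSYM}, with the identity branch handled trivially. The paper simply asserts the chain-rule result~\eqref{DKg}, while you supply the omitted algebra via the orthogonality identity $(D\mathcal{K}_+)^2+(D\mathcal{K}_-)^2=(\mathcal{K}_+D\mathcal{K}_--\mathcal{K}_-D\mathcal{K}_+)^2$ after first checking that the image lies on $\mathrm{K}^{\ocircle}$ — a clean way to organize the same computation.
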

In other words, the Kasner circle map $\mathcal{K}$ is expanding on the interior of each $A_\alpha$,
but not uniformly\footnote{Hence $\mathcal{K}$ is an example of a non-uniformly
hyperbolic circle map, and it would be interesting to investigate its dynamical
properties with recent mathematical methods developed in~\cite{Katok95,BDV05, dut19} and references therein.}
since $g$ is a varying function that attains 1 at $\partial A_\alpha$. In the arc $A_1$, the
map $\mathcal{K}$ is symmetric with respect to $\mathrm{Q}_1$,
which is due to the permutation of $\Sigma_2$ and $\Sigma_3$ according to~\eqref{permSYM},
and it is monotonically increasing on each side of $\mathrm{Q}_1$ starting from the tangential points $\mathrm{t}_{32}$ and $\mathrm{t}_{23}$, where $g= 1$, until $g$ reaches its maximum $g=(1+v)/(1-v)$ at
$\mathrm{Q}_1$, where $\Sigma_1^\mathrm{i}=-2$, see Figure~\ref{fig:plotg}.
Similar statements hold for $A_2$ and $A_3$ by permuting the axes, as in~\eqref{permSYM}.
\begin{figure}[H]
\centering
    \begin{tikzpicture}[scale=1.2]
        \draw[->] (0,-0.1) -- (0,3.5)node[anchor=south] {$g(p)$};
        \draw[->] (-0.1,0) -- (6.28,0)node[anchor=west] {$p\in \mathrm{K}^{\ocircle}$};

        \draw[color=gray,dashed] (3.14,3) -- (0,3) node[anchor=east] {\color{black} $\frac{1+v}{1-v}$};
        \draw[color=gray,dashed] (3.14,3) -- (3.14,0) node[anchor=north] {\color{black} $\mathrm{Q}_1$};

        \draw[color=gray,dashed] (2.09,1) -- (2.09,0) node[anchor=north] {\color{black} $\mathrm{t}_{32}$};

        \draw[color=gray,dashed] (4.18,1) -- (0,1) node[anchor=east] {\color{black} $1$};
        \draw[color=gray,dashed] (4.18,1) -- (4.18,0) node[anchor=north] {\color{black} $\mathrm{t}_{23}$};

        \draw[-] (2.09,0) -- (4.18,0);

        \draw [domain=2.09:4.18,variable=\t,smooth] plot ({\t},{(1-(1/2)^2)/(1+(1/2)^2+2*(1/2)*cos(\t r))});
    \end{tikzpicture}
\caption{The function $g(p)$ for $p\in A_1$ between the tangential points
$\mathrm{t}_{32}$ and $\mathrm{t}_{23}$.}\label{fig:plotg}
\end{figure}
%

%

%
%
%

In Bianchi types VIII and IX, it is possible to concatenate type II
heteroclinic orbits on the type I and II boundaries
to form heteroclinic chains.
This heteroclinic structure is expected to play a key role
for type VIII and IX when $\tau_-\rightarrow\infty$, and is
the focus of the next three sections.
However, before proceeding, we describe the bifurcations at
$v=0$ and $v=1$. We then construct heteroclinic chains with period 3
when $v\in [0,1]$, as an example of concatenation of heteroclinic Bianchi
type II orbits.

\subsubsection*{The cases $v=0$ and $v=1$}\label{sec:bdrycases}

Even though the cases $v=0$ and $v=1$ are not our main focus,
they are useful in order to obtain results for $v\in(0,1)$, as illustrated
by the construction of the heteroclinic cycles/chains with period 3 below.
In contrast to when $v\in(0,1)$, 
the Kasner circle map $\mathcal{K}$ is not chaotic for $v=0$ and $v=1$, and thus bifurcations occur at these parameter values, see Figure~\ref{FIG:BDRYmaps}.

As $v\rightarrow 0$, the heteroclinic orbits become parallel
lines\footnote{Multiplying~\eqref{BIIstraight} by $v$ and setting $v=0$ results in
$\Sigma_2 - \Sigma^{\mathrm{i}}_{2} = \Sigma_3 - \Sigma^{\mathrm{i}}_{3}$ for
the $v=0$ type $\mathrm{II}_1$ models, in agreement with Figure~\ref{FIG:BDRYmaps}
when $v=0$. In~\cite{giakam17}, the authors considered Bianchi type II models
with $\lambda=1$ and a quadratic curvature term, which is mathematically
equivalent to $v=1/8$ in the $\lambda$-$R$ case, and
a cubic curvature term, which corresponds to $v=0$, as seen in Appendix~\ref{appsubsec:HL}.
In the latter case, it was pointed out that the Kasner parameter
$u = u^{\mathrm{i}} = \sqrt{3} + 1$ results in $u^{\mathrm{i}} = u^{\mathrm{f}}$,
which yields the period 3 heteroclinic cycles in the present formulation. See \cite{LappicyLessard} for a broad discussion on the case $v=0$. 
}
, see Figure~\ref{FIG:BDRYmaps}.
The derivative of the Kasner circle map $\mathcal{K}$, given by~\eqref{KasnerCircPrime},
thereby equals $1$ at any point on $\mathrm{K}^\ocircle$, as is seen from the
limit $v\rightarrow 0$ in equation~\eqref{galpha}. 
Since there is no expansion, the case $v=0$ has a network of
heteroclinic orbits 
that is not associated with chaos, but
see Appendix~\ref{appsubsec:HL} for further discussions on
HL models and their relation to the case $v=0$.
Note the connection with `frame transitions' in, e.g., Bianchi type VI$_{-1/9}$ vacuum models,
and when using an Iwasawa frame in GR~\cite{ugg13a,ugg13b,heietal09,dametal03},
since these also consist of parallel heteroclinic orbits. In contrast to these
situations in GR, however, there are three (instead of two) families of non-expanding
orbits when $v=0$, and no family of expanding type II orbits.


As $v\rightarrow 1$, 
the Kasner circle map $\mathcal{K}$ is not continuous anymore:
it becomes the identity on $\mathrm{K}^\ocircle$, except at each of the three points
$\mathrm{Q}_\alpha$, which are mapped to the entire set $\mathrm{K}^\ocircle$.
In particular, the points $\mathrm{Q}_\alpha$ are mapped
to each other, thereby yielding a network of heteroclinic chains: chains of
period 2 between each two 
points $\mathrm{Q}_\alpha$ and $\mathrm{Q}_\beta$, and chains
with period 3 between the three 
points $\mathrm{Q}_\alpha$. The situation for $v=1$ is somewhat
reminiscent to that of the Bianchi type I Einstein-Vlasov models, where
there is a heteroclinic network associated with the Taub points
$\mathrm{T}_\alpha$, see~\cite{heiugg06}; for a recent paper on
the future dynamics of these Einstein-Vlasov models, see~\cite{leeetal20}.


\begin{figure}[H]\centering
\minipage{0.49\textwidth}\centering
\begin{subfigure}\centering
\begin{tikzpicture}[scale=1.3]

    \draw[color=gray,dashed,-] (-1,-1.5) -- (-1,0);
    \draw[color=gray,dashed,-] (1,-1.5) -- (1,0);

    \draw[color=gray,dashed,-,rotate=120] (-1,-1.5) -- (-1,0);
    \draw[color=gray,dashed,-,rotate=240] (-1,-1.5) -- (-1,0);

    \draw[color=gray,dashed,-,rotate=120] (1,-1.5) -- (1,0);
    \draw[color=gray,dashed,-,rotate=240] (1,-1.5) -- (1,0);

    \draw[color=gray,dotted] (0,1) -- (0,-1.5);

    \draw[color=gray,dotted] (-0.33,-1.5) -- (-0.33,0.95);
    \draw[color=gray,dotted] (0.33,-1.5) -- (0.33,0.95);

    \draw[color=gray,dotted] (-0.66,-1.5) -- (-0.66,0.75);
    \draw[color=gray,dotted] (0.66,-1.5) -- (0.66,0.75);

    \draw[white,ultra thick] (0,1) -- (0,-1);
    \draw[dotted,thick, postaction={decorate}] (0,-1) -- (0,1);

    \draw[color=white, ultra thick] (-0.33,-0.95) -- (-0.33,0.95);
    \draw[dotted,thick, postaction={decorate}] (-0.33,-0.95) -- (-0.33,0.95);

    \draw[color=white, ultra thick] (0.33,-0.95) -- (0.33,0.95);
    \draw[dotted,thick, postaction={decorate}] (0.33,-0.95) -- (0.33,0.95);

    \draw[color=white, ultra thick] (-0.66,-0.75) -- (-0.66,0.75);
    \draw[dotted,thick, postaction={decorate}] (-0.66,-0.75) -- (-0.66,0.75);

    \draw[color=white, ultra thick] (0.66,-0.75) -- (0.66,0.75);
    \draw[dotted,thick, postaction={decorate}] (0.66,-0.75) -- (0.66,0.75);

    \draw [ultra thick, domain=-0.52:0.52,variable=\t,smooth] plot ({sin(\t r)},{cos(\t r)});
    \draw [ultra thick, domain=-0.52:0.52,variable=\t,smooth, rotate=120] plot ({sin(\t r)},{cos(\t r)});
    \draw [ultra thick, domain=-0.52:0.52,variable=\t,smooth, rotate=240] plot ({sin(\t r)},{cos(\t r)});

    \draw [domain=0:6.28,variable=\t,smooth] plot ({sin(\t r)},{cos(\t r)});

    \draw (0,-0.95) -- (0,-1.05);
    \draw[rotate=120] (0,-0.95) -- (0,-1.05);
    \draw[rotate=-120] (0,-0.95) -- (0,-1.05);

\end{tikzpicture}
    \addtocounter{subfigure}{-1}\captionof{subfigure}{\footnotesize{For $v\rightarrow 0$,
    the points $\mathrm{Q}_\alpha /v \rightarrow\infty$. Hence the type II heteroclinic orbits are parallel
    lines that emanate from infinity. The overlapping arcs (in bold) have two
    unstable directions. Moreover, $|A_\alpha|=\pi$ and
    $|A_\alpha \cap A_\beta|=\pi/3$. \label{fig:v=0} }}
\end{subfigure}
\endminipage\hfill
\minipage{0.49\textwidth}\centering
\begin{subfigure}\centering
\begin{tikzpicture}[scale=1.3]
    \draw [domain=0:6.3,variable=\t,smooth] plot ({sin(\t r)},{cos(\t r)});

    \draw [ultra thick, dotted, white, domain=0:6.28,variable=\t,smooth] plot ({sin(\t r)},{cos(\t r)});


    \draw (0,-0.95) -- (0,-1.05) node[anchor= north] {\scriptsize{$\mathrm{Q}_1$}};
    \draw[rotate=120] (0,-0.95) -- (0,-1.05) node[anchor= west] {\scriptsize{$\mathrm{Q}_3$}};
    \draw[rotate=-120] (0,-0.95) -- (0,-1.05) node[anchor= east] {\scriptsize{$\mathrm{Q}_2$}};

    \draw[dotted,thick, postaction={decorate}] (0,-1) -- (0,1);

    \draw[dotted,thick, postaction={decorate}] (0,-1) -- (0.47,0.87);
    \draw[dotted,thick, postaction={decorate}] (0,-1) -- (-0.47,0.87);

    \draw[dotted,thick, postaction={decorate}] (0,-1) -- (0.86,0.48);
    \draw[dotted,thick, postaction={decorate}] (0,-1) -- (-0.86,0.48);

    \draw[dotted,thick, postaction={decorate}] (0,-1) -- (0.95,-0.1);
    \draw[dotted,thick, postaction={decorate}] (0,-1) -- (-0.95,-0.1);

    \draw[color=white,dotted] (-1,-1.5) -- (-1,-1.49);
    \draw[color=white,dotted,rotate=120] (-1,-1.5) -- (-1,-1.49);
    \draw[color=white,dotted,rotate=-120] (-1,-1.5) -- (-1,-1.49);

\end{tikzpicture}
    \addtocounter{subfigure}{-1}
    \captionof{subfigure}{\footnotesize{For $v\rightarrow 1$, the points
    $\mathrm{Q}_\alpha /v \rightarrow \mathrm{Q}_\alpha$ on $\mathrm{K}^\ocircle$,
    where $A_\alpha$ consists of a single point $\mathrm{Q}_\alpha$. Hence any point on
    $\mathrm{K}^\ocircle$ can be reached by a type II heteroclinic orbit from $\mathrm{Q}_\alpha$.
    The dashed arcs have three stable directions. \label{fig:v=1}}}
\end{subfigure}
\endminipage
\captionof{figure}{Heteroclinic type II orbits for $v=0$ and $v=1$ projected
onto $(\Sigma_1, \Sigma_2, \Sigma_3)$-space.}\label{FIG:BDRYmaps}
\end{figure}
\subsubsection*{Example of Bianchi type II concatenation: Period 3 chains}

We will now construct heteroclinic chains with period 3 (i.e., period 3 heteroclinic cycles), and describe
how these chains change as the parameter $v\in [0,1]$ varies. For
the GR case $v=1/2$, these chains/cycles have been previously found,
see e.g.~\cite{heiugg09a}. First, note that chains with period 3 consist
of equilateral triangles in the plane of the Kasner circle in the
projected $(\Sigma_1,\Sigma_2,\Sigma_3)$-space, which follows from the
permutation symmetry described in~\eqref{permSYM}, where the corners of
the triangles on $\mathrm{K}^\ocircle$ correspond to physically equivalent
Kasner states, again related by axis permutations. Second, there are two equilateral
triangles for each value of $v \in [0,1)$, which due to~\eqref{permSYM}
are symmetric with respect to reflections with respect to the coordinate
lines $\Sigma_\alpha$, while the two triangles coalesce to a single one
with corners at the points $\mathrm{Q}_\alpha$ when $v=1$.
Third, the triangles depict $(\Sigma_1,\Sigma_2,\Sigma_3)$-space
projections of two different heteroclinic chains on the
Bianchi type II boundary of the Bianchi type VIII and IX state spaces,
with clockwise and anti-clockwise orientation of the projected
heteroclinic chains in $(\Sigma_1,\Sigma_2,\Sigma_3)$-space, see Figure~\ref{FIG:period3}.
\begin{figure}[H]\centering
\minipage{0.29\textwidth}\centering
\begin{subfigure}\centering
\begin{tikzpicture}[scale=1.3]

    \draw[color=white,dashed] (0,-2.85) -- (0,-2.849);
    \draw[color=white,dashed,rotate=120] (0,-2.85) -- (0,-2.849);


    \draw[color=gray,dashed,-] (-1,-1.5) -- (-1,0);
    \draw[color=gray,dashed,-] (1,-1.5) -- (1,0);

    \draw[color=gray,dashed,-,rotate=120] (-1,-1.5) -- (-1,0);
    \draw[color=gray,dashed,-,rotate=240] (-1,-1.5) -- (-1,0);

    \draw[color=gray,dashed,-,rotate=120] (1,-1.5) -- (1,0);
    \draw[color=gray,dashed,-,rotate=240] (1,-1.5) -- (1,0);

    \draw[color=gray, dotted] (0.5,-1.5) -- (0.5,0.85);
    \draw[color=gray, dotted,rotate=120] (0.5,-1.5) -- (0.5,0.85);
    \draw[color=gray, dotted,rotate=240] (0.5,-1.5) -- (0.5,0.85);

    \draw[color=gray,dotted] (-0.5,-1.5) -- (-0.5,0.85);
    \draw[color=gray,dotted,rotate=120] (-0.5,-1.5) -- (-0.5,0.85);
    \draw[color=gray,dotted,rotate=240] (-0.5,-1.5) -- (-0.5,0.85);

    \draw[color=white, ultra thick] (-0.5,-0.85) -- (-0.5,0.85);
    \draw[loosely dotted,thick, postaction={decorate}] (-0.5,-0.85) -- (-0.5,0.85);

    \draw[color=white, ultra thick,rotate=120] (-0.5,-0.85) -- (-0.5,0.85);
    \draw[loosely dotted,thick,rotate=120, postaction={decorate}] (-0.5,-0.85) -- (-0.5,0.85);

    \draw[color=white, ultra thick,rotate=240] (-0.5,-0.85) -- (-0.5,0.85);
    \draw[loosely dotted,thick,rotate=240, postaction={decorate}] (-0.5,-0.85) -- (-0.5,0.85);

    \draw[color=white, ultra thick] (0.5,-0.85) -- (0.5,0.85);
    \draw[ dotted,thick, postaction={decorate}] (0.5,-0.85) -- (0.5,0.85);

    \draw[color=white, ultra thick,rotate=120] (0.5,-0.85) -- (0.5,0.85);
    \draw[ dotted,thick,rotate=120, postaction={decorate}] (0.5,-0.85) -- (0.5,0.85);

    \draw[color=white, ultra thick,rotate=240] (0.5,-0.85) -- (0.5,0.85);
    \draw[ dotted,thick,rotate=240, postaction={decorate}] (0.5,-0.85) -- (0.5,0.85);

    \draw [ultra thick, domain=-0.52:0.52,variable=\t,smooth] plot ({sin(\t r)},{cos(\t r)});
    \draw [ultra thick, domain=-0.52:0.52,variable=\t,smooth, rotate=120] plot ({sin(\t r)},{cos(\t r)});
    \draw [ultra thick, domain=-0.52:0.52,variable=\t,smooth, rotate=240] plot ({sin(\t r)},{cos(\t r)});

    \draw (0,-0.95) -- (0,-1.05);
    \draw[rotate=120] (0,-0.95) -- (0,-1.05);
    \draw[rotate=-120] (0,-0.95) -- (0,-1.05);

    \draw [domain=0:6.28,variable=\t,smooth] plot ({sin(\t r)},{cos(\t r)});

 \end{tikzpicture}
    \addtocounter{subfigure}{-1}\captionof{subfigure}{\footnotesize{ $v=0$. \label{fig:PERIOD3v=0} }}
\end{subfigure}
\endminipage\hfill
\minipage{0.41\textwidth}\centering

\begin{subfigure}\centering
\begin{tikzpicture}[scale=1.3]
    \draw[color=white,dashed,-] (-1,-1.5) -- (-1,-1.49);
    \draw[rotate=120,color=white,dashed,-] (-1,-1.5) -- (-1,-1.49);
    \draw[rotate=-120,color=white,dashed,-] (-1,-1.5) -- (-1,-1.49);

    \draw [domain=0:6.28,variable=\t,smooth] plot ({sin(\t r)},{cos(\t r)});

    \draw [very thick, domain=-0.29:0.29,variable=\t,smooth] plot ({0.975*sin(\t r)},{0.975*cos(\t r)});
    \draw [rotate=120,very thick, domain=-0.29:0.29,variable=\t,smooth] plot ({0.975*sin(\t r)},{0.975*cos(\t r)});
    \draw [rotate=-120,very thick, domain=-0.29:0.29,variable=\t,smooth] plot ({0.975*sin(\t r)},{0.975*cos(\t r)});

    \draw[color=gray,dashed] (0,-2.85) -- (0.98,-0.29);
    \draw[color=gray,dashed] (0,-2.85) -- (-0.98,-0.29);

    \draw[color=gray,rotate=120,dashed] (0,-2.85) -- (0.98,-0.29);
    \draw[color=gray,rotate=120,dashed] (0,-2.85) -- (-0.98,-0.29);

    \draw[color=gray,rotate=240,dashed] (0,-2.85) -- (0.98,-0.29);
    \draw[color=gray,rotate=240,dashed] (0,-2.85) -- (-0.98,-0.29);

    \filldraw[gray] (0,-2.85) circle (0.1pt);

    \draw[color=gray,  dotted] (0,-2.85) -- (0.64,0.75);
    \draw[rotate=120,color=gray, dotted] (0,-2.85) -- (0.64,0.75);
    \draw[rotate=-120,color=gray, dotted] (0,-2.85) -- (0.64,0.75);

    \draw[color=gray,  dotted] (0,-2.85) -- (-0.64,0.75);
    \draw[rotate=120,color=gray, dotted] (0,-2.85) -- (-0.64,0.75);
    \draw[rotate=-120,color=gray, dotted] (0,-2.85) -- (-0.64,0.75);

    \draw[white, ultra thick] (-0.34,-0.93) -- (-0.64,0.75);
    \draw[loosely dotted, thick, postaction={decorate}] (-0.34,-0.93) -- (-0.64,0.75);

    \draw[rotate=120,white, ultra thick] (-0.34,-0.93) -- (-0.64,0.75);
    \draw[rotate=120,loosely dotted, thick, postaction={decorate}] (-0.34,-0.93) -- (-0.64,0.75);

    \draw[rotate=-120,white, ultra thick] (-0.34,-0.93) -- (-0.64,0.75);
    \draw[rotate=-120,loosely dotted, thick, postaction={decorate}] (-0.34,-0.93) -- (-0.64,0.75);

    \draw[white, ultra thick] (0.34,-0.93) -- (0.64,0.75);
    \draw[  dotted, thick, postaction={decorate}] (0.34,-0.93) -- (0.64,0.75);

    \draw[rotate=120,white, ultra thick] (0.34,-0.93) -- (0.64,0.75);
    \draw[rotate=120,  dotted, thick, postaction={decorate}] (0.34,-0.93) -- (0.64,0.75);

    \draw[rotate=-120,white, ultra thick] (0.34,-0.93) -- (0.64,0.75);
    \draw[rotate=-120,  dotted, thick, postaction={decorate}] (0.34,-0.93) -- (0.64,0.75);



    \draw (0,-0.95) -- (0,-1.05);
    \draw[rotate=120] (0,-0.95) -- (0,-1.05);
    \draw[rotate=-120] (0,-0.95) -- (0,-1.05);

 \end{tikzpicture}
    \addtocounter{subfigure}{-1}\captionof{subfigure}{\footnotesize{ $v\in (0,1/2)$. \label{fig:PERIOD3v<1/2} }}
\end{subfigure}
\endminipage\hfill
\minipage{0.29\textwidth}\centering

\begin{subfigure}\centering
    \begin{tikzpicture}[scale=1.3]

    \draw[color=white,dashed] (0,-2.85) -- (0,-2.849);
    \draw[color=white,dashed,rotate=120] (0,-2.85) -- (0,-2.849);

    \draw[color=white,dashed,-] (-1,-1.5) -- (-1,-1.49);
    \draw[rotate=120,color=white,dashed,-] (-1,-1.5) -- (-1,-1.49);
    \draw[rotate=-120,color=white,dashed,-] (-1,-1.5) -- (-1,-1.49);

    \draw [domain=0:6.28,variable=\t,smooth] plot ({sin(\t r)},{cos(\t r)});

    \draw[color=gray,dashed,-] (-1.75,1) -- (1.75,1);
    \draw[color=gray,rotate=120,dashed,-] (-1.75,1) -- (1.75,1);
    \draw[color=gray,rotate=240,dashed,-] (-1.75,1) -- (1.75,1);

    \draw[color=gray,  dotted] (0,-2) -- (0.7,0.7);
    \draw[color=gray,  dotted,rotate=120] (0,-2) -- (0.7,0.7);
    \draw[color=gray,  dotted,rotate=240] (0,-2) -- (0.7,0.7);

    \draw[color=gray,  dotted] (0,-2) -- (-0.7,0.7);
    \draw[color=gray,  dotted,rotate=120] (0,-2) -- (-0.7,0.7);
    \draw[color=gray,  dotted,rotate=240] (0,-2) -- (-0.7,0.7);

    \draw[color=white,ultra thick] (-0.27,-0.96) -- (-0.7,0.7);
    \draw[loosely dotted,thick, postaction={decorate}] (-0.27,-0.96) -- (-0.7,0.7);

    \draw[color=white,ultra thick,rotate=120] (-0.27,-0.96) -- (-0.7,0.7);
    \draw[loosely dotted,thick,rotate=120, postaction={decorate}] (-0.27,-0.96) -- (-0.7,0.7);

    \draw[color=white,ultra thick,rotate=-120] (-0.27,-0.96) -- (-0.7,0.7);
    \draw[loosely dotted,thick,rotate=-120, postaction={decorate}] (-0.27,-0.96) -- (-0.7,0.7);

    \draw[color=white,ultra thick] (0.27,-0.96) -- (0.7,0.7);
    \draw[   dotted,thick, postaction={decorate}] (0.27,-0.96) -- (0.7,0.7);

    \draw[color=white,ultra thick,rotate=120] (0.27,-0.96) -- (0.7,0.7);
    \draw[   dotted,thick,rotate=120, postaction={decorate}] (0.27,-0.96) -- (0.7,0.7);

    \draw[color=white,ultra thick,rotate=-120] (0.27,-0.96) -- (0.7,0.7);
    \draw[   dotted,thick,rotate=-120, postaction={decorate}] (0.27,-0.96) -- (0.7,0.7);

    \draw [domain=0:6.28,variable=\t,smooth] plot ({sin(\t r)},{cos(\t r)});

    \draw (0,-2) circle (0.1pt);
    \draw[rotate=240] (0,-2) circle (0.1pt);
    \draw[rotate=120]  (0,-2) circle (0.1pt);

    \filldraw [black] (0,1) circle (1.25pt);
    \filldraw [black] (0.88,-0.49) circle (1.25pt);
    \filldraw [black] (-0.88,-0.49) circle (1.25pt);

    \draw (0,-0.95) -- (0,-1.05);
    \draw[rotate=120] (0,-0.95) -- (0,-1.05);
    \draw[rotate=-120] (0,-0.95) -- (0,-1.05);
\end{tikzpicture}
    \addtocounter{subfigure}{-1}\captionof{subfigure}{\footnotesize{$v=1/2$. }}\label{FIG:period3v=1/2}
\end{subfigure}
\endminipage
\end{figure}

\begin{figure}[H]\centering
\hspace{2.2cm}
\minipage{0.32\textwidth}\centering
\begin{subfigure}\centering
\begin{tikzpicture}[scale=1.3]

    \draw [domain=0:6.28,variable=\t,smooth] plot ({sin(\t r)},{cos(\t r)});

    \draw [ultra thick, dotted, white, domain=-0.26:0.26,variable=\t,smooth] plot ({sin(\t r)},{cos(\t r)});
    \draw [ultra thick, dotted, white, domain=1.83:2.35,variable=\t,smooth] plot ({sin(\t r)},{cos(\t r)});
    \draw [ultra thick, dotted, white, domain=3.87:4.39,variable=\t,smooth] plot ({sin(\t r)},{cos(\t r)});

    \draw[color=gray,dashed,-] (0,-1.35) -- (0.685,-0.75);
    \draw[color=gray,dashed,-] (0,-1.35) -- (-0.685,-0.75);

    \draw[color=gray,rotate=120,dashed,-] (0,-1.35) -- (0.685,-0.75);
    \draw[color=gray,rotate=120,dashed,-] (0,-1.35) -- (-0.685,-0.75);

    \draw[color=gray,rotate=-120,dashed,-] (0,-1.35) -- (0.685,-0.75);
    \draw[color=gray,rotate=-120,dashed,-] (0,-1.35) -- (-0.685,-0.75);

    \draw[color=gray, dotted] (0,-1.35) -- (0.78,0.6);
    \draw[rotate=120,color=gray, dotted] (0,-1.35) -- (0.78,0.6);
    \draw[rotate=-120,color=gray, dotted] (0,-1.35) -- (0.78,0.6);

    \draw[color=gray, dotted] (0,-1.35) -- (-0.78,0.6);
    \draw[rotate=120,color=gray, dotted] (0,-1.35) -- (-0.78,0.6);
    \draw[rotate=-120,color=gray, dotted] (0,-1.35) -- (-0.78,0.6);

    \draw[white, ultra thick] (-0.15,-0.98) -- (-0.78,0.6);
    \draw[thick,loosely dotted, postaction={decorate}] (-0.15,-0.98) -- (-0.78,0.6);

    \draw[rotate=120,white, ultra thick] (-0.15,-0.98) -- (-0.78,0.6);
    \draw[rotate=120,thick,loosely dotted, postaction={decorate}] (-0.15,-0.98) -- (-0.78,0.6);

    \draw[rotate=-120,white, ultra thick] (-0.15,-0.98) -- (-0.78,0.6);
    \draw[rotate=-120,thick,loosely dotted, postaction={decorate}] (-0.15,-0.98) -- (-0.78,0.6);

    \draw[white, ultra thick] (0.15,-0.98) -- (0.78,0.6);
    \draw[thick,  dotted, postaction={decorate}] (0.15,-0.98) -- (0.78,0.6);

    \draw[rotate=120,white, ultra thick] (0.15,-0.98) -- (0.78,0.6);
    \draw[rotate=120,thick,  dotted, postaction={decorate}] (0.15,-0.98) -- (0.78,0.6);

    \draw[rotate=-120,white, ultra thick] (0.15,-0.98) -- (0.78,0.6);
    \draw[rotate=-120,thick,  dotted, postaction={decorate}] (0.15,-0.98) -- (0.78,0.6);


    \draw[shift={(0,-0.95)}] (0,0) -- (0,-0.1);
    \draw[rotate=120,shift={(0,-0.95)}] (0,0) -- (0,-0.1);
    \draw[rotate=-120,shift={(0,-0.95)}] (0,0) -- (0,-0.1);
\end{tikzpicture}
    \addtocounter{subfigure}{-1}\captionof{subfigure}{\footnotesize{$v\in (1/2,1)$. \label{fig:PERIOD3v>1/2}}}
\end{subfigure}
\endminipage\hfill
\minipage{0.32\textwidth}\centering

\begin{subfigure}\centering
\begin{tikzpicture}[scale=1.3]
    \draw [domain=0:6.3,variable=\t,smooth] plot ({sin(\t r)},{cos(\t r)});

    \draw [ultra thick, dotted, white, domain=0:6.28,variable=\t,smooth] plot ({sin(\t r)},{cos(\t r)});


    \draw (0,-0.95) -- (0,-1.05) node[anchor= north] {\scriptsize{$\mathrm{Q}_1$}};
    \draw[rotate=120] (0,-0.95) -- (0,-1.05) node[anchor= west] {\scriptsize{$\mathrm{Q}_3$}};
    \draw[rotate=-120] (0,-0.95) -- (0,-1.05) node[anchor= east] {\scriptsize{$\mathrm{Q}_2$}};

    \draw[  dotted,thick, postaction={decoration={markings,mark=at position 0.41 with {\arrow[thick,color=gray]{latex reversed}}},decorate}] (0,-1) -- (0.86,0.48);
    \draw[  dotted,thick, postaction={decoration={markings,mark=at position 0.41 with {\arrow[thick,color=gray]{latex reversed}}},decorate}] (0,-1) -- (-0.86,0.48);
    \draw[  dotted,thick, postaction={decoration={markings,mark=at position 0.41 with {\arrow[thick,color=gray]{latex reversed}}},decorate}] (-0.85,0.5) -- (0.85,0.5);

    \draw[white,shift={(0,-0.95)}] (0,0) -- (0,-0.1);
    \draw[white,rotate=120,shift={(0,-0.95)}] (0,0) -- (0,-0.1);
    \draw[white,rotate=-120,shift={(0,-0.95)}] (0,0) -- (0,-0.1);
\end{tikzpicture}
    \addtocounter{subfigure}{-1}\captionof{subfigure}{\footnotesize{$v= 1$. \label{fig:PERIOD3v=1}}}
\end{subfigure}
\endminipage
\hspace{2.2cm}
\captionof{figure}{The two triangles in each figure depict the two periodic heteroclinic chains with period 3
projected onto $(\Sigma_1, \Sigma_2, \Sigma_3)$-space.
As $v\in[0,1]$ increases, these triangles rotate: the densely (sparsely) dotted one rotates clockwise
(counter-clockwise). 
}
\label{FIG:period3}
\end{figure}

The heteroclinic chains with period 3 can be constructed as follows from the $v=0$ case,
for which the period 3 chains are easily obtained due to the simple
heteroclinic structure. 
Without loss of generality, consider the densely dotted triangle
in Figure~\ref{FIG:period3} for $v=0$ and rotate it clockwise by an angle
$\theta\in (0,\pi/6]$. The three prolonged sides of the rotated triangle
intersect each projected $\Sigma_\alpha$ axis (projected onto the plane that contains the Kasner circle
) at the same distance from $\mathrm{K}^\ocircle$ due to the axis permutation symmetry. 
Since the prolonged lines correspond
to Bianchi type II orbits in the physical state space, the points of intersection
are given by $\mathrm{Q}_\alpha/v$ for some $v=v(\theta)$.
Continuity of the rotation and the parametrization
$v(\theta)$ yields the period 3 chains for all $\theta\in [0,\pi/6]$, i.e.,
all $v=v(\theta)\in[0,1]$. The boundary cases $v=0$ and $v=1$ yield
$\lim_{\theta \to 0}\mathrm{Q}_\alpha/v(\theta) \rightarrow \infty$
and $\lim_{\theta \to \pi/6}\mathrm{Q}_\alpha/v(\theta) \rightarrow \mathrm{Q}_\alpha$,
respectively, see Figure~\ref{FIG:period3SUPERPOSITION}.
\begin{figure}[H]\centering
\minipage{0.49\textwidth}\centering
\begin{subfigure}\centering
\begin{tikzpicture}[scale=1.3]



    \draw[color=gray,dashed,-] (-1,-2.85) -- (-1,0);
    \draw[color=gray,dashed,-] (1,-2.85) -- (1,0);

    \draw[color=gray,dashed,-,rotate=120] (-1,-2.85) -- (-1,0);
    \draw[color=gray,dashed,-,rotate=240] (-1,-2.85) -- (-1,0);

    \draw[color=gray,dashed,-,rotate=120] (1,-2.85) -- (1,0);
    \draw[color=gray,dashed,-,rotate=240] (1,-2.85) -- (1,0);

    \draw[color=gray,dotted] (0.5,-2.85) -- (0.5,0.85);
    \draw[color=gray,dotted,rotate=120] (0.5,-2.85) -- (0.5,0.85);
    \draw[color=gray,dotted,rotate=240] (0.5,-2.85) -- (0.5,0.85);


    \draw[color=gray,dashed] (0,-2.85) -- (0.98,-0.29);
    \draw[color=gray,dashed] (0,-2.85) -- (-0.98,-0.29);

    \draw[color=gray,rotate=120,dashed] (0,-2.85) -- (0.98,-0.29);
    \draw[color=gray,rotate=120,dashed] (0,-2.85) -- (-0.98,-0.29);

    \draw[color=gray,rotate=240,dashed] (0,-2.85) -- (0.98,-0.29);
    \draw[color=gray,rotate=240,dashed] (0,-2.85) -- (-0.98,-0.29);

    \draw[color=gray, dotted] (0,-2.85) -- (0.64,0.75);
    \draw[rotate=120,color=gray, dotted] (0,-2.85) -- (0.64,0.75);
    \draw[rotate=-120,color=gray, dotted] (0,-2.85) -- (0.64,0.75);


    \draw[color=gray,dashed,-] (-1.75,1) -- (1.75,1);
    \draw[color=gray,rotate=120,dashed,-] (-1.75,1) -- (1.75,1);
    \draw[color=gray,rotate=240,dashed,-] (-1.75,1) -- (1.75,1);

    \draw[color=gray,  dotted] (0,-2) -- (0.7,0.7);
    \draw[color=gray,  dotted,rotate=120] (0,-2) -- (0.7,0.7);
    \draw[color=gray,  dotted,rotate=240] (0,-2) -- (0.7,0.7);


    \draw[color=gray,dashed,-] (0,-1.35) -- (0.685,-0.75);
    \draw[color=gray,dashed,-] (0,-1.35) -- (-0.685,-0.75);

    \draw[color=gray,rotate=120,dashed,-] (0,-1.35) -- (0.685,-0.75);
    \draw[color=gray,rotate=120,dashed,-] (0,-1.35) -- (-0.685,-0.75);

    \draw[color=gray,rotate=-120,dashed,-] (0,-1.35) -- (0.685,-0.75);
    \draw[color=gray,rotate=-120,dashed,-] (0,-1.35) -- (-0.685,-0.75);

    \draw[color=gray, dotted] (0,-1.35) -- (0.78,0.6);
    \draw[rotate=120,color=gray, dotted] (0,-1.35) -- (0.78,0.6);
    \draw[rotate=-120,color=gray, dotted] (0,-1.35) -- (0.78,0.6);


    \filldraw[white] [domain=0:6.28,variable=\t,smooth] plot ({sin(\t r)},{cos(\t r)});


    \draw[  dotted,thick] (0.5,-0.85) -- (0.5,0.85);

    \draw[ dotted,thick,rotate=120] (0.5,-0.85) -- (0.5,0.85);

    \draw[ dotted,thick,rotate=240] (0.5,-0.85) -- (0.5,0.85);


    \draw[ dotted, thick] (0.34,-0.93) -- (0.64,0.75);

    \draw[rotate=120, dotted, thick] (0.34,-0.93) -- (0.64,0.75);

    \draw[rotate=-120, dotted, thick] (0.34,-0.93) -- (0.64,0.75);


    \draw[  dotted,thick] (0.27,-0.96) -- (0.7,0.7);

    \draw[  dotted,thick,rotate=120] (0.27,-0.96) -- (0.7,0.7);

    \draw[  dotted,thick,rotate=-120] (0.27,-0.96) -- (0.7,0.7);



    \draw[thick,  dotted] (0.15,-0.98) -- (0.78,0.6);

    \draw[rotate=120,thick,  dotted] (0.15,-0.98) -- (0.78,0.6);

    \draw[rotate=-120,thick,  dotted] (0.15,-0.98) -- (0.78,0.6);


    \draw[  dotted,thick] (0,-1) -- (0.85,0.45);
    \draw[  dotted,thick] (0,-1) -- (-0.85,0.45);
    \draw[  dotted,thick] (-0.85,0.5) -- (0.85,0.5);


    \draw[color=gray,->](0,0) -- (0,0.25);\node[gray] at (0,0.3) {\tiny{$\Sigma_{1}$}};
    \draw[rotate=120,color=gray,->](0,0) -- (0,0.25); \node[gray] at (-0.25,-0.25) {\tiny{$\Sigma_{3}$}};
    \draw[rotate=-120,color=gray,->](0,0) -- (0,0.25); \node[gray] at (0.25,-0.25) {\tiny{$\Sigma_{2}$}};

    \draw[color=gray, dashed] (0,0) -- (0,-2.85);
    \draw[rotate=120,color=gray, dashed] (0,0) -- (0,-2.85);
    \draw[rotate=-120,color=gray, dashed] (0,0) -- (0,-2.85);

    \draw [ultra thick, domain=-0.52:0.52,variable=\t,smooth] plot ({sin(\t r)},{cos(\t r)});
    \draw [ultra thick, domain=-0.52:0.52,variable=\t,smooth, rotate=120] plot ({sin(\t r)},{cos(\t r)});
    \draw [ultra thick, domain=-0.52:0.52,variable=\t,smooth, rotate=240] plot ({sin(\t r)},{cos(\t r)});

    \draw [domain=0:6.28,variable=\t,smooth] plot ({sin(\t r)},{cos(\t r)});

    \filldraw (0,-1) circle (1pt);
    \filldraw[rotate=120] (0,-1) circle (1pt);
    \filldraw[rotate=-120] (0,-1) circle (1pt); 

    \filldraw (0,-2.85) circle (1pt) node[anchor=north] {\scriptsize{$\frac{\mathrm{Q}_1}{v(\theta)}$}};
    \filldraw[rotate=240] (0,-2.85) circle (1pt) node[anchor=east] {\scriptsize{$\frac{\mathrm{Q}_2}{v(\theta)}$}};
    \filldraw[rotate=120]  (0,-2.85) circle (1pt) node[anchor= west] {\scriptsize{$\frac{\mathrm{Q}_3}{v(\theta)}$}};

    \filldraw (0,-2) circle (1pt);
    \filldraw[rotate=240] (0,-2) circle (1pt);
    \filldraw[rotate=120]  (0,-2) circle (1pt);

    \filldraw (0,-1.35) circle (1pt);
    \filldraw[rotate=240] (0,-1.35) circle (1pt);
    \filldraw[rotate=120]  (0,-1.35) circle (1pt);

 \end{tikzpicture}
    \addtocounter{subfigure}{-1}\captionof{subfigure}{\footnotesize{Superposition of the
    densely dotted heteroclinic chains with period 3 for different $v\in[0,1]$.
    The prolonged sides of each triangle intersect the projected $\Sigma_\alpha$
    axis at some $\mathrm{Q}_\alpha/v(\theta)$ and describe type II orbits.}}
\end{subfigure}
\endminipage\hfill
\minipage{0.49\textwidth}\centering

\begin{subfigure}\centering
\begin{tikzpicture}[scale=1.3]

    \draw [domain=0:6.28,variable=\t,smooth] plot ({sin(\t r)},{cos(\t r)});

    \draw [very thick, domain=-0.29:0.29,variable=\t,smooth] plot ({0.975*sin(\t r)},{0.975*cos(\t r)});
    \draw [rotate=120,very thick, domain=-0.29:0.29,variable=\t,smooth] plot ({0.975*sin(\t r)},{0.975*cos(\t r)});
    \draw [rotate=-120,very thick, domain=-0.29:0.29,variable=\t,smooth] plot ({0.975*sin(\t r)},{0.975*cos(\t r)});

    \draw[color=gray,dashed] (0,-2.85) -- (0.98,-0.29);
    \draw[color=gray,dashed] (0,-2.85) -- (-0.98,-0.29);

    \draw[color=gray,rotate=120,dashed] (0,-2.85) -- (0.98,-0.29);
    \draw[color=gray,rotate=120,dashed] (0,-2.85) -- (-0.98,-0.29);

    \draw[color=gray,rotate=240,dashed] (0,-2.85) -- (0.98,-0.29);
    \draw[color=gray,rotate=240,dashed] (0,-2.85) -- (-0.98,-0.29);

    \draw[thick] (0,-2.85) -- (0.64,0.75);
    \draw[rotate=120,color=gray, dotted] (0,-2.85) -- (0.64,0.75);
    \draw[rotate=-120,color=gray, dotted] (0,-2.85) -- (0.64,0.75);






    \draw[rotate=120,white, ultra thick] (0.34,-0.93) -- (0.64,0.75);
    \draw[rotate=120,  dotted, thick] (0.34,-0.93) -- (0.64,0.75);

    \draw[rotate=-120,white, ultra thick] (0.34,-0.93) -- (0.64,0.75);
    \draw[rotate=-120,  dotted, thick] (0.34,-0.93) -- (0.64,0.75);

    \draw (0,-0.95) -- (0,-1.05);
    \draw[rotate=120] (0,-0.95) -- (0,-1.05);
    \draw[rotate=-120] (0,-0.95) -- (0,-1.05);

    \draw[thick](0,0) -- (0,-2.85);
    \draw[thick](0,0) -- (0.63,0.75);

    \draw [shift={(0,-2.85)},domain=0:0.19,variable=\t,smooth] plot ({sin(\t r)},{cos(\t r)});
    \draw [shift={(0.65,0.82)},domain=-2.5:-3,variable=\t,smooth] plot ({0.5*sin(\t r)},{0.5*cos(\t r)});

    \node at (0.2,0.4) {\scriptsize{$1$}};
    \node at (-0.1,-1.5) {\scriptsize{$\frac{1}{v}$}};
    \node at (0.1,-1.7) {\scriptsize{$\theta$}};
    \node at (0.4,0.2) {\scriptsize{$\frac{\pi}{6}$}};
    \node at (0.7,0.9) {\scriptsize{$p$}};

    \draw (0,-2.85) circle (0.1pt) node[anchor=north] {\scriptsize{$\frac{\mathrm{Q}_1}{v}$}};
    \draw[color=white,dashed,-,rotate=120] (-1,-2.85) -- (-1,-2.84);
    \draw[color=white,dashed,-,rotate=240] (-1,-2.85) -- (-1,-2.84);

 \end{tikzpicture}
    \addtocounter{subfigure}{-1}\captionof{subfigure}{\footnotesize{The angle
    of rotation $\theta=\theta(v)$ is obtained by the law of sines using
    the triangle in the plane of $\mathrm{K}^\ocircle$ between the center
    of $\mathrm{K}^\ocircle$, the point $\mathrm{Q}_1/v$ and the
    vertex $p\in A_3$ of the triangle that describes
    a period 3 chain.}}
\end{subfigure}
\endminipage
\hspace{2.55cm}
\captionof{figure}{As $v$ increases, the triangles rotate clockwise by the
angle $\theta\in [0,\pi/6]$ in~\eqref{p3thetav}.
}\label{FIG:period3SUPERPOSITION}
\end{figure}
Moreover, $v(\theta) = 2 \sin\theta$, or alternatively,
\begin{equation}\label{p3thetav}
\theta(v) = \arcsin\left(\frac{v}{2}\right).
\end{equation}
This equation yields the clockwise (counter-clockwise) rotation of the
densely (sparsely) dotted triangle and can be derived as follows.
The rotation angle $\theta$ for the densely dotted triangle is given by the
angle between the line from $\mathrm{Q}_1/v$ to the center of $\mathrm{K}^\ocircle$
and the Bianchi type II trajectory originating from $\mathrm{Q}_1/v$ and ending
at the vertex of the triangle in the arc $A_3$, which we denote by $p$, see
Figure~\ref{FIG:period3SUPERPOSITION}. Consider the triangle that connects
the center of $\mathrm{K}^\ocircle$, $\mathrm{Q}_1/v$, and the vertex $p$
in the plane of $\mathrm{K}^\ocircle$,
which has unit radius in the $(\Sigma_+,\Sigma_-)$ coordinates.
In these coordinates, the line from the origin $(0,0)$ to
$\mathrm{Q}_1/v$ has length $1/v$, while the unit radius
from $(0,0)$ to the vertex $p$ bisects the
angle of the equilateral triangle, due to the permutation symmetry~\eqref{permSYM},
which yields an angle of $\pi/6$, see Figure~\ref{FIG:period3SUPERPOSITION}.
The law of sines then implies that $\sin\theta = v\sin(\pi/6) = v/2$,
and thereby the above formula. Combining the result in
equation~\eqref{p3thetav} with the geometry
in Figure~\ref{FIG:period3SUPERPOSITION} yields
\begin{equation}
\Sigma_1 = \sqrt{3\left(1 - \left(\frac{v}{2}\right)^2\right)} - \frac{v}{2}
\end{equation}
for the upper vertices (and thus with maximum $\Sigma_1$) of the densely and
sparsely dotted period 3 triangles in Figures~\ref{FIG:period3}
and~\ref{FIG:period3SUPERPOSITION}.

We have thereby proved the following result:
\begin{proposition}\label{lem:Period3}
There are two heteroclinic chains with period 3 for all $v\in [0,1]$.
When projected onto the plane of the Kasner circle $\mathrm{K}^\ocircle$, these
chains/cycles are given by two equilateral triangles within $\mathrm{K}^\ocircle$.
As $v\in[0,1)$ increases, the two triangles rotate in different directions and coalesce
into one when $v=1$.
\end{proposition}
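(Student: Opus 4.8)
The plan is to produce the two period~3 chains by continuation in the parameter $v$, starting from the transparent $v=0$ configuration and using the axis-permutation symmetry~\eqref{permSYM} throughout. In the projection onto the plane of $\mathrm{K}^{\ocircle}$ (the $(\Sigma_1,\Sigma_2,\Sigma_3)$-plane, equivalently the $(\Sigma_+,\Sigma_-)$-plane), the symmetry~\eqref{permSYM} realises the rotation by $2\pi/3$ that cyclically permutes the three auxiliary points $\mathrm{Q}_\alpha/v$ and the three hemispheres $\mathrm{II}_\alpha$. Since each Bianchi type~II heteroclinic orbit projects onto a segment of one of the straight lines~\eqref{BIIstraight}, a heteroclinic chain invariant under this $\mathbb{Z}_3$-action projects onto an equilateral triangle inscribed in $\mathrm{K}^{\ocircle}$, one of whose prolonged sides passes through $\mathrm{Q}_1/v$, one through $\mathrm{Q}_2/v$, and one through $\mathrm{Q}_3/v$; conversely, any such inscribed equilateral triangle reassembles into a genuine period~3 chain \emph{provided} its three vertices lie in the appropriate closed unstable arcs $A_\alpha$ from~\eqref{A_1}, which is what makes each side an actual type~II orbit with the correct orientation (recall $g\ge 1$ from~\eqref{galpha}). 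So the task is reduced to locating such triangles as $v$ ranges over $[0,1]$.

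At $v=0$ the Kasner map is explicit: the type~$\mathrm{II}_1$ orbits lie on the lines where $\Sigma_2-\Sigma_3$ is constant, i.e.\ parallel to the projected $\Sigma_1$-axis, and cyclically for $\mathrm{II}_2,\mathrm{II}_3$, so the three families are parallel to the three coordinate axes, which are $2\pi/3$ apart. There are exactly two equilateral triangles inscribed in $\mathrm{K}^{\ocircle}$ with sides realising these three directions — mirror images of one another across a $\Sigma_\alpha$-axis — and a short check (using that $A_\alpha$ is a half-circle when $v=0$) shows their vertices lie in the interiors of the corresponding arcs, so both are genuine period~3 chains, with opposite cyclic orderings $(1,2,3)$ and $(1,3,2)$ of the traversed hemispheres, as drawn in Figures~\ref{FIG:period3} and~\ref{FIG:BDRYmaps}.

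The continuation step rotates the densely dotted $v=0$ triangle clockwise by an angle $\theta$. By~\eqref{permSYM} the three prolonged sides of the rotated triangle meet the three projected $\Sigma_\alpha$-axes at one common distance from the centre $O$ of $\mathrm{K}^{\ocircle}$; since in the $(\Sigma_+,\Sigma_-)$ coordinates $\mathrm{K}^{\ocircle}$ is the unit circle, $|\mathrm{Q}_\alpha|=1$, and $\mathrm{Q}_\alpha/v$ lies at distance $1/v$ from $O$ along the $\Sigma_\alpha$-axis, this common distance equals $1/v$ and thereby fixes the relation between $\theta$ and $v$. To make it explicit I would apply the law of sines in the planar triangle with vertices $O$, $\mathrm{Q}_1/v$, and the vertex $p\in A_3$ of the inscribed equilateral triangle at which the side prolonged from $\mathrm{Q}_1/v$ terminates: there $|Op|=1$, $|O\,\mathrm{Q}_1/v|=1/v$, the interior angle at $\mathrm{Q}_1/v$ is $\theta$, and — using the reflection part of~\eqref{permSYM}, as in the derivation preceding this proposition — the radius $Op$ bisects the $\pi/3$ interior angle of the equilateral triangle, so the angle at $p$ is $\pi/6$. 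The law of sines then gives $\sin\theta = v\sin(\pi/6)=v/2$, that is~\eqref{p3thetav}, equivalently $v(\theta)=2\sin\theta$, which is a continuous strictly increasing bijection of $\theta\in[0,\pi/6]$ onto $v\in[0,1]$. Continuity of the construction in $\theta$ then yields a period~3 chain for every $v\in[0,1]$; the sparsely dotted chain is obtained in the same way by rotating counter-clockwise (equivalently, applying the orientation-reversing reflection in a $\Sigma_\alpha$-axis), with the same function $v(\theta)$. At $v=1$ one has $\theta=\pi/6$ and $\mathrm{Q}_\alpha/v\to\mathrm{Q}_\alpha\in\mathrm{K}^{\ocircle}$, while $A_\alpha$ degenerates to the single point $\mathrm{Q}_\alpha$; hence both triangles must have their vertices at $\{\mathrm{Q}_1,\mathrm{Q}_2,\mathrm{Q}_3\}$ and coalesce, consistently with the $v=1$ heteroclinic network discussed above, which establishes all the assertions of the proposition.

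The step I expect to be the genuine obstacle is the "provided" clause above: verifying that these geometric triangles are honest heteroclinic chains in the physical state space for the whole range $v\in[0,1]$, i.e.\ that the vertex serving as the $\alpha$-limit of each $\mathrm{II}_\alpha$ segment actually satisfies $\Sigma_\alpha\le -2v$ and hence lies in $A_\alpha$, so that the segment is a type~II orbit with $\Sigma_\alpha$ increasing rather than a spurious chord. I would handle this by continuity together with the explicit vertex data — in particular the upper vertex has $\Sigma_1=\sqrt{3(1-(v/2)^2)}-v/2$, with the companion vertices obtained by applying~\eqref{permSYM} — checking that no vertex leaves its closed arc before $v$ reaches $1$. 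The remaining delicate point is the endpoint $v=1$, where the Kasner circle map is multivalued and discontinuous; there the chain should be read off as the $\theta\to\pi/6$ limit of the constructed family and cross-checked against the explicit $v=1$ network of heteroclinic chains between the points $\mathrm{Q}_\alpha$. (If one additionally wanted \emph{exactly} two period~3 chains one would argue that any period~3 chain visits the three hemispheres in cyclic order and is forced to be $\mathbb{Z}_3$-symmetric by combining the monotonicity of $\eta$ along each type~II orbit with~\eqref{permSYM}, but the statement as phrased only requires producing the two.)
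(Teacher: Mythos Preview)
Your proposal is correct and follows essentially the same approach as the paper: starting from the explicit $v=0$ configuration, rotating the equilateral triangle by an angle $\theta$, and applying the law of sines in the planar triangle with vertices $O$, $\mathrm{Q}_1/v$, and $p\in A_3$ to obtain $v(\theta)=2\sin\theta$, i.e.\ \eqref{p3thetav}. Your discussion of the ``provided'' clause (verifying that the vertices actually lie in the appropriate arcs $A_\alpha$ throughout $v\in[0,1]$) and your careful treatment of the endpoint $v=1$ are in fact more explicit than the paper, which leaves these checks implicit in the phrase ``continuity of the rotation and the parametrization $v(\theta)$.''
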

%
%
%

\section{Critical case}\label{sec:critical}

GR belongs to the critical case $v=1/2$ where the
concatenated Bianchi type II orbits describe the heteroclinic chains
that are expected to be asymptotically shadowed by solutions when $\tau_-\rightarrow\infty$
in the Bianchi type VIII and IX models. An example of part of a heteroclinic chain
is given in Figure~\ref{FIG:KASNERMAPGR}.
\begin{figure}[H]\centering
    \begin{tikzpicture}[scale=1.3]
    \draw[rotate=-120,color=gray,dotted] (0,-2) -- (0.98,0.2);
    \draw[color=gray,dotted] (0,-2) -- (-0.8,0.6);
    \draw[rotate=-120,color=gray,dotted] (0,-2) -- (-0.35,0.9);

    \draw[rotate=-120,white,ultra thick] (0.51,-0.86) -- (0.98,0.2);
    \draw[rotate=-120,dotted, thick, postaction={decorate}] (0.51,-0.86) -- (0.98,0.2);

    \draw[white,ultra thick] (-0.32,-0.95) -- (-0.8,0.6);
    \draw[dotted, thick, postaction={decorate}] (-0.32,-0.95) -- (-0.8,0.6);

    \draw[rotate=-120,white,ultra thick] (-0.124,-0.99) -- (-0.35,0.93);
    \draw[rotate=-120,dotted, thick, postaction={decorate}] (-0.124,-0.99) -- (-0.35,0.93);



    \draw[color=gray,dashed,-] (-1.75,1) -- (1.75,1);
    \draw[color=gray,rotate=120,dashed,-] (-1.75,1) -- (1.75,1);
    \draw[color=gray,rotate=240,dashed,-] (-1.75,1) -- (1.75,1);

    \draw (0,-2) circle (0.1pt) node[anchor=north] {\scriptsize{$2\mathrm{Q}_1$}};
    \draw[rotate=240] (0,-2) circle (0.1pt) node[anchor=east] {\scriptsize{$2\mathrm{Q}_2$}};
    \draw[rotate=120]  (0,-2) circle (0.1pt) node[anchor=west] {\scriptsize{$2\mathrm{Q}_3$}};

    \filldraw [black] (0,1) circle (1.25pt) node[anchor= south] {\scriptsize{$\mathrm{T}_1$}};
    \filldraw [black] (0.88,-0.49) circle (1.25pt) node[anchor= north west] {\scriptsize{$\mathrm{T}_2$}};
    \filldraw [black] (-0.88,-0.49) circle (1.25pt)node[anchor= north east] {\scriptsize{$\mathrm{T}_3$}};

    \draw (0,-0.95) -- (0,-1.05);
    \draw[rotate=120] (0,-0.95) -- (0,-1.05);
    \draw[rotate=-120] (0,-0.95) -- (0,-1.05);

    \draw [domain=0:6.28,variable=\t,smooth] plot ({sin(\t r)},{cos(\t r)});

    \node at (-1.11,0.1) {\scriptsize{$p$}};
    \node at (-0.24,-1.14) {\scriptsize{$\mathcal{K}(p)$}};
    \node at (-1.04,0.77) {\scriptsize{$\mathcal{K}^2(p)$}};
    \node at (1.36,-0.23) {\scriptsize{$\mathcal{K}^3(p)$}};
\end{tikzpicture}
\captionof{figure}{The concatenation of three projected heteroclinic orbits
onto $(\Sigma_1,\Sigma_2,\Sigma_3)$-space, which form part of a heteroclinic
chain described by iterates of the Kasner circle
map $\mathcal{K}$.}\label{FIG:KASNERMAPGR}
\end{figure}

In GR it is useful to define the \emph{Kasner parameters}
$(p_1,p_2,p_3)$ on the Kasner circle $\mathrm{K}^{\ocircle}$ according to
\begin{equation}\label{pdef}
{\Sigma}_{\alpha} = 3p_\alpha - 1, \qquad \text{for}\quad \alpha=1,2,3,
\end{equation}
where $p_1 + p_2 + p_3=1 = p_1^2 + p_2^2 + p_3^2$,
due to the constraints~\eqref{KasnerCircdef} on $\mathrm{K}^{\ocircle}$.

The Kasner circle $\mathrm{K}^{\ocircle}$ is described by six sectors
characterized by $p_\alpha < p_\beta < p_\gamma$, where $(\alpha\beta\gamma)$ is a
permutation of $(123)$. All sectors are related by
axis permutations given by~\eqref{permSYM}, see Figure~\ref{FIG:BIF}.
Each sector is half of an arc $\mathrm{int}(A_\alpha)$ when $v=1/2$,
excluding the boundary, which consists of the points $\mathrm{Q}_\alpha$
and $\mathrm{T}_\beta$ or $\mathrm{T}_\gamma$.

The Kasner parameters $(p_1,p_2,p_3)$ can be described by a single parameter $u$ such that
\begin{equation}\label{ueq}
p_\alpha=\frac{-u}{1+u+u^2},\qquad p_\beta=\frac{1+u}{1+u+u^2},\qquad p_\gamma = \frac{u(1+u)}{1+u+u^2},
\end{equation}
where $u\in(1,\infty)$ when $p_\alpha < p_\beta < p_\gamma$, while $u=1$ and $u=\infty$
at the boundary points of the sectors, $\mathrm{Q}_\alpha$ and $\mathrm{T}_\gamma$, respectively.

Invariance of $u$ under axis permutations follows from
$\Sigma_1 \Sigma_2 \Sigma_3 = 2 + 27 p_1 p_2 p_3$ where
\begin{equation}
p_1 p_2 p_3 =  \frac{-u^2(1+u)^2}{(1+u+u^2)^3}\, ,\qquad
\text{where}\quad u\in [1,\infty],
\end{equation}
which is monotone in $u$. In principle $u$ can be replaced by
$\Sigma_1 \Sigma_2 \Sigma_3$ or $p_1p_2p_3$ on $\mathrm{K}^{\ocircle}$.

The following theorem was shown in~\cite{bkl70,khaetal85}, see
also~\cite{ugg13a,ugg13b} and references therein.
\begin{theorem}\label{infchainsGR}
There is only a countable set of points in the set $\mathrm{K}^{\ocircle}$
associated with finite heteroclinic chains ending at a Taub point.
The set of points associated with periodic or infinite
heteroclinic chains is thereby topologically generic and has full measure.
\end{theorem}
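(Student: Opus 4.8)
The plan is to reduce the whole statement to the dynamics of the single map $\mathcal{K}$ in the critical case $v=1/2$. The first step is structural. For $v=1/2$ the three closed arcs $A_1,A_2,A_3$ cover $\mathrm{K}^{\ocircle}$ and meet pairwise only at the Taub points $\mathrm{T}_1,\mathrm{T}_2,\mathrm{T}_3$, so every non-Taub point of $\mathrm{K}^{\ocircle}$ lies in exactly one open arc $\mathrm{int}(A_\alpha)$ and $\mathcal{K}$ of \eqref{KasnerCirc} is well-defined and continuous on all of $\mathrm{K}^{\ocircle}$. From \eqref{galpha} (or Lemma~\ref{KasnerCircMapEXP}) one has $g(p)=1$ — equivalently $\mathcal{K}(p)=p$ — exactly at the tangential points, which at $v=1/2$ are the Taub points; moreover a genuine (nondegenerate) type II heteroclinic orbit issues from $p$ precisely when $g(p)>1$, i.e. precisely when $p$ is not a Taub point. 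Hence a heteroclinic chain starting at $p$ is exactly the forward orbit $\{\mathcal{K}^n(p)\}_{n\ge 0}$, and it is finite (cannot be prolonged, and then necessarily terminates at a Taub point) if and only if $\mathcal{K}^n(p)$ is a Taub point for some $n\ge 0$. Thus the points associated with finite chains form the set $F:=\bigcup_{n\ge 0}\mathcal{K}^{-n}(\{\mathrm{T}_1,\mathrm{T}_2,\mathrm{T}_3\})$, and it remains to show $F$ is countable.

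The second step identifies $\mathcal{K}$ with the classical BKL era map. Substituting the Kasner parameters \eqref{pdef} and the $u$-parametrization \eqref{ueq} into \eqref{KasnerCirc} and~\eqref{galpha}, one checks that, away from the Taub points, $\mathcal{K}$ is conjugate to the map $u\mapsto u-1$ for $u\ge 2$ and $u\mapsto (u-1)^{-1}$ for $1\le u\le 2$, with all three Taub points corresponding to $u=\infty$. This is (a form of) the Farey/continued-fraction map: its forward orbit from $u$ reaches $u=\infty$ in finitely many steps if and only if $u$ has a terminating continued-fraction expansion, i.e. if and only if $u\in\mathbb{Q}$. Since the $u$-parametrization is finite-to-one onto $[1,\infty]$ (six sectors, permuted by \eqref{permSYM}), $F$ is the preimage of the countable set $\mathbb{Q}\cap[1,\infty]$ together with the finitely many points $\mathrm{Q}_\alpha$, hence countable. (A route avoiding $u$: by Lemma~\ref{KasnerCircMapEXP} and the reflection symmetry of $\mathcal{K}$ about $\mathrm{Q}_\alpha$, the restriction of $\mathcal{K}$ to each $A_\alpha$ is piecewise strictly monotone with two branches, so $\mathcal{K}$ is at most $6$-to-$1$ on $\mathrm{K}^{\ocircle}$; then each $\mathcal{K}^{-n}(\{\mathrm{T}_\alpha\})$ is finite and $F$ is a countable union of finite sets.)

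The third step is soft. As $F$ is countable, it has Lebesgue measure zero on $\mathrm{K}^{\ocircle}$, so its complement — the points associated with infinite heteroclinic chains, which in particular contains the points associated with periodic chains — has full measure. Likewise $F$ is a countable union of singletons, each closed and nowhere dense in $\mathrm{K}^{\ocircle}$, hence meager, so $\mathrm{K}^{\ocircle}\setminus F$ is residual, i.e. topologically generic. This establishes both assertions of the theorem.

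The main obstacle is the second step: making the conjugacy of $\mathcal{K}$ at $v=1/2$ with the era map precise and deducing the characterization ``finite chain $\Leftrightarrow$ $u\in\mathbb{Q}$''. The computation itself is routine, but one must keep track of the sector structure (each $A_\alpha$ being a union of two Kasner sectors in the parametrization \eqref{ueq}) and of how the two monotone branches are permuted by the axis symmetry \eqref{permSYM}. Once that correspondence is in place, the countability of $F$ and the two consequences (full measure, topological genericity) are immediate.
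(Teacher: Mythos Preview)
Your proposal is correct and follows essentially the same route as the paper: reduce the Kasner circle map at $v=1/2$ to the Kasner/BKL era map~\eqref{Gauss} in the parameter $u$, observe that finite chains correspond to $u\in\mathbb{Q}$ via the continued-fraction interpretation, and conclude countability (hence full measure and topological genericity of the complement). Your alternative route via the finite-to-one property of $\mathcal{K}$ is a clean addition the paper does not spell out; it bypasses the number theory entirely and gives countability of $F=\bigcup_{n\ge 0}\mathcal{K}^{-n}(\{\mathrm{T}_1,\mathrm{T}_2,\mathrm{T}_3\})$ directly, at the cost of losing the sharper characterization (finite chain $\Leftrightarrow$ $u\in\mathbb{Q}$) that the paper lists as an additional fact.
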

There are different points of view regarding the genericity property.
A set is \emph{measure theoretical generic} if it has full measure.
On the other hand, a set is \emph{topologically generic} if it is a countable
intersection of dense open sets. Those definitions are not equivalent.
In physical empirical contexts measure theoretical genericity makes more sense,
since it is a property that is potentially observable.
%

The proof of Theorem~\ref{infchainsGR} follows from describing the Bianchi type II orbits in
the GR case using the \emph{Kasner map} (obtained from the Kasner circle/Mixmaster map
$\mathcal{K}$ in~\eqref{KasnerCirc} when $v=1/2$
by quoting out axis permutations) for the Kasner parameter $u$ in~\eqref{ueq}:
\begin{equation}\label{Gauss}
u\mapsto
\begin{cases}
u-1 & \text{ if } u\geq2\\
\frac{1}{u-1} & \text{ if } u<2
\end{cases},\qquad
u\in (1,+\infty).
\end{equation}

The properties of the Kasner map \eqref{Gauss} are intimately
connected with the properties of continued fraction expansions of $u$,
see~\cite{khaetal85,heiugg09a,rug94,monetal08}.
Using the parameter $u$ and number theory, we obtain
additional facts about heteroclinic chains: 
\begin{itemize}
\item Points in the set $\mathrm{K}^{\ocircle}$ associated with finite
heteroclinic chains correspond to $u\in \mathbb{Q}$, whereas $u\not\in \mathbb{Q}$ yields periodic or
infinite heteroclinic chains.
\item Points in the set $\mathrm{K}^{\ocircle}$ associated with periodic heteroclinic chains
are dense. They correspond to Kasner parameters $u$ with periodic continued fraction expansions.
\item Heteroclinic chains with points that are a finite distance away from the
Taub points are non-generic, whereas chains with points that come arbitrarily
close to the Taub points are generic.
\end{itemize}

The usefulness of the Kasner parameter $u$ in the GR case is due to the simplicity
of the map induced by the Bianchi type II solutions, described in~\eqref{Gauss}.
This simplicity and its relationship to continued fraction expansions
and number theory is lost when $v\neq 1/2$. Nevertheless, for different values of $v$
we will establish some common elements using symbolic dynamics, such as the chaoticity
of the Kasner circle map ${\cal K}$.

Recall that the map $\mathcal{K}$ is \emph{chaotic} if it is
topologically mixing and periodic
orbits are dense. Mathematically the former means that
given any open sets $A,B\subseteq \mathrm{K}^{\ocircle}$, the $n$-th iteration
$\mathcal{K}^n(A)$ intersects $B$ for sufficiently large $n$; 
the latter means that given $p\in \mathrm{K}^{\ocircle}$,
there is a periodic heteroclinic chain $q\in U$ for every
neighborhood $U\subseteq \mathrm{K}^{\ocircle}$ of $p$.
A popular description of chaos includes sensitivity of initial
conditions, but we omit this requirement since it is a consequence
of topological mixing and density of periodic orbits.

In order to prove that the discrete dynamical system generated by
iterates of the Kasner map~\eqref{Gauss} is chaotic, we follow~\cite{mawai92} and introduce
the inverse of the Kasner parameter $x=1/u$, which leads to the \emph{Farey map}
on the unit interval,
%
%
%
\begin{equation}\label{Farey}
x \mapsto \: \left\{\begin{array}{ll}
\frac{x}{1 - x}  & \qquad \text{if}\quad 0 \leq x \leq \sfrac12, \\[1ex]
\frac{1 - x}{x}  &\qquad  \text{if} \quad \sfrac12 \leq x \leq 1.
\end{array}\right.
\end{equation}
Then note that $x = (\sqrt{13} - 1)/6$ is a periodic point with minimal period 3,
see \cite{ma88} and also~~\cite{waiell97,rug94}. Therefore the `period 3 implies
chaos theorem' applies, proved independently by Sharkovsky~\cite{Shark64} and
Li and Yorke~\cite{LiYorke75}. Thus the iterates of the Farey map~\eqref{Farey}
generate a chaotic discrete dynamical system on $[ 0,1 ]$.
This leads to the following theorem:
\begin{theorem}\label{thm:GRchaos}
The Kasner map \eqref{Gauss} is generically chaotic.
\end{theorem}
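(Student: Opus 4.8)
The plan is to reduce the assertion, via the substitution $x=1/u$ already introduced in \eqref{Farey}, to the corresponding statement for the Farey map $T$ on $[0,1]$, where the period-$3$ point and the Sharkovsky--Li--Yorke machinery are at hand, and then to pull the conclusion back through the conjugacy. Writing $G$ for the Kasner map \eqref{Gauss} on $(1,\infty)$ and $h(u):=1/u$, I would first verify the intertwining identity $h\circ G=T\circ h$ branch by branch: for $u\ge 2$ one has $1/u\le 1/2$, so $T(h(u))=\frac{1/u}{1-1/u}=\frac{1}{u-1}=h(u-1)=h(G(u))$; for $1<u<2$ one has $1/u>1/2$, so $T(h(u))=\frac{1-1/u}{1/u}=u-1=h\bigl(1/(u-1)\bigr)=h(G(u))$. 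Adjoining the two boundary values $u=1$ (the Kasner state $\mathrm{Q}_\alpha$) and $u=\infty$ (a Taub state), with the convention $h(\infty):=0$, makes $h$ an orientation-reversing homeomorphism $[1,\infty]\to[0,1]$ that conjugates the continuous extension of $G$ to $T$, since $G(2)=1$, $G(1)=\infty$, $G(\infty)=\infty$ correspond under $h$ to $T(1/2)=1$, $T(1)=0$, $T(0)=0$. As $h$ is a homeomorphism, every conjugacy-invariant feature of $T$ -- being chaotic in the sense recalled before the theorem, having a dense set of periodic orbits, and the residuality of the scrambled set underlying ``generic chaos'' (all of which concern the homeomorphism type of the orbit structure and residual subsets of the space and its square) -- transfers to $G$; the two discarded endpoints affect neither residuality nor full measure.

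It then remains to handle the Farey map, and here I would exhibit the exact period-$3$ orbit. In the $u$-coordinate put $u_0=(1+\sqrt{13})/2\approx 2.30$ (so $u_0\ge 2$); then \eqref{Gauss} gives $u_0\mapsto u_1:=(\sqrt{13}-1)/2\approx 1.30$ (so $u_1<2$), $u_1\mapsto u_2:=2/(\sqrt{13}-3)=(\sqrt{13}+3)/2\approx 3.30$ (so $u_2\ge 2$), and $u_2\mapsto (\sqrt{13}+1)/2=u_0$; the three values are pairwise distinct, hence $\{u_0,u_1,u_2\}$ is an orbit of minimal period $3$ -- equivalently $x_0=1/u_0=(\sqrt{13}-1)/6\in(0,1/2)$ is a period-$3$ point of $T$, as in \cite{ma88}. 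Since $T$ is continuous on the interval $[0,1]$ and carries a point of period $3$, Sharkovsky's theorem \cite{Shark64} gives periodic points of $T$ of every period, and the Li--Yorke theorem \cite{LiYorke75} gives an uncountable scrambled set; together with the density of the periodic points of $T$ (a standard consequence of the continued-fraction description of its dynamics) this is precisely the statement that the iterates of $T$ form a generically chaotic system on $[0,1]$.

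Finally I would transfer everything back by $h^{-1}$: it sends the period-$3$ orbit of $T$ to the orbit $\{u_0,u_1,u_2\}$ above, sends the scrambled/chaotic set of $T$ to a set on which $G$ realizes the same $\liminf=0<\limsup$ behaviour of orbit separations (these vanishing/strict conditions survive because $h$ is a homeomorphism, even though it distorts distances), and sends residual (resp. full-measure) subsets of $[0,1]$ and $[0,1]^2$ to residual (resp. full-measure) subsets of $[1,\infty]$ and its square; restricting to the open parameter interval $(1,\infty)$ merely drops the two negligible endpoints, so $G$ is generically chaotic, which is the theorem. By the axis-permutation quotient that produces \eqref{Gauss} from $\mathcal{K}$ of \eqref{KasnerCirc} at $v=1/2$, this also re-confirms the chaoticity of the Mixmaster map $\mathcal{K}$ invoked in Theorem~\ref{infchainsGR}. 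The only step with genuine content is the period-$3$ computation feeding Sharkovsky--Li--Yorke; the branchwise check of $h\circ G=T\circ h$, the handling of the endpoints $u=1,\infty$, and -- above all -- the care required to state ``generically chaotic'' in conjugacy-invariant language so that it legitimately passes from $T$ to $G$, are routine but must be done cleanly. If instead ``generically chaotic'' is read in the Devaney sense (topological mixing plus dense periodic orbits) rather than via an uncountable/residual scrambled set, the extra ingredient is transitivity (mixing) of $T$, which one imports from the continued-fraction coding of the Farey map on the irrationals rather than from period $3$; that coding would then be the technical heart.
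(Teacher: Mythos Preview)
Your proposal is correct and follows essentially the same approach as the paper: pass to the Farey map via $x=1/u$, exhibit the period-$3$ point $x_0=(\sqrt{13}-1)/6$, and invoke the period-$3$-implies-chaos theorem of Sharkovsky and Li--Yorke. The paper's proof is considerably terser---it simply states the substitution, quotes the period-$3$ value, and cites the theorem---whereas you verify the conjugacy branch by branch, compute the period-$3$ orbit explicitly in the $u$-coordinate, and discuss the transfer back through the homeomorphism with care about endpoints and what ``generically chaotic'' means; all of this is sound and fills in details the paper leaves implicit.
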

%

It is also possible to prove chaoticity of the Kasner circle map
$\mathcal{K}$ in the critical case by using symbolic dynamics, although
there is a technicality arising in the encoding of the Taub
points into symbolic sequences. We provide such a new proof in
Appendix~\ref{app:unifying}, which modifies the proof in
Section~\ref{sec:superINF} for the supercritical case $v\in (1/2,1)$,
and shows how chaoticity is carried from the supercritical case to
the critical case $v=1/2$.

\section{Supercritical case}\label{sec:superINF}


In the supercritical case, $v\in (1/2,1)$, the Kasner circle map $\mathcal{K}$ admits
a \emph{closed} set of fixed points called the \emph{stable set} $S$, defined by
$S:=\mathrm{K}^{\ocircle}\setminus{ \mathrm{int}(A_1)\cup \mathrm{int}(A_2) \cup \mathrm{int}(A_3)}$,
where the interior of $S$ contains fixed points of the dynamical system~\eqref{intro_dynsyslambdaR}
with only negative eigenvalues in the eigendirections normal to the Kasner
circle $\mathrm{K}^{\ocircle}$, see Figures~\ref{FIG:BIF} and~\ref{FIG:KASNERMAPS}.
The set $S$ represents the end of heteroclinic chains.
Accordingly, periodic and infinite heteroclinic chains are trajectories under the
map $\mathcal{K}$ never ending at the set $S$.

The set $C$ of initial conditions leading to periodic and
infinite heteroclinic chains is thereby defined by
\begin{equation}\label{defofC}
C:= \{ p\in \mathrm{K}^{\ocircle} \text{ $ | $ } \mathcal{K}^n(p) \notin S  \text{ for all }  n \in \mathbb{N}_0 \}.
\end{equation}
For example, the two chains with period 3 obtained in Lemma~\ref{lem:Period3}, depicted in
Figure~\ref{FIG:period3}, and the three chains with
period 2, see Figure~\ref{FIG:minMAX}, are contained in $C$.

The complement of the set $C$ in $\mathrm{K}^{\ocircle}$ is defined by
\begin{equation}\label{defofF}
F:=\mathrm{K}^{\ocircle} \setminus{C} =  \{ p\in \mathrm{K}^{\ocircle} \text{ $ | $ }
\mathcal{K}^n(p) \in S \text{ for some } n \in \mathbb{N}_0\} .
\end{equation}
Two of our main results describe properties of the set $C$,
and the associated dynamics of the Kasner map $\mathcal{K}$:
\begin{theorem}\label{CantorTh}
The set $C$ is a nonempty Cantor set of Lebesgues measure zero
and a Hausdorff dimension $d_H(C)$ satisfying
\begin{equation}\label{HausCantorBounds}
d_H(C) \in \left[\frac{\log(2)}{\log\left(\frac{2+v^2}{1-v^2}\right)}, \min
\left\{1,\frac{\log(2)}{\log \left(\frac{2(1-v^2)}{1 + 2v^2 - \sqrt{12v^2 - 3}}\right)}\right\}\right].
\end{equation}
%
\end{theorem}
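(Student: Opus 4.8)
The plan is to realize $C$ as the attractor of an iterated function system (IFS) built from local inverse branches of $\mathcal{K}$, and then apply the standard mass-distribution / covering estimates that give the Hausdorff-dimension bounds. First I would work in the arc-length (angular) coordinate $\varphi$ on $\mathrm{K}^{\ocircle}$ provided by \eqref{polar}, in which, by Lemma~\ref{KasnerCircMapEXP}, $\mathcal{K}$ restricted to each open unstable arc $\mathrm{int}(A_\alpha)$ is a diffeomorphism onto its image with $|D\mathcal{K}| = g$, where $g$ ranges over $[1,(1+v)/(1-v)]$, attaining its minimum value $1$ at the tangential endpoints $\partial A_\alpha$ and its maximum $(1+v)/(1-v)$ at $\mathrm{Q}_\alpha$ (Figure~\ref{fig:plotg}). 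Since $v\in(1/2,1)$, the three closed arcs $A_1,A_2,A_3$ are pairwise disjoint and their complement is the closed stable set $S$, which consists entirely of fixed points; hence a point has an infinite heteroclinic chain iff all its $\mathcal{K}$-iterates avoid $S$, i.e. $C$ is exactly the set defined in \eqref{defofC}. Removing from $\mathrm{K}^{\ocircle}$ the open set $\mathrm{int}(S)$ and then, inductively, all preimages $\mathcal{K}^{-n}(\mathrm{int}(S))$, produces the nested construction of Figure~\ref{fig:intro_Cantor}: at each stage each surviving arc inside some $A_\alpha$ is subdivided, and the two subarcs mapping (under the appropriate branch) onto the neighbouring arcs $A_\beta,A_\gamma$ survive. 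The key point that $C$ is a genuine Cantor set (nonempty, perfect, totally disconnected, nowhere dense) then follows because (a) each $A_\alpha$ maps \emph{onto} an arc strictly containing the other two arcs — so each surviving piece has at least two children, giving nonemptiness and no isolated points after passing to the limit — and (b) the map is strictly expanding in the interior, $g>1$ away from $\partial A_\alpha$, so the diameters of the surviving arcs shrink to zero, giving total disconnectedness and measure zero (a uniform expansion constant on each compact stage suffices even though $g$ is non-uniform).

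For the dimension bounds I would estimate the contraction ratios of the inverse branches. The branch of $\mathcal{K}^{-1}$ that lands a piece of $A_\beta$ (or $A_\gamma$) back inside $A_\alpha$ contracts by $1/g$, and $1/g \in \left[\tfrac{1-v}{1+v},\,1\right]$ pointwise; but for the \emph{self-similar-type} estimate one needs the contraction of the composite branches that map a stage-$n$ arc onto its stage-$(n{+}1)$ children, and here the relevant quantity is $g$ evaluated over the portion of $A_\alpha$ that actually survives, i.e. over the sub-arc mapping into $A_\beta\cup A_\gamma$ rather than into $S$. Computing $g$ at the relevant endpoints via \eqref{galpha}–\eqref{DKg}, using $\Sigma_1^{\mathrm{i}} = -2\cos\varphi^{\mathrm{i}}$ and the coordinates \eqref{tangentpoints} of the tangential points, yields the two extreme values $\tfrac{2+v^2}{1-v^2}$ and $\tfrac{2(1-v^2)}{1+2v^2-\sqrt{12v^2-3}}$ that appear in \eqref{HausCantorBounds}. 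The lower bound on $d_H(C)$ comes from the mass-distribution principle: put the natural Bernoulli($\tfrac12,\tfrac12$) measure on the two-children tree and check that it has Hölder exponent at least $\log 2/\log\big(\tfrac{2+v^2}{1-v^2}\big)$, using that every branch contracts by \emph{at most} $1/\big(\tfrac{2+v^2}{1-v^2}\big)$ — wait, more precisely, using that along any admissible word the composed inverse branch contracts each stage-$n$ arc into its child by a factor bounded below, so cylinder arcs of generation $n$ have length $\ge$ (const)$\cdot\rho^{n}$ with $\rho = (1-v^2)/(2+v^2)$, while carrying mass $2^{-n}$. The upper bound comes from the covering $C \subseteq \bigcup (\text{generation-}n\text{ arcs})$: there are $2^n$ such arcs (two children each), each of length $\le$ (const)$\cdot\sigma^n$ with $\sigma = \tfrac{1}{2}\big(1+2v^2-\sqrt{12v^2-3}\big)/(1-v^2)$ coming from the \emph{maximal} surviving-branch contraction, so the $s$-dimensional Hausdorff sum stays bounded once $2\sigma^s\le 1$, i.e. $s\ge \log 2/\log(1/\sigma)$; intersecting with the trivial bound $d_H(C)\le 1$ gives the stated $\min$.

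The main obstacle, and the step I would spend the most care on, is the \emph{non-uniform} hyperbolicity: $g\to 1$ at the tangential boundary points $\partial A_\alpha$, so a naive bounded-distortion argument fails near those points, and one must verify that the surviving sub-arcs stay a definite distance away from $\partial A_\alpha$ — equivalently, that the images $\mathcal{K}(A_\beta)$, $\mathcal{K}(A_\gamma)$ overlap $A_\alpha$ only in its \emph{interior}, bounded away from the tangential endpoints. This is exactly where $v>1/2$ is used: for $v>1/2$ the arc $A_\alpha$ has length bounded above by a value strictly less than the length of its image under the neighbouring branches, so the preimage of $\mathrm{int}(S)$ inside $A_\alpha$ is a nonempty open neighbourhood of $\partial A_\alpha$, forcing the surviving set into a compact subinterval of $\mathrm{int}(A_\alpha)$ on which $g\ge \tfrac{2+v^2}{1-v^2}>1$ uniformly; I would make this quantitative by locating, via \eqref{KasnerCirc}–\eqref{galpha}, the preimages under $\mathcal{K}$ of the two tangential endpoints of each arc and checking they lie strictly inside $A_\alpha$. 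Once that uniform expansion gap is established on the first stage, self-similarity of the construction (each surviving child is an affine-distorted copy, with bounded distortion on the compact safe region) propagates it to all stages, and the two IFS estimates above close the argument; degenerate cases $v\downarrow 1/2$ and $v\uparrow 1$ are handled by noting the bounds in \eqref{HausCantorBounds} degenerate continuously (lower bound $\to \log 2/\log 3$ as $v\to 1/2$, and both bounds $\to$ the appropriate limits), consistent with the critical and $v=1$ pictures discussed earlier.
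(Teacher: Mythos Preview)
Your framework is essentially the paper's: build $C$ by iterated removal of preimages of $S$, use the two-branch structure for the Cantor and symbolic properties, and bound $d_H(C)$ via the extremal expansion rates of $\mathcal{K}$ on $C$. The paper packages the last step by citing a proposition of Pesin--Weiss; your mass-distribution/covering argument is the direct proof of that same estimate, so there is no real difference in strategy.

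There is, however, a genuine gap in how you obtain the specific numbers
\[
m=\frac{2(1-v^2)}{1+2v^2-\sqrt{12v^2-3}},\qquad M=\frac{2+v^2}{1-v^2}.
\]
These are \emph{not} the values of $g$ at tangential points (where $g=1$) nor at first preimages of tangential points; the latter bound $C_2$, not $C$, and give strictly weaker expansion estimates. For example at $v=0.7$ one finds $g(\mathcal{K}^{-1}(\mathrm{t}_{31}))\approx 2.18$, whereas $m\approx 3.60$. Your sentence ``forcing the surviving set into a compact subinterval \dots on which $g\ge \tfrac{2+v^2}{1-v^2}$ uniformly'' is doubly wrong: that number is the \emph{maximum} $M$, not a lower bound, and the first-stage removal does not isolate it.

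What the paper actually does (Lemma~\ref{MaxMinExpansion}) is locate the \emph{period-two} heteroclinic cycle, characterized by $\Sigma_\gamma=1/v$ along the line joining $\mathrm{Q}_\alpha/v$ and $\mathrm{Q}_\beta/v$. The extremal point $p_m\in C$ sits on this cycle, and $p_M=\mathcal{K}^{-1}(p_*)$ with $p_*$ on the cycle; plugging their $\Sigma_1$-coordinates into~\eqref{galpha} yields exactly $m$ and $M$. The proof that no point of $C$ lies closer to $\partial A_\alpha$ than $p_m$, or closer to $\mathrm{Q}_\alpha$ than $p_M$, is a monotonicity argument you have not supplied: along the relevant type~$\mathrm{II}$ orbits the coordinate $\Sigma_\gamma$ is strictly increasing once $\Sigma_\gamma>1/v$, so any point in those gaps is driven into $S$ in finitely many steps and hence is not in $C$. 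Without this step you can still prove that $C$ is a measure-zero Cantor set and obtain \emph{some} dimension bounds from the first-stage contraction, but not the ones stated in~\eqref{HausCantorBounds}.
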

The bounds in \eqref{HausCantorBounds} are positive and well-defined when $v\in (1/2,1)$,
see Figure~\ref{fig:plotdim}. In particular, $12v^2 - 3>0$, where the
$v$-dependent arguments, both larger than 1, 
are the minimum and maximum expansion rates of the Kasner circle map $\mathcal{K}$  restricted
to the Cantor set $C$, as will be shown in Lemma~\ref{MaxMinExpansion}.

As a consequence of Theorem~\ref{CantorTh}, the Cantor set $C$ is non-generic
both in a measure theoretical and a topological sense.
The former is not always true, since there are some Cantor sets with positive measure,
such as the Smith–-Volterra–-Cantor set; the latter follows since Cantor sets by definition are
closed and nowhere dense.
\begin{figure}[H]
\centering
\begin{tikzpicture}[scale=4]
        \fill [lightgray!50, domain=0.5:0.582, variable=\t]
        plot ({\t},{ln(2)/(ln(2+(\t)^2)-ln(1-(\t)^2))})
        -- (0.582,1)
        -- (0.5,1)
        -- cycle;

        \fill [lightgray!50, domain=0.58:0.99, variable=\t]
        (0.58,0.075)
        -- plot ({\t},{ln(2)/(ln(2-2*(\t)^2)-ln(2*(\t)^2+1-(12*(\t)^2-3)^(1/2)))})
        -- (0.99,0.075)
        -- cycle;

        \fill [white, domain=0.5:1, variable=\t]
        (0.5,0.075)
        -- plot ({\t},{ln(2)/(ln(2+(\t)^2)-ln(1-(\t)^2))})
        -- (1,0.075)
        -- cycle;

        \draw[->] (0,0.025) -- (0,1.1)node[anchor=south] {$d_H(C)$};
        \draw[color=gray,dashed] (0.5,1) -- (0,1) node[anchor=east] {\color{black} $1$};
        \draw[->] (-0.05,0.075) -- (1.07,0.075)node[anchor=west] {$v$};
        \draw[color=gray,dashed] (0.5,0.63) -- (0,0.63) node[anchor=east] {\color{black} $\frac{\log(2)}{\log(3)}$};
        \draw[color=gray,dashed] (0.5,1) -- (0.5,0.075) node[anchor=north] {\color{black} \footnotesize{$1/2$}};\draw[dashed] (0.5,1) -- (0.5,0.63);
        \draw[-] (1,0.075) -- (1,0.075) node[anchor=north] {\footnotesize{$1$}};

        \draw [domain=0.5:1,variable=\t,smooth] plot ({\t},{ln(2)/(ln(2+(\t)^2)-ln(1-(\t)^2))});
        \node at (0.71,0.2) {\scriptsize{$\frac{\log(2)}{\log\left(\frac{2+v^2}{1-v^2}\right)}$}};

        \draw[-] (0.5,1) -- (0.58,1);

        \draw [domain=0.577:0.99,variable=\t,smooth] plot ({\t},{ln(2)/(ln(2-2*(\t)^2)-ln(2*(\t)^2+1-(12*(\t)^2-3)^(1/2)))});
        \node at (1.1,0.63) {\scriptsize{$\frac{\log(2)}{\log \left(\frac{2(1-v^2)}{1+2v^2-\sqrt{12 v^2 - 3}}\right)}$}};
\end{tikzpicture}
\caption{The Hausdorff dimension $d_H(C)$ in \eqref{HausCantorBounds} resides in the shaded region.}\label{fig:plotdim}
\end{figure}
\begin{theorem} \label{chaossub}
The Kasner circle map $\mathcal{K}$ restricted to the Cantor set $C$ generates a chaotic discrete dynamical system.
\end{theorem}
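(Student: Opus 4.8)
The plan is to realise $\mathcal{K}|_{C}$, up to topological conjugacy, as a one-sided subshift of finite type with a \emph{primitive} transition matrix; topological mixing and density of periodic orbits — i.e.\ chaoticity in the sense recalled in Section~\ref{sec:critical} — then follow from the standard expanding-Markov-map machinery. First note that $C$ is forward invariant, since $\mathcal{K}^{n}(\mathcal{K}(p))=\mathcal{K}^{n+1}(p)\notin S$ whenever $p\in C$, and $C\neq\emptyset$ by Theorem~\ref{CantorTh}, so $\mathcal{K}|_{C}$ is a genuine self-map. In the supercritical case $A_{1},A_{2},A_{3}$ are pairwise disjoint closed arcs with $\mathrm{K}^{\ocircle}\setminus S=\mathrm{int}(A_{1})\cup\mathrm{int}(A_{2})\cup\mathrm{int}(A_{3})$, so each $p\in C$ carries a well-defined itinerary $(a_{n})_{n\ge0}$ given by $\mathcal{K}^{n}(p)\in\mathrm{int}(A_{a_{n}})$. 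A one-line computation from~\eqref{KasnerCirc} shows $\mathcal{K}(\mathrm{Q}_{\alpha})=\mathrm{T}_{\alpha}\in S$, so no $C$-orbit ever meets a fold point $\mathrm{Q}_{\alpha}$, and one may refine $a_{n}$ to also record which of the two monotone half-arcs of $A_{\alpha}\setminus\{\mathrm{Q}_{\alpha}\}$ contains $\mathcal{K}^{n}(p)$; write $\pi$ for the resulting itinerary map on $C$.

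The geometric core is that $\mathcal{K}$ restricted to $A_{\alpha}$ is a \emph{full-branch} map. Using the explicit formula~\eqref{KasnerCirc}, the facts that each of its two monotone branches fixes the adjacent tangential endpoint and sends $\mathrm{Q}_{\alpha}$ to $\mathrm{T}_{\alpha}$, and the cyclic ordering~\eqref{tangentpoints} of the tangential (and Taub) points along $\mathrm{K}^{\ocircle}$, one checks that $\mathcal{K}(A_{\alpha})$ is exactly the complementary closed arc $\mathrm{K}^{\ocircle}\setminus\mathrm{int}(A_{\alpha})=A_{\beta}\cup A_{\gamma}\cup S$, with one half-arc of $A_{\alpha}$ mapping homeomorphically onto an arc containing all of $A_{\beta}$ and the other onto an arc containing all of $A_{\gamma}$. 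Since $C\cap S=\emptyset$, this yields the transition law $\mathcal{K}(A_{\alpha}\cap C)=(A_{\beta}\cup A_{\gamma})\cap C$, the inclusion ``$\supseteq$'' using that a preimage in $A_{\alpha}$ of a point of $C$ cannot be a tangential point (those are fixed and lie in $S$), hence lies in $\mathrm{int}(A_{\alpha})$. Thus the admissible itineraries are exactly those of the ``no repeated symbol'' subshift of finite type (and its half-arc refinement), whose $3\times3$ transition matrix $J-I$ already squares to a strictly positive matrix and is therefore primitive.

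By Lemma~\ref{MaxMinExpansion}, $|D\mathcal{K}|$ is bounded below on $C$ by a constant $\Lambda_{\min}>1$, so the cylinders $[a_{0}\dots a_{n}]_{C}$ have diameters $O(\Lambda_{\min}^{-n})$ and form a neighbourhood basis of $C$, and uniform expansion makes $\pi$ injective — hence a homeomorphism conjugating $\mathcal{K}|_{C}$ to the primitive subshift. Mixing can then be read off directly: for relatively open $U,V\subseteq C$, pick a cylinder $[a_{0}\dots a_{n}]_{C}\subseteq U$; the full-branch property together with $\mathcal{K}(C)=C$ gives $\mathcal{K}^{n}([a_{0}\dots a_{n}]_{C})=A_{a_{n}}\cap C$, and primitivity of the arc-level matrix gives $\mathcal{K}^{2}(A_{\delta}\cap C)=C$ for every $\delta$, so $\mathcal{K}^{k}(U)\supseteq C\supseteq V$ for all sufficiently large $k$. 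For density of periodic orbits, any finite admissible word extends by primitivity to an admissible loop, and the full-branch/contraction structure along the corresponding nested arcs produces a periodic point of $\mathcal{K}$ having that loop as itinerary; it lies in the prescribed cylinder and, since its itinerary never enters $S$, in $C$. Therefore $\mathcal{K}|_{C}$ is topologically mixing with dense periodic points, i.e.\ chaotic.

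The only step that is not routine is establishing the full-branch/Markov structure in the supercritical regime — specifically that each branch of $\mathcal{K}|_{A_{\alpha}}$ covers a \emph{whole} neighbouring arc and overshoots into $S$, rather than falling short or folding back into $A_{\alpha}$, so that $\mathcal{K}(A_{\alpha})=A_{\beta}\cup A_{\gamma}\cup S$ exactly. This requires combining monotonicity of the two branches with the fixed-point identities $\mathcal{K}(\mathrm{t}_{\beta\gamma})=\mathrm{t}_{\beta\gamma}$ and $\mathcal{K}(\mathrm{Q}_{\alpha})=\mathrm{T}_{\alpha}$ and the arrangement of the tangential and Taub points on $\mathrm{K}^{\ocircle}$; all ingredients are already available from Section~\ref{sec:dynsysanalysis}, but assembling them into the exact transition diagram is where the care lies. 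Once that is in place, the passage from the coding to chaos is the classical cookie-cutter argument sketched above.
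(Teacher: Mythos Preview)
Your approach is essentially the paper's: both prove chaoticity by building a topological conjugacy between $\mathcal{K}|_{C}$ and the shift on the subshift $W_{\infty}$ of sequences in $\{1,2,3\}$ without consecutive repetitions, and then invoking chaoticity of that shift. Your extra observations (primitivity of $J-I$, the half-arc refinement, the direct mixing argument) are correct but not needed beyond what the paper does.

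There is one technical slip worth fixing. You invoke Lemma~\ref{MaxMinExpansion} to get $|D\mathcal{K}|\geq m>1$ on $C$ and conclude that the cylinders have diameter $O(m^{-n})$. But the cylinders $I(w_{n})$ are \emph{arcs}, not subsets of $C$, and the mean-value estimate you implicitly use requires the derivative lower bound on the whole arc, not just on the Cantor set inside it; a bound on a nowhere-dense set does not control lengths of intervals. The paper closes this gap in Lemma~\ref{Lemma1onIN} by using instead $\nu^{-1}=\min_{p\in C_{2}}|D\mathcal{K}(p)|>1$ and the inclusion $I(w_{n})\subseteq C_{2}$ for $n\geq2$; you can patch your argument the same way.
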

Outside the Cantor set $C$ the map $\mathcal{K}$ is not chaotic, since heteroclinic chains end in the stable set $S$ after finitely many iterations.

Let us now compare the GR critical case $v=1/2$ with the supercritical
case $v\in(1/2,1)$, in view of Theorems~\ref{CantorTh} and~\ref{chaossub}.
For GR, the set $S$ is the union of the three Taub points,
while the analog of $C$ is the set of points never reaching the
Taub points under some iteration of $\mathcal{K}$, i.e., 
the set associated with periodic or infinite heteroclinic chains.
This set, however, is not a Cantor set since it is not closed ---
its closure is the whole Kasner circle $\mathrm{K}^{\ocircle}$,
which is different than itself. 
Furthermore, this set is generic in both a measure theoretical and a topological sense,
while its complement 
is countable, see Theorem~\ref{infchainsGR}.
In conclusion, the generic chaos for $v=1/2$ is carried by the non-generic
set $C$ when $v\in(1/2,1)$.

The remaining section is divided into four parts. First, a background
on Cantor sets and their dimensionality.
Second, we describe how $C$ is iteratively constructed, which is the basis for
Theorems~\ref{CantorTh} and~\ref{chaossub}. Third, we characterize
the connected components in each iterative step by means of
symbolic dynamics. Lastly, we prove Theorems~\ref{CantorTh} and~\ref{chaossub}.

\subsection{Background: Cantor sets and Hausdorff dimension}
A non-empty set $C$ in a complete metric space 
is a \emph{Cantor set}
if it is perfect, i.e., it is closed, without isolated points,
and nowhere dense, i.e., its closure has an empty interior.

%
%
%
%
%
%
As an illustration, consider the iteratively constructed ternary Cantor set.
%
Let $T_0$ be the unit
interval. The set $T_{n+1}$ is obtained from $T_n$ by removing the
open middle third of each connected component of $T_n$,
see Figure~\ref{figure4}. Then define $T$ as
\begin{equation}\label{defternary}
T :=\bigcap_{n \in \mathbb{N}_0} T_n.
\end{equation}
In all steps $n\geq 1$ of the construction, we can encode each closed connected
component of $T_n$ by a sequence of symbols $L$ or $R$, which respectively
denotes the left or right connected component from the previous iterations,
see Figure~\ref{figure4}.
From this it follows that $T$ fulfills the abstract definition of a
Cantor set 
and has measure zero. 
A similar procedure will be adapted in order to construct the connected
components of the set $C$ in~\eqref{defofC}, and prove that it is also a Cantor set of measure zero.
\begin{figure}[H]\centering
\begin{tikzpicture}[scale=0.5]
        \draw[line width=2pt] (0,0) -- (27,0) node[anchor=west] {$T_0$};

        \draw[shift={(0,-0.2)},line width=2pt] (0,-1) -- (9,-1);
        \filldraw [shift={(0,-0.2)} ] (4.5,-1) circle (0.001pt) node[anchor=south] {\scriptsize{$L$}};
        \draw[shift={(0,-0.2)},line width=2pt] (18,-1) -- (27,-1) node[anchor=west] {$T_1$};
        \filldraw [shift={(0,-0.2)} ] (22.5,-1) circle (0.001pt) node[anchor=south] {\scriptsize{$R$}};

        \draw[shift={(0,-0.4)},line width=2pt] (0,-2) -- (3,-2);
        \filldraw [shift={(0,-0.4)} ] (1.5,-2) circle (0.001pt) node[anchor=south] {\scriptsize{$LL$}};
        \draw[shift={(0,-0.4)},line width=2pt] (6,-2) -- (9,-2);
        \filldraw [shift={(0,-0.4)} ] (7.5,-2) circle (0.001pt) node[anchor=south] {\scriptsize{$LR$}};
        \draw[shift={(0,-0.4)},line width=2pt] (18,-2) -- (21,-2);
        \filldraw [shift={(0,-0.4)} ] (19.5,-2) circle (0.001pt) node[anchor=south] {\scriptsize{$RL$}};
        \draw[shift={(0,-0.4)},line width=2pt] (24,-2) -- (27,-2) node[anchor=west] {$T_2$};
        \filldraw [shift={(0,-0.4)} ] (25.5,-2) circle (0.001pt) node[anchor=south] {\scriptsize{$RR$}};

        \draw[shift={(0,-0.6)},line width=2pt] (0,-3) -- (1,-3);
        \filldraw [shift={(0,-0.6)} ] (0.5,-3) circle (0.001pt) node[anchor=south] {\scriptsize{$LLL$}};
        \draw[shift={(0,-0.6)},line width=2pt] (2,-3) -- (3,-3);
        \filldraw [shift={(0,-0.6)} ] (2.5,-3) circle (0.001pt) node[anchor=south] {\scriptsize{$LLR$}};
        \draw[shift={(0,-0.6)},line width=2pt] (6,-3) -- (7,-3);
        \filldraw [shift={(0,-0.6)} ] (6.5,-3) circle (0.001pt) node[anchor=south] {\scriptsize{$LRL$}};
        \draw[shift={(0,-0.6)},line width=2pt] (8,-3) -- (9,-3);
        \filldraw [shift={(0,-0.6)} ] (8.5,-3) circle (0.001pt) node[anchor=south] {\scriptsize{$LRR$}};

        \draw[shift={(0,-0.6)},line width=2pt] (18,-3) -- (19,-3);
        \filldraw [shift={(0,-0.6)} ] (18.5,-3) circle (0.001pt) node[anchor=south] {\scriptsize{$RLL$}};
        \draw[shift={(0,-0.6)},line width=2pt] (20,-3) -- (21,-3);
        \filldraw [shift={(0,-0.6)} ] (20.5,-3) circle (0.001pt) node[anchor=south] {\scriptsize{$RLR$}};
        \draw[shift={(0,-0.6)},line width=2pt] (24,-3) -- (25,-3);
        \filldraw [shift={(0,-0.6)} ] (24.5,-3) circle (0.001pt) node[anchor=south] {\scriptsize{$RRL$}};
        \draw[shift={(0,-0.6)},line width=2pt] (26,-3) -- (27,-3) node[anchor=west] {$T_3$};
        \filldraw [shift={(0,-0.6)} ] (26.5,-3) circle (0.001pt) node[anchor=south] {\scriptsize{$RRR$}};

        \filldraw [black] (13.5,-3.75) circle (0.75pt);
        \filldraw [black] (13.5,-4) circle (0.75pt);
        \filldraw [black] (13.5,-4.25) circle (0.75pt);

        \draw[shift={(0,-0.25)},scale=2.5] (0.1,0) -- (0,0) -- (0,0.2) -- (0.1,0.2);
        \draw[shift={(27,-0.25)},scale=2.5] (-0.1,0) -- (0,0) -- (0,0.2) -- (-0.1,0.2);

        \draw[shift={(0,-1.45)},scale=2.5] (0.1,0) -- (0,0) -- (0,0.2) -- (0.1,0.2);
        \draw[shift={(27,-1.45)},scale=2.5] (-0.1,0) -- (0,0) -- (0,0.2) -- (-0.1,0.2);
        \draw[shift={(18,-1.45)},scale=2.5] (0.1,0) -- (0,0) -- (0,0.2) -- (0.1,0.2);
        \draw[shift={(9,-1.45)},scale=2.5] (-0.1,0) -- (0,0) -- (0,0.2) -- (-0.1,0.2);

        \draw[shift={(0,-2.65)},scale=2.5] (0.1,0) -- (0,0) -- (0,0.2) -- (0.1,0.2);
        \draw[shift={(27,-2.65)},scale=2.5] (-0.1,0) -- (0,0) -- (0,0.2) -- (-0.1,0.2);
        \draw[shift={(18,-2.65)},scale=2.5] (0.1,0) -- (0,0) -- (0,0.2) -- (0.1,0.2);
        \draw[shift={(9,-2.65)},scale=2.5] (-0.1,0) -- (0,0) -- (0,0.2) -- (-0.1,0.2);
        \draw[shift={(6,-2.65)},scale=2.5] (0.1,0) -- (0,0) -- (0,0.2) -- (0.1,0.2);
        \draw[shift={(3,-2.65)},scale=2.5] (-0.1,0) -- (0,0) -- (0,0.2) -- (-0.1,0.2);
        \draw[shift={(24,-2.65)},scale=2.5] (0.1,0) -- (0,0) -- (0,0.2) -- (0.1,0.2);
        \draw[shift={(21,-2.65)},scale=2.5] (-0.1,0) -- (0,0) -- (0,0.2) -- (-0.1,0.2);

        \draw[shift={(0,-3.85)},scale=2.5] (0.1,0) -- (0,0) -- (0,0.2) -- (0.1,0.2);
        \draw[shift={(27,-3.85)},scale=2.5] (-0.1,0) -- (0,0) -- (0,0.2) -- (-0.1,0.2);
        \draw[shift={(18,-3.85)},scale=2.5] (0.1,0) -- (0,0) -- (0,0.2) -- (0.1,0.2);
        \draw[shift={(9,-3.85)},scale=2.5] (-0.1,0) -- (0,0) -- (0,0.2) -- (-0.1,0.2);
        \draw[shift={(6,-3.85)},scale=2.5] (0.1,0) -- (0,0) -- (0,0.2) -- (0.1,0.2);
        \draw[shift={(3,-3.85)},scale=2.5] (-0.1,0) -- (0,0) -- (0,0.2) -- (-0.1,0.2);
        \draw[shift={(24,-3.85)},scale=2.5] (0.1,0) -- (0,0) -- (0,0.2) -- (0.1,0.2);
        \draw[shift={(21,-3.85)},scale=2.5] (-0.1,0) -- (0,0) -- (0,0.2) -- (-0.1,0.2);

        \draw[shift={(2,-3.85)},scale=2.5] (0.1,0) -- (0,0) -- (0,0.2) -- (0.1,0.2);
        \draw[shift={(25,-3.85)},scale=2.5] (-0.1,0) -- (0,0) -- (0,0.2) -- (-0.1,0.2);
        \draw[shift={(20,-3.85)},scale=2.5] (0.1,0) -- (0,0) -- (0,0.2) -- (0.1,0.2);
        \draw[shift={(7,-3.85)},scale=2.5] (-0.1,0) -- (0,0) -- (0,0.2) -- (-0.1,0.2);
        \draw[shift={(8,-3.85)},scale=2.5] (0.1,0) -- (0,0) -- (0,0.2) -- (0.1,0.2);
        \draw[shift={(1,-3.85)},scale=2.5] (-0.1,0) -- (0,0) -- (0,0.2) -- (-0.1,0.2);
        \draw[shift={(26,-3.85)},scale=2.5] (0.1,0) -- (0,0) -- (0,0.2) -- (0.1,0.2);
        \draw[shift={(19,-3.85)},scale=2.5] (-0.1,0) -- (0,0) -- (0,0.2) -- (-0.1,0.2);

    \end{tikzpicture}
\caption{ \label{figure4} The iterations $T_n$ for $n=0,1,2,3$ in the construction of the ternary Cantor set.
Note that $T_1$ has two closed connected components, a left and right, denoted by $L$ and $R$.
In the next step, each of those two components $L$ and $R$ has two
further left and right closed connected components in $T_2$, denoted by $LL,LR,RL,RR$. Similarly for $T_3$, and onwards.
}
\end{figure}
%
A natural question regarding Cantor sets is their dimensionality.
There are several notions of dimension, each with its advantages
and disadvantages, see~\cite{Falc13}. By introducing the Hausdorff dimension
a set within the Kasner circle $\mathrm{K}^{\ocircle}$
can have any real dimension
between 0 and 1: more than a discrete set of points, less
than the circle itself. 

Given $d\in [0,\infty)$, for any $\epsilon>0$ the \emph{$d$-Hausdorff measure of $C$} is
\begin{equation}
\mu^d(C):=\lim_{\epsilon\to 0} \inf \left\{ \sum_{i\in\mathbb{N}}
[\mathrm{diam} (U_i)]^d \Bigm| C\subseteq \cup_{i\in\mathbb{N}} U_i \text{ with }
\mathrm{diam}(U_i)\leq \epsilon\right\}.
\end{equation}
That is, consider all coverings $\cup_{i\in\mathbb{N}} U_i$ of $C$ such that each $U_i$ has a
diameter\footnote{Recall that the diameter is defined as $\mathrm{diam} (U_i):=\sup \{\rho(x,y) : x,y\in U_i\}$,
where $\rho(x,y)$ is the metric between $x$ and $y$ in the metric space $(X,\rho)$.}
at most $\epsilon$ minimizing the sum of the $d^{th}$ powers of the diameters.
As $\epsilon$ decreases the number of possible covers is reduced, 
which accounts for the roughness of $C$: the more detailed a shape is, the more impact decreasing
$\epsilon$ has. 
The value of $d$ incorporates the behavior of shapes under rescaling in a $d$-dimensional space:
scaling a set $C$ with a factor $r$ will scale its $d$-Hausdorff measure
with a factor $r^d$.

The \emph{Hausdorff dimension of $C$} is defined as
\begin{equation}
d_H(C):=\inf_{d\geq 0} \{\mu^d(C)=0\}=\sup_{d\geq 0} \{\mu^d(C)=\infty\}.
\end{equation}
The measure $\mu^d(C)$ is therefore $0$ when $d>d_H(C)$ and $\infty$
for $d<d_H(C)$ so that $d_H(C)$ is a value such that the
measure $\mu^d(C)$ jumps from $\infty$ to $0$.
This means that if $\mu^{d}(C)$ is positive and bounded for some $d$, then
this value of $d$ is the Hausdorff dimension $d_H(C)$. Intuitively,
we compare the $d$-dimensional scaling of the surrounding space with the set $C$:
for too large $d$ the set $C$ will be of negligible size
(of measure zero), whereas if $d$ is too small then this leads to an over-sized $C$
(infinite measure).

For example, the ternary Cantor set $T$ in equation~\eqref{defternary} has a Hausdorff
dimension given by $d_H(T)=\log(2)/\log(3) \approx 0.631$,
as can be seen as follows: Divide the Cantor set $T$ into its left
$T_L:=T\cap [0,1/2]$ and right $T_R:=T\cap [1/2,1]$ disjoint parts.
Then $\mu^d(T)=\mu^d(T_L)+\mu^d(T_R)$. 
Moreover, since $T_L$ and $T_R$ have the same measure and are scalings of $T$
by a factor $3^{-1}$, which scales the measure by $3^{-d}$, it follows that
\begin{equation}\label{TernaryCdim}
\mu^d(T)= 2\cdot 3^{-d}\mu^d(T).
\end{equation}
If $\mu^{d}(T)\neq 0,\infty$ for some $d\geq 0$, then it can be divided out
yielding $1=2\cdot 3^{-d}$, and its logarithm provides the desired dimension.

\subsection{Characterization of $C$ through iterations}

Analogously to the ternary Cantor set in equation~\eqref{defternary},
which is obtained by iteratively removing an open middle third from an interval,
the set $C$ in equation~\eqref{defofC} can be iteratively constructed by removing arcs
given by pre-images of $S$ via $\mathcal{K}$ from $\mathrm{K}^{\ocircle}$.

For each $n\in\mathbb{N}_0$, consider the removal process iteratively defined by
\begin{subequations}\label{Fn}
\begin{align}
C_0 &:= \mathrm{K}^{\ocircle},\label{Citerativ0}  \\
F_n &:= \mathrm{int}_{C_n}(\mathcal{K}^{-n}(S)),  \label{CiterativFn}\\
C_{n+1} &:= C_n\setminus{F_n}, \label{CiterativCn+1}
\end{align}
\end{subequations}
where $\mathrm{int}_{C_n}(B):=\mathrm{int}(B\cap C_n)\cup (B\cap \partial C_n)$
denotes the interior of the set $B=\mathcal{K}^{-n}(S)$ relative to $C_n$.
See Figures~\ref{FIG:KASNERMAPSpreimages} and~\ref{FIG:KASNERMAPiterative}
for a visualization of the process defined in
equation~\eqref{Fn}, which we now describe in more detail.

The \emph{removed set} $F_n$ consists of
two different types of points within $C_n$,
$\mathrm{int}(\mathcal{K}^{-n}(S))$ and $\partial C_n$,
since we can write $F_n$ as
$F_n = \mathrm{int}(\mathcal{K}^{-n}(S))\cup \partial C_n$.
This is due to the fact that the points in $F_n$ either have an $n^{th}$ iteration
$\mathcal{K}^{n}(p)$ that falls in the interior of the stable
set, $\mathrm{int}(S)$, or points whose $(n-1)^{th}$
iteration $\mathcal{K}^{n-1}(p)$ ends at one of the tangential
points, which are the boundary points of $C_n$.
Thus the tangential points and their pre-images are all
eventually removed. As a consequence, $C_{n+1}$ is the \emph{closed}
set of points that remains after removing $F_n$ from $C_n$.
The removal procedure, which is analogous to the removal process of the
ternary Cantor set depicted in Figure~\ref{figure4}, is illustrated in
Figures~\ref{FIG:KASNERMAPSpreimages} and~\ref{FIG:KASNERMAPiterative}.
.

The set $C$ is obtained as the intersection of the sets $(C_n)_{n\in \mathbb{N}_0}$ according
to Lemma~\ref{CantorEq} below, which is proved to be a nested sequence of closed sets
in Lemmata~\ref{Lemma2onIN} and~\ref{Lemma1onIN}.
\begin{figure}[H]\centering
\minipage{0.49\textwidth}\centering
\begin{subfigure}\centering
    \begin{tikzpicture}[scale=2]
    \draw [line width=0.1pt,domain=0:6.28,variable=\t,smooth] plot ({sin(\t r)},{cos(\t r)});

    \draw [ultra thick, dotted, white, domain=-0.26:0.26,variable=\t,smooth] plot ({sin(\t r)},{cos(\t r)});
    \draw [rotate=120,ultra thick, dotted, white, domain=-0.26:0.26,variable=\t,smooth] plot ({sin(\t r)},{cos(\t r)});
    \draw [rotate=-120,ultra thick, dotted, white, domain=-0.26:0.26,variable=\t,smooth] plot ({sin(\t r)},{cos(\t r)});

    \node at (0,1.1) {\scriptsize{$\mathrm{int}(S)$}};
    \node at (-1.1,-0.57) {\scriptsize{$\mathrm{int}(S)$}};
    \node at (1.1,-0.57) {\scriptsize{$\mathrm{int}(S)$}};

    \draw[color=gray,dashed,-] (0,-1.35) -- (0.685,-0.75);
    \draw[color=gray,dashed,-] (0,-1.35) -- (-0.685,-0.75);

    \draw[color=gray,rotate=120,dashed,-] (0,-1.35) -- (0.685,-0.75);
    \draw[color=gray,rotate=120,dashed,-] (0,-1.35) -- (-0.685,-0.75);

    \draw[color=gray,rotate=-120,dashed,-] (0,-1.35) -- (0.685,-0.75);
    \draw[color=gray,rotate=-120,dashed,-] (0,-1.35) -- (-0.685,-0.75);

    \draw[color=gray,dotted] (0.28,0.95) -- (0,-1.35);
    \draw[color=gray,dotted] (-0.28,0.95) -- (0,-1.35);

    \draw[color=gray,dotted] (-0.95,-0.26) -- (0,-1.35);
    \draw[color=gray,dotted] (0.95,-0.26) -- (0,-1.35);

    \draw[white, ultra thick] (0.28,0.95) -- (0.044,-1);
    \draw[dotted, thick, postaction={decorate}] (0.044,-1) -- (0.28,0.95);
    \draw[white, ultra thick] (-0.28,0.95) -- (-0.044,-1);
    \draw[dotted, thick, postaction={decorate}] (-0.044,-1) -- (-0.28,0.95);

    \draw[white, ultra thick] (-0.95,-0.26) -- (-0.39,-0.9);
    \draw[dotted, thick, postaction={decorate}] (-0.39,-0.9) -- (-0.95,-0.26);
    \draw[white, ultra thick] (0.95,-0.26) -- (0.39,-0.9);
    \draw[dotted, thick, postaction={decorate}] (0.39,-0.9) -- (0.95,-0.26);


    \draw [line width=2pt, domain=3.19:3.525,variable=\t,smooth] plot ({sin(\t r)},{cos(\t r)});
    \draw [line width=2pt, domain=-3.19:-3.525,variable=\t,smooth] plot ({sin(\t r)},{cos(\t r)});


    \draw[shift={(0.25,0.87)},rotate=-20, scale=0.9] (0.1,0) -- (0,0) -- (0,0.2) -- (0.1,0.2);\filldraw [black] (0.25,1.1) circle (0.001pt) node[anchor= west] {\scriptsize{$\mathrm{t}_{12}$}};
    \draw[shift={(-0.25,0.87)},rotate=20, scale=0.9] (-0.1,0) -- (0,0) -- (0,0.2) -- (-0.1,0.2);\filldraw [black] (-0.25,1.125) circle (0.001pt) node[anchor= east] {\scriptsize{$\mathrm{t}_{13}$}};

    \draw[shift={(-1.07,-0.25)},rotate=-80, scale=0.9] (-0.1,0.2) -- (0,0.2) -- (0,0) -- (-0.1,0)  node[anchor= east] {\scriptsize{$\mathrm{t}_{31}$}};
    \draw[shift={(-0.63,-0.67)},rotate=140, scale=0.9] (-0.1,0) -- (0,0) -- (0,0.2) -- (-0.1,0.2) node[anchor= north east] {\scriptsize{$\mathrm{t}_{32}$}};

    \draw[shift={(1.07,-0.25)},rotate=80, scale=0.9] (0.1,0.2) -- (0,0.2) -- (0,0) -- (0.1,0)  node[anchor= west] {\scriptsize{$\mathrm{t}_{21}$}};
    \draw[shift={(0.63,-0.67)},rotate=-140, scale=0.9] (0.1,0) -- (0,0) -- (0,0.2) -- (0.1,0.2) node[anchor= north west] {\scriptsize{$\mathrm{t}_{23}$}};

    \draw (0,-1) circle (0.001pt) node[anchor= north] {\scriptsize{$F_1$}};
    \draw [rotate=30] (0,-1) circle (0.001pt) node[anchor= north] {\scriptsize{$F_1$}};
    \draw [rotate=-30] (0,-1) circle (0.001pt) node[anchor= north] {\scriptsize{$F_1$}};

    \draw [rotate=11] (0,-1) circle (0.001pt) node[anchor= south] {\scriptsize{$C_2$}};
    \draw [rotate=-11] (0,-1) circle (0.001pt) node[anchor= south] {\scriptsize{$C_2$}};

    \end{tikzpicture}
    \addtocounter{subfigure}{-1}\captionof{subfigure}{\footnotesize{
    The set $F_1$ in $A_1$ is obtained as follows: It is the interior (in the arc $A_1$)
    of the pre-image $\mathcal{K}^{-1}(S)$, and hence contains the tangential points
    $\mathrm{t}_{32}$ and $\mathrm{t}_{23}$, but not the pre-image of the other four
    tangential points. The set $F_1$ has three connected components in $A_1$.
    Removing $F_1$ from $C_1$ yields the (thicker bold) closed set $C_2$ with two connected
    components in $A_1$. Repeating this argument for the other arcs provides the full
    set $C_2$.}}\label{FIG:REMOVAL1}
\end{subfigure}
\endminipage\hfill
\minipage{0.49\textwidth}\centering
\begin{subfigure}\centering
    \begin{tikzpicture}[scale=2]
    \draw [line width=0.1pt,domain=0:6.28,variable=\t,smooth] plot ({sin(\t r)},{cos(\t r)});

    \draw [ultra thick, dotted, white, domain=-0.26:0.26,variable=\t,smooth] plot ({sin(\t r)},{cos(\t r)});
    \draw [rotate=120,ultra thick, dotted, white, domain=-0.26:0.26,variable=\t,smooth] plot ({sin(\t r)},{cos(\t r)});
    \draw [rotate=-120,ultra thick, dotted, white, domain=-0.26:0.26,variable=\t,smooth] plot ({sin(\t r)},{cos(\t r)});

    \node at (0,1.1) {\scriptsize{$\mathrm{int}(S)$}};
    \node at (-1.1,-0.57) {\scriptsize{$\mathrm{int}(S)$}};
    \node at (1.1,-0.57) {\scriptsize{$\mathrm{int}(S)$}};

    \draw[color=gray,dashed,-] (0,-1.35) -- (0.685,-0.75);
    \draw[color=gray,dashed,-] (0,-1.35) -- (-0.685,-0.75);

    \draw[color=gray,rotate=120,dashed,-] (0,-1.35) -- (0.685,-0.75);
    \draw[color=gray,rotate=120,dashed,-] (0,-1.35) -- (-0.685,-0.75);

    \draw[color=gray,rotate=-120,dashed,-] (0,-1.35) -- (0.685,-0.75);
    \draw[color=gray,rotate=-120,dashed,-] (0,-1.35) -- (-0.685,-0.75);

    \draw[color=gray,dotted] (0.28,0.95) -- (0,-1.35);
    \draw[color=gray,dotted] (-0.28,0.95) -- (0,-1.35);

    \draw[color=gray,dotted] (0.6,0.775) -- (0,-1.35);
    \draw[color=gray,dotted] (-0.6,0.775) -- (0,-1.35);

    \draw[color=gray,dotted] (0.83,0.525) -- (0,-1.35);
    \draw[color=gray,dotted] (-0.83,0.525) -- (0,-1.35);

    \draw[color=gray,dotted] (0.875,0.45) -- (0,-1.35);
    \draw[color=gray,dotted] (-0.875,0.45) -- (0,-1.35);

    \draw[color=gray,dotted] (0.975,0.125) -- (0,-1.35);
    \draw[color=gray,dotted] (-0.975,0.125) -- (0,-1.35);

    \draw[color=gray,dotted] (-0.95,-0.26) -- (0,-1.35);
    \draw[color=gray,dotted] (0.95,-0.26) -- (0,-1.35);

    \draw[white,ultra thick] (0.28,0.95) -- (0.044,-1);
    \draw[dotted, thick, postaction={decorate}] (0.044,-1) -- (0.28,0.95);
    \draw[white,ultra thick] (-0.28,0.95) -- (-0.044,-1);
    \draw[dotted, thick, postaction={decorate}] (-0.044,-1) -- (-0.28,0.95);

    \draw[white,ultra thick] (0.6,0.775) -- (0.1,-0.99);
    \draw[dotted, thick, postaction={decorate}] (0.1,-0.99) -- (0.6,0.775);
    \draw[white,ultra thick] (-0.6,0.775) -- (-0.1,-0.99);
    \draw[dotted, thick, postaction={decorate}] (-0.1,-0.99) -- (-0.6,0.775);

    \draw[white,ultra thick] (0.83,0.525) -- (0.165,-0.97);
    \draw[dotted, thick, postaction={decorate}] (0.165,-0.97) -- (0.83,0.525);
    \draw[white,ultra thick] (-0.83,0.525) -- (-0.165,-0.97);
    \draw[dotted, thick, postaction={decorate}] (-0.165,-0.97) -- (-0.83,0.525);

    \draw[white,ultra thick] (0.875,0.45) -- (0.2,-0.95);
    \draw[dotted, thick, postaction={decorate}] (0.2,-0.95) -- (0.875,0.45);
    \draw[white,ultra thick] (-0.875,0.45) -- (-0.2,-0.95);
    \draw[dotted, thick, postaction={decorate}] (-0.2,-0.95) -- (-0.875,0.45);

    \draw[white,ultra thick] (0.975,0.125) -- (0.26,-0.96);
    \draw[dotted, thick, postaction={decorate}] (0.26,-0.96) -- (0.975,0.125);
    \draw[white,ultra thick] (-0.975,0.125) -- (-0.26,-0.96);
    \draw[dotted, thick, postaction={decorate}] (-0.26,-0.96) -- (-0.975,0.125);

    \draw[white,ultra thick] (0.95,-0.26) -- (0.39,-0.9);
    \draw[dotted, thick, postaction={decorate}] (0.39,-0.9) -- (0.95,-0.26);
    \draw[white,ultra thick] (-0.95,-0.26) -- (-0.39,-0.9);
    \draw[dotted, thick, postaction={decorate}] (-0.39,-0.9) -- (-0.95,-0.26);


    \draw [line width=2pt,  domain=3.19:3.525,variable=\t,smooth, rotate=120] plot ({sin(\t r)},{cos(\t r)});
    \draw [line width=2pt,  domain=-3.19:-3.525,variable=\t,smooth, rotate=120] plot ({sin(\t r)},{cos(\t r)});

    \draw [line width=2pt,  domain=3.19:3.525,variable=\t,smooth, rotate=-120] plot ({sin(\t r)},{cos(\t r)});
    \draw [line width=2pt,  domain=-3.19:-3.525,variable=\t,smooth, rotate=-120] plot ({sin(\t r)},{cos(\t r)});

    \draw [line width=2pt,  domain=3.19:3.525,variable=\t,smooth] plot ({sin(\t r)},{cos(\t r)});
    \draw [line width=2pt,  domain=-3.19:-3.525,variable=\t,smooth] plot ({sin(\t r)},{cos(\t r)});

    \draw [line width=5pt,   domain=3.33:3.4,variable=\t,smooth] plot ({sin(\t r)},{cos(\t r)});
    \draw [line width=5pt,   domain=3.24:3.3,variable=\t,smooth] plot ({sin(\t r)},{cos(\t r)});

    \draw [line width=5pt,   domain=-3.33:-3.4,variable=\t,smooth] plot ({sin(\t r)},{cos(\t r)});
    \draw [line width=5pt,   domain=-3.24:-3.3,variable=\t,smooth] plot ({sin(\t r)},{cos(\t r)});
\end{tikzpicture}
    \addtocounter{subfigure}{-1}
    \captionof{subfigure}{\footnotesize{Repeating the argument in \textbf{(a)}
    for the arcs $A_2,A_3$ yields the (bold) closed set $C_2$ with six
    connected components. The set $F_2$ in $A_1$ is obtained by the pre-images
    of the (thin) sets, since those are the points
    reaching $\mathrm{int}(S)$ in two iterations of $\mathcal{K}$. The set $F_2$ has
    six connected components in $A_1$, which when removed yields
    the (thicker bold) closed set $C_3$ in $A_1$.
    Figure \textbf{(c)} reveals more details for the bottom arc $A_1$.}}\label{FIG:REMOVAL2}
\end{subfigure}
\endminipage

\begin{subfigure}\centering
    \begin{tikzpicture}[scale=6]
    \draw [line width=0.1pt,domain=2.383:3.9,variable=\t,smooth] plot ({sin(\t r)},{cos(\t r)});

    \draw [line width=2pt,  domain=3.19:3.525,variable=\t,smooth] plot ({sin(\t r)},{cos(\t r)});
    \draw [line width=2pt,  domain=-3.19:-3.525,variable=\t,smooth] plot ({sin(\t r)},{cos(\t r)});

    \draw [line width=5pt,   domain=3.33:3.4,variable=\t,smooth] plot ({sin(\t r)},{cos(\t r)});
    \draw [line width=5pt,   domain=3.24:3.3,variable=\t,smooth] plot ({sin(\t r)},{cos(\t r)});

    \draw [line width=5pt,   domain=-3.33:-3.4,variable=\t,smooth] plot ({sin(\t r)},{cos(\t r)});
    \draw [line width=5pt,   domain=-3.24:-3.3,variable=\t,smooth] plot ({sin(\t r)},{cos(\t r)});

    \draw[shift={(-0.62,-0.64)},rotate=140] (-0.05,0) -- (0,0) -- (0,0.2) -- (-0.05,0.2) node[anchor= north] {\scriptsize{$\mathrm{t}_{32}$}};
    \draw[shift={(0.62,-0.64)},rotate=-140] (0.05,0) -- (0,0) -- (0,0.2) -- (0.05,0.2) node[anchor= north] {\scriptsize{$\mathrm{t}_{23}$}};

    \draw[shift={(-0.33,-0.83)},rotate=155] (-0.05,0) -- (0,0) -- (0,0.2) -- (-0.05,0.2) node[anchor= north] {\scriptsize{$\mathcal{K}^{-1}(\mathrm{t}_{31})$}};
    \draw[shift={(0.33,-0.83)},rotate=-155] (0.05,0) -- (0,0) -- (0,0.2) -- (0.05,0.2) node[anchor= north] {\scriptsize{$\mathcal{K}^{-1}(\mathrm{t}_{21})$}};

    \draw[shift={(0.04,-0.89)},rotate=185] (-0.05,0) -- (0,0) -- (0,0.2) -- (-0.05,0.2) node[anchor= north] {\scriptsize{$\mathcal{K}^{-1}(\mathrm{t}_{12})$}};
    \draw[shift={(-0.04,-0.89)},rotate=-185] (0.05,0) -- (0,0) -- (0,0.2) -- (0.05,0.2) node[anchor= north] {\scriptsize{$\mathcal{K}^{-1}(\mathrm{t}_{13})$}};

    \draw  (0,-1) circle (0.001pt) node[anchor= north] {\scriptsize{$F_1$}};
    \draw [ rotate=30] (0,-1) circle (0.001pt) node[anchor= north] {\scriptsize{$F_1$}};
    \draw [ rotate=-30] (0,-1) circle (0.001pt) node[anchor= north] {\scriptsize{$F_1$}};

    \draw [ rotate=20] (0,-1) circle (0.001pt) node[anchor= north] {\scriptsize{$F_2$}};
    \draw [ rotate=10.5] (0,-1) circle (0.001pt) node[anchor= north] {\scriptsize{$F_2$}};
    \draw [ rotate=5] (0,-1) circle (0.001pt) node[anchor= north] {\scriptsize{$F_2$}};

    \draw [ rotate=-20] (0,-1) circle (0.001pt) node[anchor= north] {\scriptsize{$F_2$}};
    \draw [ rotate=-10.5] (0,-1) circle (0.001pt) node[anchor= north] {\scriptsize{$F_2$}};
    \draw [ rotate=-5] (0,-1) circle (0.001pt) node[anchor= north] {\scriptsize{$F_2$}};

    \node[anchor= south] at (0,-0.8) {\scriptsize{$C_1$}};

    \node[anchor= south] at (-0.18,-0.9) {\scriptsize{$C_2$}};
    \node[anchor= south] at (0.18,-0.9) {\scriptsize{$C_2$}};

    \draw [ rotate=12] (0,-1) circle (0.001pt) node[anchor= south] {\scriptsize{$C_3$}};
    \draw [ rotate=-12] (0,-1) circle (0.001pt) node[anchor= south] {\scriptsize{$C_3$}};

    \draw [ rotate=7] (0,-1) circle (0.001pt) node[anchor= south] {\scriptsize{$C_3$}};
    \draw [ rotate=-7] (0,-1) circle (0.001pt) node[anchor= south] {\scriptsize{$C_3$}};

\end{tikzpicture}
    \addtocounter{subfigure}{-1}
    \captionof{subfigure}{\footnotesize{
    In $A_1$ the (thin) set $F_1$ has three connected components, which includes the tangential points $\mathrm{t}_{32}$ and $\mathrm{t}_{23}$, but not the pre-images of the other four tangential points. 
    The (bold) closed set $C_2$ has two connected components, and the (bold) set $F_2$ has six --- both sets contain the pre-images of the four tangential points that are not in $F_1$. The (thicker bold) closed set $C_3$ has four connected components.
    }}\label{FIG:KASNERMAPSpreimagesZOOM}
\end{subfigure}
\captionof{figure}{The removal process of the open sets $F_n$ from the Kasner circle $\mathrm{K}^{\ocircle}$ in equation~\eqref{Fn}. The open set $F_0=\text{int}(S)$ has three connected
components. The closed set $C_1$ consists of the three arcs $A_1$, $A_2$, $A_3$.
Due to the axis permutation~\eqref{permSYM},
we only describe the sets $F_1,C_2$ in \textbf{(a)} and $F_2,C_3$ in \textbf{(b)} in the bottom
arc $A_1$.
}\label{FIG:KASNERMAPSpreimages}
\end{figure}
%


%
\begin{figure}[h!]
\minipage{0.32\textwidth}\centering
\begin{subfigure}\centering
    \begin{tikzpicture}[scale=2]
    \draw [white, shift={(0.04,-0.89)},rotate=185] (0,0.2) circle (0.001pt); 

    \draw [line width=0.1pt,domain=0:6.28,variable=\t,smooth] plot ({sin(\t r)},{cos(\t r)});

    \draw [ultra thick, dotted, white, domain=-0.26:0.26,variable=\t,smooth] plot ({sin(\t r)},{cos(\t r)});
    \draw [rotate=120,ultra thick, dotted, white, domain=-0.26:0.26,variable=\t,smooth] plot ({sin(\t r)},{cos(\t r)});
    \draw [rotate=-120,ultra thick, dotted, white, domain=-0.26:0.26,variable=\t,smooth] plot ({sin(\t r)},{cos(\t r)});

    \draw[shift={(-0.62,-0.64)},rotate=140] (-0.05,0) -- (0,0) -- (0,0.2) -- (-0.05,0.2);
    \draw[shift={(0.62,-0.64)},rotate=-140] (0.05,0) -- (0,0) -- (0,0.2) -- (0.05,0.2);

    \draw[shift={(0.864,-0.216)},rotate=260] (-0.05,0) -- (0,0) -- (0,0.2) -- (-0.05,0.2);
    \draw[shift={(0.244,0.856)},rotate=-20] (0.05,0) -- (0,0) -- (0,0.2) -- (0.05,0.2);

    \draw[shift={(-0.864,-0.216)},rotate=-260] (0.05,0) -- (0,0) -- (0,0.2) -- (0.05,0.2);
    \draw[shift={(-0.244,0.856)},rotate=20] (-0.05,0) -- (0,0) -- (0,0.2) -- (-0.05,0.2);
    \end{tikzpicture}
    \addtocounter{subfigure}{-1}\captionof{subfigure}{Deleting the three (dashed) connected
    components of $F_0$ from the Kasner circle $\mathrm{K}^{\ocircle} = C_0$ yields the closed set $C_1$.}
\end{subfigure}
\endminipage\hfill
\minipage{0.32\textwidth}\centering
\begin{subfigure}\centering
    \begin{tikzpicture}[scale=2]
    \draw [line width=0.1pt,domain=2.383:3.9,variable=\t,smooth] plot ({sin(\t r)},{cos(\t r)});

    \draw [line width=0.1pt,domain=2.383:3.9,variable=\t,smooth, rotate=120] plot ({sin(\t r)},{cos(\t r)});
    \draw [line width=0.1pt,domain=2.383:3.9,variable=\t,smooth, rotate=-120] plot ({sin(\t r)},{cos(\t r)});

    \draw [line width=2pt,  domain=3.19:3.525,variable=\t,smooth] plot ({sin(\t r)},{cos(\t r)});
    \draw [line width=2pt,  domain=-3.19:-3.525,variable=\t,smooth] plot ({sin(\t r)},{cos(\t r)});

    \draw [line width=2pt,  domain=3.19:3.525,variable=\t,smooth, rotate=120] plot ({sin(\t r)},{cos(\t r)});
    \draw [line width=2pt,  domain=-3.19:-3.525,variable=\t,smooth, rotate=120] plot ({sin(\t r)},{cos(\t r)});

    \draw [line width=2pt,  domain=3.19:3.525,variable=\t,smooth, rotate=-120] plot ({sin(\t r)},{cos(\t r)});
    \draw [line width=2pt,  domain=-3.19:-3.525,variable=\t,smooth, rotate=-120] plot ({sin(\t r)},{cos(\t r)});

    \draw[shift={(-0.62,-0.64)},rotate=140] (-0.05,0) -- (0,0) -- (0,0.2) -- (-0.05,0.2);
    \draw[shift={(0.62,-0.64)},rotate=-140] (0.05,0) -- (0,0) -- (0,0.2) -- (0.05,0.2);

    \draw[shift={(-0.33,-0.83)},rotate=155] (-0.05,0) -- (0,0) -- (0,0.2) -- (-0.05,0.2);
    \draw[shift={(0.33,-0.83)},rotate=-155] (0.05,0) -- (0,0) -- (0,0.2) -- (0.05,0.2);

    \draw[shift={(0.04,-0.89)},rotate=185] (-0.05,0) -- (0,0) -- (0,0.2) -- (-0.05,0.2);
    \draw[shift={(-0.04,-0.89)},rotate=-185] (0.05,0) -- (0,0) -- (0,0.2) -- (0.05,0.2);

    \draw[shift={(0.864,-0.216)},rotate=260] (-0.05,0) -- (0,0) -- (0,0.2) -- (-0.05,0.2);
    \draw[shift={(0.244,0.856)},rotate=-20] (0.05,0) -- (0,0) -- (0,0.2) -- (0.05,0.2);

    \draw[shift={(0.883,0.129)},rotate=275] (-0.05,0) -- (0,0) -- (0,0.2) -- (-0.05,0.2);
    \draw[shift={(0.553,0.700)},rotate=-35] (0.05,0) -- (0,0) -- (0,0.2) -- (0.05,0.2);

    \draw[shift={(0.750,0.479)},rotate=305] (-0.05,0) -- (0,0) -- (0,0.2) -- (-0.05,0.2);
    \draw[shift={(0.790,0.410)},rotate=-65] (0.05,0) -- (0,0) -- (0,0.2) -- (0.05,0.2);

    \draw[shift={(-0.864,-0.216)},rotate=-260] (0.05,0) -- (0,0) -- (0,0.2) -- (0.05,0.2);
    \draw[shift={(-0.244,0.856)},rotate=20] (-0.05,0) -- (0,0) -- (0,0.2) -- (-0.05,0.2);

    \draw[shift={(-0.883,0.129)},rotate=-275] (0.05,0) -- (0,0) -- (0,0.2) -- (0.05,0.2);
    \draw[shift={(-0.553,0.700)},rotate=35] (-0.05,0) -- (0,0) -- (0,0.2) -- (-0.05,0.2);

    \draw[shift={(-0.750,0.479)},rotate=-305] (0.05,0) -- (0,0) -- (0,0.2) -- (0.05,0.2);
    \draw[shift={(-0.790,0.410)},rotate=65] (-0.05,0) -- (0,0) -- (0,0.2) -- (-0.05,0.2);

    \end{tikzpicture}
    \addtocounter{subfigure}{-1}\captionof{subfigure}{Removing the nine (thin) connected components
    of $F_1$ from the three arcs of $C_1$ leads to the (bold) closed set $C_2$.}
\end{subfigure}
\endminipage\hfill
\minipage{0.32\textwidth}\centering
\begin{subfigure}\centering
    \begin{tikzpicture}[scale=2]
    \draw [white, shift={(0.244,0.856)},rotate=-20] (0,0.2) circle (0.001pt); 
    \draw [white, shift={(0.04,-0.89)},rotate=185] (0,0.2) circle (0.001pt); 

    \draw [domain=-3.19:-3.525,variable=\t,smooth] plot ({sin(\t r)},{cos(\t r)});
    \draw [domain=3.19:3.525,variable=\t,smooth] plot ({sin(\t r)},{cos(\t r)});

    \draw [line width=0.1pt,domain=-3.19:-3.525,variable=\t,smooth, rotate=120] plot ({sin(\t r)},{cos(\t r)});
    \draw [line width=0.1pt,domain=3.19:3.525,variable=\t,smooth, rotate=120] plot ({sin(\t r)},{cos(\t r)});
    \draw [line width=0.1pt,domain=-3.19:-3.525,variable=\t,smooth, rotate=-120] plot ({sin(\t r)},{cos(\t r)});
    \draw [line width=0.1pt,domain=3.19:3.525,variable=\t,smooth, rotate=-120] plot ({sin(\t r)},{cos(\t r)});

    \draw [line width=2pt,   domain=3.33:3.4,variable=\t,smooth] plot ({sin(\t r)},{cos(\t r)});
    \draw [line width=2pt,   domain=3.24:3.3,variable=\t,smooth] plot ({sin(\t r)},{cos(\t r)});

    \draw [line width=2pt,   domain=-3.33:-3.4,variable=\t,smooth] plot ({sin(\t r)},{cos(\t r)});
    \draw [line width=2pt,   domain=-3.24:-3.3,variable=\t,smooth] plot ({sin(\t r)},{cos(\t r)});

    \draw [line width=2pt,   domain=3.33:3.4,variable=\t,smooth, rotate=120] plot ({sin(\t r)},{cos(\t r)});
    \draw [line width=2pt,   domain=3.24:3.3,variable=\t,smooth, rotate=120] plot ({sin(\t r)},{cos(\t r)});

    \draw [line width=2pt,   domain=-3.33:-3.4,variable=\t,smooth, rotate=120] plot ({sin(\t r)},{cos(\t r)});
    \draw [line width=2pt,   domain=-3.24:-3.3,variable=\t,smooth, rotate=120] plot ({sin(\t r)},{cos(\t r)});

    \draw [line width=2pt,   domain=3.33:3.4,variable=\t,smooth, rotate=-120] plot ({sin(\t r)},{cos(\t r)});
    \draw [line width=2pt,   domain=3.24:3.3,variable=\t,smooth, rotate=-120] plot ({sin(\t r)},{cos(\t r)});

    \draw [line width=2pt,   domain=-3.33:-3.4,variable=\t,smooth, rotate=-120] plot ({sin(\t r)},{cos(\t r)});
    \draw [line width=2pt,   domain=-3.24:-3.3,variable=\t,smooth, rotate=-120] plot ({sin(\t r)},{cos(\t r)});
    \end{tikzpicture}
    \addtocounter{subfigure}{-1}\captionof{subfigure}{Erasing the eighteen (thin) connected components of $F_2$
    from the six components of $C_2$ yields the (bold) closed set $C_3$.}
\end{subfigure}
\endminipage
\captionof{figure}{ \label{fig:intro_Cantor} The iterative construction of the Cantor set $C$:
Start (at the left) with the Kasner circle $\mathrm{K}^{\ocircle}$ and remove
(when going to the right) the (thin) arcs $F_n$ keeping the closed (bold)
arcs $C_{n+1}$. Note that $C_n$ has $3\cdot 2^{n-1}$ connected components for $n\geq 1$, in accordance with Lemma~\ref{Lemma2onIN}.
To avoid clutter, we refrain from drawing the boundaries of the arcs in \textbf{(c)}. \label{FIG:KASNERMAPiterative}
}
\end{figure}
\begin{lemma}\label{CantorEq}
The sets $C$ and $F$ defined respectively in~\eqref{defofC} and~\eqref{defofF} can be written as
\begin{equation}\label{lem:intersec}
C=\bigcap_{n\in \mathbb{N}_0} C_n 
\qquad \text{ and }
\qquad F=\bigcup_{n\in \mathbb{N}_0} F_n,
\end{equation}
where $C_n$ and $F_n$ are defined by~\eqref{Fn}.
\end{lemma}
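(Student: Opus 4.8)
The two identities in Lemma~\ref{CantorEq} are equivalent, so the plan is to reduce to one of them and prove that. First I would check by induction on $n$, using $C_{n+1}=C_n\setminus F_n$ together with $F_n\subseteq C_n$ (immediate from $F_n=\mathrm{int}(\mathcal{K}^{-n}(S)\cap C_n)\cup(\mathcal{K}^{-n}(S)\cap\partial C_n)$ and the closedness of $C_n$), that $C_n=\mathrm{K}^{\ocircle}\setminus\bigcup_{k=0}^{n-1}F_k$ and that the $F_k$ are pairwise disjoint. Consequently $\bigcap_{n}C_n=\mathrm{K}^{\ocircle}\setminus\bigcup_{n}F_n$, and since $F=\mathrm{K}^{\ocircle}\setminus C$ by definition, the identity $C=\bigcap_n C_n$ is equivalent to $F=\bigcup_n F_n$. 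It therefore suffices to prove $F=\bigcup_n F_n$.

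The inclusion $\bigcup_n F_n\subseteq F$ is immediate: $F_n\subseteq\mathcal{K}^{-n}(S)$, and $\mathcal{K}^{-n}(S)\subseteq F$ because the points of $S$ are fixed points of $\mathcal{K}$, so $\mathcal{K}^{n}(p)\in S$ already witnesses $p\in F$ (the same remark gives $S=\mathcal{K}^0(S)\subseteq\mathcal{K}^{-1}(S)\subseteq\cdots$, hence $F=\bigcup_n\mathcal{K}^{-n}(S)$). For the reverse inclusion I would take $p\in F$ and let $m$ be the least index with $\mathcal{K}^m(p)\in S$. Since $S$ consists of fixed points, $\mathcal{K}^j(p)\notin S$ for every $j<m$, i.e. each $\mathcal{K}^j(p)$ lies in one of the open unstable arcs $\mathrm{int}(A_{\alpha_j})$; in particular $p\notin F_k$ for $k<m$ (as $F_k\subseteq\mathcal{K}^{-k}(S)$), so $p\in C_m$. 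By Lemma~\ref{KasnerCircMapEXP} the map $\mathcal{K}$ is $C^1$ with $|D\mathcal{K}|=g>1$ on each $\mathrm{int}(A_\alpha)$, hence a local diffeomorphism there; composing along the orbit $p,\mathcal{K}(p),\dots,\mathcal{K}^{m-1}(p)$, which stays in $\bigcup_\alpha\mathrm{int}(A_\alpha)$, shows that $\mathcal{K}^m$ is a local homeomorphism at $p$, carrying an open arc $U\ni p$ homeomorphically onto an open arc $V\ni q:=\mathcal{K}^m(p)$.

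Now I would split into two cases. If $q\in\mathrm{int}(S)$, then after shrinking $U$ we have $\mathcal{K}^m(U)\subseteq\mathrm{int}(S)\subseteq S$, so $U\subseteq\mathcal{K}^{-m}(S)$; combined with $p\in C_m$ this gives $p\in F_m$ (via $\mathrm{int}(\mathcal{K}^{-m}(S)\cap C_m)$ if $p\in\mathrm{int}(C_m)$, and via $\mathcal{K}^{-m}(S)\cap\partial C_m$ if $p\in\partial C_m$). The delicate case — which I expect to be the main obstacle — is $q\in\partial S$, i.e. $q$ a tangential point. Then $\mathcal{K}^{m+1}(p)=q\in S$, so $p\in\mathcal{K}^{-(m+1)}(S)$, yet no full neighbourhood of $p$ maps into $S$: since $q$ is an endpoint of a nondegenerate component arc of $S$, it separates $V$ into an open half $V^+\subseteq\mathrm{int}(S)$ and an open half $V^-\subseteq\mathrm{K}^{\ocircle}\setminus S$, and pulling back along the homeomorphism writes $U=U^+\sqcup\{p\}\sqcup U^-$ with $\mathcal{K}^m(U^\pm)=V^\pm$. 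Shrinking $U$ further so that also $\mathcal{K}^j(U)\cap S=\emptyset$ for $j<m$, one checks that $U$ meets no $F_k$ with $k<m$ (so $U\subseteq C_m$ and $p\in\mathrm{int}(C_m)$), that $U^+\subseteq\mathrm{int}(\mathcal{K}^{-m}(S)\cap C_m)\subseteq F_m$, and that $p$ together with all of $U^-$ avoids $F_0\cup\cdots\cup F_m$; hence $C_{m+1}\cap U=U^-\cup\{p\}$, so $p\in\partial C_{m+1}$, and therefore $p\in\mathcal{K}^{-(m+1)}(S)\cap\partial C_{m+1}\subseteq F_{m+1}$. In either case $p\in\bigcup_n F_n$, which finishes the proof.

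The only real content is this last case: a point whose orbit first reaches $S$ at a tangential fixed point is removed one step \emph{after} the iterate that lands in $S$, and it is picked up only by the ``$\mathcal{K}^{-n}(S)\cap\partial C_n$'' term in the definition of $\mathrm{int}_{C_n}$. What one must be careful about is that $\mathcal{K}^m$ is a genuine local homeomorphism at $p$ (guaranteed by the non-vanishing of $D\mathcal{K}$ on the open arcs $\mathrm{int}(A_\alpha)$, via Lemma~\ref{KasnerCircMapEXP}), for otherwise $p$ need not end up as a boundary point of $C_{m+1}$; everything else is the elementary bookkeeping of the telescoping in the first paragraph.
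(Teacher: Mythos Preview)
Your proof is correct and follows essentially the same strategy as the paper's: reduce to the identity $F=\bigcup_n F_n$, observe the easy inclusion, and for the reverse pick the minimal index $m$ with $\mathcal{K}^m(p)\in S$, showing $p\in F_m$ in the generic case and $p\in F_{m+1}$ in the tangential case. The paper's argument for the tangential case is terser---it simply says that if $p\notin F_{n_0}$ but $p\in\mathcal{K}^{-n_0}(S)$ then $p$ lies on the boundary of $F_{n_0}$, which is contained in $F_{n_0+1}$---whereas you supply the local-homeomorphism justification (via $|D\mathcal{K}|>1$ on the open arcs) that makes this boundary claim precise; your version is more careful at exactly the point the paper glosses over.
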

\begin{proof}
First, we prove the equality for $F$. The inclusion $ \cup_{n\in \mathbb{N}_0} F_n \subseteq F$ follows
from the definition of $F_n$ in~\eqref{CiterativFn} and $F$ in equation~\eqref{defofF}.
The reverse, $F\subseteq  \cup_{n\in \mathbb{N}_0} F_n$, is proved next, where
we show that any point $p\in F$ is also in $F_n$ for some $n\in\mathbb{N}_0$.
For any $p\in F$, there is a minimal $n_0\in \mathbb{N}_0$,
$p\notin \mathcal{K}^{-k}(S)$ for all $k\in\mathbb{N}_0$ such that $k< n_0$, whereas
$p\in  \mathcal{K}^{-k}(S)$ for all $k\geq n_0$.
In other words, $p\in C_k$ and $p\not\in F_k$ for all $k< n_0$.
Consequently, $p$ is not removed in the $n_0-1$ iteration and
$p$ therefore lies in $C_{n_0}=C_{n_0-1}\setminus{F_{n_0-1}}$.
There are two possibilities: Either $p\in F_{n_0}$ or $p\notin F_{n_0}$.
Since the former completes the proof, we consider the latter and
conclude that $p\in F_{n_0+1}$. 
On the one hand, $p\not\in F_{n_0}=\mathrm{int}_{C_{n_0}}(\mathcal{K}^{-n_0}(S))$,
and on the other hand $p\in \mathcal{K}^{-n_0}(S)$. The point $p$ must therefore
lie in the boundary of $F_{n_0}$, which is contained in $F_{n_0+1}$, see Figure~\ref{FIG:KASNERMAPSpreimages}.
%

%
Second, we prove the equality for $C$. The inclusion
$C\subseteq \cap_{n\in \mathbb{N}_0} C_n$ follows from the claim
$C\subseteq C_n$ for all $n\in \mathbb{N}_0$, which is proved by
induction. For the basis, obviously $C\subseteq C_0=\mathrm{K}^{\ocircle}$.
For the induction step, assume $C\subseteq C_n$ for all $n\leq N$
and show that $C\subseteq C_{N+1}$. Note that $C$ in~\eqref{defofC}
and $F_N$ in \eqref{CiterativFn} are disjoint. The induction hypothesis
and~\eqref{CiterativCn+1} yield $C\subseteq C_{N+1}$.
%
%
The reverse inclusion $\cap_{n\in \mathbb{N}_0} C_n  \subseteq C$ follows
from the characterization of $F$ since
points $p\in \cap_{n\in \mathbb{N}_0} C_n$ are never removed in
the iterative construction \eqref{Fn}, that
is, $p\not\in \cup_{n\in \mathbb{N}_0} F_n = F$. Hence,
$p\in \mathrm{K}^{\ocircle} \backslash F = C$.
\end{proof}

The next Lemma describes the maximum and minimum expansion rates of
$\mathcal{K}$ within the set $C$, which are used later to bound the Hausdorff dimension of $C$.
\begin{lemma}\label{MaxMinExpansion}
The extrema of the derivative of the Kasner map within $C$ are
\begin{subequations}
\begin{align}
m &:= \min_{p\in C} |D\mathcal{K}(p)|= \frac{2(1-v^2)}{1+2v^2-\sqrt{12 v^2 - 3}} \label{minC}\\
M &:= \max_{p\in C} |D\mathcal{K}(p)|= \frac{2+v^2}{1-v^2}. \label{maxC}
\end{align}
\end{subequations}
\end{lemma}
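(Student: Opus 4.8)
The plan is to reduce the statement to locating the most extreme Kasner states belonging to $C$ on each arc $A_\alpha$, and then to identify those states explicitly. By Lemma~\ref{KasnerCircMapEXP} we have $|D\mathcal{K}(p)| = g(p)$ for $p\in A_1\cup A_2\cup A_3$, and since $C\subseteq C_1 = A_1\cup A_2\cup A_3$ this holds on all of $C$. On $A_\alpha$ the function $g$ depends only on $\Sigma_\alpha$ via $g = (1-v^2)/(1+v^2+v\Sigma_\alpha)$, which is strictly decreasing in $\Sigma_\alpha\in[-2,-2v]$, equal to $1$ at the tangential points ($\Sigma_\alpha=-2v$) and to $(1+v)/(1-v)$ at $\mathrm{Q}_\alpha$ ($\Sigma_\alpha=-2$). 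Using the permutation symmetry~\eqref{permSYM}, and the fact that $C$ is compact (it is the nested intersection $\bigcap_n C_n$ of~\eqref{Fn}) and forward $\mathcal{K}$-invariant, it therefore suffices to compute $\sigma_{\min}:=\min\{\Sigma_1(p):p\in C\cap A_1\}$ and $\sigma_{\max}:=\max\{\Sigma_1(p):p\in C\cap A_1\}$; then $M=g(\sigma_{\min})$ and $m=g(\sigma_{\max})$.

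The claimed values correspond to $\sigma_{\min}=-\tfrac{1+5v^2}{v(2+v^2)}$ and $\sigma_{\max}=-\tfrac{1+\sqrt{12v^2-3}}{2v}$; one checks directly that for $v\in(1/2,1)$ both lie in $(-2,-2v)$, that $g$ evaluated at them equals $M$ and $m$ respectively, that $m>1$, and that $\sigma_{\min}<\sigma_{\max}$ (so $M>m$). For attainability: a short computation from~\eqref{KasnerCirc} shows that the three period-$2$ heteroclinic chains (one for each pair of arcs) have vertices precisely at $\Sigma_\alpha=-\tfrac{1+\sqrt{12v^2-3}}{2v}$ in the arc $A_\alpha$, with the remaining ``spectator'' coordinate equal to $1/v$ at both vertices; being genuine periodic orbits of $\mathcal{K}$ they lie in $C$, so $\min_C g\le m$. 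For the other extreme, let $p_0\in A_1$ be a point with $\Sigma_1(p_0)=-\tfrac{1+5v^2}{v(2+v^2)}$. Inserting $g(p_0)=M$ into~\eqref{KasnerCirc} yields, after cancellation, $\Sigma_1(\mathcal{K}(p_0))=1/v$ and the other two coordinates of $\mathcal{K}(p_0)$ equal to $\tfrac{-1\pm\sqrt{12v^2-3}}{2v}$; hence $\mathcal{K}(p_0)$ is exactly a vertex of one of the period-$2$ chains. Thus the forward orbit of $p_0$ lands on a period-$2$ chain after one step, never meets $S$, so $p_0\in C$ and $\max_C g\ge M$.

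It remains to prove the reverse inclusions, i.e.\ $C\cap A_1\subseteq\{p:\sigma_{\min}\le\Sigma_1(p)\le\sigma_{\max}\}$, and here the upper bound $\Sigma_1\le\sigma_{\max}$ is the crux. I would show that any $p\in A_1$ with $\Sigma_1(p)>\sigma_{\max}$, i.e.\ a point in the ``bad zone'' between a period-$2$ vertex and the neighbouring tangential point, has a forward $\mathcal{K}$-orbit that eventually enters $\mathrm{int}(S)$, so $p\notin C$. Concretely: if $\Sigma_1(p)$ is sufficiently close to $-2v$ its image overshoots the relevant tangential point (one of $\mathrm{t}_{31},\mathrm{t}_{21}$) and already lies in $\mathrm{int}(S)$; the remaining bad-zone points of $A_1$ are mapped by $\mathcal{K}$ into the bad zone of a neighbouring arc, and the key fact is that $\mathcal{K}^2$ restricted to such a bad zone is a monotone map carrying it \emph{onto a strictly larger arc}, whose only fixed points there are the tangential (parabolic) fixed point and the period-$2$ vertex, the latter with multiplier $m^2>1$ (the classification of period-$\le 2$ orbits behind this was already used above: such orbits force the spectator coordinate to equal $1/v$). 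Hence every bad-zone point is pushed away from the period-$2$ vertex under $\mathcal{K}^2$ and leaves the bad zone in finitely many steps, thereby reaching $\mathrm{int}(S)$. Granting this, the lower bound $\Sigma_1\ge\sigma_{\min}$ follows: if $\Sigma_1(p)<\sigma_{\min}$ then, using monotonicity of $\mathcal{K}$ on the half-arcs together with the identity $\Sigma_1(\mathcal{K}(p_0))=1/v$, either $\mathcal{K}(p)\in\mathrm{int}(S)$ or $\mathcal{K}(p)$ falls into a bad zone of a neighbouring arc, and in the latter case $\mathcal{K}(p)\notin C$ by the previous paragraph; either way $p\notin C$. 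Combining the four steps gives $\sigma_{\min}$ and $\sigma_{\max}$ as stated, hence $m$ and $M$.

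The main obstacle is precisely this escape argument: because $g\to1$ at the tangential points, $\mathcal{K}$ is only non-uniformly hyperbolic there, so the contraction cannot be run by a crude uniform-expansion estimate but must be organized through $\mathcal{K}^2$ and the self-similar nesting of the bad zones — equivalently, it is cleanest to phrase it via the sets $C_n$ of~\eqref{Fn}, observing that the bad zone is whittled away stage by stage and that $\bigcap_n C_n$ meets it only in the period-$2$ vertex. The rest of the proof is bookkeeping with the explicit rational/quadratic-surd expressions for $\mathcal{K}$, $g$, the tangential points and the period-$2$ vertices.
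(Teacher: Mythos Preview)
Your overall strategy matches the paper's: reduce to locating the extremal points of $C$ on a half-arc via the period-$2$ chains, verify attainability, and then show escape from the ``bad zone''. The difference lies entirely in the escape argument. You propose analysing $\mathcal{K}^2$ as an interval map with a repelling fixed point at the period-$2$ vertex and then running a nested-interval/expansion argument, correctly flagging the non-uniform hyperbolicity at the tangential points as the obstacle.

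The paper bypasses this difficulty with a much cleaner monotone-coordinate argument. From~\eqref{sigIIb} one reads off that along any $\mathrm{II}_1$ or $\mathrm{II}_2$ transition, $(\Sigma_3 - 1/v)' = 4v(1-\Sigma^2)(\Sigma_3 - 1/v)$, and in fact the Kasner circle map gives exactly $\Sigma_3^{\mathrm{f}} - 1/v = g\,(\Sigma_3^{\mathrm{i}} - 1/v)$. A point in the bad zone between $p_m$ and $\mathrm{t}_{32}$ has $\Sigma_3 > 1/v$, and since the arc $\{\Sigma_3 > 1/v\}\subset\mathrm{K}^\ocircle$ is a neighbourhood of $\mathrm{T}_3$ disjoint from $A_3$, the chain can only visit $A_1$, $A_2$, or $S$. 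Thus $\Sigma_3$ increases strictly at every step; once one step has been taken the orbit is bounded away from the tangential points (exactly the observation you make), so the growth is geometric and $\Sigma_3$ exceeds the tangential value $v+\sqrt{3(1-v^2)}$ in finitely many steps, forcing the chain into $S$. The dual argument with $\Sigma_1$ handles the maximum $M$. This single observation replaces your $\mathcal{K}^2$ machinery and dissolves the non-uniform hyperbolicity issue: the spectator coordinate $\Sigma_\gamma$ is precisely the Lyapunov function you were implicitly looking for.
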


The proof is based on two main features. First, the three lines
in $(\Sigma_1,\Sigma_2,\Sigma_3)$-space that connect
each pair of auxiliary points $\mathrm{Q}_1/v$, $\mathrm{Q}_2/v$ and $\mathrm{Q}_3/v$ describe
physically equivalent heteroclinic chains with period 2 in
$\mathrm{K}^\ocircle$,\footnote{Note that these heteroclinic chains with
period 2 are the full unfolding of the Taub points, as seen in Figure~\ref{FIG:minMAX}.
These orbits collapse at the Taub points when $v\rightarrow 1/2$, and disappear when $v\in (0,1/2]$.
The role of the collapse of these objects toward the Taub points when $v\to 1/2$
in Bianchi type VIII and IX is unclear, particularly the (two-dimensional) center
manifold of the tangential points, and the stable manifold of the chain with period 2,
see also Appendix~\ref{app:unifying}.
}
constructed by concatenation of two heteroclinic Bianchi type II orbits
related by axis permutations, see Figure~\ref{FIG:minMAX}.
In particular, the line between $\mathrm{Q}_\alpha/v$ and $\mathrm{Q}_\beta/v$
is characterized by $\Sigma_\gamma=1/v$, which yields a heteroclinic chain of
period 2 under iterates of $\mathcal{K}$ that maps two physically identical
Kasner states (related by interchanging the axes $\Sigma_\alpha$ and
$\Sigma_\beta$) at $A_\alpha$ and $A_\beta$ to each other,
where $(\alpha,\beta,\gamma) = (1,2,3)$ or a permutation thereof.

Second, due to symmetry under axis permutations \eqref{permSYM},
we can without loss of generality restrict attention to the left
half of the arc $A_1$ when considering $|D\mathcal{K}(p)| = g(p)$
(recall Lemma~\ref{KasnerCircMapEXP}), where $g(p)$
monotonically increases between $\mathrm{t}_{32}$ and $\mathrm{Q}_1$, see Figure~\ref{fig:plotg}.
As a consequence the minimum $m$ (maximum $M$) is determined by the left-most (right-most)
point $p_m\in C$ ($p_M\in C$) in this half arc. Moreover, according to~\eqref{galpha},
the coordinate $\Sigma_1$ of $p_m$ and $p_M$ determines $g(p_m)$ and $g(p_M)$, respectively.
\begin{figure}[H]\centering
\minipage{0.49\textwidth}\centering
\begin{subfigure}\centering
    \begin{tikzpicture}[scale=2]
    \draw [line width=0.1pt,domain=0:6.28,variable=\t,smooth] plot ({sin(\t r)},{cos(\t r)});

    \draw [ultra thick, dotted, white, domain=-0.26:0.26,variable=\t,smooth] plot ({sin(\t r)},{cos(\t r)});
    \draw [rotate=120,ultra thick, dotted, white, domain=-0.26:0.26,variable=\t,smooth] plot ({sin(\t r)},{cos(\t r)});
    \draw [rotate=-120,ultra thick, dotted, white, domain=-0.26:0.26,variable=\t,smooth] plot ({sin(\t r)},{cos(\t r)});

    \draw[color=gray,dashed,-] (0,-1.35) -- (0.685,-0.75);
    \draw[color=gray,dashed,-] (0,-1.35) -- (-0.685,-0.75);

    \draw[color=gray,rotate=120,dashed,-] (0,-1.35) -- (0.685,-0.75);
    \draw[color=gray,rotate=120,dashed,-] (0,-1.35) -- (-0.685,-0.75);

    \draw[color=gray,rotate=-120,dashed,-] (0,-1.35) -- (0.685,-0.75);
    \draw[color=gray,rotate=-120,dashed,-] (0,-1.35) -- (-0.685,-0.75);

    \draw[color=gray,dotted] (0,-1.35) -- (-1.17,0.68);
    \draw[color=gray,dotted] (0,-1.35) -- (1.17,0.68);
    \draw[color=gray,dotted] (-1.17,0.67) -- (1.17,0.67);

    \draw[white, thick] (-0.74,0.67) -- (0.74,0.67);
    \draw[dotted, thick, postaction={decoration={markings,mark=at position 0.41 with {\arrow[thick,color=gray]{latex reversed}}},decorate}] (-0.75,0.67) -- (0.75,0.67);

    \draw[rotate=120,white, ultra thick] (-0.74,0.67) -- (0.74,0.67);
    \draw[rotate=120,dotted, thick, postaction={decoration={markings,mark=at position 0.41 with {\arrow[thick,color=gray]{latex reversed}}},decorate}] (-0.75,0.67) -- (0.75,0.67);
    \draw[rotate=-120,white, ultra thick] (-0.74,0.67) -- (0.74,0.67);
    \draw[rotate=-120,dotted, thick, postaction={decoration={markings,mark=at position 0.41 with {\arrow[thick,color=gray]{latex reversed}}},decorate}] (-0.75,0.67) -- (0.75,0.67);

    \filldraw [rotate=-12 ] (0,-1) circle (0.6pt);     \node at (-0.1,-0.9) {\scriptsize{$p_m$}};
    \filldraw [rotate=-108 ] (0,-1) circle (0.6pt) node[anchor= west] {\scriptsize{$\mathcal{K}(p_m)$}};

    \draw[color=gray,dotted] (0,-1.35) -- (-1,0);
    \draw[color=gray,dotted] (-1.17,0.67) -- (-0.86,-0.5);

    \draw[white,ultra thick] (-0.29,-0.95) -- (-1,0);
    \draw[dotted, thick, postaction={decorate}] (-0.29,-0.95) -- (-1,0);

    \draw[white,ultra thick] (-1,0) -- (-0.86,-0.5);
    \draw[dotted, thick, postaction={decorate}] (-1,0) -- (-0.86,-0.5);

    \node at (0,-1.12) {\scriptsize{$A_1$}};
    \node at (-0.95,0.57) {\scriptsize{$A_2$}};
    \node at (0.95,0.57) {\scriptsize{$A_3$}};

    \node at (0,1.1) {\scriptsize{$\mathrm{int}(S)$}};
    \node at (-1.1,-0.57) {\scriptsize{$\mathrm{int}(S)$}};
    \node at (1.1,-0.57) {\scriptsize{$\mathrm{int}(S)$}};

    \draw[color=gray,->]
    (0,0) -- (0,0.343)  node[anchor= south] {\scriptsize{$\Sigma_{1}$}};
    \draw[color=gray,->] (0,0) -- (0.3,-0.166)  node[anchor= south] {\scriptsize{$\Sigma_{2}$}};
    \draw[color=gray,->] (0,0) -- (-0.3,-0.166)  node[anchor= south] {\scriptsize{$\Sigma_{3}$}};

    \draw (0,-1.35) circle (0.1pt) node[anchor=north] {\scriptsize{$\mathrm{Q}_1/v$}};
    \draw[rotate=240] (0,-1.35) circle (0.1pt) node[anchor=east] {\scriptsize{$\frac{\mathrm{Q}_2}{v}$}};
    \draw[rotate=120]  (0,-1.35) circle (0.1pt) node[anchor=west] {\scriptsize{$\frac{\mathrm{Q}_3}{v}$}};

    \draw[shift={(-0.63,-0.67)},rotate=140, scale=0.9] (-0.1,0) -- (0,0) -- (0,0.2) -- (-0.1,0.2) node[anchor= north east] {\scriptsize{$\mathrm{t}_{32}$}};    \draw[shift={(0.63,-0.67)},rotate=-140, scale=0.9] (0.1,0) -- (0,0) -- (0,0.2) -- (0.1,0.2) node[anchor= north west] {\scriptsize{$\mathrm{t}_{23}$}};

    \end{tikzpicture}
    \addtocounter{subfigure}{-1}\captionof{subfigure}{\footnotesize{The minimum of $g$ on $C$ occurs at $p_m$
    where $\Sigma_3 = 1/v$. 
    Any point in $A_1$ between $\mathrm{t}_{32}$ and $p_m$ eventually ends up in $S$,
    since $\Sigma_3 > 1/v$ monotonically increases on the type $\mathrm{II}_1$
    and $\mathrm{II}_2$ subsets.}}\label{FIG:min}
\end{subfigure}
\endminipage\hfill
\minipage{0.49\textwidth}\centering
\begin{subfigure}\centering
    \begin{tikzpicture}[scale=2]
    \draw [line width=0.1pt,domain=0:6.28,variable=\t,smooth] plot ({sin(\t r)},{cos(\t r)});

    \draw [ultra thick, dotted, white, domain=-0.26:0.26,variable=\t,smooth] plot ({sin(\t r)},{cos(\t r)});
    \draw [rotate=120,ultra thick, dotted, white, domain=-0.26:0.26,variable=\t,smooth] plot ({sin(\t r)},{cos(\t r)});
    \draw [rotate=-120,ultra thick, dotted, white, domain=-0.26:0.26,variable=\t,smooth] plot ({sin(\t r)},{cos(\t r)});

    \draw[color=gray,dashed,-] (0,-1.35) -- (0.685,-0.75);
    \draw[color=gray,dashed,-] (0,-1.35) -- (-0.685,-0.75);

    \draw[color=gray,rotate=120,dashed,-] (0,-1.35) -- (0.685,-0.75);
    \draw[color=gray,rotate=120,dashed,-] (0,-1.35) -- (-0.685,-0.75);

    \draw[color=gray,rotate=-120,dashed,-] (0,-1.35) -- (0.685,-0.75);
    \draw[color=gray,rotate=-120,dashed,-] (0,-1.35) -- (-0.685,-0.75);

    \draw[color=gray,dotted] (0,-1.35) -- (-1.17,0.68);
    \draw[color=gray,dotted] (0,-1.35) -- (1.17,0.68);
    \draw[color=gray,dotted] (-1.17,0.67) -- (1.17,0.67);

    \draw[color=gray,dotted] (0,-1.35) -- (-0.75,0.7);

    \draw[white, ultra thick] (-0.74,0.67) -- (0.74,0.67);
    \draw[dotted, thick, postaction={decoration={markings,mark=at position 0.41 with {\arrow[thick,color=gray]{latex reversed}}},decorate}] (-0.75,0.67) -- (0.75,0.67);

    \draw[rotate=120,white, ultra thick] (-0.74,0.67) -- (0.74,0.67);
    \draw[rotate=120,dotted, thick, postaction={decoration={markings,mark=at position 0.41 with {\arrow[thick,color=gray]{latex reversed}}},decorate}] (-0.75,0.67) -- (0.75,0.67);
    \draw[rotate=-120,white, ultra thick] (-0.74,0.67) -- (0.74,0.67);
    \draw[rotate=-120,dotted, thick, postaction={decoration={markings,mark=at position 0.41 with {\arrow[thick,color=gray]{latex reversed}}},decorate}] (-0.75,0.67) -- (0.75,0.67);

    \draw[white,ultra thick] (-0.13,-1) -- (-0.75,0.7);
    \draw[dotted, thick, postaction={decorate}] (-0.13,-1) -- (-0.75,0.7);

    \draw[color=gray,dotted] (0,-1.35) -- (-0.67,0.74);
    \draw[rotate=-120,color=gray,dotted] (0,-1.35) -- (-1,0);
    \draw[rotate=-120,color=gray,dotted] (-1.17,0.67) -- (-0.86,-0.5);

    \draw[white,ultra thick] (-0.11,-1) -- (-0.67,0.74);
    \draw[dotted, thick, postaction={decorate}] (-0.11,-1) -- (-0.67,0.74);

    \draw[rotate=-120,white,ultra thick] (-0.29,-0.95) -- (-1,0);
    \draw[rotate=-120,dotted, thick, postaction={decorate}] (-0.29,-0.95) -- (-1,0);

    \draw[rotate=-120,white,ultra thick] (-1,0) -- (-0.86,-0.5);
    \draw[rotate=-120,dotted, thick, postaction={decorate}] (-1,0) -- (-0.86,-0.5);

    \filldraw [rotate=-120-12 ] (0,-1) circle (0.6pt); \node at (-0.76,0.51) {\scriptsize{$p_*$}};
    \filldraw [rotate=-120-108 ] (0,-1) circle (0.6pt); \node at (0.67,0.51)  {\scriptsize{$\mathcal{K}(p_*)$}};
    \filldraw [rotate=-7 ] (0,-1) circle (0.6pt);\node at (-0.16,-1.1) {\scriptsize{$p_M$}};

    \node at (0.04,-1.12) {\scriptsize{$A_1$}};
    \node at (-0.95,0.57) {\scriptsize{$A_2$}};
    \node at (0.95,0.57) {\scriptsize{$A_3$}};

    \node at (0,1.1) {\scriptsize{$\mathrm{int}(S)$}};
    \node at (-1.1,-0.57) {\scriptsize{$\mathrm{int}(S)$}};
    \node at (1.1,-0.57) {\scriptsize{$\mathrm{int}(S)$}};

    \draw[shift={(0,-0.95)}] (0,-0.1) -- (0,0);    \node at (0,-0.88) {\scriptsize{$\mathrm{Q}_1$}};

    \draw (0,-1.35) circle (0.1pt) node[anchor=north] {\scriptsize{$\mathrm{Q}_1/v$}};
    \draw[rotate=240] (0,-1.35) circle (0.1pt) node[anchor=east] {\scriptsize{$\frac{\mathrm{Q}_2}{v}$}};
    \draw[rotate=120]  (0,-1.35) circle (0.1pt) node[anchor=west] {\scriptsize{$\frac{\mathrm{Q}_3}{v}$}};

    \draw[color=gray,->](0,0) -- (0,0.343)  node[anchor= south] {\scriptsize{$\Sigma_{1}$}};
    \draw[color=gray,->] (0,0) -- (0.3,-0.166)  node[anchor= south] {\scriptsize{$\Sigma_{2}$}};
    \draw[color=gray,->] (0,0) -- (-0.3,-0.166)  node[anchor= south] {\scriptsize{$\Sigma_{3}$}};

    \end{tikzpicture}
    \addtocounter{subfigure}{-1}
    \captionof{subfigure}{\footnotesize{The maximum of $g$ on $C$ is at $p_M:=\mathcal{K}^{-1}(p_*)$,
    where $p_*$ has $\Sigma_1 = 1/v$.
    Points between $p_M$ and $\mathrm{Q}_1$ eventually end up in $S$, since $\Sigma_1 > 1/v$
    monotonically increases on the type $\mathrm{II}_2$ and $\mathrm{II}_3$ subsets.}}\label{FIG:max}
\end{subfigure}
\endminipage
\captionof{figure}{Depiction of the lines connecting each pair of points $\mathrm{Q}_1/v$, $\mathrm{Q}_2/v$
and $\mathrm{Q}_3/v$, which determine the periodic heteroclinic chains with
period 2; the points $p_m$ and $p_M$ for the extrema of $|D\mathcal{K}| = g$ on $C$;
examples of points with finite heteroclinic chains ending in $S$.}\label{FIG:minMAX}
\end{figure}
\begin{proof}
The point $p_m$ is determined by the periodic heteroclinic chain characterized
by the line between $\mathrm{Q}_1/v$ and $\mathrm{Q}_2/v$ for which $\Sigma_3=1/v$.
This follows since $p_m$ thereby belongs to $C$, and since any point $p$ in
the half arc between $\mathrm{t}_{32}$ and $p_m$ with $g(p)< g(p_m)$ is not in $C$.
This is due to that $p$ eventually ends up in $S$, either directly via a heteroclinic
orbit or by a finite heteroclinic chain, since $\Sigma_3$ monotonically
increases when $\Sigma_3>1/v$ along such heteroclinic orbits and chains, see
Figure~\ref{FIG:minMAX}.

To find the coordinate $\Sigma_1$ of $p_m$ we insert $\Sigma_3=1/v$ into the
constraint~\eqref{intro_cons2}, which yields $\Sigma_2 = -(\Sigma_1 + 1/v)$.
Inserting the values for $\Sigma_2$ and $\Sigma_3$ into the
constraint~\eqref{intro_cons1}, $\Sigma^2 = 1$, results in
\begin{equation}
\Sigma_1^2 + \frac{\Sigma_1}{v} + \frac{1-3v^2}{v^2}=0.
\end{equation}
This equation has two solutions (which coincide with $\mathrm{T}_3$ for $v=1/2$),
where the one with the smaller $\Sigma_1$ yields a point that resides in the
left arc of $A_1$, while the other solution gives the image of this point, which
is in $A_2$, see Figure~\ref{FIG:minMAX}. The relevant solution is therefore the
one with the smaller value
\begin{equation}\label{minSIGMA1}
\Sigma_1 = \frac{-1-\sqrt{12v^2-3}}{2v}.
\end{equation}
Inserting this into $g(p_m)$ in~\eqref{galpha} results
in~\eqref{minC}, $m=|D\mathcal{K}(p_m)| = g(p_m)$, as desired.

Next, we show that that the point $p_M$ is the pre-image $\mathcal{K}^{-1}(p_*)\in A_1$,
where $p_*$ is the point in $A_2$ determined by the line between
$\mathrm{Q}_2 /v$ and $\mathrm{Q}_3 /v$ characterized by $\Sigma_1=1/v$,
see Figure~\ref{FIG:minMAX}. Note that $p_M\in C$, since $\mathcal{K}(p_M)=p_*$
resides on a heteroclinic cycle forming a heteroclinic chain with period 2.
Moreover, the point $p_M$ yields $M$, since any point $p$ in the half arc
between $p_M$ and $\mathrm{Q}_1$ with $g(p) > g(p_M)$ is not in $C$. This follows since $p$
yields an orbit such that $\mathcal{K}(p)$ ends above $p_*$ in Figure~\ref{FIG:max}
with $\Sigma_1>1/v$, either in $S$ or in $A_2$. In the latter case the orbit
is concatenated with other heteroclinic orbits for which $\Sigma_1>1/v$
monotonically increases, which thereby yields a finite heteroclinic chain
that ends at $S$.

The point $\mathcal{K}(p_M)=p_*$ is the $\omega$-limit of the heteroclinic
orbit with $p_M$ as its $\alpha$-limit. Inserting $\eta = g(p_M)$
into~\eqref{BIIsol} yields
\begin{equation}\label{Sigma1pM}
\Sigma_1^{\mathrm{f}} = \Sigma^{\mathrm{i}}_{1}g(p_M) + \frac{2}{v} \left(g(p_M)-1\right),
\end{equation}
where $\Sigma_1^{\mathrm{f}}$ is the value of $\Sigma_1$ at $p_*$ and
$g(p_M)$ is given in~\eqref{galpha}. Moreover, $\Sigma_1^{\mathrm{f}} = 1/v$,
since $p_*$ lies on the line between $\mathrm{Q}_2/v$ and $\mathrm{Q}_3/v$,
which, when inserted into~\eqref{Sigma1pM}, gives
\begin{equation}
\Sigma^{\mathrm{i}}_{1} = -\frac{1 + 5v^2}{v(2 + v^2)}.
\end{equation}
Inserting this result into~\eqref{galpha} yields $M=|D\mathcal{K}(p_M)|=g(p_M)$
and thereby~\eqref{maxC}.
%
%
%
\end{proof}
%

\subsection{Characterization of $C$ by symbolic dynamics}

To know more about the connected components of $C_n$ with $n\geq 1$,
i.e., the bold sets in Figure~\ref{FIG:KASNERMAPiterative},
we introduce symbolic dynamics in a manner similar to that of the ternary Cantor set
in Figure~\ref{figure4}, where each connected component was encoded
by a sequence of two symbols $L$ and $R$. The starting point $n=0$ consists of
the removal of the set $F_0=\mathrm{int}(S)$ from the Kasner circle $C_0=\mathrm{K}^\ocircle$,
which yields $C_1=A_1\cup A_2 \cup A_3$.

Consider any $p\in C_n$ with $n\geq 1$. By the definitions in
equation~\eqref{Fn}, there is a unique symbol $a_n \in \{1,2,3\}$ such
that $\mathcal{K}^n(p)\in \mathrm{int}(A_{a_n})$ for each $n\in\mathbb{N}$.
Since two consecutive iterations are never in the same
unstable arc, $a_n\neq a_{n+1}$ for all $n\in \mathbb{N}$.
Furthermore, any $p\in A_{\alpha}$ has two pre-images of the Kasner circle
map $\mathcal{K}$: one in $A_{\beta}$ and one in $A_\gamma$,
where $(\alpha ,\beta ,\gamma)$ is a permutation of $(123)$, see
Figure~\ref{FIG:KASNERMAPS}. It therefore follows that we can find points
visiting a prescribed sequence of expanding arcs, obtained from
heteroclinic chains. To describe a finite sequence of arcs we introduce
the following notation: $w_n = a_0\ldots a_{n-1}$, where $w_n$ is called a \emph{word}.
Consider the set of all words $w_n$ of length $n \geq 1$, also called
an \emph{alphabet}, and denote this set by
\begin{equation}\label{defword}
    W_n:= \left\{ a_0 \ldots a_{n-1}  \Bigm|
    \begin{array}{c}
        \,\, a_k \in \{1,2,3\} \text{ for } k=0,\ldots, n-1, \\
        \,\,\, a_{k+1}\neq a_k\quad \text{ for } k=0, \ldots, n-2
    \end{array}\right\}.
\end{equation}
It then follows that the alphabet $W_n$ consists of $3\cdot 2^{n-1}$ words
(three possibilities for $a_0$ and two possibilities for each following
$a_k$ due to the restriction $a_{k+1}\neq a_k$).
Points $p\in C$ are encoded by infinite sequences with
$n=\infty$, i.e., $w_\infty\in W_\infty$.

To connect words with the iterative construction of $C$,
we define the set $I(w_n)$ to be the collection of points on
$\mathrm{K}^\ocircle$ that visits the arcs by iterations of $\mathcal{K}$
prescribed by $w_n= a_0 \ldots a_{n-1} \in W_n$.
This is formally expressed as
\begin{equation} \label{defofIN}
I(w_n) := \bigcap_{k=0}^{n-1} \mathcal{K}^{-k}(A_{a_k}),
\end{equation}
where a specific word $w_n$ yields a specific connected closed set $I(w_n)$, as illustrated
in Figure~\ref{FIG:INTERVALSencoding}. Note that for
$p\in I(w_n) = I(a_0\ldots a_{n-1})$ it follows that
$\mathcal{K}^k(p) \in A_{a_k}$, $k=0,\ldots, n-1$, and, in particular, $p \in A_{a_0}$.
%
%

The next lemmata guarantee that the union of $I(w_n)$ for all words $w_n\in W_n$
yield the set $C_n$ in the iterative construction~\eqref{Fn}, and hence that $I(w_n)$
are the connected components of $C_{n}$. Moreover, the family $\{I(w_n)\}_{n\in\mathbb{N}}$
consists of \emph{shrinking nested closed sets}. The veracity of these
claims is illustrated by considering a sequence of one of the connected components
of the bold sets $C_n$ for each $n\geq 1$ in Figure~\ref{FIG:KASNERMAPiterative}
and by a step by step construction of the nested closed sets in
Figure~\ref{FIG:INTERVALSencoding}.
\begin{figure}[H]\centering
\minipage{0.49\textwidth}\centering
\begin{subfigure}\centering
    \begin{tikzpicture}[scale=2]
    \draw [line width=0.1pt,domain=0:6.28,variable=\t,smooth] plot ({sin(\t r)},{cos(\t r)});

    \draw [ultra thick, dotted, white, domain=-0.26:0.26,variable=\t,smooth] plot ({sin(\t r)},{cos(\t r)});
    \draw [rotate=120,ultra thick, dotted, white, domain=-0.26:0.26,variable=\t,smooth] plot ({sin(\t r)},{cos(\t r)});
    \draw [rotate=-120,ultra thick, dotted, white, domain=-0.26:0.26,variable=\t,smooth] plot ({sin(\t r)},{cos(\t r)});

    \node at (0,1.1) {\scriptsize{$\mathrm{int}(S)$}};
    \node at (-1.1,-0.57) {\scriptsize{$\mathrm{int}(S)$}};
    \node at (1.1,-0.57) {\scriptsize{$\mathrm{int}(S)$}};

    \draw[color=gray,dashed,-] (0,-1.35) -- (0.685,-0.75);
    \draw[color=gray,dashed,-] (0,-1.35) -- (-0.685,-0.75);

    \draw[color=gray,rotate=120,dashed,-] (0,-1.35) -- (0.685,-0.75);
    \draw[color=gray,rotate=120,dashed,-] (0,-1.35) -- (-0.685,-0.75);

    \draw[color=gray,rotate=-120,dashed,-] (0,-1.35) -- (0.685,-0.75);
    \draw[color=gray,rotate=-120,dashed,-] (0,-1.35) -- (-0.685,-0.75);

    \draw[color=gray,dotted] (0.28,0.95) -- (0,-1.35);
    \draw[color=gray,dotted] (-0.28,0.95) -- (0,-1.35);

    \draw[color=gray,dotted] (-0.95,-0.26) -- (0,-1.35);
    \draw[color=gray,dotted] (0.95,-0.26) -- (0,-1.35);

    \draw[white,ultra thick] (0.28,0.95) -- (0.044,-1);
    \draw[dotted, thick, postaction={decorate}] (0.044,-1) -- (0.28,0.95);
    \draw[white,ultra thick] (-0.28,0.95) -- (-0.044,-1);
    \draw[dotted, thick, postaction={decorate}] (-0.044,-1) -- (-0.28,0.95);

    \draw[white,ultra thick] (-0.95,-0.26) -- (-0.39,-0.9);
    \draw[dotted, thick, postaction={decorate}] (-0.39,-0.9) -- (-0.95,-0.26);
    \draw[white,ultra thick] (0.95,-0.26) -- (0.39,-0.9);
    \draw[dotted, thick, postaction={decorate}] (0.39,-0.9) -- (0.95,-0.26);


    \draw [line width=2pt,  domain=3.19:3.525,variable=\t,smooth] plot ({sin(\t r)},{cos(\t r)});
    \draw [line width=2pt,  domain=-3.19:-3.525,variable=\t,smooth] plot ({sin(\t r)},{cos(\t r)});


    \draw [rotate=2 ] (0,-1) circle (0.001pt) node[anchor= north] {\scriptsize{$I(1)$}};
    \draw [rotate=125] (0,-1) circle (0.001pt) node[anchor= west] {\scriptsize{$I(3)$}};
    \draw [rotate=-125] (0,-1) circle (0.001pt) node[anchor= east] {\scriptsize{$I(2)$}};

    \draw [ rotate=13] (0,-1) circle (0.001pt) node[anchor= south] {\scriptsize{$I(13)$}};
    \draw [ rotate=-13] (0,-1) circle (0.001pt) node[anchor= south] {\scriptsize{$I(12)$}};

    \end{tikzpicture}
    \addtocounter{subfigure}{-1}\captionof{subfigure}{\footnotesize{The (bold)
    set $C_2$ within $I(1)=A_1$ has two closed connected components given by $I(w_2)$,
    which are encoded by words $w_2=a_0a_1\in W_2$ such that $a_0=1$.
    The set $I(w_2)$ with $w_2=12$ encodes points $p\in A_1=I(1)$
    such that $\mathcal{K}(p)\in A_2$, whereas $w_2=13$ encodes $p\in A_1 = I(1)$
    with $\mathcal{K}(p)\in A_3= I(3)$.}}\label{FIG:encoding1}
\end{subfigure}
\endminipage\hfill
\minipage{0.49\textwidth}\centering
\begin{subfigure}\centering
    \begin{tikzpicture}[scale=2]
    \draw [line width=0.1pt,domain=0:6.28,variable=\t,smooth] plot ({sin(\t r)},{cos(\t r)});

    \draw [ultra thick, dotted, white, domain=-0.26:0.26,variable=\t,smooth] plot ({sin(\t r)},{cos(\t r)});
    \draw [rotate=120,ultra thick, dotted, white, domain=-0.26:0.26,variable=\t,smooth] plot ({sin(\t r)},{cos(\t r)});
    \draw [rotate=-120,ultra thick, dotted, white, domain=-0.26:0.26,variable=\t,smooth] plot ({sin(\t r)},{cos(\t r)});

    \draw [black] (0,1) circle (0.001pt) node[anchor= south] {\scriptsize{$\mathrm{int}(S)$}};
    \draw [black] (0.88,-0.49) circle (0.001pt) node[anchor= west] {\scriptsize{$\mathrm{int}(S)$}};
    \draw [black] (-0.88,-0.49) circle (0.001pt)node[anchor= east] {\scriptsize{$\mathrm{int}(S)$}};

    \draw[color=gray,dashed,-] (0,-1.35) -- (0.685,-0.75);
    \draw[color=gray,dashed,-] (0,-1.35) -- (-0.685,-0.75);

    \draw[color=gray,rotate=120,dashed,-] (0,-1.35) -- (0.685,-0.75);
    \draw[color=gray,rotate=120,dashed,-] (0,-1.35) -- (-0.685,-0.75);

    \draw[color=gray,rotate=-120,dashed,-] (0,-1.35) -- (0.685,-0.75);
    \draw[color=gray,rotate=-120,dashed,-] (0,-1.35) -- (-0.685,-0.75);

    \draw[color=gray,dotted] (-0.6,0.775) -- (0,-1.35);
    \draw[color=gray,dotted] (-0.83,0.525) -- (0,-1.35);

    \draw[color=gray,rotate=240,dotted] (-0.28,0.95) -- (0,-1.35);
    \draw[color=gray,rotate=240,dotted] (-0.95,-0.26) -- (0,-1.35);

    \draw[white,ultra thick] (-0.83,0.525) -- (-0.165,-0.97);
    \draw[dotted, thick, postaction={decorate}] (-0.165,-0.97) -- (-0.83,0.525);
    \draw[white,ultra thick] (-0.6,0.775) -- (-0.1,-0.99);
    \draw[dotted, thick, postaction={decorate}] (-0.1,-0.99) -- (-0.6,0.775);

    \draw[rotate=240,white,ultra thick] (-0.28,0.95) -- (-0.044,-1);
    \draw[rotate=240,dotted, thick, postaction={decorate}] (-0.044,-1) -- (-0.28,0.95);
    \draw[rotate=240,white,ultra thick] (-0.95,-0.26) -- (-0.39,-0.9);
    \draw[rotate=240,dotted, thick, postaction={decorate}] (-0.39,-0.9) -- (-0.95,-0.26);



    \draw [line width=2pt,  domain=3.19:3.525,variable=\t,smooth, rotate=-120] plot ({sin(\t r)},{cos(\t r)});
    \draw [line width=2pt,  domain=-3.19:-3.525,variable=\t,smooth, rotate=-120] plot ({sin(\t r)},{cos(\t r)});

    \draw [line width=2pt,  domain=3.19:3.525,variable=\t,smooth] plot ({sin(\t r)},{cos(\t r)});
    \draw [line width=2pt,  domain=-3.19:-3.525,variable=\t,smooth] plot ({sin(\t r)},{cos(\t r)});

    \draw [line width=5pt,   domain=3.33:3.4,variable=\t,smooth] plot ({sin(\t r)},{cos(\t r)});
    \draw [line width=5pt,   domain=3.24:3.3,variable=\t,smooth] plot ({sin(\t r)},{cos(\t r)});

    \draw [line width=5pt,   domain=-3.33:-3.4,variable=\t,smooth] plot ({sin(\t r)},{cos(\t r)});
    \draw [line width=5pt,   domain=-3.24:-3.3,variable=\t,smooth] plot ({sin(\t r)},{cos(\t r)});

    \draw [rotate=125] (0,-1) circle (0.001pt) node[anchor= west] {\scriptsize{$I(3)$}};
    \draw [rotate=-130] (0,-1) circle (0.001pt) node[anchor= east] {\scriptsize{$I(23)$}};
    \draw [ rotate=-4] (0,-1) circle (0.001pt) node[anchor= north] {\scriptsize{\bf $I(123)$}};

\end{tikzpicture}
    \addtocounter{subfigure}{-1}
    \captionof{subfigure}{\footnotesize{The (thicker bold) set $C_3$ within $I(1)=A_1$ has
    four closed connected components $I(w_3)$, which are encoded by the words $w_3=a_0a_1a_2\in W_3$
    such that $a_0=1$, $\mathcal{K}(p)\in A_{a_1}$, $\mathcal{K}^2(p)\in A_{a_2}$, where
    $p\in A_1$. The sets $I(w_2)$ and $I(w_3)$ for the bottom arc are described in detail
    in Figure \textbf{(c)}.}}\label{FIG:encoding2}
\end{subfigure}
\endminipage

\begin{subfigure}\centering
    \begin{tikzpicture}[scale=6]
    \draw [line width=0.1pt,domain=2.383:3.9,variable=\t,smooth] plot ({sin(\t r)},{cos(\t r)});

    \draw [line width=2pt,  domain=3.19:3.525,variable=\t,smooth] plot ({sin(\t r)},{cos(\t r)});
    \draw [line width=2pt,  domain=-3.19:-3.525,variable=\t,smooth] plot ({sin(\t r)},{cos(\t r)});

    \draw [line width=5pt,   domain=3.33:3.4,variable=\t,smooth] plot ({sin(\t r)},{cos(\t r)});
    \draw [line width=5pt,   domain=3.24:3.3,variable=\t,smooth] plot ({sin(\t r)},{cos(\t r)});

    \draw [line width=5pt,   domain=-3.33:-3.4,variable=\t,smooth] plot ({sin(\t r)},{cos(\t r)});
    \draw [line width=5pt,   domain=-3.24:-3.3,variable=\t,smooth] plot ({sin(\t r)},{cos(\t r)});

    \draw[shift={(-0.62,-0.64)},rotate=140] (-0.05,0) -- (0,0) -- (0,0.2) -- (-0.05,0.2) node[anchor= north] {\scriptsize{$\mathrm{t}_{32}$}};
    \draw[shift={(0.62,-0.64)},rotate=-140] (0.05,0) -- (0,0) -- (0,0.2) -- (0.05,0.2) node[anchor= north] {\scriptsize{$\mathrm{t}_{23}$}};

    \draw[shift={(-0.33,-0.83)},rotate=155] (-0.05,0) -- (0,0) -- (0,0.2) -- (-0.05,0.2);
    \draw[shift={(0.33,-0.83)},rotate=-155] (0.05,0) -- (0,0) -- (0,0.2) -- (0.05,0.2);

    \draw[shift={(0.04,-0.89)},rotate=185] (-0.05,0) -- (0,0) -- (0,0.2) -- (-0.05,0.2);
    \draw[shift={(-0.04,-0.89)},rotate=-185] (0.05,0) -- (0,0) -- (0,0.2) -- (0.05,0.2);

    \node[anchor= south] at (0,-0.8) {\scriptsize{$I(1)$}};

    \node[anchor= south] at (-0.18,-0.9) {\scriptsize{$I(12)$}};
    \node[anchor= south] at (0.18,-0.9) {\scriptsize{$I(13)$}};

    \draw [ rotate=12] (0,-1) circle (0.001pt) node[anchor= south] {\scriptsize{$I(131)$}};
    \draw [ rotate=-12] (0,-1) circle (0.001pt) node[anchor= south] {\scriptsize{$I(121)$}};

    \draw [ rotate=7] (0,-1) circle (0.001pt) node[anchor= north] {\scriptsize{$I(132)$}};
    \draw [ rotate=-7] (0,-1) circle (0.001pt) node[anchor= north] {\scriptsize{$I(123)$}};

\end{tikzpicture}
    \addtocounter{subfigure}{-1}
    \captionof{subfigure}{\footnotesize{The sets $I(w_n)$, for $n=1,2,3$, in the
    arc $A_1$. The (thin) set $C_1$ has one closed connected component in $A_1$
    given by $I(w_1)$ encoded by the word $w_1=1\in W_1$, i.e., $A_1 = I(1)$.
    The (bold) set $C_2$ has two closed connected components in $A_1$ given
    by $I(w_2)$, which are encoded by words $w_2=a_0a_1\in W_2$ such that $a_0=1$.
    The (thicker bold) set $C_3$ has four closed connected components in $A_1$,
    given by $I(w_3)$ encoded by the words $w_3=a_0a_1a_2\in W_3$ such that $a_0=1$.}}\label{FIG:setsIn}
\end{subfigure}
\captionof{figure}{Illustration of the nested sets $I(w_n)$ for $n=1,2,3$. 
The (bold) set $C_1$ has three closed connected component given by arcs $A_1 ,A_2 , A_3$,
described in equation~\eqref{A_1} and Figure~\ref{FIG:BIF}, and they are encoded by symbolic dynamics as
$I(w_1)=A_{w_1}$, with the corresponding word $w_1$ within the alphabet $W_1=\{1,2,3\}$.}\label{FIG:INTERVALSencoding}
\end{figure}
\begin{lemma}\label{Lemma2onIN}
For $n\geq 1$, the set $C_{n}$ is given by the union of the $3\cdot 2^{n-1}$ closed, connected,
disjoint sets $I(w_n)$:
\begin{equation} \label{CN+1comp}
C_{n}= \bigcup_{w_n\in W_n} I(w_n).
\end{equation}
\end{lemma}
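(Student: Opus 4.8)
The plan is to argue by induction on $n$, with everything resting on one structural fact about the supercritical Kasner circle map: for each $\alpha\in\{1,2,3\}$ the branch $\mathcal{K}|_{A_\alpha}$ is a homeomorphism of the closed arc $A_\alpha$ onto the complementary closed arc $\mathrm{K}^{\ocircle}\setminus\mathrm{int}(A_\alpha)$. Indeed, $\mathcal{K}$ is continuous on $\mathrm{K}^{\ocircle}$ when $v\in(1/2,1)$; on $A_\alpha$ it sends a point $p$ to the other intersection with $\mathrm{K}^{\ocircle}$ of the line through $p$ and the exterior point $\mathrm{Q}_\alpha/v$ (see~\eqref{KasnerCirc} and Figure~\ref{FIG:KASNERMAPS}), whose two tangent lines touch $\mathrm{K}^{\ocircle}$ exactly at the two endpoints of $A_\alpha$ (the tangential points, which are fixed) and under which $\mathrm{Q}_\alpha$ is mapped to $\mathrm{T}_\alpha$; equivalently $|D\mathcal{K}|=g>1$ on $\mathrm{int}(A_\alpha)$ by Lemma~\ref{KasnerCircMapEXP} and the monotonicity recorded after it. Hence $\mathcal{K}|_{A_\alpha}$ is a continuous bijection from the compact arc $A_\alpha$ onto $\mathrm{K}^{\ocircle}\setminus\mathrm{int}(A_\alpha)$, so a homeomorphism. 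I will also use the decomposition $\mathrm{K}^{\ocircle}\setminus\mathrm{int}(A_\alpha)=S\cup A_\beta\cup A_\gamma$ with $S\cap A_\beta=\partial A_\beta$, $S\cap A_\gamma=\partial A_\gamma$, $A_\beta\cap A_\gamma=\emptyset$, and $\mathrm{int}_{\,\mathrm{K}^{\ocircle}\setminus\mathrm{int}(A_\alpha)}(S)=S\setminus(\partial A_\beta\cup\partial A_\gamma)$, for $\{\alpha,\beta,\gamma\}=\{1,2,3\}$.

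From~\eqref{defofIN} one reads off the recursion $I(a_0a_1\cdots a_{n-1})=A_{a_0}\cap\mathcal{K}^{-1}\big(I(a_1\cdots a_{n-1})\big)$, and since $a_1\neq a_0$ forces $I(a_1\cdots a_{n-1})\subseteq A_{a_1}\subseteq\mathrm{K}^{\ocircle}\setminus\mathrm{int}(A_{a_0})$, this equals $(\mathcal{K}|_{A_{a_0}})^{-1}\big(I(a_1\cdots a_{n-1})\big)$. Iterating, $\mathcal{K}^{n-1}$ restricts to a homeomorphism of $I(w_n)$ onto $A_{a_{n-1}}$, and therefore $\phi_{w_n}:=\mathcal{K}^n|_{I(w_n)}$ is a homeomorphism of $I(w_n)$ onto $\mathrm{K}^{\ocircle}\setminus\mathrm{int}(A_{a_{n-1}})$. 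In particular every $I(w_n)$ is a nonempty closed sub-arc of $\mathrm{K}^{\ocircle}$. If $w_n\neq w_n'$ and $j$ is the least index at which they differ, then a common point $p$ of $I(w_n)$ and $I(w_n')$ would satisfy $\mathcal{K}^j(p)\in A_{a_j}\cap A_{a_j'}$, impossible since distinct $A_\alpha$'s are disjoint in the supercritical case; thus the $I(w_n)$ are pairwise disjoint. Since $W_n$ has $3\cdot2^{n-1}$ elements, this gives the asserted number of components once the union is identified.

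That identification is the inductive claim $C_n=\bigcup_{w_n\in W_n}I(w_n)$. The base $n=1$ is immediate: $F_0=\mathrm{int}_{C_0}(S)=\mathrm{int}(S)$, so $C_1=\mathrm{K}^{\ocircle}\setminus\mathrm{int}(S)=A_1\cup A_2\cup A_3=\bigcup_{w_1\in W_1}I(w_1)$. For the step, assuming $C_n=\bigsqcup_{w_n}I(w_n)$, it suffices to show $C_{n+1}\cap I(w_n)=\bigsqcup_{a\neq a_{n-1}}I(w_na)$ for each component $I(w_n)$, because summing over $w_n\in W_n$ and noting that every word of $W_{n+1}$ is uniquely $w_na$ with $a\neq a_{n-1}$ then yields $C_{n+1}=\bigsqcup_{w_{n+1}\in W_{n+1}}I(w_{n+1})$ and doubles the component count. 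For the displayed equality: on one side, $I(w_na)=I(w_n)\cap\mathcal{K}^{-n}(A_a)=\phi_{w_n}^{-1}(A_a)$ when $a\neq a_{n-1}$, so $\bigsqcup_{a\neq a_{n-1}}I(w_na)=\phi_{w_n}^{-1}(A_\beta\cup A_\gamma)$ with $\{\beta,\gamma\}=\{1,2,3\}\setminus\{a_{n-1}\}$. On the other side, $C_{n+1}\cap I(w_n)=I(w_n)\setminus F_n$; since $I(w_n)$ is a connected component of the closed set $C_n$, whose finitely many components are mutually at positive distance and satisfy $\partial I(w_n)\subseteq\partial C_n$, the relative interior in~\eqref{CiterativFn} unwinds to $F_n\cap I(w_n)=\mathrm{int}_{\mathrm{K}^{\ocircle}}\!\big(\phi_{w_n}^{-1}(S)\big)\cup\big(\phi_{w_n}^{-1}(S)\cap\partial I(w_n)\big)$. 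Transporting through the homeomorphism $\phi_{w_n}$, using $\mathrm{int}_{\,\mathrm{K}^{\ocircle}\setminus\mathrm{int}(A_{a_{n-1}})}(S)=S\setminus(\partial A_\beta\cup\partial A_\gamma)$ and $\partial I(w_n)=\phi_{w_n}^{-1}(\partial A_{a_{n-1}})\subseteq\phi_{w_n}^{-1}(S)$ (with $\partial A_{a_{n-1}}$ disjoint from $\partial A_\beta\cup\partial A_\gamma$), one finds $F_n\cap I(w_n)=\phi_{w_n}^{-1}\big(S\setminus(\partial A_\beta\cup\partial A_\gamma)\big)$, whence $I(w_n)\setminus F_n=\phi_{w_n}^{-1}(A_\beta\cup A_\gamma)$, as required.

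The step I expect to demand the most care is this last one: matching the somewhat ad hoc relative interior $\mathrm{int}_{C_n}$ in~\eqref{CiterativFn} — which is tailored precisely so that tangential points and their pre-images are deleted at the correct stage — with genuine topological interiors, and checking the auxiliary facts that the components of $C_n$ are uniformly separated closed arcs with $\partial I(w_n)\subseteq\partial C_n$, so that the relevant interiors localize to a single component. The remaining ingredients — continuity and the branch/pencil structure of $\mathcal{K}$ from Section~\ref{sec:dynsysanalysis} and Lemma~\ref{KasnerCircMapEXP}, the recursion for $I(w_n)$, and the word count — are routine.
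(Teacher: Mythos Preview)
Your proof is correct, and it takes a genuinely different route from the paper's. The paper argues directly and non-inductively: for $C_n\subseteq\bigcup I(w_n)$ it observes that $p\in C_n$ means $p\notin F_k$ for all $k<n$, so each $\mathcal{K}^k(p)$ lies in a unique $A_{a_k}$ with $a_{k+1}\neq a_k$, yielding a word $w_n$ with $p\in I(w_n)$; for the reverse inclusion it simply states that $p\in I(w_n)$ forces $p\notin F_k$ ``by definition''. Disjointness is argued exactly as you do, and the component count is read off from $|W_n|=3\cdot2^{n-1}$. Closedness of $I(w_n)$ is deferred to the next lemma, and connectedness is never explicitly proved (it is used later as though obvious).

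Your inductive approach, built on the branch homeomorphism $\phi_{w_n}=\mathcal{K}^n|_{I(w_n)}:I(w_n)\to\mathrm{K}^{\ocircle}\setminus\mathrm{int}(A_{a_{n-1}})$, buys you more: nonemptiness, closedness, and connectedness of each $I(w_n)$ fall out for free, and your careful unwinding of the relative interior $\mathrm{int}_{C_n}$ at the inductive step gives an honest treatment of the tangential points and their preimages, which the paper's argument glosses over. The step you flagged as the most delicate is correctly handled: your identity $F_n\cap I(w_n)=\phi_{w_n}^{-1}\big(S\setminus(\partial A_\beta\cup\partial A_\gamma)\big)$ is right, because $\partial I(w_n)=\phi_{w_n}^{-1}(\partial A_{a_{n-1}})\subseteq\phi_{w_n}^{-1}(S)$ and the $\mathrm{K}^{\ocircle}$-interior of $\phi_{w_n}^{-1}(S)$ misses exactly $\partial I(w_n)$ while the $Y$-relative interior of $S$ retains $\partial A_{a_{n-1}}$. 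The paper's route is shorter; yours is more self-contained.
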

\begin{proof}
First, we show that $C_n\subseteq \cup_{w_n\in W_n} I(w_n)$.
By the definitions in~\eqref{Fn}, points $p\in C_{n}$ are not in the $k^{th}$
removed set $F_k$ for all $k=0,\ldots, n-1$. There is thereby
a unique symbol $a_k\in \{1,2,3\}$ such that
$\mathcal{K}^k(p) \in A_{a_k}$ for each $k=0,\ldots, n-1$
and $a_{k+1}\neq a_k $ for all $k=0,\ldots, n-2$. 
Hence $p\in I(w_n)$ for the word $w_n=a_0\ldots a_{n-1}\in W_n$.

Second, $I(w_n)\subseteq C_n$ for all $w_n=a_0\ldots a_{n-1}\in W_n$ since $p\in I(w_n)$
are not in $F_k$ for all $k=0,\ldots, n-1$, as follows from the definition~\eqref{defofIN} and the iterative
construction defined in~\eqref{Fn}. Hence $p\in C_{n}$.

To show disjointedness, consider two different
words in $W_n$ given by $w_n = a_0 \ldots a_{n-1}$ and $\tilde{w}_n = \tilde{a}_0 \ldots
\tilde{a}_{n-1}$ such that $a_k\neq
\tilde{a}_k$ for some $k$. Then any $p\in I(w_n)$ satisfies $\mathcal{K}^k(p)\in A_{a_k}$ whereas
$\mathcal{K}^k(p)\notin A_{\tilde{a}_k}$ since $A_{a_k}$ and $ A_{\tilde{a}_k}$ are disjoint.
Hence $p\notin I(\tilde{w}_n)$.

Finally, the number of closed connected components of $C_n$ is given by the cardinality
of $W_n$, which is $3\cdot 2^{n-1}$.
\end{proof}
\begin{lemma}\label{Lemma1onIN}
For any $n \geq 2$, the set $I(w_{n})$ is a closed nested set of $\mathrm{K}^{\ocircle}$:
\begin{equation}\label{nest}
I(w_{n}) \subseteq  I(w_{n-1}),
\end{equation}
where $w_{n}=a_0\ldots a_{n-1}\in W_n$ and $w_{n-1}=a_0\ldots a_{n-2}\in W_{n-1}$. Moreover,
\begin{equation}\label{INshrinks}
0<\vert I(w_n) \vert < 2\pi \nu^{n-2},
\end{equation}
for some constant $\nu \in (0,1)$, where $\vert \cdot \vert$ denotes the Lebesgue measure.
\end{lemma}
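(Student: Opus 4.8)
\emph{Plan.} Throughout we are in the supercritical case $v\in(1/2,1)$, so (cf.\ the discussion after \eqref{KasnerCirc}) the map $\mathcal{K}$ is well defined and continuous on $\mathrm{K}^{\ocircle}$, the unstable arcs $A_1,A_2,A_3$ are pairwise disjoint, and every tangential point lies in the stable set $S$. The nesting \eqref{nest} and the closedness of $I(w_n)$ are then immediate from \eqref{defofIN}: one has $I(w_n)=I(w_{n-1})\cap\mathcal{K}^{-(n-1)}(A_{a_{n-1}})\subseteq I(w_{n-1})$, and $I(w_n)$ is a finite intersection of preimages, under iterates of the continuous map $\mathcal{K}$, of the closed arcs $A_{a_k}$.

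For the estimate \eqref{INshrinks} I would first record the structure of $\mathcal{K}$ on an unstable arc, say $A_1$, which follows from \eqref{KasnerCirc}--\eqref{galpha}, Lemma~\ref{KasnerCircMapEXP} and $\mathrm{Q}_1=-\mathrm{T}_1$: $\mathcal{K}$ fixes the two tangential points of $\partial A_1$ (where $g=1$), sends $\mathrm{Q}_1$ to $\mathrm{T}_1$ (where $g$ is maximal), and carries each of the two half-arcs of $A_1$ bounded by $\mathrm{Q}_1$ monotonically and real-analytically onto one of the two arcs of $\overline{\mathrm{K}^{\ocircle}\setminus A_1}$ running from a tangential point to $\mathrm{T}_1$, with derivative $g\in[1,\tfrac{1+v}{1-v}]$. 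Since $v>1/2$ we have $\mathrm{T}_\alpha\notin A_\beta\cup A_\gamma$ (at $\mathrm{T}_\alpha$ one has $\Sigma_\beta=\Sigma_\gamma=-1>-2v$), so by disjointness of the arcs, for $b\neq a$ the whole arc $A_b$ lies in a single one of the two half-images of $\mathcal{K}|_{A_a}$. Hence $I(a_0a_1)=A_{a_0}\cap\mathcal{K}^{-1}(A_{a_1})$ lies in one monotone half-branch of $A_{a_0}$; peeling off letters then shows that $\mathcal{K}$ maps $I(a_0\cdots a_{n-1})$ homeomorphically onto $I(a_1\cdots a_{n-1})$, and consequently $\mathcal{K}^{n-1}$ maps $I(w_n)$ homeomorphically (as a composition of monotone branches) onto $A_{a_{n-1}}$, with $|D\mathcal{K}^{n-1}(p)|=\prod_{k=0}^{n-2}g(\mathcal{K}^{k}(p))\le\bigl(\tfrac{1+v}{1-v}\bigr)^{n-1}$. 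The change-of-variables identity $|A_{a_{n-1}}|=\int_{I(w_n)}|D\mathcal{K}^{n-1}|$, combined with $|A_{a_{n-1}}|=2\arccos v>0$, then yields the lower bound $|I(w_n)|>0$ (and, incidentally, connectedness of $I(w_n)$, being a homeomorphic preimage of an arc).

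The remaining, and I expect hardest, point is a \emph{uniform} expansion estimate. For $a\neq b$ the two-letter cylinder $I(ab)$ is compact, is disjoint from $\partial A_a$ (its points $p$ satisfy $\mathcal{K}(p)\in A_b$, whereas the tangential points of $A_a$ are fixed by $\mathcal{K}$ and, since $v>1/2$, do not lie in $A_b$), and $g$ equals $1$ only on $\partial A_a$; hence $\min_{I(ab)}g>1$, and taking the minimum over the finitely many pairs $(a,b)$ — equivalently, by the permutation symmetry \eqref{permSYM}, over a single pair — gives a constant $g_0=g_0(v)>1$ with $g\ge g_0$ on every cylinder $I(ab)$. (At $v=1/2$ the tangential points merge with the Taub points, which lie on the arc boundaries, so $g_0\downarrow1$; this is precisely why the Cantor structure degenerates there.) Finally I would iterate the shift: for $0\le j\le n-3$ we have $I(a_j\cdots a_{n-1})\subseteq I(a_ja_{j+1})$, on which $\mathcal{K}$ is injective with $|D\mathcal{K}|=g\ge g_0$ and image $I(a_{j+1}\cdots a_{n-1})$, whence $|I(a_{j+1}\cdots a_{n-1})|\ge g_0\,|I(a_j\cdots a_{n-1})|$; applying this for $j=0,\dots,n-3$ gives
\begin{equation*}
|I(w_n)|\le g_0^{-(n-2)}\,|I(a_{n-2}a_{n-1})|\le g_0^{-(n-2)}\,|A_{a_{n-2}}|=2\arccos(v)\,g_0^{-(n-2)}<2\pi\,\nu^{\,n-2},\qquad\nu:=g_0^{-1}\in(0,1),
\end{equation*}
which is \eqref{INshrinks}.
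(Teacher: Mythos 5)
Your proof is correct and follows essentially the same strategy as the paper's: closedness and nesting from the definition, and the decay estimate from the uniform lower bound $g\ge g_0>1$ on the two-letter cylinders (the paper's $\nu^{-1}=\min_{C_2}|D\mathcal{K}|$ is exactly your $g_0$) applied recursively via the change-of-variables inequality. The one place you improve on the paper is the positivity of $|I(w_n)|$: you obtain it (together with nonemptiness and connectedness) directly from the homeomorphism $\mathcal{K}^{n-1}:I(w_n)\to A_{a_{n-1}}$ and its bounded derivative, whereas the paper invokes the nonemptiness of the deeper cylinders $I(w_{n+1})$, which is asserted rather than proved there.
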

\begin{proof}
The arcs $A_\alpha$ for $\alpha=1,2,3$ are closed, and so are their pre-images under the continuous
map $\mathcal{K}$. Since $I(w_n)$ is an intersection of
closed sets defined in \eqref{defofIN}, it follows that $I(w_n)$ is a closed
set in $\mathrm{K}^{\ocircle}$.

The nested inclusion~\eqref{nest} follows from rewriting equation~\eqref{defofIN} as
\begin{equation}
I(w_{n}) = I(w_{n-1}) \cap \mathcal{K}^{-(n-1)}(A_{a_{n-1}}).
\end{equation}
%
This also implies that $\vert I(w_{n}) \vert>0$,
since $I(w_n)$ is connected and strictly contains two
nonempty closed disjoint subsets given by $ I(w_{n+1})$
for $w_{n+1}=a_0\ldots a_n\in W_{n+1}$, where the word $w_{n+1}$
is an extension of $w_n$ by concatenating a symbol $a_n\neq a_{n-1}$ at the end,
see Figure~\ref{FIG:INTERVALSencoding}.

Next we show equation~\eqref{INshrinks}. 
The Kasner map restricted to the set $I(w_{n})$, given by
$\mathcal{K}: I(a_0\ldots a_{n-1}) \rightarrow I(a_1\ldots a_{n-1})$,
see Figure~\ref{FIG:INTERVALSencoding}, is a diffeomorphism,
which implies that $x=\mathcal{K}(p)$ leads to:
\begin{equation}\label{ineq}
\min_{p\in I(w_{n})} \vert D\mathcal{K}(p) \vert
\cdot \vert I(w_n) \vert \leq \left\vert \int_{I(w_{n})} D\mathcal{K}(p) dp \right\vert
= \left\vert \int_{I(a_1\ldots a_{n-1})}  dx \right\vert
= \vert I(a_1\ldots a_{n-1}) \vert .
\end{equation}
%
%
%
Since the sets $I(w_n)$ are nested as in~\eqref{nest}, and since $I(w_2)$
is a connected component of $C_2$ for some $w_2\in W_2$, as described in equation~\eqref{CN+1comp},
it follows that
\begin{equation}\label{ineq2}
\min_{p\in I(w_n)} \vert D\mathcal{K}(p) \vert
\geq \min_{p\in I(w_2)} \vert D\mathcal{K}(p) \vert \geq  \min_{p\in C_2} \vert D\mathcal{K}(p) \vert =: \nu^{-1} > 1,
\end{equation}
%
%
as illustrated by Figure~\ref{FIG:INTERVALSencoding}. 
The last inequality, which yields $\nu < 1$, follows from that
$C_2$ is bounded away from the tangential points,
see Figure~\ref{FIG:KASNERMAPSpreimages}, and since
the derivative then is strictly bigger than one,
see~\eqref{KasnerCircPrime} and Figure~\ref{fig:plotg}.

We then apply the inequality~\eqref{ineq} recursively together
with~\eqref{ineq2}, which leads to
\begin{equation}\label{shrinkPF}
\vert I(w_n) \vert \leq \nu^{n-2} \cdot \vert I(w_2) \vert
\end{equation}
for some $w_2\in W_2$. Together with $\vert I(w_2) \vert < |\mathrm{K}^{\ocircle}| = 2\pi$
this results in equation~\eqref{INshrinks}.
\end{proof}

Finally, note that the shrinking rate of $|I(w_n)|$ in~\eqref{INshrinks}
is not improved by repeating the recursive procedure in~\eqref{shrinkPF}
once more in order to bound $|I(w_n)|$ by $|I(w_1)|$. Then the right
hand side of~\eqref{ineq2} is replaced with $\nu^{n-2} \cdot \tilde{\nu} \vert I(w_1) \vert$,
where $\tilde{\nu}^{-1}:=\min_{p\in C_1} \vert D\mathcal{K}(p) \vert$.
However, $\tilde{\nu} = 1$, since $C_1$ consists of the three arcs $A_\alpha$, which
contain tangential points for which the minimum 1 is attained, see Figure~\ref{fig:plotg}.

\subsection{Proof of Theorems~\ref{CantorTh} and~\ref{chaossub} }\label{subsec:proofCantor}

Based on the above ingredients we will now prove
Theorems~\ref{CantorTh} and~\ref{chaossub} in six steps.

{\bf First step: Closedness and nonemptiness of the set $C$.}

Since $C$ is obtained in equation~\eqref{lem:intersec} as the intersection
of closed sets $C_n$, defined in~\eqref{Fn}, which in turn is the
union of closed arcs $I(w_n)$ in equation~\eqref{CN+1comp}, $C$ is also closed.

Furthermore, given a word $w_\infty =(a_k)_{k\in
\mathbb{N}_0}\in W_\infty$ and its truncations $w_n=a_0\ldots a_{n-1}$,
the family $\{I(w_n)\}_{n\in\mathbb{N}}$ consists of shrinking nested closed sets such that its intersection 
consists of a single point $p$, due to Lemma~\ref{Lemma1onIN} and Cantor's intersection Theorem in complete metric spaces when $\mathrm{diam}(I(w_n))\to 0$ as $n\to\infty$. Such a point belongs to $I(w_n)$ for all $n\geq 1$,
and also in $C_{n}$ for every $n\geq 0$ due to~\eqref{CN+1comp} and $C_0 = \mathrm{K}^\ocircle$.
Consequently, $p\in C$, which is determined by the
intersection of all $C_{n}$ according to~\eqref{lem:intersec}. In other words,
\begin{equation}\label{intersec}
\bigcap_{n\in \mathbb{N}_0}I(w_n)=  p \in C.
\end{equation}
Note that such a point $p$ is associated with the word
$w_\infty = a_0 a_1 a_2\ldots \in W_\infty$, whereas the next point in
the heteroclinic chain, $\mathcal{K}(p)$, is associated with the sequence $\tilde{w}_\infty = a_1 a_2\ldots$, which
is the word $w_\infty$ without the first symbol $a_0$.
Hence $\mathcal{K}(p)$ lies in the intersection of the
family $\{I(\tilde{w}_n)\}_{n\in\mathbb{N}_0}$. Different points
in the same heteroclinic chain are therefore distinguished by fixing $a_0$.
This notion of deleting the first symbol is also called a shift
to the left, and is used to prove chaoticity in the sixth step.
%

{\bf Second step: No isolated points in $C$.}

Consider a point $p\in C$ and an
$\varepsilon$-neighborhood of $p$ in $\mathrm{K}^{\ocircle}$.
Let $w_\infty =(a_k)_{k\in
\mathbb{N}_0}\in W_\infty$ be a sequence such that $\mathcal{K}^k(p) \in A_{a_k}$
for all $k \in \mathbb{N}_0$.
According to equation~\eqref{INshrinks}, there is an $n\in
\mathbb{N}$ such that $I(w_n) =I(a_0\ldots a_{n-1})$
contains $p$ and has a length smaller than $\varepsilon$.

Next we prove that $I(w_n)$ contains a point $q\in C$ different than $p$ with a
distance smaller than an arbitrary $\varepsilon$, and hence that $p$ is not isolated.
Let $\tilde{a}_{n}\in \{1,2,3\}$ be different than $ a_{n-1}$ and $ a_{n}$.
Hence, the word $\tilde{w}_{n+1}:=a_0\ldots a_{n-1} \tilde{a}_{n}$ is without
repetition and differs from $w_{n+1}=a_0 \ldots a_{n-1}a_{n}$. 
Moreover, the disjoint sets
$I(w_{n+1})$ and $I(\tilde{w}_{n+1})$ are both contained in $I(w_{n})$,
since such arc sequences are nested~\eqref{nest}, see Figure~\ref{FIG:INTERVALSencoding}.
We now show that there is a $q\in I(\tilde{w}_{n+1})$ which is also in $C$,
but different than $p$, since $p\in I(w_{n+1})$.
Consider the family $\{I(\tilde{w}_k)\}_{k\in\mathbb{N}}$ of shrinking
nested closed sets, where $\tilde{w}_k:=a_0\ldots a_{k}$ is the truncation
of the word $\tilde{w}_\infty :=a_0\ldots a_{n-1} \tilde{a}_{n} a_n a_{n+1}\ldots\in W_\infty$.
This guarantees that $a_{n-1}, \tilde{a}_{n}$ and $a_n$ are pair-wise disjoint,
and hence~\eqref{intersec} implies that there is a $q$, which lies in $C$
and in the intersection of $I(\tilde{w}_k)$ for all $k\in\mathbb{N}$, and
consequently in $I(\tilde{w}_{n+1})$.



{\bf Third step: $\overline{C}$ has an empty interior.}

Since $C$ is closed, $\overline{C}=C$, we only have to prove
that $C$ has an empty interior. Consider the same arbitrary point
$p\in C$ and $I(w_n)$ within an $\varepsilon$-neighborhood
for any $\epsilon>0$, as in the second step.
We now show that $I(w_n)$ contains a point
$r\in \mathrm{K}^{\ocircle}\setminus{C}$, 
and hence that $p$ is not an interior point and that there thereby are
no interior points.

Consider $r \in \partial I(w_n)$. 
Since the
restriction $\mathcal{K}:
I(w_n) \rightarrow I(a_1\ldots a_{n-1})$ is a
diffeomorphism, it preserves boundaries, i.e., $\mathcal{K}(r) \in \partial I(a_1\ldots a_{n-1})$.
After $n-1$ iterations, $\mathcal{K}^{n-1}(r) \in \partial I(a_{n-1})$.
Note that $I(a_{n-1})$ is the arc $A_{a_{n-1}}$, and that its
boundary consists of the tangential points, which are in $S$. Hence
$\mathcal{K}^{n-1}(r) \in S$, and thus $r\in \mathrm{K}^{\ocircle}\setminus{C}$,
see Figure~\ref{FIG:KASNERMAPSpreimages}.

{\bf Fourth step: $C$ has measure zero.}

We prove that $F=\mathrm{K}^{\ocircle}\setminus{C}$ has full measure $2\pi$,
and hence that $C$ has measure zero.

Define the relative size of the $n^{th}$ removed set of the iterative construction~\eqref{Fn} as
\begin{equation}\label{qdef}
q_n:= \frac{\vert F_n \vert}{ \vert C_n \vert} \in (0,1),
\end{equation}
which is well-defined, since $C_n$ contains the sets $I_n(w_n)$ of positive length.

Consider the following partial sum of the pairwise disjoint sets $F_k$:
\begin{equation}\label{yetanother}
s_n := \sum_{k=0}^n \vert F_k \vert.
\end{equation}
Then
\begin{equation}
\vert F \vert = s_\infty.
\end{equation}
Applying the definition~\eqref{CiterativCn+1} of $C_{n+1}$ recursively leads to
\begin{equation} \label{partialcomp}
\vert C_{n+1}\vert=2\pi - s_n.
\end{equation}
We then use equations~\eqref{qdef}, \eqref{yetanother} and~\eqref{partialcomp} to obtain
\begin{equation}\label{latenight}
s_{n+1} - s_n = \vert F_{n+1} \vert =  \vert C_{n+1} \vert q_{n+1}= (2\pi - s_n) q_{n+1}.
\end{equation}
%
%
Note that the sequence $(s_n)_{n\in \mathbb{N}_0}$ in~\eqref{yetanother} is increasing and bounded
above by $2\pi$ and thereby converges to the limit $\vert F \vert \in
[0,2\pi]$. On the other hand, the sequence  $(q_n)_{n\in \mathbb{N}_0}$ is
bounded, and thus admits converging subsequences $(q_{n_k})_{k\in \mathbb{N}_0}$ with a limit $q$.
Taking the limit of~\eqref{latenight} results in
%
%
%
\begin{equation} \label{S2pi}
0=\vert F \vert - \vert F \vert= (2\pi - \vert F \vert) q.
\end{equation}
Proving that $F$ has full measure corresponds to excluding $q=0$.
We therefore show that $q_n$ is uniformly bounded away from 0.
First, observe that $\vert  F_n\vert=\vert C_{n}\vert -\vert C_{n+1}\vert$,
as follows from~\eqref{CiterativCn+1}, which enables us to rewrite~\eqref{qdef} as
%
%
%
%
\begin{equation} \label{estiqn2}
q_n= 1-\frac{\vert C_{n+1}\vert}{\vert C_{n}\vert}.
\end{equation}
We now show that the quotient $\vert C_{n+1}\vert / \vert C_{n}\vert$
is uniformly bounded away from 1, which follows from the expansion property
of the Kasner circle map in Lemma~\ref{KasnerCircMapEXP}.
Note that $C_{n+1}\subseteq C_2$ for every $n\geq 1$, and that
$\mathcal{K}(C_{n+1})\subseteq C_n$, from which it follows that
\begin{equation} \label{ineqquot}
\min_{p\in C_{2}} \vert D\mathcal{K}(p) \vert \cdot  \vert
C_{n+1} \vert \leq\min_{p\in C_{n+1}} \vert D\mathcal{K}(p) \vert \cdot  \vert C_{n+1} \vert \leq \vert C_n \vert,
\end{equation}
or, equivalently,
\begin{equation}\label{yetanotherineq}
\frac{\vert C_{n+1}\vert}{\vert C_{n}\vert} \leq \frac{1}{\mathrm{min}_{p\in C_{n+1}}
\vert D\mathcal{K}(p) \vert } \leq \frac{1}{ \mathrm{min}_{p\in C_{2}} \vert D\mathcal{K}(p)  \vert }=\nu <1,
\end{equation}
where the last inequality was shown in connection with equation~\eqref{ineq2}.
Moreover, $F_1$ removes a whole neighborhood of the tangential points,
as follows from~\eqref{Fn}, see Figure~\ref{FIG:KASNERMAPSpreimages}.
Since $\vert C_{n+1}\vert / \vert C_{n}\vert$
is uniformly bounded away from 1, it follows that $q_n$ is uniformly bounded away from 0,
and hence any converging subsequence of $(q_n)_{n\in \mathbb{N}_0}$ has a limit $q>0$.

{\bf Fifth step: Bounds on the Hausdorff dimension of $C$.}

The bounds~\eqref{HausCantorBounds} follow from Proposition 6 in~\cite{Pesin96},
which we simplify and adapt to our situation and notation.
\begin{proposition}\label{Prop:pesin}
Consider the iterative construction of $C$ given by $C_n$ in~\eqref{Fn}, with
connected components $I(w_n)$ in~\eqref{defofIN} for $w_n\in W_n$.
Suppose there are closed sets $I_*(w_n)$ and $I^*(w_n)$ of Lebesgue measure
$|I_*(w_n)|=c/\lambda^{*}$ and $|I^*(w_n)|=c/\lambda_*$ for some
$c\in\mathbb{R}_+$ and $0<1/\lambda^*\leq 1/ \lambda_*<1$ such that
$I_*(w_n)\subseteq I(w_n) \subseteq I^*(w_n)$
where the sets $\mathrm{int} (I_*(w_n))$ and $\mathrm{int} (I_*(\tilde{w}_n))$
are disjoint for different words $w_n\neq \tilde{w}_n$.
Then,
\begin{equation}
\dfrac{\log(2)}{\log(\lambda^*)}\leq \dim_{H} (C) \leq \dfrac{\log(2)}{\log(\lambda_*)}.
\end{equation}
\end{proposition}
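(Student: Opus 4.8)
This is a Moran-type dimension estimate, and the plan is to split it into the upper bound $\dim_H(C)\le\log(2)/\log(\lambda_*)$ and the lower bound $\dim_H(C)\ge\log(2)/\log(\lambda^*)$, each handled by a classical technique — an economical covering for the former, a mass distribution for the latter — exactly as in the proof of Proposition~6 of~\cite{Pesin96}. Write $s_\pm$ for the two exponents, so that $(\lambda_*)^{s_+}=(\lambda^*)^{s_-}=2$ and, since $1<\lambda_*\le\lambda^*$, $0<s_-\le s_+$. The inputs I would invoke are: $C=\bigcap_n C_n$ with $C_n=\bigsqcup_{w_n\in W_n}I(w_n)$ and $|W_n|=3\cdot2^{n-1}$ (Lemmata~\ref{CantorEq} and~\ref{Lemma2onIN}); the nesting $I(w_{n+1})\subseteq I(w_n)$ together with the fact that $\mathcal{K}\colon I(a_0\ldots a_{n-1})\to I(a_1\ldots a_{n-1})$ is a diffeomorphism (Lemma~\ref{Lemma1onIN}); and the sandwich hypothesis, which I read level-by-level as $|I_*(w_n)|=c(\lambda^*)^{-n}$, $|I^*(w_n)|=c(\lambda_*)^{-n}$, with $\operatorname{int}I_*(w_n)\cap\operatorname{int}I_*(\tilde w_n)=\emptyset$ for $w_n\neq\tilde w_n$.

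For the upper bound, the family $\{I^*(w_n)\}_{w_n\in W_n}$ covers $C$ for every $n$, because $C\subseteq C_n=\bigcup_{w_n}I(w_n)\subseteq\bigcup_{w_n}I^*(w_n)$, and $\operatorname{diam}I^*(w_n)\le|I^*(w_n)|=c(\lambda_*)^{-n}\to0$ uniformly in $w_n$. Hence, taking $d=s_+$,
\begin{equation*}
\mu^{d}(C)\ \le\ \liminf_{n\to\infty}\ \sum_{w_n\in W_n}\bigl(\operatorname{diam}I^*(w_n)\bigr)^{d}\ \le\ \liminf_{n\to\infty}\ 3\cdot2^{n-1}\bigl(c(\lambda_*)^{-n}\bigr)^{d}\ =\ \tfrac{3}{2}c^{d}\ <\ \infty,
\end{equation*}
because $(\lambda_*)^{d}=2$. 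Thus $\dim_H(C)\le s_+$; when $s_+>1$ this is vacuous and one uses instead $\dim_H(C)\le1$ since $C\subseteq\mathrm{K}^{\ocircle}$, which is why~\eqref{HausCantorBounds} carries the $\min\{1,\cdot\}$.

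For the lower bound I would put a mass distribution on $C$: the Borel probability measure $\mu$ obtained by Carathéodory extension from the nested cylinders $I(w_n)$, with mass $\tfrac13$ on each $I(w_1)$ split equally at the two branchings, so that $\mu(I(w_n))=\tfrac23\,2^{-n}$. By the mass distribution principle~\cite{Falc13} it suffices to show $\mu(B(p,r))\le Kr^{s_-}$ for all $p\in C$ and small $r$. Given $p$ and $r$, let $n$ be the largest level at which $B(p,r)$ meets only a single component of $C_n$ (necessarily $I(w_n)\ni p$); then $\mu(B(p,r))\le\tfrac23\,2^{-n}$, so one needs $2^{-n}\lesssim r^{s_-}$. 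At level $n+1$ the ball meets at least two components, which are forced to be the two children of $I(w_n)$, so $2r$ exceeds the length of the arc separating them; this separating arc is the $\mathcal{K}^{n-1}$-preimage inside $I(w_n)$ of a removed component of $F_1$, which has a fixed length $\delta_0>0$ by the permutation symmetry~\eqref{permSYM}. Since $\mathcal{K}$ expands by at most $\lambda^*=M$ along orbits lying in $C$ (Lemma~\ref{MaxMinExpansion}), one expects this preimage to have length $\asymp\delta_0(\lambda^*)^{-n}$, whence $(\lambda^*)^{-n}\lesssim r$ and, using $(\lambda^*)^{s_-}=2$, $2^{-n}=\bigl((\lambda^*)^{-n}\bigr)^{s_-}\lesssim r^{s_-}$. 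This is exactly the distortion bookkeeping packaged by the inscribed/circumscribed sets $I_*(w_n)$, $I^*(w_n)$: constructing them with the stated sizes and disjointness and then quoting Proposition~6 of~\cite{Pesin96} is the route I would take to make this rigorous.

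The delicate step is the separation estimate in the lower bound. The expansion of $\mathcal{K}$ is non-uniform — $|D\mathcal{K}|$ decays to $1$ at the tangential points (Lemma~\ref{KasnerCircMapEXP}, Figure~\ref{fig:plotg}) — so the diameters of level-$n$ components and the gaps between them are not comparable from level to level (their ratio degrades geometrically in $n$), and the sharp rate $\lambda^*=M=\max_C|D\mathcal{K}|$ is strictly smaller than the crude pointwise bound $(1+v)/(1-v)$ available on all of $\bigcup_\alpha A_\alpha$; using the latter would only give the weaker exponent $\log(2)/\log\bigl((1+v)/(1-v)\bigr)$. Recovering $s_-=\log(2)/\log M$ therefore requires restricting the distortion estimates to orbit segments that stay in $C$ rather than merely in $C_1$ or $C_2$ — which, since $C$ is a Cantor set, is precisely the technical content of Pesin's Moran construction. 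The main work of the proof is thus the explicit construction of the sandwich sets $I_*(w_n)\subseteq I(w_n)\subseteq I^*(w_n)$ of the claimed sizes; once they are in place, the two estimates above close the argument.
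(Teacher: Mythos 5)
Your proposal does not really diverge from the paper in substance — it \emph{supplies} the proof that the paper outsources. The paper's treatment of Proposition~\ref{Prop:pesin} is a citation to Proposition~6 of~\cite{Pesin96}, followed by a non-rigorous scaling heuristic ($\mu^d(C\cap A_\alpha)=2\delta^d\mu^d(C\cap A_\alpha)$ with $M^{-1}\le\delta\le m^{-1}$, hence $2M^{-d}\le1\le2m^{-d}$), and an explicit admission that making this rigorous requires checking $\mu^{d}(C\cap A_\alpha)\in(0,\infty)$, which is again deferred to~\cite{Pesin96}. You instead sketch the standard Moran estimate: an economical covering by the circumscribed sets $I^*(w_n)$ (with $|W_n|=3\cdot2^{n-1}$ cylinders of diameter $\lesssim(\lambda_*)^{-n}$, giving $\mu^{s_+}(C)<\infty$ since $(\lambda_*)^{s_+}=2$) for the upper bound, and a mass distribution $\mu(I(w_n))=\tfrac23\,2^{-n}$ for the lower bound reduced to a separation estimate. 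Both steps are correct in outline, and you accurately pinpoint the real technical bottleneck: because $|D\mathcal{K}|\to1$ at the tangential points, the crude expansion bound $(1+v)/(1-v)$ on $\bigcup_\alpha A_\alpha$ would only give the exponent $\log2/\log\bigl((1+v)/(1-v)\bigr)$, and recovering the sharper $s_-=\log2/\log M$ with $M=\max_C|D\mathcal{K}|$ requires distortion control along orbits confined to $C$ — which is exactly what the sandwich sets $I_*(w_n)\subseteq I(w_n)\subseteq I^*(w_n)$ encode, and which you (like the paper) leave to Pesin. One minor reading note: the proposition states $|I_*(w_n)|=c/\lambda^*$, $|I^*(w_n)|=c/\lambda_*$ where, in the paper's subsequent application, $c=|I(w_{n-1})|$ varies with the parent cylinder; you read it as the telescoped form $|I^*(w_n)|=c(\lambda_*)^{-n}$ with fixed $c$. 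The two are equivalent after iterating the nesting, but the per-level form is what the proposition literally says and what matches Pesin's Moran-like constructions (bounded, not constant, ratios from level to level); it does not change either bound. In short: you take essentially the same route as the source both you and the paper rely on, but you make visible the covering and mass-distribution mechanics that the paper compresses into a citation.
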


Recall that $C_{n}$ in~\eqref{Fn} is obtained by a non-uniform contraction of $C_{n-1}$
with a contraction rate given by the inverse of the expansion rate~\eqref{KasnerCircPrime}.
We exclude the case $n=1$, which only divides $\mathrm{K}^\ocircle$ into the three physically
equivalent arcs $A_\alpha$ and $C$ into three identical parts with the same dimension,
one in each arc, see Figure~\ref{FIG:BIF}.
Then the following sets satisfy the above hypothesis: for $n>1$ let $I_*(w_n)$
and $I^*(w_n)$ be uniform contractions of the set $I(w_{n-1})$ with respective contraction
rates being the inverse of the expansion rates given by
$\lambda^* := M= \max_{p\in C} |D\mathcal{K}(p)|$ and
$\lambda_* := m =\min_{p\in C} |D\mathcal{K}(p)|$ so that $c := |I(w_{n-1})|$.
Disjointness follows from the proof of Lemma~\ref{Lemma2onIN}, which showed that
$I(w_n)$ and $I(w_k)$ are disjoint, while Lemma~\ref{MaxMinExpansion} gave the desired
bounds $M$ and $m<M$. Note also that $C$ lies within $\mathrm{K}^{\ocircle}$ and contains
no interval, and thus that its Hausdorff dimension has to be less than 1.

Although the bounds~\eqref{HausCantorBounds} now have been proven,
it is useful to provide an intuitive non-rigorous reasoning of
this proof: our Cantor set lies between two standard Cantor
sets with removed sets being uniformly scaled by the inverse of the minimum
and maximum expansion of the Kasner map in~\eqref{KasnerCircPrime}.

The Cantor set $C$ can be divided into three
identical parts: the intersection of $C$ with each arc $A_\alpha$
for $\alpha=1,2,3$, which are the three connected
components of the first iterate $C_1$ in the construction~\eqref{Fn} of $C$.
Since those three sets are disjoint,
\begin{equation}
d_H(C)=d_H(C \cap A_\alpha).
\end{equation}
%
After such a first iterate, the construction is similar to the standard
Cantor set in an interval: three parts of each arc $A_\alpha$ are
removed, yielding two remaining
subarcs, which are the two connected components of $C_2\cap A_\alpha$.
The left and right parts of $C$ in the two connected components of
$C_2\cap A_\alpha$ are called $C^L$ and $C^R$, in analogy with the ternary
Cantor set argument in~\eqref{TernaryCdim}, see Figure~\ref{fig:intro_Cantor}. Then,
\begin{equation}\label{HausC}
\mu^d(C \cap A_\alpha)=\mu^d(C^L)+\mu^d(C^R)=2 \delta^{d}\mu^d(C \cap A_\alpha),
\end{equation}
where the first equality holds since the sets $C^L$ and $C^R$ are disjoint; the
second since such sets have the same measure and are obtained by
contracting $C \cap A_\alpha$ with a factor $\delta<1$, which is the
inverse of the expansion of the set $C^L$ according to the Kasner circle map,
scaled with the power of the dimension $d$.

Note that the contraction rate $\delta<1$ is not uniform, since the expansion
of the Kasner circle map is not uniform. Moreover, each iteration has
a different contraction rate given
by $|C_{k+1}|/|C_{k}|$. We therefore obtain the following bounds:
\begin{equation}\label{HausCbounds}
2 M^{-d}\mu^d(C \cap A_\alpha) \leq \mu^d(C \cap A_\alpha)\leq 2 m^{-d}\mu^d(C \cap A_\alpha).
\end{equation}

If $\mu^d(C\cap A_\alpha )\neq 0$ and $\infty$ for some $d\geq 0$,
which we refrain from proving, we obtain
\begin{equation}\label{HausCbounds2}
2 M^{-d}\leq 1\leq 2 m^{-d},
\end{equation}
where the logarithm implies the desired bounds~\eqref{HausCantorBounds}.
There remains to show that $\mu^{d_M}(C \cap A_\alpha)<\infty$
and $\mu^{d^*}(C^i_1\cap C)\geq \epsilon >0$ in order to make the
above proof rigorous. This, however, follows in a similar manner as
for the usual ternary Cantor set, or, alternatively, see~\cite{Pesin96}.

{\bf Sixth step: Chaoticity of $\mathcal{K}$ on  $C$.}

To determine chaoticity of $\mathcal{K}$ on the Cantor
set $C$, we establish topological conjugacy with the \emph{shift map},
$\sigma:W_\infty\to W_\infty$, which shifts a sequence to the right, i.e.,
$\sigma (a_0a_1a_2\dots) := a_1a_2a_3\dots$, since the shift map is
well-known to be chaotic, see Chapter 1.6 in~\cite{Dev03}.
Note that $W_\infty$ is a subspace of all possible infinite sequences, as two adjacent symbols can not coincide, and thus $\sigma$ is a one-sided subshift of finite type.
To accomplish this we construct an \emph{encoding map}, which is a homeomorphism
$h:C\to W_\infty$ such that $h\circ \mathcal{K}=\sigma\circ h$, i.e., we need to
establish the following commutative diagram:
\begin{equation}\label{diagramCONJ}
\xymatrix{
C \ar[r]^{\mathcal{K}} \ar[d]_h  & C \ar[d]^h \\
W_\infty \ar[r]_\sigma & W_\infty }
\end{equation}

If such a map $h$ exists, we say that the discrete dynamical systems $\mathcal{K}$ and
$\sigma$ are \emph{topologically conjugate}. Note that the dynamics of $\mathcal{K}$
and $h$ are equivalent, since $\mathcal{K}=h^{-1}\circ\sigma\circ h$, and hence
fixed points and periodic heteroclinic chains can be translated from one system to the other,
see Chapter 1.7 in~\cite{Dev03}.

We construct the map $h$ so that it encodes each point $p\in C$
into an infinite sequence of three symbols $1,2,3$ without consecutive repetitions,
which accounts for the arcs $A_1,A_2,A_3$ the iterations of $\mathcal{K}^n(p)$ visits, i.e.,
\begin{equation}\label{defofh}
\begin{array}{llll}
h: & C &\rightarrow & W_\infty \\
    & p & \mapsto & h(p) := w_\infty=(a_k)_{k\in \mathbb{N}_0},
\end{array}
\end{equation}
where for each $k$, we define $a_k$ by $\mathcal{K}^k(p)\in A_{a_k}$.
Note that $W_\infty$ is the alphabet of words of infinite length, i.e., the
set~\eqref{defword} when $n=\infty$, and that periodic heteroclinic chains yield
infinite periodic sequences.

Following the heteroclinic orbit that takes $p$ to $\mathcal{K}(p)$ corresponds to a shift
to the right given by $\sigma (a_0a_1a_2\dots):=a_1a_2a_3\dots\,$. In other words,
the diagram in~\eqref{diagramCONJ} commutes. However,
we also have to show that $h$ is bijective, continuous, and that $h^{-1}$ is also continuous.
This follows from the definition of $I(w_n)$ in~\eqref{defofIN} and its properties given
in Lemma~\ref{Lemma1onIN}, as shown next.

The map $h$ is bijective since for any sequence $w_\infty\in W_\infty $
there is a unique point $p\in C$ such that $h(p)=w_\infty$.
This point is $p=\bigcap_{n\in \mathbb{N}_0} I(w_n)$, as in~\eqref{intersec},
where $w_n$ is the $n^{th}$ truncation of the infinite word $w_\infty$. 

The map $h$ is continuous at any point $p\in C$, since the
neighborhood $I(w_n)\cap C$ of $p$, for any $n\in \mathbb{N}_0$, 
only contains points $q\in C$ whose corresponding sequences of symbols $h(q)$
coincide with $h(p)$ for the first $n$ symbols. 

The map $h^{-1}$ is also continuous. For any $\varepsilon>0$, there
is an $n\in \mathbb{N}_0$ such that any two given sequences
$w_\infty,\tilde{w}_\infty\in W_\infty$ for which the first $n$ symbols coincide,
both $h^{-1}(w_\infty)$ and $h^{-1}(\tilde{w}_\infty)$ are in
$I(w_n)=I(\tilde{w}_n)$ with $\vert I(w_n)\vert<\varepsilon$,
due to~\eqref{INshrinks} in Lemma~\ref{Lemma1onIN}. 

Note that the above proof does not carry over to the critical case with $v=1/2$
since the map $h$ in~\eqref{defofh} only encodes the Cantor set $C$, i.e., it
does not encode the set $S$, which includes the tangential points and
the Taub points. To deal with $v \in (1/2,1)$ and $v=1/2$ in a unified manner,
we make an appropriate modification in Appendix~\ref{app:unifying}.

\section{Subcritical case}\label{sec:sub}

In the subcritical case, $v\in (0,1/2)$, each point in the set $\mathrm{K}^\ocircle$
admits at least one unstable direction and hence the following Lemma holds:
\begin{lemma}\label{SupercritHets}
Every point in the set $\mathrm{K}^{\ocircle}$ admits at least one infinite heteroclinic chain.
\end{lemma}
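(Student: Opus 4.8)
The plan is to show that in the subcritical regime every point of $\mathrm{K}^{\ocircle}$ retains a genuinely unstable direction, so that one can keep concatenating Bianchi type II heteroclinic orbits forever. The only step that needs an argument is to upgrade the covering statement of item~(i) in Section~\ref{sec:dynsysanalysis} from the \emph{closed} arcs to the \emph{open} ones: for $v\in(0,1/2)$ one has $\mathrm{K}^{\ocircle}=\mathrm{int}(A_1)\cup\mathrm{int}(A_2)\cup\mathrm{int}(A_3)$, equivalently $\min_\alpha\Sigma_\alpha(p)<-2v$ for every $p\in\mathrm{K}^{\ocircle}$. I would prove this by parametrizing $\mathrm{K}^{\ocircle}$ with the angle $\varphi$ of~\eqref{polar}; using~\eqref{Misner}, the function $\varphi\mapsto\min_\alpha\Sigma_\alpha$ is continuous, $2\pi/3$-periodic under the axis permutations~\eqref{permSYM}, equals $-2$ at the points $\mathrm{Q}_\alpha$ and $-1$ at the Taub points $\mathrm{T}_\alpha$, and is piecewise of the form $-2\cos\varphi$ or $\cos\varphi\pm\sqrt{3}\,\sin\varphi$ between consecutive such points; a one-line check then shows that its maximum over $\mathrm{K}^{\ocircle}$ is $-1$, attained exactly at the three Taub points. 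Since $-2v>-1$ for $v<1/2$, this gives $\min_\alpha\Sigma_\alpha<-2v$ everywhere. (Alternatively this is already contained in item~(i): the Taub points $\mathrm{T}_\alpha$ lie in $\mathrm{int}(A_\beta\cap A_\gamma)$, while the arc endpoints $\mathrm{t}_{\beta\gamma}$ lie in $\mathrm{int}(A_\gamma)$ by the computation following~\eqref{tangTalpha}.)

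Second, I would note that at every point the associated type II orbit is nontrivial. For $p\in\mathrm{int}(A_a)$ the expansion factor $g(p)$ of~\eqref{galpha} is finite --- on $A_a$ one has $1+v^2+\Sigma_a v\ge(1-v)^2>0$, hence $1\le g(p)\le\frac{1+v}{1-v}$ --- and moreover $g(p)>1$, since $g=1$ occurs only on $\partial A_a$ (Figure~\ref{fig:plotg}, or directly from~\eqref{DKg}). Therefore the Bianchi type II solution in the hemisphere $\mathrm{II}_a$ with $\alpha$-limit $p$, given explicitly by~\eqref{BIIsol}, \eqref{g1i} and~\eqref{ConstANDgalphaNEW}, is a non-constant heteroclinic orbit, whose $\omega$-limit $p':=g(p)\,p+(g(p)-1)\mathrm{T}_a/v$, as in~\eqref{KasnerCirc}, is again a point of $\mathrm{K}^{\ocircle}$.

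Finally, the chain is built by induction. Given $p_0\in\mathrm{K}^{\ocircle}$, choose $a_0$ with $p_0\in\mathrm{int}(A_{a_0})$ (first paragraph), let $p_1$ be the $\omega$-limit of the corresponding nontrivial type II orbit (second paragraph), choose $a_1$ with $p_1\in\mathrm{int}(A_{a_1})$, let $p_2$ be the $\omega$-limit of the next orbit, and so on. The resulting concatenation $p_0\to p_1\to p_2\to\cdots$ is an infinite Bianchi type II heteroclinic chain having $p_0$ as its initial $\alpha$-limit, which is exactly the assertion of Lemma~\ref{SupercritHets}. (If one wishes, one can add that consecutive orbits necessarily lie in different hemispheres, $a_{k+1}\neq a_k$: viewed from the auxiliary vertex $\mathrm{Q}_{a_k}/v$ the orbit in $\mathrm{II}_{a_k}$ leaves the Kasner circle on the far arc, so $p_{k+1}\notin\mathrm{int}(A_{a_k})$; this is not needed for the statement.) The main obstacle, such as it is, is entirely concentrated in the first step --- verifying that the open, not merely the closed, arcs cover $\mathrm{K}^{\ocircle}$ when $v<1/2$, which is what guarantees that the chain can never stall at a point carrying only a trivial type II orbit; the rest is immediate from the explicit description of the type II orbits and of the Kasner circle map obtained in Section~\ref{sec:dynsysanalysis}.
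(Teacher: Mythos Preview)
Your proposal is correct and takes essentially the same approach as the paper. The paper's argument is the single sentence following the lemma---that in the subcritical case every point of $\mathrm{K}^{\ocircle}$ carries at least one (and points in $\mathrm{int}(A_\beta\cap A_\gamma)$ carry two) unstable $N_\alpha$-directions---and you have simply filled in the details behind this: the $\max$--$\min$ computation showing the open arcs cover, the non-triviality of each step via $g>1$, and the inductive concatenation, all of which are implicit in Section~\ref{sec:dynsysanalysis}.
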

More precisely, all points in $\mathrm{K}^\ocircle \setminus \mathrm{int}(A_\alpha\cap A_\beta)$
have one unstable direction, whereas points in $ \mathrm{int}(A_\alpha\cap A_\beta)$
have two unstable directions, see Figure~\ref{FIG:BIF}. A point
within $\mathrm{int}(A_\alpha\cap A_\beta)$ thereby admits two different
heteroclinic connections on the hemispheres $\mathrm{II}_\alpha$ and
$\mathrm{II}_\beta$ given by \eqref{typeIIhemisphere}, which induces
a multivalued Kasner circle map $\mathcal{K}$,
see Figure~\ref{FIG:KASNERMAPS}. To deal with this situation
we interpret $\mathcal{K}$ as a collection of maps 
on the Kasner circle $\mathrm{K}^\ocircle$, i.e., we will reformulate $\mathcal{K}$ as
a so-called expansive iterated function system.


Recall that the Kasner circle map~\eqref{KasnerCirc} is expanding
due to equation~\eqref{KasnerCircPrime}. 
However, the usual definition of an \emph{iterated function system} (IFS)
is based on a family of contractions
in a metric space $X$, i.e.,
$\mathcal{F}:=\{ f_i: X\to X \text{ $|$ } i=1,...,N, f_i\in C^1 \text{ and } |f_i'|<1\}$.
According to~\cite{Hutch81}, there exists a unique
nonempty compact set $\mathcal{A}\subseteq X$ called the \emph{attractor} of
$\mathcal{F}$, which satisfies $\mathcal{A}=\overline{\cup_{i=1}^N f_i(\mathcal{A})}$.

An example is the ternary Cantor set $T$, iteratively constructed in~\eqref{defternary},
which can be seen as the attractor of the IFS given by $\{ f_L,f_R:[0,1]\to [0,1]\}$,
where the left and right maps are $f_L(x):=x/3$ and $f_R(x):=x/3+2/3$, respectively.
Then the $n^{th}$-step of the construction $T_n$ consists of the union
of all its connected components given by the image
$f_{i_n}\circ ...  \circ f_{i_1}([0,1])$ for some ${i_1},...,{i_n}\in \{L,R\}$,
see Figure \ref{figure4}.

Fewer efforts have been made in understanding families that are not contractions,
although see the construction of Koch curves using expansions in~\cite{PruSan88}
and the more recent work~\cite{Van09}. Both these investigations focus on
generating patterns occurring outside fractal sets and
understanding iterates, which in a non-compact space escape to infinity.
Since we are dealing with expansive iterates of a compact set, the Kasner
circle $\mathrm{K}^\ocircle$, we propose a theory
of expansive iterated function system (eIFS) on compact metric
spaces\footnote{Alternatively, one can consider the usual physical time
direction (i.e., the reverse of the present time direction), for which the
Kasner map becomes a contraction almost everywhere, and seek an attractor and its properties for a non-hyperbolic IFS, see~\cite{LappicyDaniel,ArJuSa17,MatDiaz}
and references therein. 
}.
We define an \emph{expansive iterated function system} (eIFS)
as a family of expansions in a compact metric space $X$,
\begin{equation}\label{eIFS}
\mathcal{F} := \left\{ f_i: X\to X \Bigm| i=1,...,N, \hspace{0.3cm}
\begin{array}{c}
f_i\in C^1 \text{ almost everywhere, } \\
|f_i'|>1 \text{ on dense open sets}
\end{array}
\right\}.
\end{equation}
In the spirit of~\cite{Hutch81}, we define the iterates of $\mathcal{F} $
by the Hutchinson operator:
\begin{equation}
\mathcal{F}^n(p):=\bigcup_{i_1,...,i_n\in \{1,...,N\}} f_{i_n}\circ ...\circ f_{i_1}(p),
\end{equation}
where the $n^{th}$ iterate yields a set consisting of at most $N^n$ points,
since the Hutchinson operator $\mathcal{F}^n(p)$ is defined as
the union over all possible iterates.

We now consider the Kasner circle map~\eqref{KasnerCirc} as an expansive
iterated function system and state a conjecture regarding its dynamics.
The \emph{Kasner circle eIFS} is defined as a collection of eight maps as follows:
\begin{equation}\label{KasnerIFS}
\mathcal{K}:=\{ {\cal K}_{\mu\nu\zeta}(p):\mathrm{K}^{\ocircle}\to
\mathrm{K}^{\ocircle} \text{ $ | $ } \mu=1,2 ; \nu=1,3 ; \zeta=2,3\},
\end{equation}
where each individual map is given by
\begin{equation}\label{KasnerIFSeachmap}
{\cal K}_{\mu\nu\zeta}(p):=
\begin{cases}
f_1(p) & \text{ for } p  \in A_1\backslash \{ (A_1\cap A_2 ) \cup (A_1\cap A_3 )\}\\
f_2(p) & \text{ for } p  \in A_2\backslash \{ (A_2\cap A_1 ) \cup (A_2\cap A_3 )\}\\
f_3(p) & \text{ for } p  \in A_3\backslash \{ (A_3\cap A_1 ) \cup (A_3\cap A_2 )\}\\
f_\mu(p) & \text{ for } p  \in A_1\cap A_2 \\
f_\nu(p) & \text{ for } p  \in A_2\cap A_3 \\
f_\zeta(p) & \text{ for } p  \in A_1\cap A_3
\end{cases}
\end{equation}
where $f_* (p):=g(p)p + \left(g(p)-1\right)\mathrm{T}_*/v$ such that the symbol
$*$ is to be replaced by $1,2,3$ or $\mu\in \{1,2\},\nu\in \{2,3\},\zeta\in \{1,3\}$.

Any point that is not in the overlap regions, e.g.
$p \in A_1\backslash \{ (A_1\cap A_2 ) \cup (A_1\cap A_3 )\}$,
has the same image under all maps ${\cal K}_{\mu\nu\zeta}(p)$,
independently of the indices $\mu,\nu,\zeta$. On the other hand, points
in the overlap regions, e.g. $p \in A_1\cap A_2$, have
two different maps with different images: ${\cal K}_{1\nu\zeta}(p)$
and ${\cal K}_{2\nu\zeta}(p)$, independently of the indices $\nu$ and $\zeta$.
This combinatorial problem of choosing between two maps for each of
the three overlapping arcs yields the eight maps in~\eqref{KasnerIFS}.

Due to~\eqref{KasnerCirc}, each map~\eqref{KasnerIFSeachmap} is expanding
and $C^1$ everywhere in $\mathrm{K}^\ocircle$,
except at certain tangential boundary points
$\partial (A_\alpha \cap A_\beta)$ where the derivative
is one. Nevertheless, if ${\cal K}_{\mu\nu\zeta}(p)$ is
discontinuous at such a tangential point, there is another
${\cal K}_{\mu'\nu'\zeta'}(p)$ that is both $C^1$ and strictly expanding
at this point.

The iterates of the Kasner eIFS are given by its Hutchinson operator:
\begin{equation}\label{KasnerHutchinson}
\mathcal{K}^n(p):=
\bigcup_{\substack{\mu_k=1,2 ; \hspace{0.1cm} \nu_k=2,3 ;
\hspace{0.1cm} \zeta_k=1,3\\ \text{for }  k=1,...,n}}
{\cal K}_{\mu_n\nu_n\zeta_n}\circ ...\circ {\cal K}_{\mu_1\nu_1\zeta_1}(p).
\end{equation}

This allows us to formulate the following conjecture:
\begin{conjecture}\label{conj:eIFSchaos}
The Kasner circle eIFS is chaotic when $v\in (0,1/2)$.
\end{conjecture}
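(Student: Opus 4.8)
The natural route to Conjecture~\ref{conj:eIFSchaos} is to run the six-step scheme of Section~\ref{subsec:proofCantor} for a single, well-chosen branch selection of the Kasner eIFS. The first observation is a reduction: if $\widehat{\mathcal{K}}\colon\mathrm{K}^{\ocircle}\to\mathrm{K}^{\ocircle}$ is any genuine (single-valued) selection of~\eqref{KasnerIFSeachmap}, then $\mathcal{K}^n(U)\supseteq\widehat{\mathcal{K}}^{\,n}(U)$ for the Hutchinson iterate~\eqref{KasnerHutchinson} and every set $U$, and every periodic orbit of $\widehat{\mathcal{K}}$ is a periodic orbit of $\mathcal{K}$; hence topological mixing and density of periodic points for \emph{one} selection already yield chaoticity of the eIFS. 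The second ingredient is the branch geometry read off from~\eqref{KasnerCirc}, \eqref{galpha} and the identity $\mathcal{K}(\mathrm{Q}_\alpha)=\mathrm{T}_\alpha$ (a two-line computation using $g(\mathrm{Q}_\alpha)=(1+v)/(1-v)$): each branch $f_\alpha$ maps $A_\alpha$ homeomorphically onto $\overline{\mathrm{K}^{\ocircle}\setminus\mathrm{int}(A_\alpha)}$, is strictly expanding on $\mathrm{int}(A_\alpha)$ (where $g>1$) and fixes the two tangential points $\partial A_\alpha$ (where $g=1$). In particular $f_\alpha(\mathrm{int}(A_\alpha))\cap\mathrm{int}(A_\alpha)=\varnothing$, so no orbit applies the $\alpha$-branch twice consecutively, and since $A_1\cup A_2\cup A_3=\mathrm{K}^{\ocircle}$ the selected map $\widehat{\mathcal{K}}$ is ``full up to the overlaps''. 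I would choose $\widehat{\mathcal{K}}$ so that the period-3 heteroclinic cycle of Proposition~\ref{lem:Period3} is one of its orbits.

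With this selection fixed, I would repeat the symbolic construction essentially verbatim: for an admissible word $w_n=a_0\ldots a_{n-1}$ define the closed arc $I(w_n)=\bigcap_{k<n}\widehat{\mathcal{K}}^{-k}(A_{a_k})$ as in~\eqref{defofIN}, show as in Lemmata~\ref{Lemma2onIN} and~\ref{Lemma1onIN} that the $I(w_n)$ are nested and exhaust $\mathrm{K}^{\ocircle}$, and build the encoding map $h\colon\mathrm{K}^{\ocircle}\to\Omega$, $h(p)=(a_k)_k$ with $\widehat{\mathcal{K}}^k(p)\in A_{a_k}$, into the space $\Omega\subseteq\{1,2,3\}^{\mathbb{N}_0}$ of admissible itineraries, establishing the conjugacy square $h\circ\widehat{\mathcal{K}}=\sigma\circ h$ as in the sixth step of Section~\ref{subsec:proofCantor}. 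One then has to identify $\Omega$: the structural fact above, together with a Markov-partition induction like the one behind Lemma~\ref{Lemma2onIN}, should show that every no-immediate-repetition word is admissible (or, at worst, that $\Omega$ is a mixing sofic shift); in either case $\sigma|_\Omega$ is topologically mixing with dense periodic points, and transferring these two properties through $h$ gives chaoticity of $\widehat{\mathcal{K}}$, hence of the eIFS. The period-3 orbit of $\widehat{\mathcal{K}}$, after cutting $\mathrm{K}^{\ocircle}$ at one of the tangential fixed points, additionally feeds ``period 3 implies chaos'' (Sharkovsky~\cite{Shark64}, Li--Yorke~\cite{LiYorke75}), yielding periodic orbits of every period.

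The analytic heart is the shrinking of the cylinders, $\mathrm{diam}(I(w_n))\to0$, which is what makes $h$ a homeomorphism rather than a mere semiconjugacy. Away from the tangential points this is the argument of Lemma~\ref{Lemma1onIN}, with the analogue of $C_2$ supplying a uniform expansion bound $>1$. The point where the subcritical case differs from both the supercritical and the critical ones is the behaviour near a tangential point $\mathrm{t}\in\partial A_\alpha$: there $g=1$ for the $\alpha$-branch, so $\mathrm{t}$ is a \emph{parabolic} (neutral) fixed point, just as for the Farey map~\eqref{Farey}. The saving grace, special to $v\in(0,1/2)$, is that $\mathrm{t}$ then lies in $\mathrm{int}(A_\beta)$ for some $\beta\neq\alpha$ (two distinct arcs can share a tangential point only at $v=1/2$, where it is a Taub point), so the $\beta$-branch is strictly expanding at $\mathrm{t}$; since an orbit cannot use the $\alpha$-branch twice in a row, any orbit lingering near $\mathrm{t}$ picks up a factor bounded below by $g_\beta(\mathrm{t})>1$ over every two consecutive steps, and therefore cannot linger. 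This gives geometric shrinking of every cylinder except the countably many abutting the at most three tangential points that any selection is forced to fix (their basins being countable because $\widehat{\mathcal{K}}$ is locally injective); since topological mixing and density of periodic points are insensitive to a countable exceptional set, the transfer through $h$ still goes through.

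The main obstacle, and presumably the reason the statement is left as a conjecture, is that there is no invariant Cantor set here to anchor a hyperbolic argument: the relevant invariant set is all of $\mathrm{K}^{\ocircle}$, the dynamics is only non-uniformly expanding, and the map is genuinely multivalued. Turning the heuristics of the previous paragraph into a clean theorem requires (i) a rigorous theory of the eIFS conjugacy in this non-hyperbolic, multivalued setting, (ii) pinning down the admissible subshift $\Omega$ when the arcs $A_\alpha$ overlap, and (iii) controlling the residual collapse of $h$ near the neutral fixed points. A secondary complication is that constructing the selection $\widehat{\mathcal{K}}$ realising the period-3 cycle, and choosing the cut point, may force a case split in $v\in(0,1/2)$ --- for instance around $v=1/\sqrt{7}$, where the upper vertex of the period-3 triangle crosses the boundary of $A_2\cap A_3$.
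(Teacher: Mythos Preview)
The statement is a \emph{Conjecture} in the paper, and the paper gives no proof of it. Immediately after stating Conjecture~\ref{conj:eIFSchaos} the authors write that ``the notion of chaos still has to be further developed for multivalued maps,'' and they offer only two informal suggestions for how one might proceed (via the Hausdorff distance on sets, or by incorporating the choice labels $\mu\nu\zeta$ into the definition of chaos), together with a brief discussion of the extremal selections $\mathcal{K}^n_m$, $\mathcal{K}^n_M$ and of the set $\tilde C$ in~\eqref{defofCtilde} of points whose orbit never enters an overlap. There is thus no ``paper's own proof'' to compare against.

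Your sketch goes considerably further than the paper's discussion and is closest in spirit to their second suggestion. The reduction to a single selection $\widehat{\mathcal{K}}$, the reuse of the symbolic machinery of Section~\ref{subsec:proofCantor}, and the specifically subcritical observation that each tangential point of $A_\alpha$ lies in $\mathrm{int}(A_\beta)$ for some $\beta\neq\alpha$ (so the $\beta$-branch is strictly expanding there) are all reasonable ingredients. You already flag the principal obstacles in your final paragraph --- the non-uniform expansion at the parabolic tangential points, the identification of the admissible subshift when the arcs overlap, and the possible collapse of $h$ --- and these are exactly what keeps the statement a conjecture in the paper. One additional caveat: your opening reduction, that chaoticity of a single selection $\widehat{\mathcal{K}}$ already yields chaoticity of the eIFS because $\mathcal{K}^n(U)\supseteq\widehat{\mathcal{K}}^{\,n}(U)$, presupposes a definition of chaos for the multivalued system, and the paper explicitly says no such definition has been fixed; whether this reduction is legitimate depends on which definition one adopts.
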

%

We expect that the Kasner circle eIFS is chaotic due to the
expanding properties of each map of the Kasner circle
eIFS~\eqref{KasnerIFS}. 
However, the notion of chaos still has to be
further developed for multivalued maps. 
We suggest two different approaches to tackle this problem.
First, one can attempt to generalize the notion of chaotic discrete dynamical systems to
eIFS using the Hausdorff distance between sets, since the image under
the Hutchinson operator of the Kasner map in~\eqref{KasnerHutchinson} is a set of points.
Second, one can try to incorporate different
symbols $\mu\nu\zeta$ in the definition of chaos, and require that
topological mixing occurs for some, for generic,
or for all symbols $\mu\nu\zeta$.
Roughly speaking, this means that there are chaotic realizations of the eIFS.
Such a realization of chaos for some symbols $\mu\nu\zeta$ has been achieved for the Kasner multivalued map in \cite{LappicyDaniel}.

We also expect that there are two special iterations in the Hutchinson
operator~\eqref{KasnerHutchinson} which dictate the chaotic dynamics.
Whenever some iterate of a point $p$ reaches the overlap $A_\alpha\cap A_\beta$,
there are two choices of maps: one corresponding to orbits originating
from the auxiliary point $\mathrm{Q}_\alpha /v$ and one from $\mathrm{Q}_\beta /v$.
Consider the iteration
${\cal K}_{\mu_n\nu_n\zeta_n}\circ ...\circ {\cal K}_{\mu_1\nu_1\zeta_1}(p)$,
related to a symbolic sequence
$(\mu_k,\nu_k,\zeta_k)_{k\in\mathbb{N}_0}$ that always selects the map with
minimum expansion rate among the two choices, and define it to be $\mathcal{K}^n_m(p)$.
Similarly, denote by $\mathcal{K}^n_M(p)$ the iteration that always selects
the map with maximum expansion among the two choices.
These maps are uniquely determined for each point that is not a Taub point.
We expect the dynamics of $\mathcal{K}^n_m$ and $\mathcal{K}^n_M$ to quantify
how chaotic the full dynamics turns out to be, although there are several
technical problems which need to be resolved, especially in connection with the Taub points.

Note that there is redundancy in the iteration of the
maps~\eqref{KasnerIFSeachmap} in the Hutschinson
operator~\eqref{KasnerHutchinson}, e.g., a point
that is not in any overlap region
$p\in A_\alpha\backslash \{(A_\alpha\cap A_\beta )\cup (A_\alpha\cap A_\gamma )\}$
is mapped by ${\cal K}_{\mu\nu\zeta}(p)$ for all $\mu\nu\zeta$ in the
Hutchinson operator~\eqref{KasnerHutchinson}.
Nevertheless, all these images coincide and consist of a single point.
To avoid redundancy, one can give alternative descriptions
of the multivalued Kasner circle map~\eqref{KasnerCirc},
which affect the number of maps in an eIFS. For instance,
instead of considering the maps~\eqref{KasnerIFS} within the
eIFS framework in~\eqref{eIFS}, one can consider a family of \emph{three}
$C^1$ maps such that the domain of each map corresponds to $A_\alpha$.
In this and similar descriptions, one has to be careful about how images
of some maps should be contained within the domain of a different map in order
to have a well-defined iteration. 
To circumvent this problem, and have the whole Kasner
circle $\mathrm{K}^\ocircle$ as the domain, we choose the maps
in~\eqref{KasnerIFS}. The drawback with this choice is that each
map~\eqref{KasnerIFSeachmap} is discontinuous at certain tangential
points.

Even though the overall dynamical structure is far from being understood,
there are still special features which can be compared with
the supercritical case. Consider the set $\tilde{C}$ of points in $\mathrm{K}^{\ocircle}$
for which all iterates of the Kasner circle map ${\cal K}$ consist of
exactly one positive eigenvalue in the $N_\alpha$ variables\footnote{Note
that the set $C$ in~\eqref{defofC} for $v\in(1/2,1)$ can also be described by this
formulation. Thus the properties of such a set (of points with exactly one
positive eigenvalue for all iterates) depend on $v$.}, see
Figures~\ref{FIG:BIF} and~\ref{FIG:KASNERMAPS}, i.e.,
\begin{equation}\label{defofCtilde}
\tilde{C}:= \{ p\in \mathrm{K}^{\ocircle} \text{ $ | $ } \mathcal{K}^n(p)
\notin \mathrm{int}((A_1\cap A_2)\cup (A_1\cap A_3)\cup (A_2\cap A_3))  \text{ for all }  n \in \mathbb{N}_0 \}.
\end{equation}
This set is given by the points that never reach the overlaps
$\mathrm{int}(A_\alpha \cap A_\beta)$. The map $\mathcal{K}$ is thereby not a
multivalued map on the set $\tilde{C}$, and thus $\mathcal{K}$ is well-defined.
Furthermore, the set $\tilde{C}$ is not empty since, e.g., there are two
(physically equivalent) period 3 cycles, given by Lemma~\ref{lem:Period3}
and depicted in Figure~\ref{FIG:period3}, since the three vertices of each
triangle do not lie in any of the overlap regions $A_\alpha\cap A_\beta$.

The complement of the set $\tilde{C}$ in $\mathrm{K}^{\ocircle}$ is given by
\begin{equation}\label{defofFtilde}
\tilde{F}:= \{ p\in \mathrm{K}^{\ocircle} \text{ $ | $ }
\mathcal{K}^n(p) \in  \mathrm{int}((A_1\cap A_2)\cup (A_1\cap A_3)\cup (A_2\cap A_3)) \text{ for some } n \in \mathbb{N}_0\} .
\end{equation}
Splitting the dynamics in $\mathrm{K}^\ocircle$ into two
disjoint invariant sets, $\tilde{C}$ and $\tilde{F}$, is
a first step to tackle Conjecture~\ref{conj:eIFSchaos}.
In particular, it has been proved in \cite{LappicyDaniel} that the set $\tilde{F}$ is dense (in the circle) and thereby the generic dynamics occurs in such a set, akin to the generic dynamics outside the Cantor set $C$ in \eqref{defofC} within the supercritical case.
Certain properties of the invariant set $\tilde{C}$, and how they depend on $v$, are still not clear:
Is it a Cantor set or not? What is its 
Hausdorff dimension? What is its internal dynamics and the relation with the dynamics within the invariant set $\tilde{F}$?
\section{First principles and the dynamical hierarchy}\label{sec:firstprinciples}

We now investigate the dynamical consequences summarized in Table~\ref{BianchiInvSets} of first
principles, which for the vacuum $\lambda$-$R$ class~A Bianchi
models reduce to the scale-automorphism groups for the Lie
contraction hierarchy in Figure~\ref{FIG:hierarchy}.
%

The scale-automorphism group for each level of the hierarchy yields monotone functions and
conserved quantities derived in Appendix~\ref{app:lrscaleaut}. As we will see, these monotone functions and
conserved quantities restrict and
push the dynamics toward the initial singularity from the highest
level of the class~A Bianchi hierarchy, Bianchi type IX and VIII, to the lowest
levels of the hierarchy, Bianchi type II and I, for which the dynamics are completely
determined by the scale-automorphism group, as shown in Appendix~\ref{app:lrscaleaut}.
The next level in the hierarchy are the Bianchi type $\mathrm{VI}_0$
and $\mathrm{VII}_0$ models, where the scale-automorphism group give rise to several
quantities that limit the asymptotic dynamics. These quantities yield a
complete qualitative description for this level of the hierarchy, which we
focus on in this section. The asymptotic dynamics of type VIII and IX
form a considerable challenge and we only present some limited results.

\subsubsection*{Bianchi type $\mathrm{VI}_0$ and $\mathrm{VII}_0$}

To obtain the equations for the type $\mathrm{VI}_0$ and $\mathrm{VII}_0$
vacuum $\lambda$-$R$ models we set, without loss of generality, 
$N_1=0$, $N_2>0$, $N_3<0$ for type $\mathrm{VI}_0$,
and $N_1=0$, $N_2>0$, $N_3>0$ for type $\mathrm{VII}_0$.
Since $N_1=0$ selects a special direction, it is natural to replace
$(\Sigma_1,\Sigma_2,\Sigma_3)$ with the $\Sigma_\pm$ Misner variables
given in~\eqref{Misner}. Setting $N_1=0$ in~\eqref{dynsyslambdaR}
and~\eqref{LambdaRquantities} 
yields the evolution equations
\begin{subequations}\label{dynsyslambdaRVIVII}
\begin{align}
\Sigma_+^\prime &= 2(1-\Sigma^2)(1 + 2v\Sigma_+),\label{SpVIVII}\\
\Sigma_-^\prime &= 4v(1-\Sigma^2)\Sigma_- + 2\sqrt{3}(N_2^2 - N_3^2),\label{SmVIVII}\\
N_2^\prime &= -2(2v\Sigma^2 + \Sigma_+ + \sqrt{3}\Sigma_-)N_2,\label{N2primeVIVII}\\
N_3^\prime &= -2(2v\Sigma^2 + \Sigma_+ - \sqrt{3}\Sigma_-)N_3,
\end{align}
%
and the constraint
\begin{equation}\label{constrVIVII}
1 - \Sigma^2 - (N_2 - N_3)^2 =0, \qquad \text{ where } \qquad \Sigma^2 := \Sigma_+^2 + \Sigma_-^2.
\end{equation}
\end{subequations}
%
%
%

Due to the constraint~\eqref{constrVIVII}, the state spaces for the
type $\mathrm{VI}_0$ and $\mathrm{VII}_0$ models with $N_1=0$ are
3-dimensional with a 2-dimensional boundary given by the union of the
invariant type $\mathrm{II}_2$, $\mathrm{II}_3$ and $\mathrm{K}^\ocircle$
sets. Type $\mathrm{VI}_0$ has a relatively compact state-space,
whereas type $\mathrm{VII}_0$ has an unbounded one.
Equation~\eqref{constrVIVII} implies that $\Sigma_+^2 + \Sigma_-^2\leq 1$.
For type $\mathrm{VI}_0$, $(N_2-N_3)^2= N_2^2 + N_3^2 + 2|N_2N_3|$, and
hence~\eqref{constrVIVII} yields $N_2^2 \leq 1 -\Sigma^2$ and
$N_3^2 \leq 1 -\Sigma^2$, where the equalities hold individually for the $\mathrm{II}_2$
and $\mathrm{II}_3$ boundary sets, respectively. For type $\mathrm{VII}_0$,
on the other hand, introducing $N_\pm := N_2\pm N_3$ results in that the
constraint~\eqref{constrVIVII} can be written as $\Sigma^2 + N_-^2 = 1$,
and thus that $\Sigma_\pm$ and $N_-$ are bounded, while $N_+$ is unbounded.

The analysis of the Bianchi type $\mathrm{VI}_0$ and $\mathrm{VII}_0$
scale-automorphism group of the vacuum $\lambda$-$R$ models in
Appendix~\ref{app:lrscaleaut} resulted in three quantities that are
essential for the asymptotics of the dynamical system~\eqref{dynsyslambdaRVIVII}:
\begin{equation}\label{MVIVII}
1 + 2v\Sigma_+, \qquad Z_{\mathrm{sup}} := \frac{(2v + \Sigma_+)^2}{|N_2N_3|} \qquad
Z_{\mathrm{sub}} := \frac{(1 + 2v\Sigma_+)^2}{|N_2N_3|},
\end{equation}
where $Z_{\mathrm{sup}}=Z_{\mathrm{sub}}=Z_\mathrm{crit} = (1 + \Sigma_+)^2/|N_2N_3|$
when $v=1/2$. Due to~\eqref{dynsyslambdaRVIVII}, these quantities satisfy
\begin{subequations}
\begin{align}
(1+2v\Sigma_+)^\prime &= 4v(1 - \Sigma^2)(1+2v\Sigma_+),\label{MonotoneVIVII}\\
Z_{\mathrm{sup}}^\prime &= 4\left[\frac{(1 + 2v\Sigma_+)^2 + (4v^2-1)\Sigma_-^2}{2v + \Sigma_+}\right]Z_{\mathrm{sup}},\label{Z1prim}\\
Z_{\mathrm{sub}}^\prime &= 4(2v + \Sigma_+)Z_{\mathrm{sub}},\label{Z2prim}
\end{align}
\end{subequations}
and hence $Z_\mathrm{crit}^\prime = 4(1 + \Sigma_+)Z_\mathrm{crit}$.

These functions behave differently for the subcritical, critical and
supercritical cases, and they have different asymptotic consequences for
the type $\mathrm{VI}_0$ and $\mathrm{VII}_0$ vacuum $\lambda$-$R$ models,
primarily because the state space of the type $\mathrm{VII}_0$ models is
unbounded. Nevertheless, the two Bianchi types share several features.
For example, $\Sigma_+=-1/(2v)$ is a 2-dimensional invariant subset in the
supercritical case, $v\in(1/2,1)$, both for type $\mathrm{VI}_0$
and $\mathrm{VII}_0$. They also have some
common asymptotic features. In particular, they have the same
$\mathrm{II}_2\cup\mathrm{II}_3\cup\mathrm{K}^\ocircle$ boundary.
In the supercritical case the stable set in the
Kasner circle set $\mathrm{K}^\ocircle$ is
given by $S_{\mathrm{VI}_0} = S_{\mathrm{VII}_0} =
S_{\mathrm{VI}_0,\mathrm{VII}_0} := \mathrm{K}^\ocircle\backslash
\mathrm{int}(A_2 \cup A_3)$, which, due to that $N_1=0$, is different than the
set $S$ in the supercritical Bianchi type $\mathrm{VIII}$ and $\mathrm{IX}$ models,
cf. Figures~\ref{FIG:BIF} and~\ref{FIG:SVIVII}, although both
$S_{\mathrm{VI}_0,\mathrm{VII}_0}$ and $S$ are defined as the sets where
type II heteroclinic chains end. In the subcritical case, type
$\mathrm{VI}_0$ and $\mathrm{VII}_0$ also share the region $A_2\cap A_3$ in
$\mathrm{K}^\ocircle$, where both $N_2$ and $N_3$ are unstable in
$\mathrm{int}(A_2\cap A_3)$.
In the critical case, $A_2\cap A_3$ reduces to the Taub point $\mathrm{T}_1$.
These features are illustrated in Figure~\ref{FIG:SVIVII}.
\begin{figure}[H]
\minipage[b]{0.39\textwidth}\centering
\begin{subfigure}\centering
    \begin{tikzpicture}[scale=1.15]



    \draw[color=gray, dotted] (2.46,1.42) -- (-0.55,0.84);
    \draw[color=white, ultra thick] (0.15,0.97) -- (-0.55,0.84);
    \draw[dotted, thick, postaction={decorate}] (0.15,0.97) -- (-0.55,0.84);

    \draw[color=gray, dotted] (-2.46,1.42) -- (0.91,0.41);
    \draw[color=white, ultra thick] (-0.55,0.84) -- (0.91,0.41);
    \draw[dotted, thick, postaction={decorate}] (-0.55,0.84) -- (0.91,0.41);

    \draw[color=gray, dotted] (2.46,1.42) -- (-0.75,-0.69);
    \draw[color=white, ultra thick] (0.91,0.41) -- (-0.75,-0.69);
    \draw[dotted, thick, postaction={decorate}] (0.91,0.41) -- (-0.75,-0.69);

    \draw [domain=0:6.28,variable=\t,smooth] plot ({sin(\t r)},{cos(\t r)});

    \draw [ultra thick, dotted, white, domain=2.3:4,variable=\t,smooth] plot ({sin(\t r)},{cos(\t r)});

    \draw [very thick, domain=-0.29:0.29,variable=\t,smooth] plot ({0.985*sin(\t r)},{0.985*cos(\t r)});

    \draw[color=gray,rotate=120,dashed] (0,-2.85) -- (0.98,-0.29);
    \draw[color=gray,rotate=120,dashed] (0,-2.85) -- (-0.98,-0.29);

    \draw[color=gray,rotate=240,dashed] (0,-2.85) -- (0.98,-0.29);
    \draw[color=gray,rotate=240,dashed] (0,-2.85) -- (-0.98,-0.29);

    \draw[rotate=240] (0,-2.85) circle (0.1pt) node[anchor=east] {\scriptsize{$\mathrm{Q}_2/v$}};
    \draw[rotate=120]  (0,-2.85) circle (0.1pt) node[anchor=west] {\scriptsize{$\mathrm{Q}_3/v$}};

    \node at (0,1.2) {\scriptsize{$A_2\cap A_3$}};

    \node at (0,-1.2) {\scriptsize{$S_{\mathrm{VI}_0,\mathrm{VII}_0}$}};

\end{tikzpicture}
    \addtocounter{subfigure}{-1}\captionof{subfigure}{\footnotesize{ $v\in (0,1/2)$.
    }}\label{FIG:SVIVIIsub}
\end{subfigure}
\endminipage\hfill
\minipage[b]{0.3\textwidth}\centering

\begin{subfigure}\centering
    \begin{tikzpicture}[scale=1.15]

    \draw[color=gray, dotted] (1.75,1) -- (-0.65,0.76);
    \draw[color=white, ultra thick] (0.5,0.87) -- (-0.65,0.76);
    \draw[dotted, thick, postaction={decorate}] (0.5,0.87) -- (-0.65,0.76);

    \draw[color=gray, dotted] (-1.75,1) -- (0.91,0.41);
    \draw[color=white, ultra thick] (-0.65,0.76) -- (0.91,0.41);
    \draw[dotted, thick, postaction={decorate}] (-0.65,0.76) -- (0.91,0.41);

    \draw[color=gray, dotted] (1.75,1) -- (-0.67,-0.75);
    \draw[color=white, ultra thick] (0.91,0.405) -- (-0.67,-0.75);
    \draw[dotted, thick, postaction={decorate}] (0.91,0.405) -- (-0.67,-0.75);

    \draw [domain=0:6.28,variable=\t,smooth] plot ({sin(\t r)},{cos(\t r)});

    \draw [ultra thick, dotted, white, domain=2.1:4.2,variable=\t,smooth] plot ({sin(\t r)},{cos(\t r)});

    \draw[color=gray,dashed,-] (-1.75,1) -- (1.75,1);

    \draw[color=gray,dashed,-] (-1.75,1) -- (-0.88,-0.49);
    \draw[color=gray,dashed,-] (1.75,1) -- (0.88,-0.49);

    \draw[rotate=240] (0,-2) circle (0.1pt) node[anchor=east] {\scriptsize{$2\mathrm{Q}_2$}};
    \draw[rotate=120]  (0,-2) circle (0.1pt) node[anchor=west] {\scriptsize{$2\mathrm{Q}_3$}};

    \filldraw [black] (0,1) circle (1.25pt) node[anchor= south] {\scriptsize{$\mathrm{T}_1$}};
    \filldraw [black] (0.88,-0.49) circle (1.25pt) node[anchor= north west] {\scriptsize{$\mathrm{T}_2$}};
    \filldraw [black] (-0.88,-0.49) circle (1.25pt)node[anchor= north east] {\scriptsize{$\mathrm{T}_3$}};


    \node at (0,-1.2) {\scriptsize{$S_{\mathrm{VI}_0,\mathrm{VII}_0}$}};

    \draw[color=gray,->](0,0) -- (0,-0.53)  node[anchor= north] {\scriptsize{$\Sigma_{+}$}};
    \draw[color=gray,->] (0,0) -- (0.53,0)  node[anchor= west] {\scriptsize{$\Sigma_{-}$}};
\end{tikzpicture}
     \addtocounter{subfigure}{-1}\captionof{subfigure}{\footnotesize{$v=1/2$,
     }}\label{FIG:SVIVIIcrit}
\end{subfigure}
\endminipage\hfill
\minipage[b]{0.3\textwidth}\centering
\begin{subfigure}\centering
    \begin{tikzpicture}[scale=1.15]

    \draw[color=gray,dotted] (-1.17,0.67) -- (1.17,0.67);

    \draw[white, ultra thick] (-0.74,0.67) -- (0.74,0.67);
    \draw[dotted, thick, postaction={decoration={markings,mark=at position 0.41 with {\arrow[thick,color=gray]{latex reversed}}},decorate}] (-0.75,0.67) -- (0.75,0.67);

    \draw[rotate=-120,color=gray,dotted] (0,-1.35) -- (-1,0);
    \draw[rotate=-120,color=gray,dotted] (-1.17,0.67) -- (-0.86,-0.5);

    \draw[rotate=-120,white,ultra thick] (-0.29,-0.95) -- (-1,0);
    \draw[rotate=-120,dotted, thick, postaction={decorate}] (-0.29,-0.95) -- (-1,0);

    \draw[rotate=-120,white,ultra thick] (-1,0) -- (-0.86,-0.5);
    \draw[rotate=-120,dotted, thick, postaction={decorate}] (-1,0) -- (-0.86,-0.5);

    \draw[color=gray, dotted] (-1.17,0.67) -- (0.91,0.41);
    \draw[color=gray, dotted] (1.17,0.67) -- (-0.39,-0.92);

    \draw[color=white, ultra thick] (-0.78,0.62) -- (0.91,0.41);
    \draw[color=white, ultra thick] (0.91,0.405) -- (-0.39,-0.92);

    \draw[dotted, thick, postaction={decorate}] (-0.78,0.62) -- (0.91,0.41);
    \draw[dotted, thick, postaction={decorate}] (0.91,0.405) -- (-0.39,-0.92);

    \draw [line width=0.1pt,domain=0:6.28,variable=\t,smooth] plot ({sin(\t r)},{cos(\t r)});

    \draw [ultra thick, dotted, white, domain=-0.26:0.26,variable=\t,smooth] plot ({sin(\t r)},{cos(\t r)});
    \draw [ultra thick, dotted, white, domain=1.8:4.5,variable=\t,smooth] plot ({sin(\t r)},{cos(\t r)});


    \draw[color=gray,rotate=120,dashed,-] (0,-1.35) -- (0.685,-0.75);
    \draw[color=gray,rotate=120,dashed,-] (0,-1.35) -- (-0.685,-0.75);

    \draw[color=gray,rotate=-120,dashed,-] (0,-1.35) -- (0.685,-0.75);
    \draw[color=gray,rotate=-120,dashed,-] (0,-1.35) -- (-0.685,-0.75);

    \filldraw [rotate=-120-12 ] (0,-1) circle (0.6pt); 
    \filldraw [rotate=-120-108 ] (0,-1) circle (0.6pt); 


    \node at (0,1.2) {\scriptsize{$S_{\mathrm{VI}_0,\mathrm{VII}_0}$}};
    \node at (0,-1.2) {\scriptsize{$S_{\mathrm{VI}_0,\mathrm{VII}_0}$}};


    \draw[rotate=240] (0,-1.35) circle (0.1pt) node[anchor=east] {\scriptsize{$\mathrm{Q}_2/v$}};
    \draw[rotate=120]  (0,-1.35) circle (0.1pt) node[anchor=west] {\scriptsize{$\mathrm{Q}_3/v$}};

\end{tikzpicture}
    \addtocounter{subfigure}{-1}\captionof{subfigure}{\footnotesize{ $v\in (1/2,1)$.
    }}\label{FIG:SVIVIIsup}
\end{subfigure}
\endminipage
\captionof{figure}{The common stable set $S_{\mathrm{VI}_0,\mathrm{VII}_0}$
for Bianchi type $\mathrm{VI}_0$ and $\mathrm{VII}_0$. In addition, projected
onto $(\Sigma_+,\Sigma_-)$-space, there are illustrative heteroclinic chains
located on the $\mathrm{II}_2\cup\mathrm{II}_3\cup\mathrm{K}^\ocircle$ boundary.
In particular, $v\in(1/2,1)$ admits a heteroclinic cycle/chain with period 2,
which resides on the projected line between $\mathrm{Q}_2/v$ and $\mathrm{Q}_3/v$
characterized by $\Sigma_+=-1/(2v)$.
}\label{FIG:SVIVII}
\end{figure}
\begin{proposition}\label{lem:genVIalpha}
In Bianchi type $\mathrm{VI}_0$ the limit sets (in $\tau_-$) are as follows:
\begin{itemize}
\item[(i)] When $v\in (0,1/2]$, the $\alpha$-limit set for all
orbits resides in the set $A_2\cap A_3$ in $\mathrm{K}^\ocircle$,
where $A_2\cap A_3$ reduces to the Taub point $\mathrm{T}_1$ when $v=1/2$.
The $\omega$-limit set for all orbits resides in the set $S_{\mathrm{VI}_0}$.
\item[(ii)] When $v\in (1/2,1)$, the $\alpha$-limit set for all orbits is
the fixed point $p_{\mathrm{VI}_0}$ given by
\begin{equation}\label{p_VI}
p_{\mathrm{VI}_0}:= \left\{ (\Sigma_+,\Sigma_-,N_2,N_3)=\left(-\frac{1}{2v},0, \frac{\sqrt{1 - 1/(4v^2)}}{2},-\frac{\sqrt{1 - 1/(4v^2)}}{2}\right) \right\}.
\end{equation}
%
Apart from $p_{\mathrm{VI}_0}$,
the $\omega$-limit set of all orbits on the invariant subset
$\Sigma_+ = - 1/(2v)$ consists of the heteroclinic chain with period 2,
while the $\omega$-limit set of all orbits with $\Sigma_+\neq -1/(2v)$ resides
in the set $S_{\mathrm{VI}_0}$.\footnote{Therefore, the invariant set
$\Sigma_+ = - 1/(2v)$ is a co-dimension one stable set for the
heteroclinic chain with period 2. In particular, this set is equivalent to a Bowen's eye,
where the fixed point $p_{\mathrm{VI}_0}$ is surrounded by spiraling orbits toward
the heteroclinic chain with period 2, see~\cite{Tak94} and~\cite{dut19}.}
\end{itemize}
\end{proposition}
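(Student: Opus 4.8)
The plan is to exploit the three monotone-type functions in~\eqref{MVIVII} and~\eqref{MonotoneVIVII}--\eqref{Z2prim}, together with the compactness of the Bianchi type $\mathrm{VI}_0$ state space (which follows from the constraint~\eqref{constrVIVII}, giving $\Sigma^2\le 1$, $N_2^2\le 1-\Sigma^2$, $N_3^2\le 1-\Sigma^2$). Since the state space is relatively compact, every orbit has nonempty $\alpha$- and $\omega$-limit sets, and by the monotonicity principle (LaSalle-type) these limit sets must lie in the subsets where the monotone functions are constant, i.e.\ where their derivatives vanish. The key observation from~\eqref{MonotoneVIVII} is that $1+2v\Sigma_+$ is monotone increasing along orbits (in $\tau_-$) wherever $\Sigma^2<1$; hence on any limit set either $\Sigma^2=1$ (so we are on $\mathrm{K}^\ocircle$) or $1+2v\Sigma_+=0$, i.e.\ $\Sigma_+=-1/(2v)$, which is only possible when $v\in(1/2,1)$ since $|\Sigma_+|\le 1$ on the physical state space.

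First I would treat the subcritical/critical case $v\in(0,1/2]$. Here $\Sigma_+=-1/(2v)$ is not attainable, so both the $\alpha$- and $\omega$-limit sets must lie in $\{\Sigma^2=1\}\cap\{N_2=N_3=0\}=\mathrm{K}^\ocircle$ (the equality $(N_2-N_3)^2=1-\Sigma^2$ forces $N_2=N_3$ when $\Sigma^2=1$, and then $|N_2N_3|=N_2^2$; but on a limit set contained in $\mathrm{K}^\ocircle$ one needs $N_2=N_3=0$, which I would justify via $\Omega_k$ or the evolution of $N_\pm$). To pin down which part of $\mathrm{K}^\ocircle$: for the $\omega$-limit, use that heteroclinic type II chains end precisely in $S_{\mathrm{VI}_0}=\mathrm{K}^\ocircle\setminus\mathrm{int}(A_2\cup A_3)$ (points stable in both $N_2$ and $N_3$), so the $\omega$-limit set sits in $S_{\mathrm{VI}_0}$; for the $\alpha$-limit, reverse the time direction — a point of $\mathrm{K}^\ocircle$ can be an $\alpha$-limit only if it is a past attractor for the type $\mathrm{VI}_0$ equations, which by the linearization~\eqref{linNalpha} requires \emph{both} $N_2$ and $N_3$ unstable (positive eigenvalue in $\tau_-$ meaning, after reversing, the orbit emerges from it), i.e.\ $\Sigma_2<-2v$ and $\Sigma_3<-2v$, which is exactly $\mathrm{int}(A_2\cap A_3)$, whose closure is $A_2\cap A_3$, collapsing to $\mathrm{T}_1$ when $v=1/2$. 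I would also need to rule out limit cycles or more exotic $\alpha$-limits interior to the state space: this is handled by $Z_{\mathrm{sub}}=(1+2v\Sigma_+)^2/|N_2N_3|$, which by~\eqref{Z2prim} satisfies $Z_{\mathrm{sub}}'=4(2v+\Sigma_+)Z_{\mathrm{sub}}$ — strictly monotone away from the surface $\Sigma_+=-2v$ — forcing interior $\alpha$-limits onto $\{\Sigma_+=-2v\}$, and then a secondary argument (using $\Sigma_-$ evolution and $\Omega_k$) drives the orbit to the boundary.

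Next I would handle the supercritical case $v\in(1/2,1)$. Now $\Sigma_+=-1/(2v)$ is attainable and defines an invariant subset (check: substituting into~\eqref{SpVIVII} gives $\Sigma_+'=2(1-\Sigma^2)(1+2v\Sigma_+)=0$). On $\{1+2v\Sigma_+>0\}$ the function $1+2v\Sigma_+$ strictly increases, so the $\alpha$-limit of any orbit with $\Sigma_+>-1/(2v)$ must lie in $\{\Sigma_+=-1/(2v)\}$; similarly orbits with $\Sigma_+<-1/(2v)$ have $1+2v\Sigma_+<0$, and $|1+2v\Sigma_+|$ increases, again pushing the $\alpha$-limit onto $\{\Sigma_+=-1/(2v)\}$. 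Within this invariant plane, $Z_{\mathrm{sup}}=(2v+\Sigma_+)^2/|N_2N_3|$ becomes a monotone function (since on $\Sigma_+=-1/(2v)$ the numerator of~\eqref{Z1prim} is $(4v^2-1)\Sigma_-^2\ge 0$, vanishing only at $\Sigma_-=0$), which forces the $\alpha$-limit to $\{\Sigma_-=0,\ \Sigma_+=-1/(2v)\}$; the constraint~\eqref{constrVIVII} with $N_2=-N_3$ (the $\mathrm{VI}_0$ sign choice, and $|N_2N_3|\neq0$ there) then gives $N_2^2=(1-1/(4v^2))/4$, yielding exactly the fixed point $p_{\mathrm{VI}_0}$ in~\eqref{p_VI}; one checks directly that this is a fixed point of~\eqref{dynsyslambdaRVIVII}. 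For the $\omega$-limit: on the invariant subset $\Sigma_+=-1/(2v)$, once $p_{\mathrm{VI}_0}$ is excluded (it is a source in $\tau_-$ within this plane, so not an $\omega$-limit of anything but itself), the remaining dynamics is a planar flow on a compact annular region whose only boundary-type recurrence is the period-2 heteroclinic chain (the line $\mathrm{Q}_2/v$--$\mathrm{Q}_3/v$ characterized by $\Sigma_+=-1/(2v)$ from Lemma~\ref{MaxMinExpansion}'s setup); by Poincaré--Bendixson applied to this 2D invariant set, the $\omega$-limit is that heteroclinic cycle (a "Bowen's eye" picture, cf.\ the footnote). Off the invariant subset $\Sigma_+\neq-1/(2v)$, the strict monotonicity of $1+2v\Sigma_+$ plus the fact that type II chains terminate in $S_{\mathrm{VI}_0}$ puts the $\omega$-limit in $S_{\mathrm{VI}_0}$.

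The main obstacle, I expect, will be the supercritical $\omega$-limit analysis on the invariant plane $\Sigma_+=-1/(2v)$: establishing that the flow there is genuinely two-dimensional (it is, after using the constraint to eliminate one variable) and that Poincaré--Bendixson rules out any closed orbit other than the period-2 heteroclinic cycle — one must verify there are no other fixed points in the relevant compact region besides $p_{\mathrm{VI}_0}$ (and the tangential/Kasner points on its boundary), and that $p_{\mathrm{VI}_0}$ is a repelling focus so that orbits genuinely spiral out onto the heteroclinic cycle rather than staying put; this "Bowen's eye" structure and the precise identification of the stable manifold of the cycle is the delicate point, and the cited works~\cite{Tak94,dut19} on such configurations will be needed. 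A secondary technical nuisance is the careful application of the monotonicity principle in the presence of functions like $Z_{\mathrm{sup}}$ that blow up as $|N_2N_3|\to 0$ (i.e.\ on the type II boundary), which requires restricting attention to the interior first and then separately analyzing boundary behavior via the induced type II and Kasner dynamics already described in Section~\ref{sec:dynsysanalysis}.
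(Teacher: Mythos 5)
Your overall strategy --- use the scale-automorphism monotone functions and the monotonicity principle on the compact type $\mathrm{VI}_0$ state space --- is exactly the one the paper takes, and the proposal is essentially correct. But your routing is more laboured than necessary in two places, and one of your tools (Poincar\'e--Bendixson) is weaker than what the monotone functions already give you.

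In the subcritical/critical case the paper uses only $1+2v\Sigma_+$: it is bounded between $1-2v\geq 0$ and $1+2v$ and strictly monotone by~\eqref{MonotoneVIVII}, so along any orbit $(1+2v\Sigma_+)'\to 0$ at both ends, which forces $\Sigma^2\to 1$ (the alternative $1+2v\Sigma_+\to 0$ can only occur at the Taub point when $v=1/2$, which is already in $\mathrm{K}^\ocircle$). That single function already confines both limit sets to $\mathrm{K}^\ocircle$; the extra $Z_{\mathrm{sub}}$ detour you sketch to rule out interior recurrence is not needed. Your local linearization analysis on $\mathrm{K}^\ocircle$ identifying the $\alpha$-limit with $A_2\cap A_3$ and the $\omega$-limit with $S_{\mathrm{VI}_0}$ is exactly what the paper does.

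In the supercritical case the paper gets the $\alpha$-limit in one step rather than your two: $Z_{\mathrm{sup}}=(2v+\Sigma_+)^2/|N_2N_3|$ is strictly increasing on the \emph{whole} interior of the state space (not just on $\Sigma_+=-1/(2v)$) because the bracket in~\eqref{Z1prim} has numerator $(1+2v\Sigma_+)^2+(4v^2-1)\Sigma_-^2\geq 0$ with equality only at $(\Sigma_+,\Sigma_-)=(-1/(2v),0)$ and denominator $2v+\Sigma_+>2v-1>0$; it attains its global minimum $4(4v^2-1)$ precisely at $p_{\mathrm{VI}_0}$, so by the monotonicity principle the $\alpha$-limit of every non-$p_{\mathrm{VI}_0}$ orbit is $p_{\mathrm{VI}_0}$. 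The same monotone function, combined with boundedness of its numerator, gives $N_2N_3\to 0$ as $\tau_-\to\infty$ and pushes the $\omega$-limit onto the $\mathrm{II}_2\cup\mathrm{II}_3\cup\mathrm{K}^\ocircle$ boundary. Your Poincar\'e--Bendixson argument on the invariant plane $\Sigma_+=-1/(2v)$ is where a genuine caveat lies: P--B only tells you the $\omega$-limit is a fixed point, a closed orbit, or a graphic, and to exclude a nontrivial periodic orbit in the annulus you would in practice have to go back to the monotone function $Z_{\mathrm{sup}}$ anyway, making the P--B layer redundant. The paper skips it and applies the monotonicity principle directly on the two-dimensional invariant set, with boundary given by the period-2 heteroclinic cycle. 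Similarly, your concern about verifying that $p_{\mathrm{VI}_0}$ is a repelling focus is sidestepped in the paper --- the Bowen's-eye picture in the footnote is a \emph{consequence} of the monotone-function argument, not an ingredient of the proof, so you do not need to compute eigenvalues to establish the proposition.
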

\begin{proof}
All type $\mathrm{VI}_0$ orbits satisfy $\Sigma^2 < 1$, and thereby $|\Sigma_+|<1$, while
$\Sigma^2 = 1$ corresponds to the type I boundary set $\mathrm{K}^\ocircle$,
since the constraint~\eqref{constrVIVII} yields $(N_2-N_3)^2 =N_2^2+N_3^2 + 2|N_2N_3|=0$,
and thus $N_2=N_3=0$.

For the subcritical and critical type $\mathrm{VI}_0$ models, with $v\in (0,1/2]$,
the function $1 + 2v\Sigma_+$ is bounded according to
$0 \leq 1 - 2v < 1 + 2v\Sigma_+ \leq 1 + 2v$, and, due to~\eqref{MonotoneVIVII}, it is
monotonically increasing. Thus
$\lim_{\tau_-\rightarrow\pm\infty}\Sigma^2 = 1$ in~\eqref{MonotoneVIVII}, and
hence $\lim_{\tau_-\rightarrow\pm\infty}(N_2,N_3) = (0,0)$, due to the constraints.
Therefore both the $\alpha$- and $\omega$-limit sets for all type $\mathrm{VI}_0$
orbits belong to the set $\mathrm{K}^\ocircle$. It then follows from the stability
properties of $\mathrm{K}^\ocircle$ that the $\alpha$-limit set for these orbits
resides in the set $A_2\cap A_3$ in the subcritical case, $v\in(0,1/2)$, while it
consists of the Taub point $\mathrm{T}_1$ with $\Sigma_+=-1$ in the critical case
$v=1/2$. It also follows for both the subcritical and critical cases that the
$\omega$-limit set of all type $\mathrm{VI}_0$ orbits lies in the stable set
$S_{\mathrm{VI_0}}$.

In the supercritical case, $v\in (1/2,1)$, the function $Z_{\mathrm{sup}}>0$
in~\eqref{MVIVII} is strictly monotonically increasing in
the type $\mathrm{VI}_0$ state space, except at the fixed point
$p_{\mathrm{VI}_0}$, given by~\eqref{p_VI}, 
where $Z_{\mathrm{sup}}$ attains its global minimum, $Z_{\mathrm{sup}}(p_{\mathrm{VI}_0}) = 4(4v^2-1)>0$.
Since $Z_{\mathrm{sup}}$ is strictly monotonically increasing for all non-$p_{\mathrm{VI}_0}$ type
$\mathrm{VI}_0$ orbits, it follows that their $\alpha$-limits
reside at the minimum of $Z_{\mathrm{sup}}$ at $p_{\mathrm{VI}_0}$, see the monotonicity
principle in~\cite{waiell97}, which also yields that
$Z_{\mathrm{sup}}\rightarrow\infty$ as $\tau_-\rightarrow\infty$, for all
non-$p_{\mathrm{VI}_0}$ orbits. Since the numerator $(1+2v\Sigma_+)^2$ of $Z_{\mathrm{sup}}$
in~\eqref{MVIVII} is bounded, it follows that $\lim_{\tau_-\rightarrow\infty}N_2N_3=0$,
and thus that the $\omega$-limit set of all non-$p_{\mathrm{VI}_0}$ supercritical type
$\mathrm{VI}_0$ orbits resides in the $\mathrm{II}_2\cup\mathrm{II}_3\cup\mathrm{K}^\ocircle$
boundary. According to~\eqref{MonotoneVIVII}, $1+2v\Sigma_+=0$ describes an invariant
separatrix surface, which divides the remaining state space into two
disjoint sets, $1+2v\Sigma_+<0$ and $1+2v\Sigma_+>0$, on which $1 + 2v\Sigma_+$
is monotone.\footnote{As described in Appendix~\ref{app:lrscaleaut},
the existence of the invariant set $1+2v\Sigma_+=0$ follows from a discrete symmetry,
which also results in that the flow of~\eqref{MonotoneVIVII} is equivariant under
a change of sign of $1 + 2v\Sigma_+$.}
It follows from monotonicity principle~\cite{waiell97} that the $\omega$-limit set
of all non-$p_{\mathrm{VI}_0}$ orbits on the
invariant set $\Sigma_+ = -1/(2v)$ are given by the boundary, i.e., the
heteroclinic cycle/chain with period 2.
With similar reasoning as in the subcritical and critical cases,
equation~\eqref{MonotoneVIVII} yields that the $\omega$-limit set for all orbits
in the subset $1+2v\Sigma_+<0$ ($1+2v\Sigma_+>0$) resides in the connected
component of the set $S_{\mathrm{VI_0}}$ with $1+2v\Sigma_+<0$ ($1+2v\Sigma_+>0$).
\end{proof}

Let us now turn to type $\mathrm{VII}_0$, but before presenting asymptotic results
we first consider the locally rotationally symmetric (LRS) type
$\mathrm{VII}_0$ subset (for additional information about the LRS models,
see Appendix~\ref{app:lrscaleaut}). This invariant set
is given by $N_-=0$ and $\Sigma_-=0$, where the constraint~\eqref{constrVIVII}
divides the LRS subset into two disjoint invariant sets consisting of the two
lines at $\Sigma_+ = 1$ and $\Sigma_+ = -1$, i.e.,
\begin{subequations}\label{LRS}
\begin{align}
\mathrm{LRS}^\pm &:= 
\left\{ (\Sigma_+,0,N_2,N_3) \in \mathbb{R}^4 \Bigm|
\begin{array}{c}
\,\, \Sigma_+=\pm 1, \\
\,\, N_2=N_3\neq 0
\end{array}
\right\},
\end{align}
\end{subequations}
where the superscript of $\mathrm{LRS}^\pm$ is determined by the sign of $\Sigma_+$.
Let $N := N_2 = N_3>0$. Then the flow on the $\mathrm{LRS}^\pm$ subsets is determined by
\begin{equation}\label{Neqn}
N^\prime = -2\Sigma_+(2v\Sigma_+ + 1)N, \qquad \Sigma_+ = \pm 1.
\end{equation}
On $\mathrm{LRS}^+$, where $\Sigma_+=+1$, the variable $N\in (0,\infty)$ monotonically
decreases from  $\lim_{\tau_-\rightarrow -\infty}N =\infty$  to $0$, and hence
the orbit in the invariant line ends at $\mathrm{Q}_1$ in the set $\mathrm{K}^\ocircle$
for all $v\in(0,1)$. On $\mathrm{LRS}^-$, where $\Sigma_+=-1$, there are three $v$-dependent
cases: the critical case, $v=1/2$, which results in a line of fixed points; the
subcritical case, $v \in (0,1/2)$, which yields an orbit that emanates from
$\mathrm{T}_1$, where $N\in (0,\infty)$ subsequently monotonically increases,
which results in $\lim_{\tau_-\rightarrow\infty}N = \infty$; the supercritical case,
$v \in (1/2,1)$, reverses the flow and leads to an orbit for which
$\lim_{\tau_-\rightarrow -\infty}N =\infty$, while it ends at
$\mathrm{T}_1$.
%
%

The next Propositions address the $\alpha$-limit and $\omega$-limit
sets for the type $\mathrm{VII}_0$ models.
\begin{proposition}\label{lem:genVIVIIomega}
The $\omega$-limit set (in $\tau_-$) for all Bianchi type $\mathrm{VII}_0$ orbits
resides in the stable set $S_{\mathrm{VII}_0}$ in the Kasner circle set
$\mathrm{K}^\ocircle$, apart from three exceptions:
\begin{itemize}
\item[(i)] When $v \in (0,1/2)$, the $\mathrm{LRS}^-$ set consists of an
orbit for which $\lim_{\tau_-\rightarrow\infty} N = \infty$.
\item[(ii)] When $v=1/2$, the $\mathrm{LRS}^-$ set is a line of fixed
points $N_2=N_3=\mathrm{constant}$.
\item[(iii)] When $v\in (1/2,1)$, there is an invariant set of co-dimension
one, characterized by $\Sigma_+=-1/(2v)$, for which the heteroclinic cycle with
period 2 on the $\mathrm{II}_2\cup\mathrm{II}_3\cup\mathrm{K}^\ocircle$
boundary is the $\omega$-limit set.
\end{itemize}
\end{proposition}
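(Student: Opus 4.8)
The plan is to follow the template of the proof of Proposition~\ref{lem:genVIalpha}, splitting into the three regimes $v\in(0,1/2)$, $v=1/2$ and $v\in(1/2,1)$ and exploiting the quantities in~\eqref{MVIVII}, their derivatives~\eqref{MonotoneVIVII}--\eqref{Z2prim}, and the monotonicity principle~\cite{waiell97}. The new feature relative to type $\mathrm{VI}_0$ is that the type $\mathrm{VII}_0$ state space is noncompact: by~\eqref{constrVIVII} the variables $\Sigma_\pm$ and $N_-=N_2-N_3$ are bounded but $N_+=N_2+N_3$ is not, so $\omega$-limit sets may be empty, and this is exactly what underlies exception~(i). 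The flow on the invariant lines $\mathrm{LRS}^\pm$ of~\eqref{LRS} is governed by~\eqref{Neqn}, already described above; this disposes of exceptions~(i) and~(ii) by inspection, and the remaining task is to show that every orbit that is neither contained in $\mathrm{LRS}^-$ nor in the invariant separatrix $\{\Sigma_+=-1/(2v)\}$ of~\eqref{MonotoneVIVII} has $\omega$-limit set equal to a single point of $S_{\mathrm{VII}_0}$.

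For $v\in(0,1/2]$ I would use $1+2v\Sigma_+$. It satisfies $1+2v\Sigma_+\ge 1-2v\ge 0$, and by~\eqref{MonotoneVIVII} it is nondecreasing along orbits and bounded above by $1+2v$, hence converges as $\tau_-\to\infty$; on any $\omega$-limit set it is then constant, so $(1+2v\Sigma_+)'\equiv 0$ there. Off $\mathrm{LRS}^-$, where $1+2v\Sigma_+>0$, this forces $1-\Sigma^2=N_-^2\equiv 0$ on the $\omega$-limit set, and invariance upgrades this to $\Sigma_-\equiv 0$, since $N_-'|_{N_-=0}=-2\sqrt3\,\Sigma_- N_+$ with $N_+>0$ away from $\mathrm{K}^{\ocircle}$. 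Thus the $\omega$-limit set lies in $\mathrm{LRS}^+\cup\mathrm{LRS}^-\cup\mathrm{K}^{\ocircle}$, whose only compact invariant subsets are subsets of $\mathrm{K}^{\ocircle}$. By~\eqref{Neqn}, $\mathrm{LRS}^+$ flows to $\mathrm{Q}_1\in S_{\mathrm{VII}_0}$; and the $\alpha$-limit $\mathrm{T}_1$ of $\mathrm{LRS}^-$ is a source in the $N_2,N_3$-directions when $v<1/2$ by the linearisation~\eqref{linTaub}, so it cannot be the $\omega$-limit of a nonstationary orbit (for $v=1/2$ a separate center-manifold argument near the degenerate point $\mathrm{T}_1$ is needed for the same conclusion). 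Hence a non-$\mathrm{LRS}^-$ orbit has its (nonempty) $\omega$-limit set in $\mathrm{K}^{\ocircle}$; an orbit accumulating on the circle of fixed points $\mathrm{K}^{\ocircle}$ can do so only at points where the normal eigenvalues $-(2v+\Sigma_2)$, $-(2v+\Sigma_3)$ of~\eqref{linNalpha} are nonpositive, i.e.\ on $S_{\mathrm{VII}_0}$, and a center-manifold argument along $\mathrm{K}^{\ocircle}$ shows the $\omega$-limit is a single such point.

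For $v\in(1/2,1)$ I would use $Z_{\mathrm{sup}}$. Since $\Sigma_+^2\le\Sigma^2\le 1<4v^2$ one has $2v+\Sigma_+>0$ throughout the state space, so by~\eqref{Z1prim} $Z_{\mathrm{sup}}$ is nondecreasing along orbits and strictly increasing except on $\{\Sigma_+=-1/(2v),\ \Sigma_-=0\}$; the largest invariant subset of this set is empty for $v>1/2$, because there~\eqref{constrVIVII} forces $N_-^2=1-1/(4v^2)\ne 0$ and hence $\Sigma_-'=2\sqrt3\,N_+N_-\ne 0$. Therefore $Z_{\mathrm{sup}}\to\infty$ along every orbit, and as the numerator $(2v+\Sigma_+)^2$ is bounded this yields $|N_2N_3|\to 0$, so every $\omega$-limit set lies on the $\mathrm{II}_2\cup\mathrm{II}_3\cup\mathrm{K}^{\ocircle}$ boundary. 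For orbits off the invariant separatrix $\{\Sigma_+=-1/(2v)\}$, the function $1+2v\Sigma_+$ keeps a fixed nonzero sign and is strictly monotone by~\eqref{MonotoneVIVII}, hence constant on the $\omega$-limit set, forcing $\Sigma^2=1$ there; combined with $N_2N_3\to 0$ this puts the $\omega$-limit in $\mathrm{K}^{\ocircle}$, and then~\eqref{linNalpha} confines it to $S_{\mathrm{VII}_0}$ exactly as before. On the exceptional invariant set $\{\Sigma_+=-1/(2v)\}$, which is case~(iii), the flow reduces to two dimensions with $\Sigma_-^2+N_-^2$ constant by~\eqref{constrVIVII} and $Z_{\mathrm{sup}}'=8v\,\Sigma_-^2\,Z_{\mathrm{sup}}$, and one must show that it spirals onto the period-2 heteroclinic cycle on $\mathrm{II}_2\cup\mathrm{II}_3\cup\mathrm{K}^{\ocircle}$, in analogy with the Bowen's-eye picture in the footnote to Proposition~\ref{lem:genVIalpha}.

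The chain of monotonicity arguments that push $\omega$-limits onto $\mathrm{K}^{\ocircle}$ or onto the type~II boundary is routine. The hard part has two ingredients. In the subcritical and critical cases one must prove that no orbit other than $\mathrm{LRS}^-$ itself escapes to infinity — equivalently that a generic orbit does not have $\Sigma_+\to -1$ — which combines the repelling character of $\mathrm{T}_1$ from~\eqref{linTaub} with the trapping provided by $1+2v\Sigma_+$; controlling $N_2N_3$ (hence $N_+$) via $(N_2N_3)'=-4(2v\Sigma^2+\Sigma_+)N_2N_3$ is the subtle point here. In the supercritical case one must analyse the noncompact two-dimensional flow on $\{\Sigma_+=-1/(2v)\}$, ruling out escape along $N_+\to\infty$ by controlling the signs of the coefficients $2v\Sigma^2+\Sigma_+\pm\sqrt3\,\Sigma_-$ in the equations for $N_2',N_3'$, and establishing convergence to the period-2 cycle; I expect this last step to be the principal obstacle.
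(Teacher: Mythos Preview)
Your treatment of the supercritical case $v\in(1/2,1)$ is essentially the paper's argument: use $Z_{\mathrm{sup}}$, check that the exceptional set $\{\Sigma_+=-1/(2v),\ \Sigma_-=0\}$ is not invariant because $\Sigma_-'\neq 0$ there, conclude $Z_{\mathrm{sup}}\to\infty$ and hence $N_2N_3\to 0$, and then finish as in type~$\mathrm{VI}_0$. You also overestimate the difficulty of exception~(iii): the \emph{same} $Z_{\mathrm{sup}}$ argument applies on the invariant surface $\{\Sigma_+=-1/(2v)\}$, giving $N_2N_3\to 0$ there as well, so the $\omega$-limit of every orbit on this surface lies on its boundary in $\mathrm{II}_2\cup\mathrm{II}_3\cup\mathrm{K}^\ocircle$, which is precisely the period-2 cycle. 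No separate ``spiraling'' analysis is needed.

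The genuine gap is in your subcritical/critical argument. You try to run LaSalle on $1+2v\Sigma_+$ alone: it converges, hence is constant on the $\omega$-limit set, hence $N_-^2\equiv 0$ there, etc. But LaSalle requires a nonempty $\omega$-limit set, which in turn requires precompactness of the orbit, and in type~$\mathrm{VII}_0$ this is exactly what is at stake since $N_+$ is unbounded. You acknowledge this at the end, but your suggested fix --- combining the repelling nature of $\mathrm{T}_1$ with the direct equation $(N_2N_3)'=-4(2v\Sigma^2+\Sigma_+)N_2N_3$ --- does not close the gap, because the coefficient $2v\Sigma^2+\Sigma_+$ has no fixed sign. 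The paper avoids this circularity by bringing in the \emph{second} monotone quantity $Z_{\mathrm{sub}}=(1+2v\Sigma_+)^2/(N_2N_3)$. The increase of $1+2v\Sigma_+$ (shown via its first three derivatives, with $N_-=0$ producing only inflection points) forces every non-$\mathrm{LRS}^-$ orbit into the positively invariant region $\{\Sigma_+>-2v\}$; there $Z_{\mathrm{sub}}'=4(2v+\Sigma_+)Z_{\mathrm{sub}}>0$, so $Z_{\mathrm{sub}}\to\infty$, and since the numerator is bounded one gets $N_2N_3\to 0$ directly. Boundedness of $N_-$ then yields boundedness of both $N_2,N_3$, and the rest follows as in type~$\mathrm{VI}_0$. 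In short: you are missing the use of $Z_{\mathrm{sub}}$, which is what turns your heuristic into a proof.
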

\begin{proof}
The first two exceptions follow from the previous analysis of the LRS type
$\mathrm{VII}_0$ subset, due to~\eqref{Neqn}. Consider therefore type
$\mathrm{VII}_0$ non-LRS orbits, i.e., orbits for which $\Sigma_-^2 + N_-^2>0$
and thereby $|\Sigma_+|<1$ due to the constraint. Note that in contrast to the type $\mathrm{VII}_0$ unbounded state space, its boundary is given by the compact set $\mathrm{II}_2\cup\mathrm{II}_3\cup\mathrm{K}^\ocircle$.

In the subcritical and critical cases, $Z_{\mathrm{sub}}>0$ for all
non-$\mathrm{LRS}^-$ orbits. In the $\mathrm{LRS}^-$ case the orbit satisfies
$\lim_{\tau_-\rightarrow\infty}N_2N_3 = \infty$ in the subcritical case, while
$\mathrm{LRS}^-$ yields a line of fixed points with constant $N_2=N_3$
in the critical case. Then note that
\begin{subequations}\label{monp}
\begin{align}
(1 + 2v\Sigma_+)^\prime &= 4vN_-^2(1+2v\Sigma_+), \qquad N_-:=N_2-N_3,\\
(1 + 2v\Sigma_+)''|_{N_-=0} &= 0,\\
(1 + 2v\Sigma_+)'''|_{N_-=0} &= 96(N_2+N_3)^2\Sigma_-^2(1 + 2v\Sigma_+).
\end{align}
\end{subequations}
Thus $(1 + 2v\Sigma_+)$ is monotonically increasing for all non-LRS orbits
(i.e., orbits such that $\Sigma_-^2 + N_-^2>0$), except when $N_-=0$ (and thereby
$\Sigma_-\neq 0$), which corresponds to an inflection point in the growth of
the positive quantity $(1 + 2v\Sigma_+)$, due to~\eqref{monp}. Thus all
non-$\mathrm{LRS}^-$ orbits eventually enter the (positively) invariant set
$\Sigma_+>-2v$. Since, due to~\eqref{Z2prim}, $Z_{\mathrm{sub}}>0$ is strictly
monotonically increasing in the invariant set $\Sigma_+>-2v$, it follows that
$\lim_{\tau_-\rightarrow\infty}Z_{\mathrm{sub}} = \infty$ and thereby
$\lim_{\tau_-\rightarrow\infty}N_2N_3 = 0$. Thus the $\omega$-limit set of all
non-$\mathrm{LRS}^-$ orbits in the subcritical and critical cases resides in
the $\mathrm{II}_2\cup\mathrm{II}_3\cup\mathrm{K}^\ocircle$ boundary set. The
same local analysis of this boundary set as in type $\mathrm{VI}_0$ yields the same
result for the non-$\mathrm{LRS}^-$ orbits in type $\mathrm{VII}_0$.

In the supercritical case, $\Sigma_+= -1/(2v)$ forms an invariant separatrix surface,
which divides the $\mathrm{VII}_0$ state space into two disjoint invariant subsets
with $1+ 2\Sigma_+\neq 0$ on which $1 + 2v\Sigma_+$ is monotone, as in
type $\mathrm{VI}_0$. Due to~\eqref{Z1prim}, $Z_{\mathrm{sup}}>0$
in~\eqref{MVIVII} is strictly monotonically increasing everywhere in the
type $\mathrm{VII}_0$ state space, except at two lines on the invariant subset
$\Sigma_+=-1/(2v)$ given by $\Sigma_-=0$ and thereby $N_2=N_3\pm\sqrt{1- (1/2v)^2}$,
due to the constraint~\eqref{constrVIVII}. However, these lines, denoted
by $L^\pm_{\mathrm{VII}_0}$, are not invariant sets, in contrast to the fixed point $p_{\mathrm{VI}_0}$
in type $\mathrm{VI}_0$, since
$\Sigma_-^\prime|_{L^\pm_{\mathrm{VII}_0}} = \pm 2\sqrt{3}(N_2+N_3)\sqrt{1-(1/2v)^2}$.
This fact in combination with that $N_2=N_3\pm\sqrt{1- (1/2v)^2}$ on the lines
$L^\pm_{\mathrm{VII}_0}$
implies that
$\lim_{\tau_-\rightarrow\infty}Z_{\mathrm{sup}}=\infty$.
Since the numerator
$(2v+\Sigma_+)^2$ of $Z_{\mathrm{sup}}$ in~\eqref{MVIVII} is bounded, 
the unbounded growth of $Z_{\mathrm{sup}}$ implies that $\lim_{\tau_-\rightarrow\infty}N_2N_3 =0$.
Thus at least one of $N_2$ or $N_3$ decays to zero, while the other
variable is asymptotically bounded due to the constraint~\eqref{constrVIVII}.
Hence the $\omega$-limit set for all non-$\mathrm{LRS}^-$ type $\mathrm{VII}_0$
orbits resides in the $\mathrm{II}_2\cup\mathrm{II}_3\cup\mathrm{K}^\ocircle$
boundary set. This in turn leads to the same conclusions for the $\omega$-limit
sets as for the non-$p_{\mathrm{VI}_0}$ orbits in type $\mathrm{VI}_0$.
\end{proof}
\begin{proposition}\label{lem:genVIVIIalpha}
The $\alpha$-limit set (in $\tau_-$) for all Bianchi type $\mathrm{VII}_0$ orbits are as follows:
\begin{itemize}
\item[(i)] When $v \in (0,1/2)$, the $\alpha$-limit set of all non-$\mathrm{LRS}^+$ orbits
reside in the set $A_2\cap A_3$.
The $\mathrm{LRS}^+$ set consists of an
orbit such that $\lim_{\tau_-\rightarrow-\infty} N = \infty$, where $N:=N_2=N_3$.
\item[(ii)] When $v=1/2$, the $\alpha$-limit set of all non-$\mathrm{LRS}$ orbits
is the line of fixed points, $\mathrm{LRS}^-$. 
The $\mathrm{LRS}^+$ set consists of an
orbit for which $\lim_{\tau_-\rightarrow-\infty} N = \infty$.
\item[(iii)] When $v\in (1/2,1)$, all non-$\mathrm{LRS}$ orbits asymptotically
satisfy
\begin{equation}
\lim_{\tau_-\rightarrow-\infty}\Sigma_+=-\frac{1}{2v}, \qquad \lim_{\tau_-\rightarrow-\infty}N_+=\infty, \qquad N_+:=N_2+N_3,
\end{equation}
whereas $\Sigma_-$ and $N_-:=N_2-N_3$ are asymptotically oscillatory, since in coordinates $(\Sigma_-,N_-)=(\sqrt{1-\Sigma_+^2}\cos\psi,\sqrt{1-\Sigma_+^2}\sin\psi)$, the angle $\psi$ is strictly monotonic as $\tau_-\rightarrow -\infty$. Each $\mathrm{LRS}^\pm$ set consists of an orbit
such that $\lim_{\tau_-\rightarrow-\infty} N = \infty$.
\end{itemize}
\end{proposition}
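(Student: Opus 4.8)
The plan is to mirror the strategy already used for Propositions~\ref{lem:genVIalpha} and~\ref{lem:genVIVIIomega}: split the type $\mathrm{VII}_0$ orbits into the locally rotationally symmetric ones and the rest, treat each $v$-regime separately, and in every case combine the monotone/conserved quantities in~\eqref{MVIVII} (together with their derivatives~\eqref{MonotoneVIVII}, \eqref{Z1prim}, \eqref{Z2prim}, \eqref{monp}), the constraint~\eqref{constrVIVII} in the form $\Sigma^2+N_-^2=1$, the monotonicity principle of~\cite{waiell97}, and a linearization argument at $\mathrm{K}^{\ocircle}$. Since it is $\tau_-$-stability that is at stake, the guiding rule is that a quantity increasing in $\tau_-$ along an orbit has its $\alpha$-limit set where that quantity is minimized.

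First I would dispose of the LRS orbits using~\eqref{Neqn} with $\Sigma_+=\pm1$, simply reading off the sign of $N'$ in each regime. On $\mathrm{LRS}^+$ one gets $N'=-2(2v+1)N<0$ for every $v\in(0,1)$, so $N\to\infty$ as $\tau_-\to-\infty$. On $\mathrm{LRS}^-$ one gets $N'=2(1-2v)N$, which is positive for $v\in(0,1/2)$ (the orbit then emanates from $\mathrm{T}_1\in A_2\cap A_3$ as $\tau_-\to-\infty$), vanishes identically at $v=1/2$ (a line of fixed points), and is negative for $v\in(1/2,1)$ (so $N\to\infty$ as $\tau_-\to-\infty$). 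These give exactly the LRS parts of items (i)–(iii).

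For the non-LRS orbits the backbone is $1+2v\Sigma_+$, which by~\eqref{monp} satisfies $(1+2v\Sigma_+)'=4vN_-^2(1+2v\Sigma_+)$ and passes through a non-obstructing inflection wherever $N_-=0$ but $\Sigma_-\neq0$; hence it converges along every non-LRS orbit as $\tau_-\to-\infty$, and the monotonicity principle confines the $\alpha$-limit set to the invariant part of $\{N_-=0\}\cup\{\Sigma_+=-1/(2v)\}$, i.e.\ to $\mathrm{K}^{\ocircle}\cup\mathrm{LRS}^+\cup\mathrm{LRS}^-\cup\{\Sigma_+=-1/(2v)\}$. A linearization at $\mathrm{K}^{\ocircle}$ then localizes the $\mathrm{K}^{\ocircle}$-part: a type $\mathrm{VII}_0$ interior orbit keeps $N_2>0$ and $N_3>0$ for all $\tau_-$ (the sets $\{N_2=0\}$, $\{N_3=0\}$ being invariant), and by~\eqref{linNalpha} the $N_\alpha$-eigenvalue at $p\in\mathrm{K}^{\ocircle}$ is $-(2v+\Sigma_\alpha)$, which must be positive (that is, $p\in\mathrm{int}(A_\alpha)$) for $N_\alpha$ to decay toward the past; hence the admissible limit points of $\mathrm{K}^{\ocircle}$ lie in $\overline{\mathrm{int}(A_2)\cap\mathrm{int}(A_3)}$. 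In the subcritical case this is the arc $A_2\cap A_3$ (noting $\{\Sigma_+=-1/(2v)\}=\emptyset$ there, that $1+2v\Sigma_+<1+2v$ at the limit excludes $\mathrm{LRS}^+$, and that the only limit point of $\overline{\mathrm{LRS}^-}$ is $\mathrm{T}_1\in A_2\cap A_3$), which proves (i). In the critical case $\mathrm{int}(A_2)\cap\mathrm{int}(A_3)=\emptyset$, so no such $\mathrm{K}^{\ocircle}$-limit exists; instead I would use $Z_{\mathrm{crit}}=(1+\Sigma_+)^2/|N_2N_3|$ with $Z_{\mathrm{crit}}'=4(1+\Sigma_+)Z_{\mathrm{crit}}\geq0$, rule out a positive past limit (on such a limit set $Z_{\mathrm{crit}}$ would have to vanish), conclude $\Sigma_+\to-1$ and hence $\Sigma_-\to0$, $N_-\to0$ from $\Sigma^2\leq1$, so that the $\alpha$-limit lies on $\{\Sigma_+=-1\}=\overline{\mathrm{LRS}^-}$, giving (ii).

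The supercritical case (iii) is where the real work — and the main obstacle — lies. Here I would first invoke the analysis of $Z_{\mathrm{sup}}$ from the proof of Proposition~\ref{lem:genVIVIIomega}: $Z_{\mathrm{sup}}'$ vanishes only on the non-invariant lines $L^\pm_{\mathrm{VII}_0}$, so $Z_{\mathrm{sup}}$ is effectively strictly monotone, and since its numerator $(2v+\Sigma_+)^2\geq(2v-1)^2>0$ is bounded below, its past limit cannot be positive and must be $0$, forcing $|N_2N_3|\to\infty$ and hence, with $|N_-|\leq1$, $N_+\to\infty$ as $\tau_-\to-\infty$. Combining this with the monotone function $1+2v\Sigma_+$ and the exclusion of the $\mathrm{K}^{\ocircle}$- and LRS-limits (again $\mathrm{int}(A_2)$ and $\mathrm{int}(A_3)$ are disjoint) pins the limiting value to $\Sigma_+\to-1/(2v)$: the orbit approaches the invariant separatrix $\{\Sigma_+=-1/(2v)\}$ at infinity, on which $\Sigma_-^2+N_-^2=1-\Sigma_+^2\to1-1/(4v^2)$. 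The final step is the oscillation statement: writing $(\Sigma_-,N_-)=(\rho\cos\psi,\rho\sin\psi)$ with $\rho=\sqrt{1-\Sigma_+^2}$, one computes $\psi'=(\Sigma_-N_-'-N_-\Sigma_-')/\rho^2$ from~\eqref{dynsyslambdaRVIVII} and must show that, as the orbit runs to the $N_+=\infty$ boundary, the term linear in $N_+$ dominates and gives $\psi'$ a fixed sign bounded away from $0$, so that $\psi$ is eventually strictly monotone and $(\Sigma_-,N_-)$ winds indefinitely. I expect establishing this uniform rotation rate — controlling $\psi'$ against the other, potentially competing terms in the unbounded region near the separatrix, and making the ``$\alpha$-limit at infinity'' language rigorous (e.g.\ by compactifying the $N_+$-direction, or by arguing directly with the limits of the monotone functions rather than with a compact-space monotonicity principle) — to be the delicate core of the proof; the $\mathrm{VI}_0$ ``Bowen's eye'' picture and the references cited there should supply the right template.
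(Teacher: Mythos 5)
Your proof follows the same route as the paper in all three regimes: split off the LRS lines via~\eqref{Neqn}, then use $1+2v\Sigma_+$ together with the $Z$-functions of~\eqref{MVIVII} and the monotonicity principle (plus a linearization at $\mathrm{K}^{\ocircle}$) to localize the $\alpha$-limit set, and in the supercritical case pass to polar-type coordinates in $(\Sigma_-,N_-)$ to exhibit the oscillation. The rotation-rate step you flag as the delicate core is in fact dispatched directly in the paper by the explicit unconstrained system~\eqref{VIIunconstrsys}: with $M:=1/(\sqrt{3}N_+)$, equation~\eqref{psiprime} reads $\psi'=-2/M-(2v+\Sigma_+)\sin(2\psi)$, and since $M\to0$ while $2v+\Sigma_+$ stays bounded, the $-2/M$ term dominates and forces $\psi$ to be eventually strictly monotone as $\tau_-\to-\infty$, with no need for compactification or any additional estimate.
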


\begin{proof}
The reasoning in the proof of the previous proposition about $\omega$-limits
also yield the basic elements when $\tau_-\rightarrow - \infty$, but,
due to the unboundedness of the type $\mathrm{VII}_0$ state space, there
are some new issues, which did not occur when $\tau_-\rightarrow\infty$.

In the subcritical and critical cases, similar arguments as in the previous discussion
about $\omega$-limit sets lead to the following: $(1+2v\Sigma_+)$ is monotonically decreasing
when $\tau_-\rightarrow - \infty$, which shows that the $\alpha$-limit set for
all non-$\mathrm{LRS}^+$ orbits resides in the set $A_2\cap A_3$ for the subcritical
case, and in the line of fixed points $\mathrm{LRS}^-$ for the critical
case.\footnote{For type $\mathrm{VII}_0$ asymptotics in the critical GR case, see section 7
in~\cite{ring01} and section 4 in~\cite{ring03}. A more detailed analysis of the critical GR
case was performed in~\cite{heiugg09b}, which showed that each fixed
point on the $\mathrm{LRS}^-$ subset 
is the $\alpha$-limit set for a one-parameter set of orbits.}

For the non-LRS orbits in the supercritical case, $(1+2v\Sigma_+)^2>0$ is
monotonically decreasing as $\tau_-\to -\infty$, and orbits thereby approach
the invariant set $\Sigma_+= -1/(2v)$. Similar reasoning
as in the proposition for the $\omega$-limit using the monotone function
$Z_{\mathrm{sup}}$ results in $\lim_{\tau_-\rightarrow-\infty} N_2N_3=\infty$ for the
non-LRS orbits. Since $N_+^2 = N_-^2 + 4N_2N_3$, $N_\pm = N_2\pm N_3$,
and since $N_-^2$ is bounded ($N_-^2 = 1- \Sigma^2< 1$) it follows that
$\lim_{\tau_-\rightarrow-\infty} N_+=\infty$. 
To study the asymptotic behaviour of $\Sigma_-$ and $N_-$, we introduce
polar coordinates for $N_-$ and $\Sigma_-$ and solve the
constraint~\eqref{constrVIVII}. This leads to the following set of
new variables:
\begin{equation}
(\Sigma_+,\Sigma_-, N_-, N_+) =
\left(\Sigma_+,\sqrt{1-\Sigma_+^2}\cos\psi, \sqrt{1-\Sigma_+^2}\sin\psi, \frac{1}{\sqrt{3}M}\right),
\end{equation}
which result in the unconstrained dynamical system
\begin{subequations}\label{VIIunconstrsys}
\begin{align}
\Sigma_+^\prime &= (1-\Sigma_+^2)(1 + 2v\Sigma_+)[1 - \cos(2\psi)],\\
M^\prime &= \left(2\Sigma_+(1+2v\Sigma_+) + (1-\Sigma_+^2)[2v(1 + \cos(2\psi)) + 3M\sin(2\psi)]\right)M,\\
\psi^\prime &= -\frac{2}{M} - (2v + \Sigma_+)\sin(2\psi).\label{psiprime}
\end{align}
\end{subequations}

We have already shown that $\lim_{\tau_-\rightarrow-\infty}\Sigma_+=-1/(2v)$
and $\lim_{\tau_-\rightarrow-\infty}N_+=\infty$, from which it follows that
$\lim_{\tau_-\rightarrow-\infty}M=0$. Due to this, and since $2v + \Sigma_+$
is bounded, equation~\eqref{psiprime} implies
that $\psi$ is strictly monotonic as $\tau_-\rightarrow-\infty$.\footnote{For brevity
we will refrain from deriving explicit asymptotic expressions for $\Sigma_+$, $M$ and $\psi$,
and thereby $\Sigma_\pm$ and $N_2$, $N_3$ in the supercritical type $\mathrm{VII}_0$ case.
However, to derive such expressions there are several different methods one can use, e.g.
those in~\cite{waietal99,leenun18}, or in~\cite{ren07}, or the asymptotic averaging method
used in~\cite{alhetal15}.}
\end{proof}

We now compare the ingredients for the proof given here for the
critical case with well-known proofs in GR. In our approach, $Z_{\mathrm{sup}}$
and $Z_{\mathrm{sub}}$ in~\eqref{MVIVII} become identical in the
critical case:
\begin{equation}\label{critmon}
Z_\mathrm{crit} := Z_{\mathrm{sup}}=Z_{\mathrm{sub}}= \frac{(1+\Sigma_+)^2}{|N_2N_3|},
\end{equation}
where
\begin{equation}\label{critmonp}
Z_\mathrm{crit}^\prime = 4(1+ \Sigma_+)Z_\mathrm{crit}.
\end{equation}
In the traditional approach to the GR case, see~\cite{bog85,waihsu89,ren97,rin01},
two monotone functions $Z_+$ and $Z_-$ were respectively used for type $\mathrm{VI}_0$ and $\mathrm{VII}_0$:
\begin{equation}
Z_\pm := \frac{\Sigma_-^2 + N_\pm^2}{|N_2N_3|},
\end{equation}
where
\begin{equation}
Z_\pm^\prime = \frac{4(1+\Sigma_+)\Sigma_-^2}{\Sigma_-^2 + N_\pm^2}Z_\pm.
\end{equation}
Even though the evolution equation for the type $\mathrm{VII}_0$ function $Z_-$ can be simplified to $Z_-^\prime = 4\Sigma_-^2(1-\Sigma_+)^{-1}Z_-$,
%
%
%
%
%
the functions $Z_\pm$ arguably gives a more cumbersome analysis
than the unified function $Z_\mathrm{crit}=Z_{\mathrm{sup}}=Z_{\mathrm{sub}}$ in~\eqref{critmon}, which naturally arises from the scale-automorphism symmetry of these models.



\subsubsection*{Bianchi types VIII and IX}

In Appendix~\ref{app:lrscaleaut}, we derive the following monotone function
from the scale symmetry of the vacuum $\lambda$-$R$ type VIII and IX models:
\begin{equation}\label{Delta}
\Delta := 3|N_1N_2N_3|^{2/3},
\end{equation}
which, due to~\eqref{intro_dynsyslambdaR}, satisfy
\begin{equation}
\Delta^\prime = -24v\Sigma^2\Delta.
\end{equation}
Thus $\Delta$ is monotonically decreasing when $\Sigma^2 >0$,
and has an inflection point when $\Sigma^2 = 0$, since
%
\begin{equation}
\left.\Delta''\right|_{\Sigma^2 =0} = 0, \qquad
\left.\Delta'''\right|_{\Sigma^2 =0} = - 8 v\left({\cal S}_1^2 + {\cal S}_2^2 + {\cal S}_3^2\right)\Delta,
\end{equation}
where ${\cal S}_1^2 + {\cal S}_2^2 + {\cal S}_3^2>0$.
Therefore,
\begin{equation}\label{Delta0}
\lim_{\tau_-\rightarrow\infty}\Delta = 0,
\end{equation}
and consequently
\begin{equation}\label{Sigma2asymp}
\lim_{\tau_-\rightarrow\infty}\Sigma^2 \leq 1,
\end{equation}
since the definition of $\Omega_k$ in~\eqref{Omega_k} implies
$\Omega_k + \Delta \geq 0$,\footnote{In type $\mathrm{VIII}$, $\Omega_k> 0$ since
the curvature scalar $R < 0$, while $R$ can be positive in type $\mathrm{IX}$.
Using that $R = e^{-4v\beta^\lambda}\bar{R}$ and that $\bar{R}$ is a function of
$\beta^\pm$, as given in~\eqref{barR} in Appendix~\ref{app:dom}, shows that $\bar{R}$
has a negative minimum when $\beta^\pm =0$. Adding a constant so that the minimum
becomes zero, corresponds to adding $\Delta$ to $\Omega_k$, where $\Omega_k + \Delta =0$
when $N_1=N_2=N_3$, which corresponds to $\beta^\pm = 0$. Thus $\Omega_k+\Delta \geq 0$.}
and hence $\Sigma^2 \leq 1 + \Delta$, due to the constraint~\eqref{intro_cons1}.
Moreover, because of~\eqref{Delta0} it follows that at least one of the variables $N_\alpha$
must decay to $0$, for all $v \in (0,1)$. Thus the $\omega$-limit set of all Bianchi type
IX orbits resides in the union of the closure of the type $\mathrm{VII}_0$ subsets,
whereas the $\omega$-limit set of all type VIII orbits lies in closure of the union
of the single type $\mathrm{VII}_0$ subset and the two type $\mathrm{VI}_0$
subsets.\footnote{All initially expanding vacuum $\lambda$-$R$ Bianchi type
I--VIII models are forever expanding, due to that the spatial curvature of
these models satisfies $R\leq 0$. However, all initially expanding type
IX solutions reach a point of maximum expansion and then recollapse.
This has been proven in~\cite{heiugg09b} for GR, and a similar proof
can be given for the $\lambda$-$R$ models. We thereby assume that initial
data in type IX correspond to initially expanding solutions, where we are
interested in the initial singularity when $\tau_-\rightarrow \infty$.}

Similarly to types $\mathrm{VI}_0$ and $\mathrm{VII}_0$, the 
dynamical system~\eqref{intro_dynsyslambdaR} for Bianchi types $\mathrm{VIII}$ and $\mathrm{IX}$ 
admits invariant locally rotationally symmetric (LRS) sets: the (physically equivalent)
type $\mathrm{IX}$ LRS sets are given by
\begin{subequations}\label{LRSIX}
\begin{align}
\mathcal{LRS}_\alpha &:=
\left\{(\Sigma_\alpha,\Sigma_\beta,\Sigma_\gamma,N_\alpha,N_\beta,N_\gamma) \in \mathbb{R}^6 \Bigm|
\begin{array}{c}
\,\, \Sigma_\beta=\Sigma_\gamma, \, N_\beta=N_\gamma, \\
\,\, \text{ satisfying \eqref{intro_cons1}-\eqref{intro_cons2}}
\end{array}
\right\},
\end{align}
\end{subequations}
where $(\alpha\beta\gamma)$ is a permutation of $(123)$, while type $\mathrm{VIII}$ only
admits a single LRS set since one of the variables $N_1$, $N_2$, $N_3$ has an opposite sign compared
to the other two. The LRS type $\mathrm{VIII}$ and $\mathrm{IX}$ sets have three distinct 
dynamical regimes, the subcritical, critical and supercritical cases, and 
boundaries given by the one-dimensional sets $\mathrm{LRS}^\pm$ in \eqref{LRS}.

Next we turn to some dynamical conjectures for Bianchi type VIII and IX.
We expect that the above features will be important ingredients
in future proofs of these conjectures, both in the $\lambda$-$R$
case and for more general HL models.

\section{Dynamical conjectures}\label{sec:conjectures}

Apart from the previous section, the main part of the paper has focussed on
the discrete dynamics of the Kasner circle map $\mathcal{K}$, associated
with the heteroclinic chains obtained by concatenation of Bianchi type II
heteroclinic orbits in the $\lambda$-$R$ Bianchi type VIII and IX models
with $v\in (0,1)$. In particular, we have shown that the
critical case to which GR belongs, $v=1/2$, represents a bifurcation,
where non-generic chaos on a Cantor set for the supercritical case, $v \in (1/2,1)$,
is replaced by generic chaos for the critical and subcritical
cases, $v \in (0,1/2]$.

It remains to connect the discrete dynamics of~$\mathcal{K}$ with
asymptotic continuous dynamics in Bianchi type VIII and IX, described by the
dynamical system~\eqref{intro_dynsyslambdaR}. We therefore conclude with some
dynamical conjectures, which reflect an expected hierarchy of difficulty as
regards possible proofs. The conjectures can be divided into two classes:
(i) if and how many type VIII and IX solutions have an infinite heteroclinic
chain on the Bianchi type I and II boundary as their $\omega$-limit set,
(ii) if generic solutions of type VIII and IX asymptotically approach the
type I boundary or the union of the type I and II boundary sets,
and how this depends on the parameter $v$.


%
\begin{conjecture}\label{conjperiod2}
In the Bianchi type VIII and IX supercritical case, $v\in(1/2,1)$, each
heteroclinic cycle has a stable invariant set
(as $\tau_-\rightarrow\infty$) of co-dimension one.
\end{conjecture}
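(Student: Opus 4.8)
The plan is to localise the period~2 cycle $\Gamma$ on the Bianchi type $\mathrm{VI}_0$ or $\mathrm{VII}_0$ boundary set it lives on, invoke the results of Section~\ref{sec:firstprinciples} there, and then control the one remaining transverse direction by a Poincar\'e return-map analysis along the cycle. First I would pin down the geometry. By the construction in the proof of Lemma~\ref{MaxMinExpansion}, a period~2 cycle $\Gamma$ in type VIII or IX projects in $(\Sigma_1,\Sigma_2,\Sigma_3)$-space onto the segment between two auxiliary points $\mathrm{Q}_\alpha/v$ and $\mathrm{Q}_\beta/v$, so that $\Sigma_\gamma\equiv 1/v$ along all of $\Gamma$ — both the two connecting type~II orbits and the two Kasner endpoints $p_\alpha\in\mathrm{int}(A_\alpha)$, $p_\beta\in\mathrm{int}(A_\beta)$ — where $(\alpha\beta\gamma)$ is a permutation of $(123)$. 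Hence $\Gamma$ lies in the $3$-dimensional boundary set $B:=\{N_\gamma=0\}$, which is a type $\mathrm{VI}_0$ set when $N_\alpha N_\beta<0$ and a type $\mathrm{VII}_0$ set when $N_\alpha N_\beta>0$ (for type VIII one cycle sits on a $\mathrm{VII}_0$ boundary and two on $\mathrm{VI}_0$ boundaries; for type IX all three cycles sit on $\mathrm{VII}_0$ boundaries). Proposition~\ref{lem:genVIalpha}(ii) for type $\mathrm{VI}_0$, and Propositions~\ref{lem:genVIVIIomega}(iii) and~\ref{lem:genVIVIIalpha}(iii) for type $\mathrm{VII}_0$, already give that, \emph{inside $B$}, the set $\{\Sigma_\gamma=1/v\}\cap B$ (which in Misner variables adapted to the axis $\gamma$ is $\Sigma_+=-1/(2v)$) is $2$-dimensional, of co-dimension one in $B$, and has $\Gamma$ as its $\omega$-limit set, while $\Gamma$ carries a single unstable direction in $B$, transverse to $\{\Sigma_\gamma=1/v\}$.

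The conjecture therefore reduces to a transverse-stability statement in the remaining direction, the structure-constant coordinate $N_\gamma$ normal to $B$: one must show that $N_\gamma$ is attracting along $\Gamma$, so that the co-dimension-one-in-$B$ stable set extends to a co-dimension-one stable invariant set of $\Gamma$ inside the $4$-dimensional state space. Here the key favourable fact is that, along $\Gamma$, the linearised $N_\gamma$-equation~\eqref{intro_dynsyslambdaR_N} reads
\begin{equation}
\big(\ln|N_\gamma|\big)' \;=\; -2\big(2v\Sigma^2+\Sigma_\gamma\big)\big|_{\Gamma} \;=\; -2\big(2v\Sigma^2+\tfrac1v\big)\;\le\;-\tfrac3v\;<\;0 ,
\end{equation}
using $\Sigma_\gamma\equiv 1/v$ and $\Sigma^2\ge 1/(4v^2)$ on $\Gamma$ (the latter because $\Sigma_\gamma=-2\Sigma_+$ forces $\Sigma_+=-1/(2v)$ there). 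Thus the transverse linearised flow contracts at \emph{every} point of $\Gamma$, and since $\Gamma$ traverses the two Kasner fixed points the contraction over one loop is infinitely strong; moreover $\Delta=3|N_1N_2N_3|^{2/3}\to 0$ from Section~\ref{sec:firstprinciples} forces every type VIII/IX orbit towards the closure of $B$, which makes it plausible that generic nearby orbits indeed shadow $\Gamma$.

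To make this rigorous one must replace the infinite-time Floquet argument by a return map: choose a cross-section transverse to $\Gamma$ and compose the two local passage maps near $p_\alpha$ and $p_\beta$ with the two global transition maps along the type~II orbits. Near each Kasner point the linearisation has eigenvalue $0$ in the Kasner-circle direction $\varphi$ and eigenvalue $-(2v+\Sigma_\delta)$ in the $N_\delta$-direction, $\delta=1,2,3$, by~\eqref{linNalpha}; at $p_\alpha$ precisely the $N_\alpha$-eigenvalue is positive, the $N_\beta$- and $N_\gamma$-eigenvalues are negative, and $\Sigma_\gamma=1/v$ makes the $N_\gamma$-eigenvalue $-(2v+1/v)$. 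The passage map near such a saddle contracts the $N_\gamma$-coordinate by a positive power of the incoming size — the exponent being the ratio of the $N_\gamma$-contraction rate to the unstable rate — so composing the two passages with the bounded global maps shows $N_\gamma$ is contracted by the return map while the only expansion is the one $\Sigma_\gamma$-direction inherited from the in-$B$ analysis. One then expects the linearised return map to split into a $2$-dimensional stable part (the $N_\gamma$-direction together with the in-$B$ stable direction) and a $1$-dimensional unstable part, yielding the desired co-dimension-one stable invariant set of $\Gamma$.

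The genuine obstacles are three. (a) The Kasner fixed points are non-hyperbolic, so the passage maps must be controlled with the dedicated estimates near fixed points on the Kasner circle developed in the literature on Bianchi heteroclinic chains (see~\cite{bre16,dut19} and references therein); the center direction $\varphi$ returns to itself over a loop, but its coupling to the hyperbolic directions must be tracked. (b) For the $\mathrm{VII}_0$ cycles the approach to $\Gamma$ is oscillatory ($\psi$ is strictly monotone by Proposition~\ref{lem:genVIVIIalpha}(iii)) and the state space is unbounded, so the stable set will not be a smooth manifold but a ``Bowen's-eye''-type invariant set as in type $\mathrm{VI}_0$ (cf.~\cite{Tak94,dut19}), and the conclusion should be phrased accordingly. (c) One must prove that \emph{exactly one} unstable direction survives in the full $4$-dimensional setting, i.e.\ that coupling $N_\gamma$ to $\Sigma_\gamma$ at the saddle passages creates no new instability; this is where the uniform sign $-2(2v\Sigma^2+1/v)<0$ from the second step has to be leveraged carefully. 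I expect (a) and (c) to be the main difficulty.
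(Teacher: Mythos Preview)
This statement is a \emph{conjecture} in the paper, not a theorem; the paper offers no proof. Its entire discussion is the single remark following the statement: that the methods of~\cite{lieetal10,lieetal12,beg10} should apply, together with the observation that the period~2 cycle is ``arguably more special than the problems in the aforementioned references,'' so that ``other types of proofs might therefore be possible.'' There is therefore no paper proof to compare your proposal against.

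That said, your outline is compatible with the paper's hint and considerably more detailed than anything the paper provides. Your decomposition---first invoke Propositions~\ref{lem:genVIalpha}(ii) and~\ref{lem:genVIVIIomega}(iii)/\ref{lem:genVIVIIalpha}(iii) to obtain the co-dimension-one stable set of $\Gamma$ \emph{inside} the boundary set $B=\{N_\gamma=0\}$, then control the single transverse direction $N_\gamma$---is the natural strategy, and your observation that $(\ln|N_\gamma|)'=-2(2v\Sigma^2+1/v)\le-3/v<0$ uniformly along $\Gamma$ is exactly the ``more special'' feature the paper alludes to: unlike general heteroclinic chains on $\mathrm{K}^\ocircle$, here the transverse contraction has a definite sign at \emph{every} point of the cycle, not just on average over a return. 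This should indeed simplify the passage-map estimates near the Kasner endpoints compared with the general case treated in~\cite{lieetal10,lieetal12,beg10}.

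Your list of obstacles is accurate. The non-hyperbolicity of the Kasner endpoints (the zero eigenvalue in the $\varphi$-direction) is the main technical point, and the Bowen's-eye character of the in-$B$ approach for the $\mathrm{VII}_0$ case (already flagged by the paper's footnote to Proposition~\ref{lem:genVIalpha}) means the stable set will not be a manifold in the usual sense. One point worth sharpening: within $B$ the unstable direction \emph{switches} between $N_\alpha$ and $N_\beta$ as you traverse the cycle, so the return-map analysis cannot treat a fixed ``in-$B$ stable direction''; what you are really using is that the $2$-dimensional separatrix $\{\Sigma_+=-1/(2v)\}\cap B$ is forward-invariant and attracted to $\Gamma$, and then you graft the $N_\gamma$-contraction onto that. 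The expansion of the Kasner-circle coordinate over one full loop is $g(p_\alpha)g(p_\beta)=m^2>1$ (with $m$ as in Lemma~\ref{MaxMinExpansion}), which gives the single unstable multiplier you need.
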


It should be possible to prove this conjecture with, e.g., the methods used
in~\cite{lieetal10,lieetal12,beg10}, but the situation for the period 2 cycle
is arguably more special than the problems in the aforementioned references,
and other types of proofs might therefore be possible. 
Loosely speaking, the heteroclinic chains with period 2
form the `boundary' of the infinite heteroclinic chains associated with the Cantor
set $C$. It thus seems natural to establish if the period 2 chain has an attracting
set of co-dimension one before addressing the next more ambitious Conjecture (which
contains the previous one as a special case since heteroclinic cycles can be viewed 
as special examples of infinite heteroclinic chains).

\begin{conjecture}\label{conj:supercriticalchain}
In the Bianchi type VIII and IX supercritical case, $v\in (1/2,1)$, each
infinite heteroclinic chain associated with the Cantor set $C$ has a stable
invariant set (as $\tau_-\rightarrow\infty$) of co-dimension one.
\end{conjecture}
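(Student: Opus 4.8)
The plan is to establish a stable manifold result for each individual infinite heteroclinic chain in the Cantor set $C$ by concatenating the local analysis of the Bianchi type II subsets along the chain, much in the spirit of the arguments in~\cite{lieetal10,lieetal12,beg10,bre16,dut19}. First I would fix an infinite heteroclinic chain $\gamma$ with $\alpha$-limit points $(p_k)_{k\in\mathbb{N}_0}\subset C$, where $p_{k+1}=\mathcal{K}(p_k)$, and study the transition maps near each type II heteroclinic orbit $h_k\subset \mathrm{II}_{a_k}$ connecting $p_k$ to $p_{k+1}$. Near $p_k\in\mathrm{int}(A_{a_k})$ the linearization~\eqref{linNalpha} shows that, transverse to $\mathrm{K}^{\ocircle}$, exactly one eigendirection ($N_{a_k}$) is unstable and the remaining ones (two $N_\beta$ with $\beta\neq a_k$, plus the constraint-reduced tangential directions along $\mathrm{K}^{\ocircle}$) are stable. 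The heteroclinic orbit $h_k$ shadows the unstable $N_{a_k}$ direction and arrives at $p_{k+1}$, where the relevant unstable direction is $N_{a_{k+1}}$ with $a_{k+1}\neq a_k$, so the orbit never leaves through the same channel twice --- this is precisely the combinatorial constraint encoded in $W_\infty$.

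The core technical step is to build, for each $k$, a local Poincar\'e section $\Sigma_k$ transverse to $h_k$ near $p_k$ and estimate the return/transition map $\Pi_k:\Sigma_k\to\Sigma_{k+1}$. One shows that $\Pi_k$ is a hyperbolic-type map whose contraction/expansion rates are controlled by the eigenvalue ratios at $p_k$, which in turn are uniformly bounded because, by Lemma~\ref{MaxMinExpansion}, the points $p_k\in C$ stay in the region of $\mathrm{K}^{\ocircle}$ where $g(p_k)\in[m,M]$ with $m>1$; equivalently, the $p_k$ are uniformly bounded away from the tangential points and the stable set $S$. This uniform hyperbolicity along the chain is what allows the standard graph-transform / inclination-lemma machinery to apply: one constructs a sequence of nested ``cone fields'' around the chain and shows that the set of initial data in type VIII or IX whose forward orbit shadows $\gamma$ (as $\tau_-\to\infty$) forms a locally invariant manifold. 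Its codimension is one: the single unstable direction at the current $\alpha$-limit point is the one ``used up'' by following the chain forward, while everything transverse to it (the two $N_\beta$ stable directions and the one-dimensional tangential stable direction, minus the constraint) assembles into a codimension-one stable set. Here Conjecture~\ref{conjperiod2} serves as the model base case --- a periodic chain, where the transition maps compose into a single return map and Floquet theory gives the stable manifold directly --- and the general infinite-chain case is the ``non-autonomous'' version of the same picture.

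The main obstacle, and the reason this is stated as a conjecture rather than a theorem, is the \emph{uniformity} of the estimates along an infinite aperiodic chain together with the \emph{non-uniform} hyperbolicity of the full flow near $\mathrm{K}^{\ocircle}$. Although $g(p_k)$ is bounded below by $m>1$ on $C$, the tangential (Kasner-circle) directions at $p_k$ are only weakly hyperbolic --- the eigenvalue $4v(1-\Sigma^2)|_{\mathrm{K}^{\ocircle}}=0$ vanishes on $\mathrm{K}^{\ocircle}$ itself --- so the linearization transverse to $\mathrm{K}^{\ocircle}$ has a genuine center-like behaviour in the $\eta$-type direction, and one must track how the passage near $\mathrm{K}^{\ocircle}$ (which takes unbounded $\tau_-$-time as $\Sigma^2\to 1$) distorts the cone fields. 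Controlling this requires delicate estimates on the time spent near each $p_k$ versus the expansion gained, analogous to the ``era''-type analyses in the GR Mixmaster literature, and has to be done uniformly over all chains in $C$. A secondary difficulty is the interface with the collapse of the period-2 cycles and the center manifold of the tangential points as $v\to 1/2$ (noted in the footnote to Lemma~\ref{MaxMinExpansion}): near the boundary of the supercritical regime the hyperbolicity constant $m-1$ degenerates, so any proof must either restrict to $v$ bounded away from $1/2$ or incorporate a careful blow-up near the tangential points. I would expect a complete proof to first dispatch Conjecture~\ref{conjperiod2} by the methods of~\cite{lieetal10,beg10}, then upgrade to the aperiodic case using the uniform bounds $m,M$ from Lemma~\ref{MaxMinExpansion} as the replacement for periodicity, with the center-direction estimates near $\mathrm{K}^{\ocircle}$ being the genuinely new ingredient.
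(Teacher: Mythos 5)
The paper does not prove this statement: it is stated as Conjecture~\ref{conj:supercriticalchain}, and the surrounding discussion only sketches why it is expected, suggests Conjecture~\ref{conjperiod2} as a stepping stone, and points to~\cite{lieetal10,lieetal12,beg10,bre16,dut19} as the likely toolbox. Your ``proposal'' is therefore a research plan rather than a proof, and you were right to present it as such --- there is no paper proof to compare against.

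That said, your plan is closely aligned with what the paper hints at: establish the period-2 case first (return map, Floquet theory), then upgrade to aperiodic chains using Lemma~\ref{MaxMinExpansion} to supply the uniform hyperbolicity bound $m>1$ that periodicity would otherwise provide. Your identification of the genuine obstacles --- the center-like direction near $\mathrm{K}^{\ocircle}$, the unbounded passage times as $\Sigma^2\to 1$, and the collapse of the period-2 cycles toward the Taub points as $v\to 1/2$ --- matches the paper's own cautionary remarks and the known difficulties in the GR Mixmaster literature. The observation that the uniform bounds $m,M$ on $|D\mathcal{K}|$ over $C$ can substitute for periodicity is the right replacement ingredient.

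Two refinements. First, you call the direction tangent to $\mathrm{K}^{\ocircle}$ a ``tangential stable direction''; since $\mathrm{K}^{\ocircle}$ is a curve of fixed points, that direction is a \emph{center} direction (eigenvalue zero), not a stable one. You do correctly flag this center-like behaviour in your final paragraph, so it is an internal inconsistency of phrasing rather than a conceptual error, but it is precisely this center direction --- the drift along $\mathrm{K}^{\ocircle}$ during the long passage near each $p_k$ --- that carries the real difficulty, and any complete argument must control it, e.g.\ by Takens-type linearization or a blow-up near $\mathrm{K}^{\ocircle}$ as in~\cite{lieetal10,dut19}. Second, for a \emph{fixed} $v\in(1/2,1)$ the constants $m,M$ of Lemma~\ref{MaxMinExpansion} are fixed, so the degeneration $m\to 1$ as $v\to 1/2$ is not an obstruction to proving the conjecture for a given $v$; it would only matter if one sought uniformity in $v$ or a continuous limit into the critical case $v=1/2$, which the paper treats separately.
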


Incidentally, the special role of $C$ illustrates that it may not be sufficient
to establish the existence of a stable set for the full global understanding of
asymptotics of a given model, such as the models with dimension 11 or higher
in~\cite{dametal02}. Models with a stable set may thereby be more complicated
than one expects, and in quite interesting ways. Similar conjectures can be
formulated for the set $\tilde{C} \subset \mathrm{K}^\ocircle$ in the subcritical case, 
$v\in (0,1/2)$, e.g.,
\begin{conjecture}\label{conj:subcriticalchain}
In the Bianchi type VIII and IX subcritical case, $v\in (0,1/2)$, each infinite heteroclinic chain
associated with the set $\tilde{C} \subset \mathrm{K}^\ocircle$
in~\eqref{defofCtilde} has a stable invariant set (as $\tau_-\rightarrow\infty$)
of co-dimension one.
\end{conjecture}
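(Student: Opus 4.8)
The plan is to adapt the shadowing and stable-manifold machinery developed for the Bianchi type VIII and IX dynamics in General Relativity (\cite{lieetal10,lieetal12,beg10,bre16,dut19}) to the subcritical $\lambda$-$R$ system~\eqref{intro_dynsyslambdaR}, following the same strategy that one would use for Conjecture~\ref{conj:supercriticalchain}, but now exploiting that along any heteroclinic chain coded by a point $p\in\tilde{C}$ in~\eqref{defofCtilde} the Kasner circle map is single-valued and every visited Kasner point $p_k:=\mathcal{K}^k(p)$ lies in exactly one unstable arc $\mathrm{int}(A_{a_k})$, hence carries exactly one positive eigenvalue among the $N_\alpha$-directions. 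Fix such a chain $\gamma=(\gamma_k)_{k\in\mathbb{N}_0}$, where $\gamma_k$ is the type $\mathrm{II}_{a_k}$ heteroclinic orbit from $p_k$ to $p_{k+1}$ with $a_{k+1}\neq a_k$. The aim is to construct an injectively immersed invariant submanifold $W$ of codimension one inside the four-dimensional constraint surface, such that every solution starting in $W$ shadows $\gamma$ and has $\overline{\gamma}$ as its $\omega$-limit set as $\tau_-\to\infty$.

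First I would carry out the local analysis at each $p_k$. Linearising~\eqref{intro_dynsyslambdaR} at $p_k$, and using~\eqref{linNalpha}, one finds eigenvalues $\lambda^{u}_k=-(2v+\Sigma_{a_k})>0$ in the $N_{a_k}$-direction, $\lambda^{s,1}_k=-(2v+\Sigma_{b_k})<0$ and $\lambda^{s,2}_k=-(2v+\Sigma_{c_k})<0$ in the $N_{b_k},N_{c_k}$-directions with $\{a_k,b_k,c_k\}=\{1,2,3\}$, together with the zero eigenvalue tangent to $\mathrm{K}^{\ocircle}$. Since $a_{k-1}\in\{b_k,c_k\}$, the incoming orbit $\gamma_{k-1}$ reaches $p_k$ tangent to a stable eigendirection and the outgoing orbit $\gamma_k$ leaves $p_k$ tangent to the unstable $N_{a_k}$-direction. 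I would then introduce local cross-sections $\Sigma^{\mathrm{in}}_k,\Sigma^{\mathrm{out}}_k$ transverse to $\gamma_{k-1},\gamma_k$ and define transition maps $\Psi_k:\Sigma^{\mathrm{in}}_k\to\Sigma^{\mathrm{in}}_{k+1}$ as a composition of a Shil'nikov-type local passage near $p_k$ with a global passage close to the time-$T$ flow along $\gamma_k$; a preparatory Sternberg-type linearisation near each $p_k$, controlling the resonances between $\lambda^{s,1}_k$ and $\lambda^{s,2}_k$, keeps the distortion of these maps bounded.

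The heart of the argument is to establish hyperbolicity of the infinite sequence $(\Psi_k)$ and then run a graph-transform (Hadamard--Perron/inverse-limit) construction over the pseudo-orbit $\gamma$. On $\Sigma^{\mathrm{in}}_k$ the two stable $N$-directions contract, the direction along $\mathrm{K}^{\ocircle}$ is expanded precisely by the Kasner-map factor $g(p_k)=|D\mathcal{K}(p_k)|>1$ of Lemma~\ref{KasnerCircMapEXP} and~\eqref{KasnerCircPrime}, and the $N_{a_k}$-direction is the one along which $\gamma_k$ escapes; collecting the expanding directions of the composed transition maps yields a one-dimensional unstable subbundle along $\gamma$ (exactly as for the stable set of a point in a horseshoe, the infinitely many steering conditions being organised by this single unstable direction), so that $W$ is of codimension one. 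The monotone functions of Appendix~\ref{app:lrscaleaut} — in particular $\Delta$ in~\eqref{Delta} with $\lim_{\tau_-\to\infty}\Delta=0$, and the quantity $Z_{\mathrm{sub}}$ in~\eqref{MVIVII} — would be used to confine the relevant orbits to a neighbourhood of the Bianchi type I and II boundary, so that the return-map formalism applies globally and not merely in a small neighbourhood of a single cycle.

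I expect the main obstacle to be the \emph{non-uniformity} of the hyperbolicity: the expansion $g(p_k)$ along $\mathrm{K}^{\ocircle}$ degenerates to $1$ as $p_k$ approaches a tangential point (cf. Figure~\ref{fig:plotg} and the footnote to Lemma~\ref{KasnerCircMapEXP}), so a chain coded by $\tilde{C}$ that comes arbitrarily close to $\partial(A_\alpha\cap A_\beta)$ cannot be handled by a naive uniformly hyperbolic shadowing lemma, and in addition the local passage maps near the $p_k$ inherit distortions that grow when $\gamma_{k-1}$ enters deeper into the stable manifold. Resolving this will require either first identifying a dense invariant subset of $\tilde{C}$ on which $g$ is uniformly bounded below and proving the conjecture there, or developing a non-uniformly hyperbolic version of the graph-transform argument with errors along $\gamma$ summable in $k$, in the spirit of the methods referenced in~\cite{Katok95,BDV05,Lima19,dut19}. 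A secondary difficulty, already flagged in the footnote to Lemma~\ref{MaxMinExpansion}, is the interplay near each $p_k$ between the codimension-one stable set of the chain and the two-dimensional centre manifold of the tangential points together with the stable manifold of the period~$2$ chains; understanding how $W$ fits into this more degenerate local picture may force additional genericity restrictions on the chains in $\tilde{C}$ for which the statement can be proved in full.
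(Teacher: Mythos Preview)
The statement you are attempting to prove is labelled a \emph{Conjecture} in the paper, and the paper provides no proof of it; it is explicitly left open. So there is no ``paper's own proof'' to compare your proposal against. What the paper does offer is a brief discussion immediately following the conjecture: it notes that the period~3 cycles are examples of chains in $\tilde{C}$, and suggests that one might first try to prove the codimension-one stable set for those special cycles using the methods of~\cite{lieetal10,lieetal12}, while the general case ``presumably requires more general methods, see~\cite{beg10,bre16,dut19}.''

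Your proposal is therefore not a proof but a plausible \emph{strategy outline}, and it is broadly in line with the references the paper itself points to. You correctly identify the relevant eigenvalue structure at each $p_k\in\tilde{C}$ (one unstable $N$-direction, two stable ones, one centre direction along $\mathrm{K}^\ocircle$), the cross-section/return-map formalism, and the role of $g(p_k)=|D\mathcal{K}(p_k)|$ from Lemma~\ref{KasnerCircMapEXP} as the expansion along the Kasner circle. You also correctly flag the central obstruction: the non-uniformity of the expansion as $p_k$ approaches the tangential points $\partial(A_\alpha\cap A_\beta)$, where $g\to 1$. This is precisely why the paper leaves the statement as a conjecture rather than a theorem.

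Two points where your outline would need substantial work before it could become a proof. First, the use of $Z_{\mathrm{sub}}$ from~\eqref{MVIVII} to confine orbits is not available at the Bianchi type VIII/IX level: that monotone function is defined only on the type $\mathrm{VI}_0$/$\mathrm{VII}_0$ invariant subsets (with $N_1=0$), not on the full type VIII/IX state space; only $\Delta$ in~\eqref{Delta} survives there, and $\Delta\to 0$ alone does not by itself force orbits to the type I and II boundary (cf.\ the discussion around Conjecture~\ref{conj:subcritical}). Second, your remark about the period~2 chains and their stable manifolds is misplaced in the subcritical regime: those cycles exist only for $v\in(1/2,1)$ and collapse to the Taub points as $v\to 1/2$ (see Figure~\ref{FIG:minMAX} and the footnote there), so they play no role for $v\in(0,1/2)$. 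The genuine local degeneracy you must contend with in the subcritical case is the presence of \emph{two} unstable $N$-directions in the overlap regions $\mathrm{int}(A_\alpha\cap A_\beta)$, which is exactly what $\tilde{C}$ avoids by definition---but chains in $\tilde{C}$ can still accumulate on $\partial(A_\alpha\cap A_\beta)$, and that is where the non-uniform hyperbolicity bites.
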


To address the more general issues in the two following conjectures, presumably
require more general methods than needed to prove the previous conjectures, see~\cite{beg10,bre16,dut19}.
\begin{conjecture}\label{conj:supercriticalS}
In the Bianchi type VIII and IX supercritical case, $v \in (1/2,1)$, the stable set $S$ on
$\mathrm{K}^\ocircle$ is the attractor ${\cal A}_-$
(as $\tau_-\rightarrow \infty$).
\end{conjecture}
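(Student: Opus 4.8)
The statement should be read with ${\cal A}_-$ meaning the \emph{past attractor}, i.e. the closure of the union of the $\omega$-limit sets (in $\tau_-$) taken over an open, dense and full-measure set of type VIII (respectively type IX) initial data; the plan is to prove the two inclusions ${\cal A}_-\subseteq S$ and $S\subseteq{\cal A}_-$ separately. The reverse inclusion is the easy one, so I would dispatch it first.

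For $S\subseteq{\cal A}_-$: fix a point $p$ in one of the open arcs making up $\mathrm{int}(S)$. By~\eqref{linNalpha} the linearization of the flow transverse to $\mathrm{K}^{\ocircle}$ at $p$ has eigenvalues $-(2v+\Sigma_\alpha)|_p<0$ for $\alpha=1,2,3$, because $p\notin A_1\cup A_2\cup A_3$ forces $\Sigma_\alpha>-2v$; along $\mathrm{K}^{\ocircle}$ itself the eigenvalue vanishes. Hence the arc $\mathrm{int}(S)$ is a normally hyperbolic invariant manifold of fixed points that is attracting (as $\tau_-\to\infty$) in the three $N_\alpha$-directions. The center-stable manifold theorem then provides a four-dimensional neighbourhood of (each component of) $\mathrm{int}(S)$ in which every orbit converges to a fixed point of $\mathrm{int}(S)$; those orbits with all $N_\alpha\neq0$ and the appropriate sign pattern are genuine type VIII (respectively type IX) orbits, so $\mathrm{int}(S)$ lies in the past attractor, and taking closures $S=\overline{\mathrm{int}(S)}\subseteq{\cal A}_-$. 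The tangential boundary points of $S$, where one transverse eigenvalue degenerates to zero, are limits of interior points and are therefore automatically captured once closures are taken (a short centre-manifold computation would in addition exhibit explicit convergent orbits at those points).

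For ${\cal A}_-\subseteq S$: I would first use the monotone function $\Delta=3|N_1N_2N_3|^{2/3}$ of~\eqref{Delta}, which gives $\lim_{\tau_-\to\infty}\Delta=0$ for every orbit, so that the $\omega$-limit set of any type IX orbit lies in the closure of the union of the type $\mathrm{VII}_0$ subsets, and that of any type VIII orbit in the closure of the union of one type $\mathrm{VII}_0$ and two type $\mathrm{VI}_0$ subsets. On those lower-dimensional invariant sets, Propositions~\ref{lem:genVIalpha}--\ref{lem:genVIVIIalpha} (driven by the monotone functions $1+2v\Sigma_+$ and $Z_{\mathrm{sup}}$) restrict the possible $\omega$-limit pieces to: fixed points of $S_{\mathrm{VI}_0,\mathrm{VII}_0}\subseteq\mathrm{K}^{\ocircle}$, the fixed point $p_{\mathrm{VI}_0}$ of~\eqref{p_VI}, the period-$2$ heteroclinic cycles on $\Sigma_+=-1/(2v)$, and the $\mathrm{LRS}^\pm$ orbits — the last three sitting inside invariant subsets of positive codimension. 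The heart of the argument is then to transfer the discrete genericity of Theorem~\ref{CantorTh} to the continuous flow: for Lebesgue-almost every $p\in\mathrm{K}^{\ocircle}$ one has $p\in F=\mathrm{K}^{\ocircle}\setminus C$, so the heteroclinic chain through $p$ is finite and reaches $\mathrm{int}(S)$ after finitely many iterates of $\mathcal{K}$. I would combine this with a shadowing argument of the type developed in~\cite{beg10,lieetal10,lieetal12,bre16,dut19}, asserting that type VIII/IX orbits sufficiently near the type I--II boundary follow finite heteroclinic chains, to conclude that an open, dense, full-measure set of type VIII/IX orbits performs finitely many type II bounces and then converges to a fixed point of $\mathrm{int}(S)$, while the non-generic complement shadows the period-$2$ chains, the infinite chains associated with the Cantor set $C$, or lands on the stable sets of $p_{\mathrm{VI}_0}$ and the LRS orbits. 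This yields ${\cal A}_-\subseteq S$ and hence ${\cal A}_-=S$.

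The main obstacle is precisely this last transfer step. It should be considerably more tractable than the analogous problem in GR, since in the supercritical case the generic heteroclinic chains are \emph{finite} and terminate at honest sinks, so no delicate control of an infinitely recurring Mixmaster dynamics is needed; nevertheless it still requires a quantitative shadowing lemma that is uniform enough, near finite heteroclinic chains on the type I--II boundary, to set up a measurable correspondence between type VIII/IX orbits entering a boundary neighbourhood and Kasner sequences, together with a strong-enough estimate on the approach to the boundary to guarantee that an open set of orbits actually enters the shadowing regime, plus a careful treatment of the degenerate bounces occurring near the tangential points of $S$ where a transverse eigenvalue vanishes. Making the shadowing step rigorous is where essentially all the work lies, and it is the reason the statement is posed as a conjecture rather than a theorem.
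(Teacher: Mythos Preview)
The statement is a \emph{conjecture}, not a theorem: the paper does not prove it. What the paper offers in lieu of a proof is a short heuristic discussion immediately following the statement, namely that (i) the local analysis of $\mathrm{K}^{\ocircle}$ shows $S$ attracts all nearby orbits, and (ii) a full proof would require showing that generic type VIII/IX orbits have $\omega$-limits only in $S$, with the Cantor-set chains of Conjecture~\ref{conj:supercriticalchain} expected to account for the non-generic exceptions.

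Your proposal is therefore not to be compared against a proof but against this heuristic outline, and on that score it is accurate and somewhat more detailed than what the paper writes. Your easy inclusion $S\subseteq{\cal A}_-$ via normal hyperbolicity of $\mathrm{int}(S)$ is exactly the paper's point (i). For the hard inclusion you correctly invoke the monotone function $\Delta$ (which the paper also highlights in Section~\ref{sec:firstprinciples} as the available ingredient), the type $\mathrm{VI}_0$/$\mathrm{VII}_0$ Propositions, and the measure-zero Cantor set from Theorem~\ref{CantorTh}; you then identify a shadowing lemma near finite heteroclinic chains as the missing piece. This is precisely where the paper stops, and you are right that it is the reason the statement is labelled a conjecture. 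Your closing paragraph honestly flags that this transfer from discrete to continuous dynamics is not yet available, which is consistent with the paper's stance.

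In short: there is no proof in the paper to compare with; your outline matches the paper's heuristics and correctly isolates the open step.
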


There are subtleties in how to define an attractor,
as discussed in~\cite{Milnor85}. In the present context we are interested
in the behaviour of most orbits in the state space, and thus we deal
with an attractor that attracts generic sets of orbits in the state space.
This set is called the \emph{likely limit set} in~\cite{Milnor85}, which is
the unique maximal attractor. The Kasner circle consists of six physically
equivalent subsets, related by axis permutations~\eqref{permSYM},
which thereby are the six elements in the unique equivalence class
of the quotient of the attractor ${\cal A}_-$ under the action of the symmetric group $\mathrm{S}_3$ according to~\eqref{permSYM}. Combining this feature
with the above conjecture suggests that we refer to the quotient space
${\cal A}_- / \mathrm{S}_3$ as the \emph{physical attractor}.

It is clear from the local analysis of $\mathrm{K}^\ocircle$ that $S$
attracts all nearby orbits. To prove Conjecture~\ref{conj:supercriticalS}
requires establishing that all generic sets of solutions only
have points on $S$ as their $\omega$-limit. Even though we expect that
there is a set of solutions that has the heteroclinic
chains associated with the Cantor set $C$ as their $\omega$-limits,
and thereby not $S$, we believe that this set is non-generic, as
suggested by Conjecture~\ref{conj:supercriticalchain}.


%
\begin{conjecture}\label{conj:subcritical}
In the Bianchi type VIII and IX subcritical and critical cases, $v \in (0,1/2]$,
the attractor ${\cal A}_-$ (as $\tau_- \rightarrow \infty$) consists of the set 
$\mathrm{K}^\ocircle\cup\mathrm{II}_1\cup\mathrm{II}_2\cup\mathrm{II}_3$.
\end{conjecture}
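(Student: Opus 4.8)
The plan is to establish the two inclusions separately: first that the attractor ${\cal A}_-$ is \emph{contained} in $\mathrm{II}_1\cup\mathrm{II}_2\cup\mathrm{II}_3\cup\mathrm{K}^\ocircle$, and second that every point of this set actually belongs to the $\omega$-limit set of a generic family of orbits, so that the set cannot be further shrunk. For the containment, I would start from the monotone function $\Delta = 3|N_1N_2N_3|^{2/3}$ of~\eqref{Delta}, which satisfies $\Delta' = -8v\Sigma^2\Delta$ and hence, by the argument already given in the excerpt around~\eqref{Delta0}, forces $\lim_{\tau_-\to\infty}\Delta = 0$. This shows that at least one of $N_1,N_2,N_3$ decays to zero along every orbit, so the $\omega$-limit set lies in the closure of the union of the type $\mathrm{VI}_0$ and $\mathrm{VII}_0$ boundary sets (for type IX only the three $\mathrm{VII}_0$ subsets, for type VIII one $\mathrm{VII}_0$ and two $\mathrm{VI}_0$). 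Then I would invoke Propositions~\ref{lem:genVIalpha}, \ref{lem:genVIVIIomega} and~\ref{lem:genVIVIIalpha}: in the subcritical and critical cases $v\in(0,1/2]$, the $\omega$-limit set of every type $\mathrm{VI}_0$ orbit lies in $S_{\mathrm{VI}_0}\subset\mathrm{K}^\ocircle$, and the $\omega$-limit set of every type $\mathrm{VII}_0$ orbit lies in $S_{\mathrm{VII}_0}\subset\mathrm{K}^\ocircle$ (the LRS$^-$ exceptions occur only at the $\alpha$-limit or, in the critical case, are a line of fixed points already sitting in the closure of $\mathrm{K}^\ocircle$ via the constraint-degenerate limit, which one handles by noting those orbits are non-generic — a measure-zero invariant set). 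Chaining these facts, a generic type VIII/IX orbit that transits between the boundary subsets must shadow type II heteroclinic orbits, and its closure is forced into $\mathrm{II}_1\cup\mathrm{II}_2\cup\mathrm{II}_3\cup\mathrm{K}^\ocircle$.

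For the reverse inclusion — that the whole set is attained and hence minimal — I would use Lemma~\ref{SupercritHets}: in the subcritical case every point of $\mathrm{K}^\ocircle$ lies on at least one infinite heteroclinic chain, and in the critical case Theorem~\ref{infchainsGR} gives that the set of points on infinite or periodic chains has full measure. Combined with a shadowing argument — that a generic solution stays close to the type I/II boundary and realizes, along its forward evolution, excursions passing arbitrarily near every arc of $\mathrm{K}^\ocircle$ and every type II hemisphere — this shows no proper closed invariant subset can capture all generic orbits. Concretely, I would argue that the union of $\omega$-limit sets over a residual (or full-measure) set of initial data is dense in $\mathrm{II}_1\cup\mathrm{II}_2\cup\mathrm{II}_3\cup\mathrm{K}^\ocircle$ and, being a closed set by general properties of $\omega$-limits, equals it.

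The hard part will be the shadowing step: rigorously establishing that generic type VIII and IX solutions are \emph{asymptotically shadowed} by the concatenated type II heteroclinic chains, i.e., that the formal reduction to the Bianchi type I and II boundary dynamics is genuine and not destroyed by the transverse (type $\mathrm{VI}_0$, $\mathrm{VII}_0$, VIII, IX) degrees of freedom. This is precisely the content of the open conjectures~\ref{conj:supercriticalS} and of the difficult shadowing theorems in GR (cf.~\cite{bre16,dut19,beg10,lieetal10,lieetal12}), and it is the reason this statement is posed as a conjecture rather than a theorem. A secondary difficulty is controlling the \emph{unbounded} type $\mathrm{VII}_0$ directions (the variable $N_+\to\infty$ from Proposition~\ref{lem:genVIVIIalpha}(iii) does not occur here since $v\le 1/2$, but the critical-case line of LRS$^-$ fixed points and its one-parameter stable families from~\cite{heiugg09b} must be shown to be non-generic). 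I would expect that a complete proof requires either a careful center-manifold / averaging analysis near the Kasner circle uniform across $v\in(0,1/2]$, or an extension of the probabilistic shadowing techniques of~\cite{bre16,dut19} to the $\lambda$-$R$ setting; short of that, the containment direction above is provable unconditionally, and only the minimality/attainment of the full set $\mathrm{II}_1\cup\mathrm{II}_2\cup\mathrm{II}_3\cup\mathrm{K}^\ocircle$ as the \emph{likely limit set} remains conjectural.
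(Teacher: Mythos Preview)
The statement is a \emph{conjecture}, and the paper does not supply a proof; instead it gives heuristic motivation (the Hamiltonian moving-wall picture in Appendix~\ref{app:dom}) and explains precisely why a rigorous proof is out of reach with current techniques. Your proposal correctly identifies several relevant ingredients, but it contains a genuine logical gap that inverts which inclusion is actually hard.

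Your ``chaining'' argument for the containment ${\cal A}_- \subseteq \mathrm{II}_1\cup\mathrm{II}_2\cup\mathrm{II}_3\cup\mathrm{K}^\ocircle$ does not work. From $\Delta\to 0$ you correctly deduce that the $\omega$-limit set of every type VIII/IX orbit lies in the closure of the union of the type $\mathrm{VI}_0$ and $\mathrm{VII}_0$ boundary sets. But then you invoke Propositions~\ref{lem:genVIalpha}--\ref{lem:genVIVIIalpha}, which describe the $\omega$-limits of \emph{individual} type $\mathrm{VI}_0$ and $\mathrm{VII}_0$ orbits, and conclude that the type VIII/IX $\omega$-limit must sit inside $\mathrm{II}\cup\mathrm{K}^\ocircle$. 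This inference is invalid: the $\omega$-limit set of a VIII/IX orbit is a closed, connected, invariant subset of the $\mathrm{VI}_0/\mathrm{VII}_0$ closure, and nothing prevents it from containing entire $\mathrm{VI}_0$ or $\mathrm{VII}_0$ orbits (interior points, not just their endpoints on $\mathrm{K}^\ocircle$). Ruling this out is equivalent to showing that the cross terms $N_\alpha N_\beta$ tend to zero along generic orbits, and \emph{that} is exactly the open content of the conjecture. The paper makes this explicit in the discussion following the statement: it says the conjecture ``is based on the assumption that for generic solutions the probability that the cross terms become non-zero tends to zero when $\tau_-\rightarrow\infty$, which requires some new statistical measure,'' and notes that even in the critical GR case this is the content of Ringstr\"om's Bianchi IX attractor theorem~\cite{rin01,heiugg09b}, while for type VIII it remains unproven. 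So your claim that ``the containment direction above is provable unconditionally'' is precisely backwards: containment is the hard direction, not an easy preliminary.

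The paper's own discussion replaces your chaining argument with a heuristic one: in the Hamiltonian billiard picture, the cross-term walls move faster than the type~II walls and are asymptotically hidden behind them, so that the $\epsilon$-corner regions where cross terms matter become negligible relative to the growing triangular well. This is motivation, not proof, and the paper is explicit that turning it into a rigorous statement (especially for type VIII and for subcritical $v$) likely requires methods beyond the present dynamical-systems formulation---possibly the metric-variable or billiard approach of Chitr\'e--Misner, or an extension of the probabilistic arguments in~\cite{bre16,dut19}.
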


Equation~\eqref{Delta0} shows that $\lim_{\tau_-\rightarrow \infty}\Delta=0$ and hence
that the $\omega$-limits of the type IX (VIII) solutions reside on the union of
the type $\mathrm{VII}_0$ ($\mathrm{VII}_0$ and $\mathrm{VI}_0$), type II
and type I boundary sets. This is the foundation for the proof of the
conjecture in the critical GR case for type IX, see~\cite{rin01,heiugg09b}.
The conjecture for the other cases is highly non-trivial, especially from the
perspective of the present dynamical system formulation. This is due to that
Conjecture~\ref{conj:subcritical} relies on the entire history of solutions
when $\tau_-\rightarrow \infty$ and that the variables in the dynamical
system~\eqref{intro_dynsyslambdaR} do not capture this feature well.

In Appendix~\ref{app:dom} the system~\eqref{intro_dynsyslambdaR} is derived from
a Hamiltonian description of the field equations by a change of variables. As a result the evolution equation for one of the variables, not discussed in the main
text, decouples. The variables in~\eqref{intro_dynsyslambdaR} are thereby
particular coordinates describing a projection of the original state space. It
turns out that the original configuration space variables $\beta^\lambda, \beta^\pm$
in the Hamiltonian formulation are better in capturing the above mentioned history.
This is illustrated by the heuristic moving wall analysis in Appendix~\ref{app:dom},
which shows that on average $\beta^+$ and $\beta^-$ oscillate with increasing
amplitudes. This analysis also shows that excursions of generic solutions into the
type $\mathrm{VI}_0$ or $\mathrm{VII}_0$ subsets (where one of the cross terms
$N_1N_2$, $N_2N_3$, $N_3N_1$ is non-negligible) become increasingly unlikely. Even
though these excursions will happen, they become increasingly rare in some
probabilistic sense as $\tau_-=-\beta^\lambda \rightarrow \infty$,
and thus generic solutions are asymptotically described by sequences of Kasner
states and Bianchi type II solutions; for further details, see Appendix~\ref{app:dom}.
The conjecture is thus based on the assumption that for generic solutions the probability
that the cross terms become non-zero tends to zero when $\tau_-\rightarrow \infty$, which requires
some new statistical measure. 

It is worth noticing that this situation is reminiscent of that in Bianchi
type $\mathrm{VI}_{-1/9}$ and when using an Iwasawa frame in GR,
see~\cite{ugg13a,ugg13b,heietal09} and references therein. The dynamical
systems analysis in~\cite{heietal09} of the suppression of `double transitions'
(non-zero cross terms $N_1N_2$, $N_2N_3$, $N_3N_1$ in the present formulation)
is particularly pertinent, especially since it indicates how hard it will be to rigorously
establish such features by using a dynamical system of the type~\eqref{intro_dynsyslambdaR}.

Presumably, the most difficult conjecture to prove among the ones above
is Conjecture~\ref{conj:subcritical} for the subcritical case, especially
when $v$ becomes increasingly small (if true; when $v=0$ the unstable region
for each of the cross terms is half of
$\mathrm{K}^\ocircle$, and thus the considerations in Appendix~\ref{app:dom}
concerning this bifurcation value should come as no surprise). Nevertheless, can the
methods in~\cite{rin01,heiugg09b} for the GR Bianchi type IX case be developed
and adapted to the $\lambda$-$R$ Bianchi type IX subcritical models with $v \in (0,1/2)$?
Note that these methods do not establish the conjecture for type VIII in GR,
although see~\cite{rin00} for some limited type VIII results, and also~\cite{bre16,dut19}.
The difficulties in GR for type VIII presumably also lead to difficulties
for the subcritical type VIII case and possibly also for
type IX.

All the above conjectures rely on that
$\lim_{\tau_-\rightarrow\infty}(N_1N_2,N_2N_3,N_3N_1)= (0,0,0)$,
and the behaviour of the individual terms $N_1$, $N_2$ and $N_3$.
Possible proofs using the dynamical systems approach presumably
involve the growth and decay of these quantities, which we therefore
now take a closer look at.
Without loss of generality, we describe the evolution equations using
the $\Sigma_\pm$ variables in~\eqref{Misner}, which are
adapted to the $\Sigma_1$-direction.
In Appendix~\ref{app:dom} we derive a system of evolution equations,
which can be written as follows,
\begin{subequations}\label{ODEtypeVIIIIX}
\begin{align}
\Sigma_+^\prime &= 2(1-\Sigma^2)(1+2v\Sigma_+) -6N_1(2N_1 - N_2 - N_3)],\label{ODEVIIIIX:Sp}\\
\Sigma_-^\prime &= 4v(1-\Sigma^2)\Sigma_- + 2\sqrt{3}(N_2 - N_3)(N_2 + N_3 - N_1),\\
N_1^\prime &= -4v\left[\left(\Sigma_+ - \frac{1}{2v}\right)^2 + \Sigma_-^2 - \left(\frac{1}{2v}\right)^2\right]N_1.\label{N1prime}\\
N_2^\prime &= -v\left[\left(\Sigma_+ + \sqrt{3}\Sigma_- + \frac{1}{v}\right)^2 + \left(\sqrt{3}\Sigma_+ - \Sigma_-\right)^2 - \frac{1}{v^2}\right]N_2,\\
N_3^\prime &= -v\left[\left(\Sigma_+ - \sqrt{3}\Sigma_- + \frac{1}{v}\right)^2 + \left(\sqrt{3}\Sigma_+ + \Sigma_-\right)^2 - \frac{1}{v^2}\right]N_3,
\end{align}
and the constraint
\begin{equation}\label{constrpm}
1 = \Sigma^2 + \Omega_k,
\end{equation}
where
\begin{align}
\Sigma^2 &:= \Sigma_+^2 + \Sigma_-^2,\\
\Omega_k &:= N_1^2 + N_2^2 + N_3^2 - 2N_1N_2 - 2N_2N_3 - 2N_3N_1.
\end{align}
\end{subequations}
%
Thus $N_1$ monotonically decreases (increases) when
$(\Sigma_+,\Sigma_-)$ is outside (inside) the circle
\begin{equation}\label{disc:circleN1}
\left(\Sigma_+ - \frac{1}{2v}\right)^2 + \Sigma_-^2 = \left(\frac{1}{2v}\right)^2,
\end{equation}
which has its center at $(\Sigma_+, \Sigma_-)=(1/2v,0)$ and a radius $1/2v$.
It also follows from~\eqref{ODEtypeVIIIIX} that the terms $N_2$ and $N_3$ decrease
(increase) outside (inside) similar circles, 
see Figure~\ref{FIG:circleSINGLE}.

\begin{figure}[H]
\minipage[b]{0.4\textwidth}\centering
\begin{subfigure}\centering
\begin{tikzpicture}[scale=1.3]
    \filldraw [lightgray!40,domain=0:6.28,variable=\t,smooth] plot ({(1.25)*sin(\t r)},{-(1.25)+(1.25)*cos(\t r)});
    \filldraw [rotate=120,lightgray!40,domain=0:6.28,variable=\t,smooth] plot ({(1.25)*sin(\t r)},{-(1.25)+(1.25)*cos(\t r)});
    \filldraw [rotate=-120,lightgray!40,domain=0:6.28,variable=\t,smooth] plot ({(1.25)*sin(\t r)},{-(1.25)+(1.25)*cos(\t r)});

    \draw [domain=0:6.28,variable=\t,smooth] plot ({(1.25)*sin(\t r)},{-(1.25)+(1.25)*cos(\t r)});
    \draw [rotate=120,domain=0:6.28,variable=\t,smooth] plot ({(1.25)*sin(\t r)},{-(1.25)+(1.25)*cos(\t r)});
    \draw [rotate=-120,domain=0:6.28,variable=\t,smooth] plot ({(1.25)*sin(\t r)},{-(1.25)+(1.25)*cos(\t r)});

    \draw [domain=0:6.28,variable=\t,smooth] plot ({sin(\t r)},{cos(\t r)});

    \draw [ultra thick,domain=-0.12:0.12,variable=\t,smooth] plot ({sin(\t r)},{cos(\t r)});
    \draw [ultra thick,domain=1.97:2.21,variable=\t,smooth] plot ({sin(\t r)},{cos(\t r)});
    \draw [ultra thick,domain=-2.21:-1.97,variable=\t,smooth] plot ({sin(\t r)},{cos(\t r)});

    \draw (0,0.96) circle (0.001pt) node[anchor= north] {\scriptsize{$A_2\cap A_3$}};

    \draw[shift={(0.1,0.88)},rotate=-10] (-0.09,0) -- (0,0) -- (0,0.2) -- (-0.09,0.2);
    \draw[shift={(-0.1,0.88)},rotate=10] (0.09,0) -- (0,0) -- (0,0.2) -- (0.09,0.2);

    \draw[shift={(-1.02,-0.45)},rotate=-60] (0.09,0.2) -- (0,0.2) -- (0,0) -- (0.09,0);
    \draw[shift={(-0.72,-0.54)},rotate=120] (0.09,0) -- (0,0) -- (0,0.2) -- (0.09,0.2);

    \draw[shift={(1.02,-0.45)},rotate=60] (-0.09,0.2) -- (0,0.2) -- (0,0) -- (-0.09,0);
    \draw[shift={(0.72,-0.54)},rotate=-120] (-0.09,0) -- (0,0) -- (0,0.2) -- (-0.09,0.2);

    \filldraw (1,0) circle (0.001pt) node[anchor= south east] {\scriptsize{$A_3$}};
    \filldraw (-1,0) circle (0.001pt) node[anchor= south west] {\scriptsize{$A_2$}};

    \filldraw [black] (0,1) circle (1.25pt); \node at (0,1.45) {\scriptsize{$\mathrm{T}_1$}};
    \filldraw [black] (0.88,-0.49) circle (1.25pt); \node at (1.35,-0.75) {\scriptsize{$\mathrm{T}_2$}};
    \filldraw [black] (-0.88,-0.49) circle (1.25pt); \node at (-1.35,-0.75) {\scriptsize{$\mathrm{T}_3$}};

    \draw (0,-0.95) -- (0,-1.05);
    \draw[rotate=120] (0,-0.95) -- (0,-1.05);
    \draw[rotate=-120] (0,-0.95) -- (0,-1.05);
\end{tikzpicture}
\addtocounter{subfigure}{-1}
\captionof{subfigure}{\footnotesize{$v\in(0,1/2)$.}}\label{FIG:singleSUB}
\end{subfigure}\endminipage\hfill
\minipage[b]{0.33\textwidth}\centering

\begin{subfigure}\centering
\begin{tikzpicture}[scale=1.3]
    \filldraw[white] (0,-2.5) circle (0.5pt);

    \filldraw [lightgray!40,domain=0:6.28,variable=\t,smooth] plot ({sin(\t r)},{-1+cos(\t r)});
    \filldraw [rotate=120,lightgray!40,domain=0:6.28,variable=\t,smooth] plot ({sin(\t r)},{-1+cos(\t r)});
    \filldraw [rotate=-120,lightgray!40,domain=0:6.28,variable=\t,smooth] plot ({sin(\t r)},{-1+cos(\t r)});

    \draw [domain=0:6.28,variable=\t,smooth] plot ({sin(\t r)},{-1+cos(\t r)});
    \draw [rotate=120,domain=0:6.28,variable=\t,smooth] plot ({sin(\t r)},{-1+cos(\t r)});
    \draw [rotate=-120,domain=0:6.28,variable=\t,smooth] plot ({sin(\t r)},{-1+cos(\t r)});

    \draw [domain=0:6.28,variable=\t,smooth] plot ({sin(\t r)},{cos(\t r)});

    \filldraw (0,1) circle (1.25pt) node[anchor= south] {\scriptsize{$\mathrm{T}_1$}};
    \filldraw (0.88,-0.49) circle (1.25pt) node[anchor= north west] {\scriptsize{ $\mathrm{T}_2$}};
    \filldraw (-0.88,-0.49) circle (1.25pt)node[anchor= north east] {\scriptsize{$\mathrm{T}_3$}};

    \draw (0,-0.95) -- (0,-1.05);
    \draw[rotate=120] (0,-0.95) -- (0,-1.05);
    \draw[rotate=-120] (0,-0.95) -- (0,-1.05);
\end{tikzpicture}
\addtocounter{subfigure}{-1}
\captionof{subfigure}{\footnotesize{$v=1/2$.}}\label{FIG:singleCRIT}
\end{subfigure}\endminipage\hfill
\minipage[b]{0.26\textwidth}\centering

\begin{subfigure}\centering
\begin{tikzpicture}[scale=1.3]
    \filldraw[white] (0,-2.5) circle (0.5pt);

    \filldraw [lightgray!40,domain=0:6.28,variable=\t,smooth] plot ({(2/3)*sin(\t r)},{-(2/3)+(2/3)*cos(\t r)});
    \filldraw [rotate=120,lightgray!40,domain=0:6.28,variable=\t,smooth] plot ({(2/3)*sin(\t r)},{-(2/3)+(2/3)*cos(\t r)});
    \filldraw [rotate=-120,lightgray!40,domain=0:6.28,variable=\t,smooth] plot ({(2/3)*sin(\t r)},{-(2/3)+(2/3)*cos(\t r)});

    \draw [domain=0:6.28,variable=\t,smooth] plot ({(2/3)*sin(\t r)},{-(2/3)+(2/3)*cos(\t r)});
    \draw [rotate=120,domain=0:6.28,variable=\t,smooth] plot ({(2/3)*sin(\t r)},{-(2/3)+(2/3)*cos(\t r)});
    \draw [rotate=-120,domain=0:6.28,variable=\t,smooth] plot ({(2/3)*sin(\t r)},{-(2/3)+(2/3)*cos(\t r)});

    \draw[dotted, thick, postaction={decoration={markings,mark=at position 0.41 with {\arrow[thick,color=gray]{latex reversed}}},decorate}] (-0.75,0.67) -- (0.75,0.67);
    \draw[rotate=120,dotted, thick, postaction={decoration={markings,mark=at position 0.41 with {\arrow[thick,color=gray]{latex reversed}}},decorate}] (-0.75,0.67) -- (0.75,0.67);
    \draw[rotate=-120,dotted, thick, postaction={decoration={markings,mark=at position 0.41 with {\arrow[thick,color=gray]{latex reversed}}},decorate}] (-0.75,0.67) -- (0.75,0.67);

    \draw [domain=0:6.28,variable=\t,smooth] plot ({sin(\t r)},{cos(\t r)});

    \draw [dotted, white, ultra thick, domain=-0.26:0.26,variable=\t,smooth] plot ({sin(\t r)},{cos(\t r)});
    \draw [dotted, white, ultra thick, domain=1.83:2.35,variable=\t,smooth] plot ({sin(\t r)},{cos(\t r)});
    \draw [dotted, white, ultra thick, domain=-2.35:-1.83,variable=\t,smooth] plot ({sin(\t r)},{cos(\t r)});

    \draw[shift={(0.25,0.87)},rotate=-20] (0.1,0) -- (0,0) -- (0,0.2) -- (0.1,0.2);
    \draw[shift={(-0.25,0.87)},rotate=20] (-0.1,0) -- (0,0) -- (0,0.2) -- (-0.1,0.2);

    \draw[shift={(-1.07,-0.25)},rotate=-80] (-0.1,0.2) -- (0,0.2) -- (0,0) -- (-0.1,0);
    \draw[shift={(-0.65,-0.64)},rotate=140] (-0.1,0) -- (0,0) -- (0,0.2) -- (-0.1,0.2);

    \draw[shift={(1.07,-0.25)},rotate=80] (0.1,0.2) -- (0,0.2) -- (0,0) -- (0.1,0);
    \draw[shift={(0.65,-0.64)},rotate=-140] (0.1,0) -- (0,0) -- (0,0.2) -- (0.1,0.2);


    \filldraw [black] (0,1) circle (1.25pt) node[anchor= south] {\scriptsize{$\mathrm{T}_1$}};
    \filldraw [black] (0.88,-0.49) circle (1.25pt) node[anchor= north west] {\scriptsize{$\mathrm{T}_2$}};
    \filldraw [black] (-0.88,-0.49) circle (1.25pt)node[anchor= north east] {\scriptsize{$\mathrm{T}_3$}};

    \draw (0,-0.95) -- (0,-1.05);
    \draw[rotate=120] (0,-0.95) -- (0,-1.05);
    \draw[rotate=-120] (0,-0.95) -- (0,-1.05);

\end{tikzpicture}
\addtocounter{subfigure}{-1}
\captionof{subfigure}{\footnotesize{$v\in (1/2,1)$. }}\label{FIG:singleSUP}
\end{subfigure}\endminipage
\captionof{figure}{The interior of the (gray) disk opposite to the
location of the Taub point $\mathrm{T}_\alpha\in \mathrm{K}^\ocircle$ in
$(\Sigma_1,\Sigma_2,\Sigma_3)$-space indicates growth of each individual $N_\alpha$,
$\alpha=1,2,3$. Outside their growth region
the individual terms decay. As $v\in (0,1)$ increases, the disks
radii decrease and they move toward the middle, where $\Sigma^2=0$. For $v\in (0,1/2)$ the disks
cover the whole region inside $\Sigma^2=1$. 
At $v=1/2$ the disks only intersects with $\Sigma^2=1$ at the $\Sigma_\alpha$ values of the
Taub points. For $v\in (1/2,1)$ the disks intersect at the $\Sigma_\alpha$
values of the heteroclinic chains with period two, and there is a neighborhood of the location
of the Taub points in $(\Sigma_1,\Sigma_2,\Sigma_3)$-space with decay. 
}\label{FIG:circleSINGLE}
\end{figure}

Expressing the evolution of the cross terms in the $\Sigma_\pm$ variables results in
the equations
\begin{subequations}\label{crossN1}
\begin{align}
(N_1N_2)^\prime &= -8v\left[\left(\Sigma_+ - \frac{1}{8v}\right)^2 +
\left(\Sigma_-+\frac{\sqrt{3}}{8v}\right)^2 -\left(\frac{1}{4v}\right)^2\right](N_1N_2),\\
(N_3N_1)^\prime &= -8v\left[\left(\Sigma_+ - \frac{1}{8v}\right)^2 +
\left(\Sigma_--\frac{\sqrt{3}}{8v}\right)^2 -\left(\frac{1}{4v}\right)^2\right](N_3N_1),\\
(N_2 N_3)^\prime &= -8v\left[\left(\Sigma_+ + \frac{1}{4v}\right)^2 + \Sigma_-^2
- \left(\frac{1}{4v}\right)^2\right](N_2N_3),\label{ODEcross}
\end{align}
\end{subequations}
Hence, e.g., $N_2N_3$ is monotonically decreases (increases) when
$(\Sigma_+,\Sigma_-)$ is outside (inside) the following circle
\begin{equation}\label{disc:circle}
\left(\Sigma_+ + \frac{1}{4v}\right)^2 + \Sigma_-^2 = \left(\frac{1}{4v}\right)^2,
\end{equation}
which has its center at $(\Sigma_+, \Sigma_-)=(-1/4v,0)$ and a radius $1/4v$.
It also follows from~\eqref{crossN1} that the other cross terms decrease
(increase) outside (inside) similar circles, obtained by axis permutations.
In particular, $N_1N_2$ and $N_3N_1$ decay when
$\Sigma_+ < -1/(8v)$ while $N_2N_3$ decays when $\Sigma_+ < -1/(2v)$.
These decay and growth regions are depicted in
Figure~\ref{FIG:circleCROSS}.

\begin{figure}[H]
\minipage[b]{0.33\textwidth}\centering
\begin{subfigure}\centering
\begin{tikzpicture}[scale=1.3]
    \filldraw [lightgray!80,domain=0:6.28,variable=\t,smooth] plot ({(1)*sin(\t r)},{(1)+(1)*cos(\t r)});
    \filldraw [rotate=120,lightgray!80,domain=0:6.28,variable=\t,smooth] plot ({(1)*sin(\t r)},{(1)+(1)*cos(\t r)});
    \filldraw [rotate=-120,lightgray!80,domain=0:6.28,variable=\t,smooth] plot ({(1)*sin(\t r)},{(1)+(1)*cos(\t r)});

    \draw [domain=0:6.28,variable=\t,smooth] plot ({(1)*sin(\t r)},{(1)+(1)*cos(\t r)});
    \draw [rotate=120,domain=0:6.28,variable=\t,smooth] plot ({(1)*sin(\t r)},{(1)+(1)*cos(\t r)});
    \draw [rotate=-120,domain=0:6.28,variable=\t,smooth] plot ({(1)*sin(\t r)},{(1)+(1)*cos(\t r)});

    \draw [domain=0:6.28,variable=\t,smooth] plot ({sin(\t r)},{cos(\t r)});

    \draw [ultra thick,domain=-0.28:0.28,variable=\t,smooth] plot ({sin(\t r)},{cos(\t r)});
    \draw [ultra thick,domain=1.81:2.35,variable=\t,smooth] plot ({sin(\t r)},{cos(\t r)});
    \draw [ultra thick,domain=-2.35:-1.81,variable=\t,smooth] plot ({sin(\t r)},{cos(\t r)});

    \draw (0,0.96) circle (0.001pt) node[anchor= north] {\scriptsize{$A_2\cap A_3$}};

    \draw[shift={(0.25,0.87)},rotate=-20] (-0.1,0) -- (0,0) -- (0,0.2) -- (-0.1,0.2);
    \draw[shift={(-0.25,0.87)},rotate=20] (0.1,0) -- (0,0) -- (0,0.2) -- (0.1,0.2);

    \draw[shift={(-1.07,-0.25)},rotate=-80] (0.1,0.2) -- (0,0.2) -- (0,0) -- (0.1,0);
    \draw[shift={(-0.65,-0.64)},rotate=140] (0.1,0) -- (0,0) -- (0,0.2) -- (0.1,0.2);

    \draw[shift={(1.07,-0.25)},rotate=80] (-0.1,0.2) -- (0,0.2) -- (0,0) -- (-0.1,0);
    \draw[shift={(0.65,-0.64)},rotate=-140] (-0.1,0) -- (0,0) -- (0,0.2) -- (-0.1,0.2);

    \filldraw (1,0) circle (0.001pt) node[anchor= south east] {\scriptsize{$A_3$}};
    \filldraw (-1,0) circle (0.001pt) node[anchor= south west] {\scriptsize{$A_2$}};

    \filldraw [black] (0,1) circle (1.25pt) node[anchor= south] {\scriptsize{$\mathrm{T}_1$}};
    \filldraw [black] (0.88,-0.49) circle (1.25pt) node[anchor= north west] {\scriptsize{$\mathrm{T}_2$}};
    \filldraw [black] (-0.88,-0.49) circle (1.25pt)node[anchor= north east] {\scriptsize{$\mathrm{T}_3$}};

    \draw (0,-0.95) -- (0,-1.05);
    \draw[rotate=120] (0,-0.95) -- (0,-1.05);
    \draw[rotate=-120] (0,-0.95) -- (0,-1.05);
\end{tikzpicture}
\addtocounter{subfigure}{-1}
\captionof{subfigure}{\footnotesize{$v\in (0,1/2)$.}}\label{FIG:crossSUB}
\end{subfigure}\endminipage\hfill
\minipage[b]{0.33\textwidth}\centering

\begin{subfigure}\centering
\begin{tikzpicture}[scale=1.3]
    \filldraw[white] (0,2) circle (0.5pt);
    \filldraw[lightgray!80,rotate=120] (0,2) circle (0.5pt);
    \filldraw[white,rotate=-120] (0,2) circle (0.5pt);

    \filldraw[white] (1,1) circle (0.5pt);
    \filldraw[white,rotate=120] (1,1) circle (0.5pt);
    \filldraw[white,rotate=-120] (1,1) circle (0.5pt);

    \filldraw[white] (-1,1) circle (0.5pt);
    \filldraw[white,rotate=120] (-1,1) circle (0.5pt);
    \filldraw[white,rotate=-120] (-1,1) circle (0.5pt);

    \draw [domain=0:6.28,variable=\t,smooth] plot ({sin(\t r)},{cos(\t r)});

    \filldraw [lightgray!80,domain=0:6.28,variable=\t,smooth] plot ({(1/2)*sin(\t r)},{1/2+(1/2)*cos(\t r)});
    \filldraw [rotate=120,lightgray!80,domain=0:6.28,variable=\t,smooth] plot ({(1/2)*sin(\t r)},{1/2+(1/2)*cos(\t r)});
    \filldraw [rotate=-120,lightgray!80,domain=0:6.28,variable=\t,smooth] plot ({(1/2)*sin(\t r)},{1/2+(1/2)*cos(\t r)});

    \draw [domain=0:6.28,variable=\t,smooth] plot ({(1/2)*sin(\t r)},{1/2+(1/2)*cos(\t r)});
    \draw [rotate=120,domain=0:6.28,variable=\t,smooth] plot ({(1/2)*sin(\t r)},{1/2+(1/2)*cos(\t r)});
    \draw [rotate=-120,domain=0:6.28,variable=\t,smooth] plot ({(1/2)*sin(\t r)},{1/2+(1/2)*cos(\t r)});




    \filldraw (0,1) circle (1.25pt) node[anchor= south] {\scriptsize{$\mathrm{T}_1$}};
    \filldraw (0.88,-0.49) circle (1.25pt) node[anchor= north west] {\scriptsize{ $\mathrm{T}_2$}};
    \filldraw (-0.88,-0.49) circle (1.25pt)node[anchor= north east] {\scriptsize{$\mathrm{T}_3$}};

    \draw (0,-0.95) -- (0,-1.05);
    \draw[rotate=120] (0,-0.95) -- (0,-1.05);
    \draw[rotate=-120] (0,-0.95) -- (0,-1.05);
\end{tikzpicture}
\addtocounter{subfigure}{-1}
\captionof{subfigure}{\footnotesize{$v=1/2$.}}\label{FIG:crossCRIT}
\end{subfigure}\endminipage\hfill
\minipage[b]{0.33\textwidth}\centering

\begin{subfigure}\centering
\begin{tikzpicture}[scale=1.3]
    \filldraw[white] (0,2) circle (0.5pt);
    \filldraw[white,rotate=120] (0,2) circle (0.5pt);
    \filldraw[white,rotate=-120] (0,2) circle (0.5pt);

    \filldraw[white] (1,1) circle (0.5pt);
    \filldraw[white,rotate=120] (1,1) circle (0.5pt);
    \filldraw[white,rotate=-120] (1,1) circle (0.5pt);

    \filldraw[white] (-1,1) circle (0.5pt);
    \filldraw[white,rotate=120] (-1,1) circle (0.5pt);
    \filldraw[white,rotate=-120] (-1,1) circle (0.5pt);


    \draw[dotted, thick, postaction={decoration={markings,mark=at position 0.41 with {\arrow[thick,color=gray]{latex reversed}}},decorate}] (-0.75,0.67) -- (0.75,0.67);
    \draw[rotate=120,dotted, thick, postaction={decoration={markings,mark=at position 0.41 with {\arrow[thick,color=gray]{latex reversed}}},decorate}] (-0.75,0.67) -- (0.75,0.67);
    \draw[rotate=-120,dotted, thick, postaction={decoration={markings,mark=at position 0.41 with {\arrow[thick,color=gray]{latex reversed}}},decorate}] (-0.75,0.67) -- (0.75,0.67);

    \filldraw [lightgray!80,domain=0:6.28,variable=\t,smooth] plot ({(1/3)*sin(\t r)},{(1/3)+(1/3)*cos(\t r)});
    \filldraw [rotate=120,lightgray!80,domain=0:6.28,variable=\t,smooth] plot ({(1/3)*sin(\t r)},{(1/3)+(1/3)*cos(\t r)});
    \filldraw [rotate=-120,lightgray!80,domain=0:6.28,variable=\t,smooth] plot ({(1/3)*sin(\t r)},{(1/3)+(1/3)*cos(\t r)});

    \draw [domain=0:6.28,variable=\t,smooth] plot ({(1/3)*sin(\t r)},{(1/3)+(1/3)*cos(\t r)});
    \draw [rotate=120,domain=0:6.28,variable=\t,smooth] plot ({(1/3)*sin(\t r)},{(1/3)+(1/3)*cos(\t r)});
    \draw [rotate=-120,domain=0:6.28,variable=\t,smooth] plot ({(1/3)*sin(\t r)},{(1/3)+(1/3)*cos(\t r)});


    \draw [domain=0:6.28,variable=\t,smooth] plot ({sin(\t r)},{cos(\t r)});

    \draw [dotted, white, ultra thick, domain=-0.26:0.26,variable=\t,smooth] plot ({sin(\t r)},{cos(\t r)});
    \draw [dotted, white, ultra thick, domain=1.83:2.35,variable=\t,smooth] plot ({sin(\t r)},{cos(\t r)});
    \draw [dotted, white, ultra thick, domain=-2.35:-1.83,variable=\t,smooth] plot ({sin(\t r)},{cos(\t r)});


    \draw[shift={(0.25,0.87)},rotate=-20] (0.1,0) -- (0,0) -- (0,0.2) -- (0.1,0.2);
    \draw[shift={(-0.25,0.87)},rotate=20] (-0.1,0) -- (0,0) -- (0,0.2) -- (-0.1,0.2);

    \draw[shift={(-1.07,-0.25)},rotate=-80] (-0.1,0.2) -- (0,0.2) -- (0,0) -- (-0.1,0);
    \draw[shift={(-0.65,-0.64)},rotate=140] (-0.1,0) -- (0,0) -- (0,0.2) -- (-0.1,0.2);

    \draw[shift={(1.07,-0.25)},rotate=80] (0.1,0.2) -- (0,0.2) -- (0,0) -- (0.1,0);
    \draw[shift={(0.65,-0.64)},rotate=-140] (0.1,0) -- (0,0) -- (0,0.2) -- (0.1,0.2);


    \filldraw [black] (0,1) circle (1.25pt) node[anchor= south] {\scriptsize{$\mathrm{T}_1$}};
    \filldraw [black] (0.88,-0.49) circle (1.25pt) node[anchor= north west] {\scriptsize{$\mathrm{T}_2$}};
    \filldraw [black] (-0.88,-0.49) circle (1.25pt)node[anchor= north east] {\scriptsize{$\mathrm{T}_3$}};

    \draw (0,-0.95) -- (0,-1.05);
    \draw[rotate=120] (0,-0.95) -- (0,-1.05);
    \draw[rotate=-120] (0,-0.95) -- (0,-1.05);

\end{tikzpicture}
\addtocounter{subfigure}{-1}
\captionof{subfigure}{\footnotesize{$v\in (1/2,1)$. }}\label{FIG:crossSUP}
\end{subfigure}\endminipage
\captionof{figure}{The interior of the (dark gray) disk closest to the location of the Taub point $\mathrm{T}_\alpha$ in $(\Sigma_1,\Sigma_2,\Sigma_3)$-space indicates growth of the cross term $N_\beta N_\gamma$, $(\alpha,\beta,\gamma)=(123)$ or a permutation thereof.
Outside their growth region the cross terms decay. As $v\in (0,1)$ increases, the (gray) disks
radii decrease and they move toward the middle, where $\Sigma^2=0$. For $v\in (0,1/2)$, the disks
have parts both outside and inside $\Sigma^2=1$, and in particular, for $v=1/4$,
their boundary circles intersect at the $\Sigma_\alpha$ location of the $\mathrm{Q}_\alpha$
points. At $v=1/2$ the disks only intersects with $\Sigma^2=1$ at the $\Sigma_\alpha$ values
of the Taub points $\mathrm{T}_\alpha$. For $v\in (1/2,1)$ the disks lie inside $\Sigma^2=1$.
}\label{FIG:circleCROSS}
\end{figure}

All the above conjectures are about the dynamical
system~\eqref{intro_dynsyslambdaR}, which describes the dynamics of the vacuum
$\lambda$-$R$ class~A models. In Appendix~\ref{appsubsec:HL} we show how to derive dynamical
systems for the more general HL models, and how the discrete statements
about the dynamical system~\eqref{intro_dynsyslambdaR} translate to these systems.
Moreover, similar heuristic arguments as those in the
$\lambda$-$R$ case suggest that dynamical conjectures, analogous to those above,
can be stated for broad classes of HL models. The results in Appendix~\ref{appsubsec:HL}
also indicate that if one is not able to obtain proofs for the $\lambda$-$R$ case,
then one is not likely to be able to prove analogous results for more general HL models.
In other words, the $\lambda$-$R$ models is a necessary step that needs
to be overcome before attempting to tackle more general HL models.

The above
dynamical conjectures implicitly suggest
that one uses a dynamical systems formulation of the type discussed in this work.
However, there are other possible approaches.
In~\cite{reitru10} the authors used metric variables as the starting point
for their analysis. Arguably the most efficient way to do this is to use
a Lagrangian or Hamiltonian approach, as done in Appendix~\ref{app:dom},
and use a time variable defined by setting ${\cal N}=-1$ in this appendix.
Alternatively, one can use the billiard (metric configuration space) formulation
of Chitr\'e and Misner~\cite{chi72,mis69a,mis69b}, see p. 812 in~\cite{grav73},
and also~\cite{dametal03,heietal09,damlec11}, and attempt to estimate the terms that
are heuristically neglected, which contain the asymptotic history of the solutions,
where the latter is essential for Conjecture~\ref{conj:subcritical}.

\begin{appendix}

\section{\texorpdfstring{Ho\v{r}ava-Lifshitz}{} models}\label{app:dom}

In this appendix we derive the evolution equations~\eqref{intro_dynsyslambdaR}
for the vacuum spatially homogeneous $\lambda$-$R$ class~A Bianchi models.
We also obtain a regular constrained dynamical system for the
Ho\v{r}ava-Lifshitz (HL) class~A Bianchi models. In addition, we heuristically
argue that the heteroclinic structure these models exhibit on the union of
the Bianchi type I and II sets describes the relevant asymptotic dynamical
structure toward the singularity for the $\lambda$-$R$ models and a wide
range of more general HL models. This is further supported by the existence
of a `dominant' Bianchi type I and II invariant set in the HL dynamical systems
formulation which can be identified with the Bianchi type I and II invariant
set for the $\lambda$-$R$ models. The main part of the paper is therefore also
relevant for a broad set of HL models.


Recall that the dynamics of HL gravity is based on the
action~\eqref{action}, where the kinetic part is given by~\eqref{kin} and the
potential by~\eqref{calV}. We consider vacuum spatially homogeneous HL class~A
Bianchi models, for which the Bianchi type VIII and IX models are
the most general ones. These models admit a symmetry-adapted
spatial (left-invariant) co-frame $\{{\bom}^1,{\bom}^2,{\bom}^3\}$,
described in equation~\eqref{structconst}, which we repeat for the reader's
convenience:
\begin{equation}\label{app:structconst}
d{\bom}^1  =  -{n}_1 \,
{\bom}^2\wedge {\bom}^3\:,\quad d{\bom}^2  =
-{n}_2 \, {\bom}^3\wedge {\bom}^1\:,\quad
d{\bom}^3  = -{n}_3 \, {\bom}^1\wedge
{\bom}^2\:,
\end{equation}
where the structure constants $n_1$, $n_2$, $n_3$ determine the Lie algebras of
the three-dimensional simply transitive symmetry groups, which describe
the class~A Bianchi models, see e.g.~\cite{waiell97}, and Table~\ref{intro:classAmodels}.

Expressing the components of the spatial metric in the symmetry adapted
spatial co-frame~\eqref{app:structconst} leads to that they become purely
time-dependent. Since the GR and HL class~A Bianchi models share the same
spatial symmetry adapted frame, they also have the same automorphism groups.
In the present context, automorphisms
are linear transformations of the spatial left-invariant frame that leave
the structure constants of the Bianchi symmetry groups unchanged.
Since the automorphisms are what is left of the symmetry generating spatial
diffeomorphisms, it should come as no surprise that there is
a close connection between them and the momentum/Codazzi vacuum constraints,
which are the same for all GR and HL models, see e.g.~\cite{henetal10}.
In particular, the momentum/Codazzi constraints can be set to zero
by means of the class~A off-diagonal automorphisms, which at the same
time can be used to diagonalize the spatially homogeneous spatial metric,
see~\cite{waiell97,jan01,Mac73} and references therein.\footnote{Diagonalization
and the role of the automorphism group also depends on the spatial topology,
an issue which we neglect. For an investigation about the role of spatial topology
in a Hamiltonian description of Bianchi models, see~\cite{ashsam91}.}
We will use the symmetry adapted co-frame with diagonal class~A metrics
throughout, and we also set the shift vector $N_i$ in~\eqref{genmetric}
to zero. The only remaining constraint is the Hamiltonian/Gauss constraint.

The diagonalized vacuum spatially homogeneous class~A metrics are given by
\begin{equation}\label{threemetric}
\mathbf{g} = -N^2(t)dt\otimes
dt + g_{11}(t)\:{\bom}^1\otimes {\bom}^1 +
g_{22}(t)\:{\bom}^2\otimes {\bom}^2 +
g_{33}(t)\:{\bom}^3\otimes {\bom}^3,
\end{equation}
where the lapse $N=N(t)$ is a non-zero function determining
the particular choice of time variable.
Due to the diagonal time-dependent spatial metric~\eqref{threemetric},
the extrinsic curvature is also diagonal, given by
$(K_{11}, K_{22}, K_{33}) = (\dot{g}_{11}, \dot{g}_{22}, \dot{g}_{33})/(2N)$,
where $\dot{}$ denotes a derivative with respect to $t$. 
Alternatively, raising one of the indices, it takes the form
\begin{equation}\label{extrinsicK}
(K^1\!_1, K^2\!_2, K^3\!_3) 
= \frac{1}{2N}\left(\frac{\dot{g}_{11}}{g_{11}}, \frac{\dot{g}_{22}}{g_{22}}, \frac{\dot{g}_{33}}{g_{33}}\right).
\end{equation}

For the HL class~A Bianchi models, the action~\eqref{action}
expressed in terms of the symmetry adapted co-frame~\eqref{structconst}
yields the field equations for the associated metric~\eqref{threemetric}.
In order to simplify this action as much as possible and thereby obtain
simple Hamiltonian equations, we focus on the kinetic
part ${\cal T}$ in equation~\eqref{kin}, which can be written as
\begin{equation}\label{kin2}
{\cal T} = (K^1\!_1)^2 + (K^2\!_2)^2 + (K^3\!_3)^2 - \lambda (K^1\!_1 + K^2\!_2 + K^3\!_3)^2.
\end{equation}

It follows that ${\cal T}$ is a quadratic form in the time derivatives
of the metric. To simplify ${\cal T}$, we make a variable transformation
from the metric components to the variables $\beta^0,\beta^+,\beta^-$,
first introduced by Misner~\cite{mis69a,mis69b,grav73},
\begin{subequations}\label{Misnerbeta}
\begin{align}
g_{11} &= e^{2( \beta^0 - 2\beta^+)},\\
g_{22} &= e^{2(\beta^0 + \beta^+ + \sqrt{3}\beta^-)},\\
g_{33} &= e^{2(\beta^0 + \beta^+ - \sqrt{3}\beta^-)}.
\end{align}
\end{subequations}
%
%
%
This results in that ${\cal T}$ in equation~\eqref{kin2} takes the form
\begin{equation}\label{calT}
{\cal T} = \dfrac{6}{N^2}
\left[-\left(\frac{3\lambda-1}{2}\right)(\dot{\beta}^0)^2 + (\dot{\beta}^+)^2 + (\dot{\beta}^-)^2\right].
\end{equation}
Note that the character of the quadratic form~\eqref{calT} changes
when $\lambda = 1/3$. Since we are interested
in continuously deforming the GR case $\lambda = 1$, we restrict considerations to
$\lambda > 1/3$. To simplify the kinetic part further, we introduce a new variable
$\beta^\lambda$ and a density-normalized lapse function $\mathcal{N}$, defined by
\begin{subequations}\label{betaLAMBDAandNcal}
\begin{align}
\beta^\lambda &:= \sqrt{\frac{3\lambda-1}{2}}\beta^0,\\
{\cal N} &:= \frac{N}{12\sqrt{g}},
\end{align}
\end{subequations}
where $g = g_{11}g_{22}g_{33} = \exp(6\beta^0)$ is the determinant of the spatial metric
in the symmetry adapted co-frame, which leads to,
\begin{equation}
\sqrt{g}N{\cal T} = \dfrac{1}{2 {\cal N}}\left[-(\dot{\beta}^\lambda)^2 + (\dot{\beta}^+)^2 + (\dot{\beta}^-)^2\right].
\end{equation}

It is convenient to define
\begin{equation}
T := \frac{\sqrt{g} N}{\mathcal{N}}\mathcal{T} = 12g{\cal T},
\end{equation}
so that ${\cal N}T$ is the kinetic part of the Lagrangian for
the present spatially homogeneous models, in
analogy with the GR case, see e.g., ch. 10 in~\cite{waiell97}.
The density-normalized lapse ${\cal N}$ is kept in the kinetic term
${\cal N}T$, since it is needed in order to obtain the Hamiltonian
constraint, which is accomplished by varying ${\cal N}$ in the Hamiltonian.

To proceed to a Hamiltonian description, we introduce the canonical momenta
\begin{equation}
p_\lambda := -\frac{\dot{\beta}^\lambda}{{\cal N}}, \qquad p_\pm :=\frac{\dot{{\beta}}^\pm}{{\cal N}}.
\end{equation}
This leads to that $T$ takes the form
\begin{equation}\label{Tkinetics}
T = \frac12\left(- p_\lambda^2 + p_+^2 + p_-^2\right).
\end{equation}

Similarly to the treatment of the kinetic part, we define
\begin{equation}
V := \sqrt{g}N {\cal V}/{\cal N}=12 g {\cal V}.
\end{equation}
Due to~\eqref{calV},
\begin{equation}\label{VHL}
V = {}^1V + {}^2V + {}^3V + {}^4V + {}^5V + {}^6V + \dots ,
\end{equation}
where
\begin{subequations}\label{pots}
\begin{alignat}{3}
{}^1V &:= 12k_1 gR, &\qquad {}^2V &:= 12k_2 gR^2,  &\qquad
{}^3V &:= 12k_3 g R^i\!_jR^j\!_i,\\
{}^4V &:= 12k_4 g R^i\!_jC^j\!_i,
&\qquad {}^5V &:= 12k_5 g C^i\!_jC^j\!_i, &\qquad {}^6V &:= 12k_6 gR^3.
\end{alignat}
\end{subequations}
The superscripts on ${}^AV$ (where $A = 1,\dots,6$) thereby coincide with the
subscripts of the constants $k_A$ in~\eqref{calV}.

Based on~\eqref{action}, this leads to a Hamiltonian $H$ given by
\begin{equation}\label{LambdaRham}
H := \sqrt{g}N({\cal T} + {\cal V}) = {\cal N}(T + V) = 0,
\end{equation}
where $T$ only depends on the canonical momenta $p_\lambda$, $p_\pm$,
given by~\eqref{Tkinetics}, and $V$ only depends on
$\beta^\lambda$, $\beta^\pm$, given by~\eqref{VHL} and~\eqref{pots}.

In order to derive the ordinary differential equations for these models
via the Hamiltonian equations in terms of the variables
$\beta^\lambda$, $\beta^\pm$ and the canonical momenta $p_\lambda$, $p_\pm$,
we need to compute each ${}^AV(\beta^\lambda,\beta^\pm)$.
We proceed with two cases: one which minimally modifies vacuum GR in the present context,
the vacuum $\lambda$-$R$ models in Section~\ref{subsec:lambdaR}; one which
more generally modifies GR, the HL models in Section~\ref{appsubsec:HL}.
Both cases have a Hamiltonian with the same kinetic part, given in~\eqref{Tkinetics},
but they have different potentials in~\eqref{VHL} and~\eqref{pots}.
The vacuum $\lambda$-$R$ models are obtained by setting
$k_1=-1$, $k_2=k_3=k_4=k_5=k_6=0$ in~\eqref{pots} and thus~\eqref{VHL} yields
$V = {}^1V = -12gR$, i.e., the same potential as in GR.
The more general vacuum HL models are determined by the potentials ${}^AV$
with $A = 1,\dots, 6$, and combinations thereof.

\subsection{$\lambda$-$R$ Class A models}\label{subsec:lambdaR}

The vacuum $\lambda$-$R$ models minimally modify
the vacuum GR models~\cite{giukie94,belres12,lolpir14}.
They are obtained from an action that consists of the generalized
kinetic part in~\eqref{kin}, i.e, by keeping $\lambda$ (GR is obtained by
setting $\lambda=1$), and the vacuum GR potential in~\eqref{calV},
i.e., a potential arising from $-R$ only, and hence when $k_1=-1$
and $k_2=k_3=k_4=k_5=k_6=0$ in~\eqref{calV}. These models suffice for
our goal of illustrating the role of first principles and their connection
with the structure of generic spacelike singularities.

\subsubsection*{Derivation of the $\lambda$-$R$ evolution equations}

To obtain succinct expressions for the spatial curvature, and thereby the
potential $V={}^1V = - 12gR$,
we introduce the following auxiliary quantities (see~\cite{heiugg09a} for a
discussion when one, or several, of the constants $n_1$, $n_2$, $n_3$ is zero),
\begin{subequations}\label{malpha}
\begin{align}
m_1 &:= n_1g_{11} = n_1 e^{2(2v\beta^\lambda - 2\beta^+)},\\
m_2 &:= n_2g_{22} = n_2 e^{2(2v\beta^\lambda + \beta^+ + \sqrt{3}\beta^-)},\\
m_3 &:= n_3g_{33} = n_3 e^{2(2v\beta^\lambda + \beta^+ - \sqrt{3}\beta^-)}.
\end{align}
\end{subequations}
Here we have introduced the parameter $v$, which is defined by
the relation
\begin{equation}
v := \frac{1}{\sqrt{2(3\lambda - 1)}},
\end{equation}
and hence $\beta^0 = 2v\beta^\lambda$ due to~\eqref{betaLAMBDAandNcal}.
The parameter $v$ plays a prominent role in this and the next Appendix, and
in the evolution equations~\eqref{intro_dynsyslambdaR}. Since we are
interested in continuous deformations of GR with $\lambda=1$, and thus
$v=1/2$, we restrict attention to $v\in (0,1)$, although $v=0$ and $v=1$,
which result in bifurcations, will sometimes also be considered.
Specializing the general expression for the spatial curvature in~\cite{elsugg97}
to the diagonal class A Bianchi models leads to\footnote{Note
that the expressions for the extrinsic and the spatial curvature with one upper
and one lower index coincide when using either the presently introduced spatial
co-frame or an associated orthonormal frame for the
diagonal class~A models, as in~\cite{elsugg97}.}
\begin{equation}\label{R11}
R^1\!_1 = \frac{1}{2g}(m_1^2 - (m_2-m_3)^2),
\end{equation}
where $R^1\!_1 = g^{11}R_{11}
= g_{11}^{-1}R_{11}$, and similarly by permutations
for $R^2\!_2$ and $R^3\!_3$. It follows that the spatial scalar curvature
$R = R^1\!_1 + R^2\!_2 +R^3\!_3$ is given by
\begin{equation}\label{Rscalar}
R = -\frac{1}{2g}(m_1^2 + m_2^2 + m_3^2 - 2m_1m_2 - 2m_2m_3 - 2m_3m_1).
\end{equation}
This thereby yields the potential in~\eqref{VHL} and~\eqref{pots} with $k_1=-1$:
\begin{equation}\label{kinV}
V = {}^1V = -12 g R = 6(m_1^2 + m_2^2 + m_3^2 - 2m_1m_2 - 2m_2m_3 - 2m_3m_1),
\end{equation}
where $V$ depends on $\beta^\lambda$ and $\beta^\pm$ via $m_1$, $m_2$ and $m_3$,
according to equation~\eqref{malpha}.

The evolution equations for $\beta^\lambda$, $\beta^\pm$, $p_\lambda$, $p_\pm$
are obtained from Hamilton's equations, where $T$ and $V$
in the Hamiltonian~\eqref{LambdaRham} are given by~\eqref{Tkinetics} and~\eqref{kinV},
respectively, which yields
\begin{subequations}\label{HamiltonEQ}
\begin{align}
\dot{\beta}^\lambda &= \frac{\partial H}{\partial p_\lambda}
= -{\cal N} p_\lambda, \qquad && \dot{p}_\lambda
= - \frac{\partial H}{\partial\beta^\lambda}
= - \mathcal{N} \frac{\partial V}{\partial\beta^\lambda}, \label{betadotham}\\
\dot{\beta}^\pm &= \frac{\partial H}{\partial{p}_\pm}
= {\cal N}{p}_\pm, \qquad &&\dot{p}_\pm
= - \frac{\partial H}{\partial{\beta}^\pm}
= - \mathcal{N} \frac{\partial V}{\partial{\beta}^\pm},
\end{align}
\end{subequations}
while the Hamiltonian constraint $T+V=0$ is obtained by varying ${\cal N}$.

Next, we choose a new time variable
$\tau_-:=-\beta^\lambda$, 
which is directed toward the physical past, since we are considering
expanding models. This is accomplished by setting ${\cal N} =  p_\lambda^{-1}$
in the first equation in~\eqref{betadotham}, and thereby $N = 12\sqrt{g}/ p_\lambda$,
which results in the following evolution equations:
\begin{subequations}\label{HamiltonEQ2}
\begin{align}
\frac{d\beta^\lambda}{d\tau_-} &= -1, \qquad
&&\frac{d p_\lambda}{d\tau_-} =
-\frac{1}{ p_\lambda} \frac{\partial V}{\partial\beta^\lambda}, \\
\frac{d\beta^\pm}{d\tau_-} &= \frac{{p}_\pm}{ p_\lambda}, \qquad &&
\frac{dp_\pm}{d\tau_-} = -\frac{1}{ p_\lambda} \frac{\partial V}{\partial{\beta}^\pm}.
\end{align}
\end{subequations}

We then rewrite the system~\eqref{HamiltonEQ2} and the constraint $T+V=0$
using the non-canonical variable transformation,
\begin{equation}\label{SigmaNvariables}
\Sigma_\pm := - \frac{p_\pm}{ p_\lambda}, \qquad \qquad \qquad N_\alpha
:= - 2\sqrt{3}\left(\frac{m_\alpha}{ p_\lambda}\right),
\end{equation}
while keeping $p_\lambda$. Note that $\Sigma_\pm = d\beta^\pm/d\beta^\lambda = -d\beta^\pm/d\tau_-$.

These variables lead to a decoupling\footnote{More precisely, the variables result in a
skew-product dynamical system where the base dynamics acts in $(\Sigma_\pm,N_1,N_2,N_3)$
while the fiber dynamics acts in $p_\lambda$. This notion was introduced in connection
with ergodic theory in~\cite{Anzai51}.} of the evolution equation for the variable $ p_\lambda$,
\begin{equation}\label{p0prime}
p_\lambda^\prime = -4v(1-\Sigma^2) p_\lambda,
\end{equation}
where ${}^\prime$ denotes the derivative $d/d\tau_-$.
This yields the following reduced system of evolution equations
\begin{subequations}\label{dynsyslambdaR}
\begin{align}
\Sigma_\pm^\prime &= 4v(1-\Sigma^2)\Sigma_\pm + {\cal S}_\pm,\\
N_1^\prime &= -2(2v\Sigma^2 - 2\Sigma_+)N_1,\\
N_2^\prime &= -2(2v\Sigma^2 + \Sigma_+ + \sqrt{3}\Sigma_-)N_2,\\
N_3^\prime &= -2(2v\Sigma^2 + \Sigma_+ - \sqrt{3}\Sigma_-)N_3,
\end{align}
while the Hamiltonian constraint $T+V=0$ results in
\begin{equation}\label{constrpmVIIIIX}
1 - \Sigma^2 - \Omega_k =0,
\end{equation}
\end{subequations}
where
\begin{subequations}\label{LambdaRquantities}
\begin{align}
\Sigma^2 &:= \Sigma_+^2 + \Sigma_-^2,\\
\Omega_k &:= N_1^2 + N_2^2 + N_3^2 - 2N_1N_2 - 2N_2N_3 - 2N_3N_1,\\
{\cal S}_+ &:= 2[(N_2 - N_3)^2 - N_1(2N_1 - N_2 - N_3)],\\
{\cal S}_- &:= 2\sqrt{3}(N_2 - N_3)(N_2 + N_3 - N_1).
\end{align}
\end{subequations}

Note that the variables $\Sigma_\pm$, $N_1$, $N_2$ and $N_3$,
defined in~\eqref{SigmaNvariables}, are \emph{dimensionless}. Dimensions
can be introduced in various ways, but terms in a sum
must all have the same dimension. The constraint~\eqref{constrpmVIIIIX}
is such a sum. Since this sum contains 1,
which 
obviously is dimensionless, it follows that $\Sigma_+$, $\Sigma_-$,
$N_1$, $N_2$ and $N_3$ are dimensionless, and so is the time variable
$\tau_-$, as follows from inspection of~\eqref{dynsyslambdaR}.

The introduction of the Misner parametrization and the associated
$\Sigma_\pm$ variables breaks an axis permutation symmetry, which
can be restored by introducing the variables
\begin{equation}\label{app:MisnerPar}
\Sigma_1 := - 2\Sigma_+, \qquad \Sigma_2:= \Sigma_+ + \sqrt{3}\Sigma_-, \qquad
\Sigma_3 := \Sigma_+ - \sqrt{3}\Sigma_-.
\end{equation}
By multiplying the equation for $\Sigma_+$ with $-2$ and setting ${\cal S}_1 = -2{\cal S}_+$,
we obtain the equation for $\Sigma_1$. Replacing $(123)$ with $(\alpha\beta\gamma)$ then
allows us to write the above system of evolution equations~\eqref{dynsyslambdaR} as
the system~\eqref{intro_dynsyslambdaR}, which is invariant under the axis permutations
in~\eqref{permSYM}. Note that we only need the equation for $\Sigma_+$ and
not the one for $\Sigma_-$ to obtain the system~\eqref{intro_dynsyslambdaR},
a strategy we will use for the HL models.
The vacuum equations for GR are obtained by setting $v=1/2$.\footnote{For a similar derivation
of the GR case, see ch. 10 in~\cite{waiell97}, but note that the present variables $N_\alpha$
differ from those in~\cite{waiell97} by a factor $2\sqrt{3}$.}

\subsubsection*{Heuristic $\lambda$-$R$ considerations}

To obtain some motivation for some of the Conjectures in Section~\ref{sec:conjectures},
we use Misner's heuristic approximation scheme, which he
introduced in order to understand the initial Bianchi type IX singularity in GR,
see~\cite{mis69a,mis69b,jan01} and ch. 10 in~\cite{waiell97}, and apply it
to the class~A Bianchi $\lambda$-$R$ models.
In this scheme, a class~A Bianchi solution toward the past initial singularity
(i.e. when $\tau_- = - \beta^\lambda \rightarrow \infty$) is described as
a `particle' moving in a potential well in $(\beta^+,\beta^-)\in\mathbb{R}^2$ space.

Let us begin with the $\lambda$-$R$ Bianchi type I models for which
$n_1=n_2=n_3=0$, according to Table~\ref{intro:classAmodels}, and thus
$m_1=m_2=m_3=0$ in equation~\eqref{malpha}.
Hence the spatial curvature~\eqref{Rscalar} is identically zero,
and so is the potential~\eqref{kinV}, which implies that the kinetic part~\eqref{Tkinetics}
in the Hamiltonian~\eqref{LambdaRham} determines the dynamics.
It therefore follows that any type I solution can be
described as a `\emph{cosmological particle}' that is moving with the \emph{constant}
velocity
\begin{equation}\label{Velocity}
\vec{V} = (V_+,V_-) = \left(\frac{d\beta^+}{d\tau_-},\frac{d\beta^-}{d\tau_-}\right) =  (-\Sigma_+,-\Sigma_-),
\end{equation}
in $\beta^\pm$-space, due to~\eqref{HamiltonEQ2} and~\eqref{SigmaNvariables}.
Note that $\vec{V}$, $V_\pm$ should not be confused with the potential $V$.
Since the Hamiltonian/Gauss constraint in Bianchi type I reduces to $T=0$,
it follows that $\Sigma_+^2 + \Sigma_-^2 = 1$, and hence that the speed $|\vec{V}|$ of
the `cosmological particle' is $|\vec{V}|= 1$. Thus the fixed points in the Kasner circle
set $\mathrm{K}^\ocircle$ are interpreted in this picture as a particle with a constant
velocity $\vec{V}=(-\Sigma_+,-\Sigma_-)$ and speed $|\vec{V}|= 1$ in
$\beta^\pm$-space.

The $\lambda$-$R$ Bianchi type $\mathrm{II}_1$ models are characterized by
$n_1\neq 0,n_2=n_3=0$, see Table~\ref{intro:classAmodels}, and thus
$m_1\neq 0, m_2=m_3=0$, as follows from~\eqref{malpha}. Similar statements
hold for $\mathrm{II}_2$ and $\mathrm{II}_3$.
The evolution of the $\mathrm{II}_1$ models is determined by
\begin{equation}
T + V = \frac12\left(- p_\lambda^2 + p_+^2 + p_-^2\right) + 6m_1^2 = 0,
\end{equation}
where we recall due to~\eqref{malpha} that
\begin{equation}\label{m1LambdaRheur}
m_1 = n_1e^{2(2v\beta^\lambda - 2\beta^+)} = n_1e^{-4(v\tau_- + \beta^+)},
\end{equation}
where the time variable is given by $\tau_- := - \beta^\lambda$.

The steep exponential potential~\eqref{m1LambdaRheur} is
approximated by setting it to be identically zero
when the exponential in $6m_1^2$ is sufficiently small, and replacing it with an
\emph{infinite potential wall} when the smallness condition is violated.
For a chosen sufficiently small
constant $C\ll 1$, the potential attains this small value 
$C=6n_1^2e^{-8(v\tau_- + \beta^+_0)}$ for some $\beta^+_0\in\mathbb{R}$,
which determines its location in $\beta^+\in\mathbb{R}$ as a function of
the constants $n_1$, $v$, $C$ and the time $\tau_-$, given by
$\beta^+_0 := \log(6n_1^2/C)^{1/8} - v\tau_-$.
The steep potential~\eqref{m1LambdaRheur} is thereby approximated
by a potential that is set to zero
when $\beta^+ > \beta^+_0$, since then $6n_1^2e^{-8(v\tau_- + \beta^+)} < C$,
and an infinite potential wall at $\beta^+ = \beta^+_0$.
As $\tau_-$ increases toward the singularity, the location of the wall
at $\beta^+_0$ moves in the negative $\beta^+$-direction according to
$\beta^+_0 := \log(6n_1^2/C)^{1/8} - v\tau_-$, with a velocity
\begin{equation}\label{1wallv}
\vec{v}_1 = (v_+, v_-) = \left(\frac{d\beta^+_0}{d\tau_-},\frac{d\beta^-_0}{d\tau_-}\right) =  (-v,0),
\end{equation}
and thus the wall has a speed $|\vec{v}_1| = v$ in the negative $\beta^+$-direction.
If $V_+<v_+=-v<0$ (and hence $|V_+|> v > 0$), then the cosmological particle eventually reaches and
bounces against the infinite potential wall given by the Bianchi
type $\mathrm{II}_1$ potential $6m_1^2$, see Figure~\ref{FIG:typeIandIIheuristics}.
This occurs if $-\Sigma_+<-v$, 
which corresponds to $\Sigma_1 < -2v$ in the coordinates~\eqref{app:MisnerPar}.
This coincides with the instability criterion on
$\mathrm{K}^\ocircle$ in the $\Sigma_1$-direction, which defines the unstable
arc $\mathrm{int}(A_1)$ in equation~\eqref{A_1}.

Similarly, one can construct the infinite potential walls for the Bianchi type $\mathrm{II}_2$ and
$\mathrm{II}_3$ models, and obtain analogous results by adapting the $\beta^\pm$
variables to those directions. Such walls, with respective potentials given by
$6m_2^2$ and $6m_3^2$, have the following velocities in the present $\beta^\pm$
coordinates:
\begin{equation}\label{23wallv}
\vec{v}_2 = \frac12(1,\sqrt{3})v,\qquad
\vec{v}_3 = \frac12(1,-\sqrt{3})v .
\end{equation}

The general picture is therefore that a cosmological particle moves in a
Bianchi type I potential $V=0$, with velocity
$\vec{V}=(-\Sigma_+,-\Sigma_-)$, until it encounters a Bianchi type II moving wall
and bounces, thereby obtaining new values of $\Sigma_\pm$, determined by
the Kasner circle map, see Figure~\ref{FIG:typeIandIIheuristics}.
As follows from~\eqref{bouncephi}, a bounce against the $\mathrm{II}_1$ wall
corresponds to
\begin{equation}\label{BounceLaw}
\sin\varphi^\mathrm{f} = \frac{(1 - v^2)\sin\varphi^\mathrm{i}}{1 + v^2 - 2v\cos\varphi^\mathrm{i}},
\end{equation}
where $\varphi^\mathrm{i}$ is the angle the straight line motion of the particle makes with
the $\beta^+$-axis, while $\varphi^\mathrm{f}$ describes the outgoing motion after the bounce,
which is given by the subsequent Kasner solution.
The bounce law~\eqref{BounceLaw} can also be obtained in the present
description by making boost in the $\beta^+$-direction in
$(\beta^\lambda,\beta^\pm)$-space so that the potential wall does not move
and using that the ingoing and outgoing bounce angles then are equal. We will perform
such a boost in the next appendix. Finally, note that~\eqref{BounceLaw} reduces to
the GR case when $v=1/2$, e.g. given in ch. 10 in~\cite{waiell97}.

\begin{figure}[H]
\minipage[b]{0.5\textwidth}\centering
\begin{subfigure}\centering
\begin{tikzpicture}[scale=0.25]

    \filldraw[rotate=120,white] (-9,14) circle (6pt);

    \draw[color=gray,->] (0,-14) -- (0,14)  node[anchor= south] {\scriptsize{$\beta^-$}};
    \draw[color=gray,->] (-14,0) -- (14,0)  node[anchor= west] {\scriptsize{$\beta^+$}};

    \draw (-3,-14) -- (-3,14);
    \draw (-6,-14) -- (-6,14);
    \draw (-9,-14) -- (-9,14);

    \draw[color=gray,->] (-10,1) -- (-13,1)  node[anchor= south west] {\scriptsize{$\vec{v}_1$}};

    \filldraw [black] (-9,14) circle (0.1pt)   node[anchor= south] {\scriptsize{$\beta^+_0(\tau_-)$}};

    \draw[color=gray,->] (6,7.5) -- (3,6)  node[anchor= south] {\scriptsize{$\vec{V}$}};
    \draw[color=gray,dashed] (3,6) -- (-3,3);
    \draw[color=gray,dashed] (-3,3) -- (6,-12);

    \filldraw [black] (6,7.5) circle (6pt);

    \draw[color=gray,dashed] (1.5,3) -- (-7.5,3);

    \draw [shift={(-2.5,3)},domain=0:0.7,variable=\t,smooth] plot ({{cos(\t r)},sin(\t r)});\node at (-0.65,3.8) {\scriptsize{$\varphi^\mathrm{i}$}};
    \draw [shift={(-2.5,3)},domain=0:-1.5,variable=\t,smooth] plot ({{cos(\t r)},sin(\t r)});\node at (-0.65,2) {\scriptsize{$\varphi^\mathrm{f}$}};

\end{tikzpicture}
\addtocounter{subfigure}{-1}
\captionof{subfigure}{\footnotesize{Incoming `particle' with velocity $\vec{V}$ and moving potential $\mathrm{II}_1$ walls.}}\label{FIG:TYPEIANDII}
\end{subfigure}\endminipage\hfill
\minipage[b]{0.5\textwidth}\centering

\begin{subfigure}\centering
\begin{tikzpicture}[scale=0.25]

    \draw[color=gray,->] (0,-14) -- (0,14)  node[anchor= south] {\scriptsize{$\beta^-$}};
    \draw[color=gray,->] (-14,0) -- (14,0)  node[anchor= west] {\scriptsize{$\beta^+$}};

    \draw (-3,-14) -- (-3,14);
    \draw (-6,-14) -- (-6,14);
    \draw (-9,-14) -- (-9,14);

    \draw[color=gray,->] (-10,1) -- (-13,1)  node[anchor= south west] {\scriptsize{$\vec{v}_1$}};

    \draw[rotate=120] (-3,-14) -- (-3,14);
    \draw[rotate=120] (-6,-14) -- (-6,14);
    \draw[rotate=120] (-9,-14) -- (-9,14);

    \draw[rotate=120,color=gray,->] (-10,0) -- (-13,0)  node[anchor= south west] {\scriptsize{$\vec{v}_2$}};

    \draw[rotate=-120] (-3,-14) -- (-3,14);
    \draw[rotate=-120] (-6,-14) -- (-6,14);
    \draw[rotate=-120] (-9,-14) -- (-9,14);

    \draw[rotate=-120,color=gray,->] (-10,0) -- (-13,0)  node[anchor= north west] {\scriptsize{$\vec{v}_3$}};

\end{tikzpicture}
\addtocounter{subfigure}{-1}
\captionof{subfigure}{\footnotesize{Moving potential $\mathrm{II}_1$, $\mathrm{II}_2$ and $\mathrm{II}_3$ walls with respective veloticies $\vec{v}_1$, $\vec{v}_2$ and $\vec{v}_3$.}}\label{FIG:ALLTYPEII}
\end{subfigure}\endminipage
\captionof{figure}{The cosmological particle with velocity $\vec{V}$, determined by the Kasner solutions,
and the level sets of the type $\mathrm{II}_1$ potential described by a moving wall at $\beta_0^+(\tau_-)$
with velocity $\vec{v}_1$, and similarly for $\mathrm{II}_2$ and $\mathrm{II}_3$.
The particle reaches the moving wall $\mathrm{II}_1$ and bounces
with the law given by~\eqref{BounceLaw} when $|V_+|> v$. }\label{FIG:typeIandIIheuristics}
\end{figure}

Next, consider the Bianchi type VIII and IX models.
According to~\eqref{kinV}, the potential consists of the three combined
Bianchi type II potentials given by $6m_1^2,6m_2^2,6m_3^2$, which together
form a \emph{triangular potential well} in $\beta^\pm$-space, plus the
three `cross terms' $-12m_1m_2$, $-12m_2m_3$, $-12m_3m_1$, which form
\emph{cross term walls}. The cross terms are all negative in type IX,
while two are positive and one is negative in type VIII.
They are given by exponentials, which can be approximated
by a region where each individual term can be set to zero and a negative or positive
(depending on its sign) infinite potential wall moving in $\beta^\pm$-space
depending on time $\tau_-$. For example, the reasoning that resulted in
equation~\eqref{1wallv} for the type $\mathrm{II}_1$ potential can be replicated for the cross
term $-12m_2m_3 = -12n_2n_3\exp[4(-2v\tau_- + \beta^+)]$. This leads to a wall with
a velocity $\vec{v}_{2,3} = (2v,0)$, and thus a speed $2v$ in the positive $\beta^+$-direction.
By means of permutation symmetry, similar statements hold for the other cross terms.
According to~\eqref{23wallv}, the Bianchi type $\mathrm{II}_2$ and $\mathrm{II}_3$
potentials yield walls moving with a component of the velocity given
by $v/2$ in the positive $\beta^+$-direction. For sufficiently large $\tau_-$,
the cross term walls will therefore be `hidden' behind the type $\mathrm{II}$
walls, and hence should not affect the asymptotic dynamics, since particles
will bounce off the type $\mathrm{II}$ walls before reaching the cross term walls.
It is therefore expected that only Bianchi type $\mathrm{II}$ potentials play a
role for the asymptotic limit $\tau_- \rightarrow \infty$.

The overall picture is thereby that the asymptotic dynamics is described
by free motion of a cosmological particle in $\beta^\pm$-space with speed $|\vec{V}|=1$
in a Bianchi type I zero potential, followed by bounces against Bianchi type
$\mathrm{II}$ walls moving with speed $v$, if the particle catches up
with the moving walls, which depends on the direction of the cosmological
particle and the speed $v$ of the wall. The Bianchi type
$\mathrm{II}_1$, $\mathrm{II}_2$ and $\mathrm{II}_3$ potentials form a triangular potential
well that is increasing in size in $\beta^\pm$-space as $\tau_-\rightarrow\infty$.
The above heuristic picture amounts to the claim that generic solutions of the
evolution equations~\eqref{dynsyslambdaR} asymptotically follow heteroclinic Bianchi
type II chains, or end in the set $S$ in $\mathrm{K}^\ocircle$ in the supercritical
case $v>1/2$.

There is, however, a subtlety at the corners of the triangular potential well.
Take the corner at $\beta^-=0$. In this case, the type II walls
described by $m_2^2 = m_3^2 \propto 
\exp(4(-2v\tau_- + \beta^+))$ have the same velocity $2v$ in the $\beta^+$-direction,
as has the cross term wall given by $-m_2m_3$. 
Consider therefore a possibly `dangerous' region where the cross term $-m_2m_3$
near the corner of the triangular well might affect the asymptotic dynamics.
To understand this better, define
\begin{equation}\label{heur:Vbar}
\bar{V}(\beta^\pm):= e^{-8v\beta^\lambda}V = e^{8v\tau_-}V.
\end{equation}
Note that the asymptotic dynamics for large times is described by $\tau_-\to\infty$, and consequently for large values of $\bar{V}(\beta^\pm)$.

Without loss of generality we set $n_1=n_2=n_3=1$ in type IX;
$n_1=n_2=1$, $n_3=-1$ in type VIII; $n_1=0$, $n_2=n_3=1$ in type $\mathrm{VII}_0$;
$n_1=0$, $n_2=1$, $n_3=-1$ in type $\mathrm{VI}_0$; $n_1=1$, $n_2=n_3=0$ in
type II. Due to~\eqref{kinV} and~\eqref{malpha}, this leads to the following potentials
\begin{subequations}\label{barvlambdaRpot}
\begin{alignat}{2}
\bar{V}_{\mathrm{II}} &=\, 6e^{-8\beta^+}, &\quad &\text{for type } \mathrm{II},\\
\bar{V}_{\mathrm{VI}_0} &=\, 24e^{4\beta^+}\cosh^2(2\sqrt{3}\beta^-), &\quad &\text{for type } \mathrm{VI}_0,\\
\bar{V}_{\mathrm{VII}_0} &=\, 24e^{4\beta^+}\sinh^2(2\sqrt{3}\beta^-), &\quad &\text{for type } \mathrm{VII}_0,\\
\bar{V}_{\mathrm{VIII}} &=\, 6\left[e^{-8\beta^+} - 4e^{-2\beta^+}\sinh(2\sqrt{3}\beta^-) +
4e^{4\beta^+}\cosh^2(2\sqrt{3}\beta^-)\right], &\quad &\text{for type } \mathrm{VIII},\\
\bar{V}_{\mathrm{IX}} &=\, 6\left[e^{-8\beta^+} - 4e^{-2\beta^+}\cosh(2\sqrt{3}\beta^-) +
4e^{4\beta^+}\sinh^2(2\sqrt{3}\beta^-)\right], &\quad &\text{for type } \mathrm{IX},
\end{alignat}
\end{subequations}
with level sets of $\bar{V}$ depicted in Figures~\ref{FIG:typeIandIIheuristics},
\ref{FIG:potlevelsetsVIVII} and~\ref{FIG:potlevelsets}.

Note that a translation in $\beta^+$ simply rescales the potentials
$\bar{V}_{\mathrm{VI}_0}$ and $\bar{V}_{\mathrm{VII}_0}$
in~\eqref{barvlambdaRpot}, and hence all the potential level curves of
$\bar{V}_{\mathrm{VI}_0}$ and $\bar{V}_{\mathrm{VII}_0}$ have the same shape,
see Figure~\ref{FIG:potlevelsetsVIVII}. Furthermore, $\bar{V}_{\mathrm{VI}_0}$
and $\bar{V}_{\mathrm{VII}_0}$ approach the type $\mathrm{II}_2$ and $\mathrm{II}_3$
potentials exponentially fast as $|\beta^-|$ increases.
Thus $\bar{V}_{\mathrm{VI}_0}$ and $\bar{V}_{\mathrm{VII}_0}$ can be approximated
by the type $\mathrm{II}_2$ and $\mathrm{II}_3$ potentials when
$|\beta^-|>\epsilon$, for some $\epsilon>0$ sufficiently small.
The region $|\beta^-|<\epsilon$, denoted by the \emph{$\epsilon$-corner region},
is where the approximating infinite type II wall description breaks down.
In particular, it is at this region where the cross-term $N_2N_3$ has a
significant role in the dynamical system picture.
\begin{figure}[H]
\minipage[b]{0.5\textwidth}\centering
\begin{subfigure}\centering
\begin{tikzpicture}[scale=0.25]

    \draw[color=gray,->] (0,-14) -- (0,14)  node[anchor= south] {\scriptsize{$\beta^-$}};
    \draw[color=gray,->] (-10,0) -- (18,0)  node[anchor= south] {\scriptsize{$\beta^+$}};

    \draw[color=gray,dashed] (15,0.367) -- (0,0.367) node[anchor= north east] {\scriptsize{$-\epsilon$}};
    \draw[color=gray,dashed] (15,-0.367) -- (0,-0.367) node[anchor= south east] {\scriptsize{$\epsilon$}};


    \draw[domain=1:5,smooth,variable=\x] plot ({5+\x},{exp(-\x)});
    \draw[domain=1:5,smooth,variable=\x] plot ({5+\x},{-exp(-\x)});

    \draw[domain=1:5,smooth,variable=\x] plot ({8+\x},{exp(-\x)});
    \draw[domain=1:5,smooth,variable=\x] plot ({8+\x},{-exp(-\x)});

    \draw[domain=1:5,smooth,variable=\x] plot ({11+\x},{exp(-\x)});
    \draw[domain=1:5,smooth,variable=\x] plot ({11+\x},{-exp(-\x)});

    \draw (6,0.367) -- (-2.682,5.379); 
    \draw[rotate=-120] (6,0.367) -- (-2.682,5.379);

    \draw (9,0.367) -- (-4.182,7.977);
    \draw[rotate=-120] (9,0.367) -- (-4.182,7.977);

    \draw (12,0.367) -- (-5.682,10.575);
    \draw[rotate=-120] (12,0.367) -- (-5.682,10.575);

\end{tikzpicture}
\addtocounter{subfigure}{-1}
\captionof{subfigure}{\footnotesize{Type $\mathrm{VII}_0$.}}\label{FIG:TYPEVII}
\end{subfigure}\endminipage\hfill
\minipage[b]{0.5\textwidth}\centering

\begin{subfigure}\centering
\begin{tikzpicture}[scale=0.25]

    \draw[color=gray,->] (0,-14) -- (0,14)  node[anchor= south] {\scriptsize{$\beta^-$}};
    \draw[color=gray,->] (-10,0) -- (18,0)  node[anchor= south] {\scriptsize{$\beta^+$}};

    \draw[color=gray,dashed] (15,0.367) -- (0,0.367) node[anchor= north east] {\scriptsize{$-\epsilon$}};
    \draw[color=gray,dashed] (15,-0.367) -- (0,-0.367) node[anchor= south east] {\scriptsize{$\epsilon$}};


    \draw[domain=-0.375:0.375,smooth,variable=\x] plot ({6.15-1.15*(\x)^2},{\x});
    \draw[domain=-0.375:0.375,smooth,variable=\x] plot ({9.15-1.15*(\x)^2},{\x});
    \draw[domain=-0.375:0.375,smooth,variable=\x] plot ({12.15-1.15*(\x)^2},{\x});

    \draw (6,0.367) -- (-2.682,5.379); 
    \draw[rotate=-120] (6,0.367) -- (-2.682,5.379);

    \draw (9,0.367) -- (-4.182,7.977);
    \draw[rotate=-120] (9,0.367) -- (-4.182,7.977);

    \draw (12,0.367) -- (-5.682,10.575);
    \draw[rotate=-120] (12,0.367) -- (-5.682,10.575);

\end{tikzpicture}
\addtocounter{subfigure}{-1}
\captionof{subfigure}{\footnotesize{Type $\mathrm{VI}_0$.}}\label{FIG:TYPEVI}
\end{subfigure}\endminipage
\captionof{figure}{Level curves of $\bar{V}$ in~\eqref{barvlambdaRpot} for type $\mathrm{VII}_0$ and $\mathrm{VI}_0$. Each level curve has the same shape under
translation in $\beta^+$. Note that the $\epsilon$-corner region, where the
type II approximation fails, has a fixed size $2\epsilon$ independently of the level curves of
$\bar{V}$.}\label{FIG:potlevelsetsVIVII}
\end{figure}

As derived from first principles in Appendix~\ref{app:heterosym},
in the dynamical systems picture the  asymptotics as $\tau_-\rightarrow\infty$
for solutions of type VIII and IX for all $v\in (0,1)$ reside on the union of
the disjoint relevant type $\mathrm{VI}_0$ and $\mathrm{VII}_0$ boundary sets,
and the union of the type II and I boundary sets. This corresponds to that the
level curves of the potentials $\bar{V}_{\mathrm{VIII}}$ and $\bar{V}_{\mathrm{IX}}$
increasingly possess the same shape for large values of the potential, and that
they are locally described by the relevant type $\mathrm{VI}_0$ and $\mathrm{VII}_0$
potentials. The level curves of the potentials $\bar{V}_{\mathrm{VIII}}$ and
$\bar{V}_{\mathrm{IX}}$ are thereby asymptotically shape
invariant, see Figure~\ref{FIG:potlevelsets}.

In a similar way as for the individual type II potential,
we approximate the type VIII and IX potentials
$V = e^{-8v\tau_-}\bar{V}(\beta^\pm)$ by setting them to be identically
zero when $V<C$ for some chosen small constant $C \ll 1$
(i.e., $\tau_-$ and $\bar{V}$ are large), and an \emph{infinite potential wall}
when $V=C$, consisting of the associated three infinite type II potential walls.
This results in an equilateral triangular potential well, which yields an
increasingly good approximation as $\tau_-$ increases, except at the $\epsilon$-corner
regions, which are asymptotically described by type $\mathrm{VI}_0$, $\mathrm{VII}_0$,
see Figures~\ref{FIG:typeIandIIheuristics} and~\ref{FIG:potlevelsets}.
\begin{figure}[H]
\minipage[b]{0.5\textwidth}\centering
\begin{subfigure}\centering
\begin{tikzpicture}[scale=0.25]

    \draw[color=gray,->] (0,-14) -- (0,14)  node[anchor= south] {\scriptsize{$\beta^-$}};
    \draw[color=gray,->] (-10,0) -- (18,0)  node[anchor= south] {\scriptsize{$\beta^+$}};

    \draw[color=gray,dashed] (15,0.367) -- (0,0.367);
    \draw[color=gray,dashed] (15,-0.367) -- (0,-0.367);

    \draw[rotate=120,color=gray,dashed] (15,0.367) -- (0,0.367);
    \draw[rotate=120,color=gray,dashed] (15,-0.367) -- (0,-0.367);
    \draw[rotate=-120,color=gray,dashed] (15,0.367) -- (0,0.367);
    \draw[rotate=-120,color=gray,dashed] (15,-0.367) -- (0,-0.367);


    \draw[  domain=1:5,smooth,variable=\x] plot ({5+\x},{exp(-\x)});
    \draw[  domain=1:5,smooth,variable=\x] plot ({5+\x},{-exp(-\x)});
    \draw[ rotate=120, domain=1:5,smooth,variable=\x] plot ({5+\x},{exp(-\x)});
    \draw[ rotate=120, domain=1:5,smooth,variable=\x] plot ({5+\x},{-exp(-\x)});
    \draw[ rotate=-120, domain=1:5,smooth,variable=\x] plot ({5+\x},{exp(-\x)});
    \draw[ rotate=-120, domain=1:5,smooth,variable=\x] plot ({5+\x},{-exp(-\x)});

    \draw[  domain=1:5,smooth,variable=\x] plot ({8+\x},{exp(-\x)});
    \draw[  domain=1:5,smooth,variable=\x] plot ({8+\x},{-exp(-\x)});
    \draw[ rotate=120, domain=1:5,smooth,variable=\x] plot ({8+\x},{exp(-\x)});
    \draw[ rotate=120, domain=1:5,smooth,variable=\x] plot ({8+\x},{-exp(-\x)});
    \draw[ rotate=-120, domain=1:5,smooth,variable=\x] plot ({8+\x},{exp(-\x)});
    \draw[ rotate=-120, domain=1:5,smooth,variable=\x] plot ({8+\x},{-exp(-\x)});

    \draw[  domain=1:5,smooth,variable=\x] plot ({11+\x},{exp(-\x)});
    \draw[  domain=1:5,smooth,variable=\x] plot ({11+\x},{-exp(-\x)});
    \draw[ rotate=120, domain=1:5,smooth,variable=\x] plot ({11+\x},{exp(-\x)});
    \draw[ rotate=120, domain=1:5,smooth,variable=\x] plot ({11+\x},{-exp(-\x)});
    \draw[ rotate=-120, domain=1:5,smooth,variable=\x] plot ({11+\x},{exp(-\x)});
    \draw[ rotate=-120, domain=1:5,smooth,variable=\x] plot ({11+\x},{-exp(-\x)});

    \draw (6,0.367) -- (-2.682,5.379); 
    \draw[rotate=120] (6,0.367) -- (-2.682,5.379);
    \draw[rotate=-120] (6,0.367) -- (-2.682,5.379);

    \draw (9,0.367) -- (-4.182,7.977);
    \draw[rotate=120] (9,0.367) -- (-4.182,7.977);
    \draw[rotate=-120] (9,0.367) -- (-4.182,7.977);

    \draw (12,0.367) -- (-5.682,10.575);
    \draw[rotate=120] (12,0.367) -- (-5.682,10.575);
    \draw[rotate=-120] (12,0.367) -- (-5.682,10.575);

\end{tikzpicture}
\addtocounter{subfigure}{-1}
\captionof{subfigure}{\footnotesize{Type IX.}}\label{FIG:TYPEIX}
\end{subfigure}\endminipage\hfill
\minipage[b]{0.5\textwidth}\centering

\begin{subfigure}\centering
\begin{tikzpicture}[scale=0.25]

    \draw[color=gray,->] (0,-14) -- (0,14)  node[anchor= south] {\scriptsize{$\beta^-$}};
    \draw[color=gray,->] (-10,0) -- (18,0)  node[anchor= south] {\scriptsize{$\beta^+$}};

    \draw[color=gray,dashed] (15,0.367) -- (0,0.367);
    \draw[color=gray,dashed] (15,-0.367) -- (0,-0.367);

    \draw[rotate=120,color=gray,dashed] (15,0.367) -- (0,0.367);
    \draw[rotate=120,color=gray,dashed] (15,-0.367) -- (0,-0.367);
    \draw[rotate=-120,color=gray,dashed] (15,0.367) -- (0,0.367);
    \draw[rotate=-120,color=gray,dashed] (15,-0.367) -- (0,-0.367);


    \draw[ rotate=120, domain=1:5,smooth,variable=\x] plot ({5+\x},{exp(-\x)});
    \draw[ rotate=120, domain=1:5,smooth,variable=\x] plot ({5+\x},{-exp(-\x)});

    \draw[ rotate=120, domain=1:5,smooth,variable=\x] plot ({8+\x},{exp(-\x)});
    \draw[ rotate=120, domain=1:5,smooth,variable=\x] plot ({8+\x},{-exp(-\x)});

    \draw[ rotate=120, domain=1:5,smooth,variable=\x] plot ({11+\x},{exp(-\x)});
    \draw[ rotate=120, domain=1:5,smooth,variable=\x] plot ({11+\x},{-exp(-\x)});

    \draw[ domain=-0.375:0.375,smooth,variable=\x] plot ({6.15-1.15*(\x)^2},{\x});
    \draw[ domain=-0.375:0.375,smooth,variable=\x] plot ({9.15-1.15*(\x)^2},{\x});
    \draw[ domain=-0.375:0.375,smooth,variable=\x] plot ({12.15-1.15*(\x)^2},{\x});

    \draw[ rotate=-120,domain=-0.375:0.37,smooth,variable=\x] plot ({6.15-1.15*(\x)^2},{\x});
    \draw[ rotate=-120,domain=-0.375:0.37,smooth,variable=\x] plot ({9.15-1.15*(\x)^2},{\x});
    \draw[ rotate=-120,domain=-0.375:0.37,smooth,variable=\x] plot ({12.15-1.15*(\x)^2},{\x});

    \draw (6,0.367) -- (-2.682,5.379); 
    \draw[rotate=120] (6,0.367) -- (-2.682,5.379);
    \draw[rotate=-120] (6,0.367) -- (-2.682,5.379);

    \draw (9,0.367) -- (-4.182,7.977);
    \draw[rotate=120] (9,0.367) -- (-4.182,7.977);
    \draw[rotate=-120] (9,0.367) -- (-4.182,7.977);

    \draw (12,0.367) -- (-5.682,10.575);
    \draw[rotate=120] (12,0.367) -- (-5.682,10.575);
    \draw[rotate=-120] (12,0.367) -- (-5.682,10.575);

 \end{tikzpicture}
\addtocounter{subfigure}{-1}
\captionof{subfigure}{\footnotesize{Type VIII.}}\label{FIG:TYPEVIII}
\end{subfigure}\endminipage
\captionof{figure}{Level curves for large $\bar{V}$ in \eqref{barvlambdaRpot}
for Bianchi type IX and VIII. As $\tau_-$ increases (and thus $\bar{V}$ becomes
larger), each level curve has a larger perimeter, whereas the 
$\epsilon$-corners have fixed size $2\epsilon$, which asymptotically becomes negligible compared to the remaining perimeter length given by the type II triangle of length $3L(\tau_-)$.
}\label{FIG:potlevelsets}
\end{figure}

According to the above heuristic approximation scheme, a type VIII or IX solution
is described as a free particle moving with unit speed $|\vec{V}|= 1$ in a triangular
potential well with the $\epsilon$-corners cut out (i.e., for now, we assume that the
particle does not enter the corner regions). Similarly to each single
type II wall, the three approximating infinite type II potential walls move with a
speed $|\vec{v}_\alpha| = v$, where each
individual $\mathrm{II}_\alpha$ walls have velocities $\vec{v}_1$,
$\vec{v}_2$, $\vec{v}_3$ in~\eqref{1wallv} and~\eqref{23wallv},
respectively. If the particle hits an infinite type II wall, it bounces according
to that type II wall's bounce law, which is determined by the wall's velocity,
given by~\eqref{BounceLaw}, unless it enters a corner region.
In the critical and subcritical cases, $v\in (0,1/2]$, as $v$ decreases the
increasingly slow wall motion implies that the particle on average bounces against
the walls increasingly often. In the subcritical case, $v\in(0,1/2)$, the regions
$A_\alpha\cap A_\beta$ correspond to directions where the particle might
hit the corner region between $\mathrm{II}_\alpha$ and $\mathrm{II}_\beta$.
In the supercritical case, $v\in(1/2,1)$, the walls move faster and the particle
bounces against walls less often. Note, however, that the particle bounces against
the type II walls infinitely if the velocity directions are associated with the
Cantor set $C$. Otherwise, after a finite number of bounces, the particle acquires
a velocity direction so that it does not catch up with any wall, and thereby it
enters a final Kasner state described by the final velocity
direction.\footnote{This might also happen before the wall description has become a good
approximation. This corresponds to that a solution ends at the stable set $S$ before
starting to shadow the type II and I boundaries.}

The above heuristic picture assumes that the particle does not enter an
$\epsilon$-corner region. However, consider a starting point at some $(\tau_-)_0$
where the length of the equilateral type II triangle is $3L_0$, and the
$\epsilon$-regions have fixed length $6\epsilon$. Since the sides of the triangle
are moving apart with speed $v$, its size $3L(\tau_-)$ increases as $\tau_-\to\infty$,
whereas the $\epsilon$-corners have fixed size $6\epsilon$ independent of $\tau_-$,
which means that the size of the corner regions becomes asymptotically negligible,
i.e., $\lim_{\tau_-\rightarrow\infty}(2\epsilon/L(\tau_-))=0$. This suggests that
the probability that a generic particle (corresponding to non-locally rotationally
symmetric solutions) enters an $\epsilon$-corner region tends to zero as
$\tau_-\rightarrow\infty$, which corresponds to that the `cross terms' asymptotically
tend to zero, in some statistically generic sense.

The above suggest that the type II heteroclinic chains describe the asymptotic
dynamics in the subcritical and critical cases, $v\in(0,1/2]$, in some statistically
generic sense, and that solutions associated with the Cantor set $C$ in the
supercritical case, $v\in (1/2,1)$, also asymptotically shadow type II heteroclinic
chains.

In the above discussion, we have assumed that
the potential wall is not affected by the bounce. This is not quite
the case, but we will argue that the effect is asymptotically negligible.
During the Bianchi type I motion, $p_\lambda$ and $p_\pm$
are all constants (for simplicity set ${\cal N}$ to a constant,
and then note that $\beta^\lambda$ and $\beta^\pm$ are all cyclic variables).
Then write the Hamiltonian type $\mathrm{II}_1$ constraint as
\begin{equation}
\Sigma_+^2 + \Sigma_-^2 + \frac{12n_1^2}{p_\lambda^{2}}e^{-8(v\tau_- + \beta^+)} = 1,
\end{equation}
where the Bianchi type I regime is determined by a wall
$6n_1^2e^{-8(v\tau_- + \beta^+_0)}p_\lambda^{-2} = C \ll 1$,
where $p_\lambda$, which we previously neglected, is a constant
during the Bianchi type I motion of the cosmological particle. However,
$p_\lambda$ changes during a bounce against the wall, where the difference
is determined by the relevant Bianchi type $\mathrm{II}_1$ solution,
as seen as follows. Equations~\eqref{etaprime} and~\eqref{p0prime}
result in that $p_\lambda \propto 1/\eta$. Since we are considering
expanding models, $p_\lambda$ is negative, see~\eqref{betadotham},
and it is monotonically decreasing when
$\tau_-$ is increasing, due to~\eqref{p0prime}. Also,
since $p_\lambda \propto 1/\eta$, then $\eta^\mathrm{i}=1$
and $\eta^\mathrm{f} = g$ 
according to Section~\ref{sec:II}. Therefore,
\begin{equation}\label{pfbounce}
\frac{p_\lambda^\mathrm{f} -  p_\lambda^\mathrm{i}}{p_\lambda^\mathrm{i}}
= \frac{1}{g}-1,
\end{equation}
due to a Bianchi type II bounce.
In other words, the wall moves because of the bounce,
apart from its movement during the cosmological particle's Bianchi
type I motion. However, we will now argue that this effect is
asymptotically negligible for generic solutions.

Equation~\eqref{p0prime} yields $p_\lambda \propto \exp(\int[-4v(1-\Sigma^2)]d\tau_-)$.
In the dynamical systems picture, the above heuristic cosmological particle
description corresponds to that solutions to an increasing extent
shadow the heteroclinic Bianchi type II orbits. They thereby stay
increasingly long times $\tau_-$ near the Kasner circle $\mathrm{K}^\ocircle$
where $1-\Sigma^2 = 0$, while the effects on $p_\lambda$ of a given type II bounce
according to~\eqref{pfbounce} are not affected by when a bounce takes place in the
evolution of the solution. However, the increasing size of the triangular potential well
shows that solutions have a `memory' of their evolution, which is not locally seen,
not in~\eqref{pfbounce} nor in the local eigenvalue analysis of $\mathrm{K}^\ocircle$.
Due to increasingly accurate shadowing, the time $\tau_-$ spent by the particle during
bounces becomes asymptotically negligible compared to the time spent in Bianchi type I
motion. This is due to that the time spent during a bounce is determined by the
time it takes the particle to move the extra distance the wall has moved
because of the bounce. Since this distance is increasingly small when
compared to the size of the increasing triangular well, the (average)
time during a bounce when compared to the (average) time between
bounces, which is increasing due to the increasing size of the
triangular well, becomes asymptotically negligible (this has been illustrated
numerically in GR, see, e.g., Figure 11.1 in~\cite{waiell97}).

The above heuristic discussion suggests that the following `dominant' $\lambda$-$R$
Hamiltonian captures the asymptotic dynamics of the original Hamiltonian:
\begin{equation}\label{LambdaRDOMH}
H_\mathrm{Dom} := {\cal N}\left(\frac12\left(- p_\lambda^2 + p_+^2 + p_-^2\right) + V_\mathrm{Dom}\right),\qquad
V_\mathrm{Dom} := 6m_1^2 + 6m_2^2 + 6m_3^2.
\end{equation}
In other words, the potential $V_\mathrm{Dom}$ consists of the three type II
potentials, without the cross terms in the original potential $V$ in~\eqref{kinV}.
Expressing the Hamiltonian equations obtained from $H_\mathrm{Dom}$ in the variables
$(\Sigma_\pm,N_1,N_2,N_3)$ and time $\tau_-$
yields the same evolution equations as for the $\lambda$-$R$ models
in~\eqref{dynsyslambdaR}, except that the cross terms $N_1N_2$, $N_2N_3$,
and $N_3N_1$ are absent. This system has the same heteroclinic Bianchi type I and II
structure as the $\lambda$-$R$ models. The
above heuristic reasoning suggests that both dynamical systems,
associated with~\eqref{LambdaRham} and~\eqref{LambdaRDOMH}, respectively,
are asymptotically described by $\mathrm{K}^\ocircle$ and the Bianchi type
II heteroclinic chains, in a manner that supports the conjectures in
Section~\ref{sec:conjectures}. The same conclusion is obtained from the billiard formulation
of Chitr\'e and Misner~\cite{chi72,mis69a,mis69b}, see p. 812 in~\cite{grav73},
and also~\cite{dametal03,heietal09,damlec11}. Incidentally, it seems plausible that
the symbolic dynamics methods used in the main text could be used in this billiard formulation.
Also, it would be interesting to see if deforming first principles in some restricted sense
in more general models (such as those discussed in~\cite{dametal03}, e.g., by
considering string theory inspired modifications) can lead to bifurcations that
are related to those in the present case.


\subsection{Ho\v{r}ava-Lifshitz Class A models}\label{appsubsec:HL}

The vacuum spatially homogeneous Ho\v{r}ava-Lifshitz (HL) class~A Bianchi
models have a Hamiltonian with the same kinetic part as the vacuum
$\lambda$-$R$ models, but a potential $V$ in~\eqref{VHL} consisting of a sum
of the potentials ${}^AV$ with $A = 1,\dots,6$,
where the superscript ${}^A$ reflects the constants $k_A$ in~\eqref{calV}, which multiply the
spatial curvature expressions in~\eqref{VHL} and~\eqref{pots}, determined by the spatial
Ricci curvature $R^i\!_j$ and the Cotton (Cotton-York) tensor $C^i\!_j$.

For the diagonal class~A models the only non-zero components of $R^i\!_j$
are the diagonal terms $R^1\!_1$, $R^2\!_2$ and $R^3\!_3$, and similarly for $C^i\!_j$.
The reason for writing these tensors with one upper and lower index is that the
orthonormal frame expressions coincide with those obtained when using the left-invariant
frame in~\eqref{threemetric}. This allows us to specialize the general orthonormal frame
expressions for the spatial curvature and the Cotton tensor in~\cite{elsugg97}
to the present diagonal class~A Bianchi models, where $R^1\!_1$ is given in~\eqref{R11},
the curvature scalar $R$ in~\eqref{Rscalar}, and
\begin{equation}\label{C11}
C^1\!_1 = -\frac{1}{2g^{3/2}}\left(2m_1^3 - (m_2+m_3)\left[m_1^2 + (m_2-m_3)^2\right]\right),
\end{equation}
where $m_1, m_2, m_3$ are defined in~\eqref{malpha}.
The remaining components $R^2\!_2$, $R^3\!_3$ and $C^2\!_2$, $C^3\!_3$
are obtained by permutation of indices $1,2,3$ in the respective formulas
for $R^1\!_1$ and $C^1\!_1$.

As stated in the derivation of the dynamical system~\eqref{dynsyslambdaR} for the
$\lambda$-$R$ models, we only need to compute the equation for $\Sigma_+$ and use
permutations to obtain the equations for the $\Sigma_\alpha$ variables,
as argued after~\eqref{app:MisnerPar}.
In addition, we will exploit that each potential term in $V$ can be written as an exponential
in $\beta^\lambda$ times a function of $\beta^+$ and $\beta^-$, since each
potential term ${}^AV$ has a certain dimensional weight under conformal scalings of the spatial
metric. To make this weight explicit, we define
\begin{subequations}\label{barm}
\begin{align}
\bar{m}_1 &:= g^{-1/3}m_1 = n_1 e^{-4\beta^+}, \\
\bar{m}_2 &:= g^{-1/3}m_2 = n_2 e^{2\beta^+ + 2\sqrt{3}\beta^-}, \\
\bar{m}_3 &:= g^{-1/3}m_3 = n_3 e^{2\beta^+ - 2\sqrt{3}\beta^-}.
\end{align}
\end{subequations}
%
Here we have introduced the convention that variables with an overbar are functions of
$\beta^\pm$ only. Denoting $R^1\!_1$ and $C^1\!_1$ with $R_1$ and $C_1$, respectively,
leads to that the equations~\eqref{R11} and~\eqref{C11} can be written as follows,
by means of~\eqref{barm} and $g=\exp(12v\beta^\lambda)$,
\begin{subequations}\label{RCalpha}
\begin{xalignat}{2}
R_1 &= e^{-4v\beta^\lambda}\bar{R}_1, &\qquad  \bar{R}_1 &= \frac12(\bar{m}_1^2 - \bar{m}_-^2),\\
C_1 &= e^{-6v\beta^\lambda}\bar{C}_1, &\qquad
\bar{C}_1 &= -\frac12\left(2\bar{m}_1^3 - \bar{m}_+(\bar{m}_1^2 + \bar{m}_-^2)\right),
\end{xalignat}
\end{subequations}
where we define $\bar{m}_\pm$ as
\begin{equation}\label{mpm}
\bar{m}_\pm := \bar{m}_2 \pm \bar{m}_3.
\end{equation}
The terms $R_2=R^2\!_2$, $R_3=R^3\!_3$, $C_2=C^2\!_2$ and $C_3=C^3\!_3$
are again obtained by permutations. To obtain succinct expressions, we not only
introduce $\bar{m}_\pm$, but also the following (Misner-like)
parametrization of the diagonal components of $\bar{R}^i\!_j$ and $\bar{C}^i\!_j$:
\begin{subequations}\label{Adecomp}
\begin{xalignat}{2}
\bar{A}_1 &= \frac13\!\left(\bar{A} - 2\bar{A}_+\right), & \quad
\bar{A} &:= \bar{A}_1 + \bar{A}_2 + \bar{A}_3,\\
\bar{A}_2 &= \frac13\!\left(\bar{A} + \bar{A}_+ + \sqrt{3}\bar{A}_-\right), \quad \quad \text{ where } & \quad
\bar{A}_+ &:= \frac12\!\left(\bar{A}_2 + \bar{A}_3 - 2\bar{A}_1\right),\\
\bar{A}_3 &= \frac13\!\left(\bar{A} + \bar{A}_+ - \sqrt{3}\bar{A}_-\right), & \quad
\bar{A}_- &:= \frac{\sqrt{3}}{2}\!\left(\bar{A}_2 - \bar{A}_3\right).
\end{xalignat}
\end{subequations}
The Cotton tensor $C^i\!_j$ is trace-free, and hence
$\bar{C} = \bar{C}_1 + \bar{C}_2 + \bar{C}_3 = 0$ when replacing $\bar{A}$
with $\bar{C}$. Using the expressions in~\eqref{RCalpha} for $\bar{R}_1$, $\bar{C}_1$,
and permutations thereof, in~\eqref{Adecomp} gives
\begin{subequations}\label{barRCpm}
\begin{align}
\bar{R} &= -\frac12\left(\bar{m}_1^2 - 2\bar{m}_1\bar{m}_+ + \bar{m}_-^2\right),\label{barR}\\
\bar{R}_+ &= -\frac12\left(2\bar{m}_1^2 - \bar{m}_1\bar{m}_+ - \bar{m}_-^2\right),\\
\bar{R}_- &= -\frac{\sqrt{3}}{2}\bar{m}_-(\bar{m}_1 - \bar{m}_+),\\
\bar{C}_+ &= \frac34\left(2\bar{m}_1^3 - \bar{m}_1^2\bar{m}_+ - \bar{m}_+\bar{m}_-^2\right),\\
\bar{C}_- &= \frac{\sqrt{3}}{4}\bar{m}_-\left(\bar{m}_1^2 + 2\bar{m}_1\bar{m}_+ - 2 \bar{m}_+^2 - \bar{m}_-^2\right).
\end{align}
\end{subequations}
%

%
%
The above parametrization leads to that potentials
${}^AV$ in~\eqref{pots} take the form:
\begin{subequations}\label{mpots}
\begin{xalignat}{2}
{}^1V &= e^{8v\beta^\lambda}({}^1\bar{V}),&\qquad {}^1\bar{V} &= 12k_1\bar{R},\label{bar1V}\\
{}^2V &= e^{4v\beta^\lambda}({}^2\bar{V}),&\qquad {}^2\bar{V} &= 12k_2\bar{R}^2,\\
{}^3V &= e^{4v\beta^\lambda}({}^3\bar{V}),&\qquad
{}^3\bar{V} &= 12k_3\bar{R}^i\!_j\bar{R}^j\!_i = 4k_3(\bar{R}^2 + 2\bar{R}_+^2 + 2\bar{R}_-^2),\\
{}^4V &= e^{2v\beta^\lambda}({}^4\bar{V}),&\qquad
{}^4\bar{V} &= 12k_4\bar{R}^i\!_j\bar{C}^j\!_i = 8k_4(\bar{R}_+\bar{C}_+ + \bar{R}_-\bar{C_-}),\\
{}^5V &= {}^5\bar{V}, &\qquad
{}^5\bar{V} &= 12k_5\bar{C}^i\!_j\bar{C}^j\!_i = 8k_5(\bar{C}_+^2 + \bar{C}_-^2),\\
{}^6V &= {}^6\bar{V}, &\qquad {}^6\bar{V} &= 12k_6\bar{R}^3,
\end{xalignat}
\end{subequations}
which follows from~\eqref{pots} and~\eqref{Adecomp}.
Thus all potentials ${}^A{V}$ depend explicitly on $\beta^\lambda$
as described in~\eqref{mpots}, whereas ${}^A\bar{V}$ are functions of
$\bar{m}_1$, $\bar{m}_2$, $\bar{m}_3$ due to~\eqref{barRCpm} and~\eqref{mpm},
and thereby of $\beta^\pm$ according to~\eqref{barm}.

Assigning a weight under spatial conformal scalings according to the scale $g^{1/6} = 
e^{2v\beta^\lambda}$ results in that the potentials ${}^AV$ in~\eqref{mpots}
have the following weights, denoted by $[{}^AV]$,
\begin{equation}\label{weights}
[{}^1V] = 4,\qquad [{}^2V] = [{}^3V] = 2,\qquad
[{}^4V] = 1,\qquad [{}^5V] = [{}^6V] = 0.
\end{equation}
%
In other words, all potentials ${}^AV$ have an exponential dependence
on $\beta^\lambda$, but with different powers of $g^{1/6} = 
e^{2v\beta^\lambda}$ according to~\eqref{weights}.
The integer relations between the weights for the different potential terms,
$[{}^1V] = [({}^2V)^2] = [({}^3V)^2] = [({}^4V)^4]$
play a role below.

All HL models share some common features.
First, all HL models have the same automorphism group and associated symmetries
at each level of the class~A Bianchi hierarchy, IX, VIII; $\mathrm{VII}_0$,
$\mathrm{VI}_0$; II; I, see Figure~\ref{FIG:hierarchy}.
Second, each individual curvature term yields a potential
with a certain weight~\eqref{weights}, which results in that each individual
potential has a scale-symmetry which can be combined with the automorphism
group to yield a scale-automorphism group.\footnote{In Appendix~\ref{app:heterosym},
this group is central for the dynamical properties at each level in the
class~A Bianchi hierarchy, for both the $\lambda$-$R$ and HL models.}
For this reason, we now describe the potentials at each level of the class~A Bianchi
hierarchy below type VIII and IX (the potentials of type VIII and IX where given
in~\eqref{mpots}).

The type $\mathrm{VI}_0$ and $\mathrm{VII}_0$ models are characterized
by a single vanishing structure constant $n_1$, $n_2$, $n_3$. Without loss of
generality, let $n_1=0$ (and hence $\bar{m}_1=0$) describe these models.
Equations~\eqref{barm} and~\eqref{mpm} then motivate the definitions
\begin{equation}
\tilde{m}_\pm := e^{-2\beta^+}m_\pm = n_2e^{2\sqrt{3}\beta^-} \pm n_3e^{-2\sqrt{3}\beta^-},
\end{equation}
where we introduce the convention that variables with a $\tilde{}$ on top are functions of
$\beta^-$ only. It follows that $d\tilde{m}_\pm/d\beta^- = 2\sqrt{3}\tilde{m}_\mp$.
Equations~\eqref{mpots} and~\eqref{barRCpm} imply that the type $\mathrm{VII}_0$
and $\mathrm{VI}_0$ potentials with $n_1=0$ can be written as
\begin{subequations}\label{mpotsVIVII}
\begin{xalignat}{2}
{}^1V_{\mathrm{VII}_0,\mathrm{VI}_0} &= e^{4(2v\beta^\lambda + \beta^+)}({}^1\tilde{V}), &\qquad
{}^1\tilde{V} &= -6k_1\tilde{m}_-^2,\label{tilde1V}\\
{}^2V_{\mathrm{VII}_0,\mathrm{VI}_0} &= e^{4(v\beta^\lambda + 2\beta^+)}({}^2\tilde{V}),&\qquad
{}^2\tilde{V} &= 3k_2\tilde{m}_-^4,\\
{}^3V_{\mathrm{VII}_0,\mathrm{VI}_0} &= e^{4(v\beta^\lambda + 2\beta^+)}({}^3\tilde{V}), &\qquad
{}^3\tilde{V} &= 3k_3\tilde{m}_-^2(2\tilde{m}_+^2 + \tilde{m}_-^2),\\
{}^4V_{\mathrm{VII}_0,\mathrm{VI}_0} &= e^{2(v\beta^\lambda + 5\beta^+)}({}^4\tilde{V}), &\qquad
{}^4\tilde{V} &= -6k_4\tilde{m}_+\tilde{m}_-^2(\tilde{m}_+^2 + \tilde{m}_-^2),\\
{}^5V_{\mathrm{VII}_0,\mathrm{VI}_0} &= e^{12\beta^+}({}^5\tilde{V}),&\qquad
{}^5\tilde{V} &= \frac32k_5(7\tilde{m}_+^2\tilde{m}_-^4 + 4\tilde{m}_-^2\tilde{m}_+^4 + \tilde{m}_-^6),\\
{}^6V_{\mathrm{VII}_0,\mathrm{VI}_0} &= e^{12\beta^+}({}^6\tilde{V}),&\qquad
{}^6\tilde{V} &= -\frac32k_6\tilde{m}_-^6.
\end{xalignat}
\end{subequations}

To describe the HL Bianchi type II models, we consider, without loss of
generality, the $\mathrm{II}_1$ models, which are characterized by
$n_2 = n_3 = 0$, which yields $\bar{m}_2 = \bar{m}_3 =0$
and $\bar{m}_\pm =0$. The Bianchi type $\mathrm{II}_1$ potentials
in~\eqref{mpots} with curvature terms~\eqref{barRCpm} are thereby given by
\begin{subequations}\label{potparts}
\begin{align}
{}^1V_{\mathrm{II}_1} &= -6k_1e^{8v\beta^\lambda}\bar{m}_1^2 = -6k_1 n_1^2 e^{8(v\beta^\lambda - \beta^+)}
= -6k_1 n_1^2 e^{-8(v\tau_- + \beta^+)},\label{Vbeta}\\
{}^2V_{\mathrm{II}_1} &= 3k_2e^{4v\beta^\lambda}\bar{m}_1^4 = 3k_2 n_1^4 e^{4(v\beta^\lambda - 4\beta^+)}
= 3k_2 n_1^4 e^{-4(v\tau_- + 4\beta^+)}, \label{Vgamma}\\
{}^3V_{\mathrm{II}_1} &= 9k_3e^{4v\beta^\lambda}\bar{m}_1^4 = 9k_3 n_1^4 e^{4(v\beta^\lambda - 4\beta^+)}
= 9k_3 n_1^4e^{-4(v\tau_- + 4\beta^+)}, \label{Vdelta}\\
{}^4V_{\mathrm{II}_1} &= -12k_4e^{2v\beta^\lambda}\bar{m}_1^5  = -12k_4 n_1^5 e^{2(v\beta^\lambda - 10\beta^+)}
= -12k_4 n_1^5 e^{-2(v\tau_- + 10\beta^+)},\label{Vepsilon}\\
{}^5V_{\mathrm{II}_1} &= 18k_5\bar{m}_1^6 = 18k_5 n_1^6 e^{-24\beta^+}, \label{Vzeta}\\
{}^6V_{\mathrm{II}_1} &= -\frac32 k_6\bar{m}_1^6 = -\frac32 k_6 n_1^6 e^{-24\beta^+}.
\end{align}
\end{subequations}

The common dimensional weight in~\eqref{weights} for ${}^2V$ and ${}^3V$, and for
${}^5V$ and ${}^6V$, motivates that these two pairs of potentials are treated
collectively, i.e.,
\begin{subequations}
\begin{align}
{}^{2,3}V :=&\, {}^2V + {}^3V = 
e^{4v\beta^\lambda}({}^{2,3}\bar{V}) = e^{4v\beta^\lambda}\left[4(3k_2 + k_3)\bar{R}^2 + 8k_3(\bar{R}_+^2 + \bar{R}_-^2)\right],\\
{}^{2,3}V_{\mathrm{VII}_0,\mathrm{VI}_0} =&\, e^{4(v\beta^\lambda + 2\beta^+)}({}^{2,3}\tilde{V})= 3 e^{4(v\beta^\lambda + 2\beta^+)} \tilde{m}_-^2\left[2k_3\tilde{m}_+^2 + (k_2+k_3)\tilde{m}_-^2\right],\\
{}^{2,3}V_{\mathrm{II}_1} =&\,3(k_2+3k_3)n_1^4 e^{4(v\beta^\lambda - 4\beta^+)}, &\hfill
\end{align}
\end{subequations}
and
\begin{subequations}
\begin{align}
{}^{5,6}V :=&\, {}^5V + {}^6V = {}^{5,6}\bar{V} = 8k_5(\bar{C}_+^2 + \bar{C}_-^2) + 12k_6\bar{R}^3,\\
{}^{5,6}V_{\mathrm{VII}_0,\mathrm{VI}_0} =&\, e^{12\beta^+}({}^{5,6}\tilde{V})=\frac32 e^{12\beta^+}\left[k_5(7\tilde{m}_+^2\tilde{m}_-^4 + 4\tilde{m}_-^2\tilde{m}_+^4) + (k_5-k_6)\tilde{m}_-^6\right],\\
{}^{5,6}V_{\mathrm{II}_1} =&\,\frac32\left(12k_5 - k_6\right))n_1^6e^{-24\beta^+},
\end{align}
\end{subequations}
which follows from~\eqref{mpots}, \eqref{mpotsVIVII} and~\eqref{potparts}.

To obtain a unified description of the various potentials, we refer to them
with a superscript $A$, i.e., ${}^AV$, where $A=1, \{2,3\}, 4, \{5,6\}$; thus $A=2,3$
and $A=5,6$ corresponds to ${}^{2,3}V = {}^2V + {}^3V$ and ${}^{5,6}V = {}^5V + {}^6V$,
respectively. We also introduce the constants
\begin{subequations}\label{HLconstants}
\begin{alignat}{7}
{}^1v &= v := \frac{1}{\sqrt{2(3\lambda-1)}}, &\quad {}^{2,3}v &= \frac{v}{4},
&\quad {}^4v &= \frac{v}{10}, &\quad {}^{5,6}v &= 0,\label{HLII1speed}\\
{}^1a &= 2, &\quad {}^{2,3}a &= 4, &\quad {}^4a &= 5, &\quad {}^{5,6}a &= 6, \label{apot}\\
{}^1c &= -12k_1, &\quad {}^{2,3}c &= 6(k_2+3k_3), &\quad {}^4c &= -24k_4, &\quad
{}^{5,6}c &= 36k_5 - 3k_6. \label{cpot}
\end{alignat}
\end{subequations}
The models with ${}^1v, {}^{2,3}v, {}^4v \in (0,1)$ thereby correspond to
$\lambda \in (\frac12,\infty), (\frac{11}{32},\infty),(\frac{67}{200},\infty)$, respectively.

The constants~\eqref{HLconstants} allow us to write the HL potentials at each level of the class~A Bianchi
hierarchy as follows:
\begin{subequations}\label{potuni}
\begin{alignat}{2}
{}^AV &= e^{4av\beta^\lambda}({}^A\bar{V}), &\qquad\qquad &\text{for types } \mathrm{IX} \text{ and } \mathrm{VIII},\\
{}^AV_{\mathrm{VII}_0,\mathrm{VI}_0} &=
e^{2a(2v\beta^\lambda + \beta^+)}({}^A\tilde{V}), &\qquad\qquad &\text{for types }
\mathrm{VII}_0 \text{ and } \mathrm{VI}_0, \text{ with } n_1=0,\\
{}^AV_{\mathrm{II}_1} &= \frac12 c\,n_1^{a}\,e^{4a(v\beta^\lambda - \beta^+)}, &\qquad\qquad &\text{for type } \mathrm{II}_1,\label{uniII_1pot}
\end{alignat}
\end{subequations}
where we refrain from writing the superscript $A$
on $^Aa$, $^Av$ and $^Ac$ for notational brevity, e.g.,
${}^AV_{\mathrm{II}_1} = \frac12 c\,n_1^{a}\,e^{4a(v\beta^\lambda - \beta^+)}
= \frac12 ({}^Ac)(n_1)^{({}^Aa)}\,e^{4({}^Aa)(({}^Av)\beta^\lambda - \beta^+)}$.
As can be seen by inspection, inserting~\eqref{HLconstants} into the above
expressions yield~\eqref{mpots}, \eqref{mpotsVIVII}, with $n_1=0$ for the type
$\mathrm{VII}_0$ and $\mathrm{VI}_0$ potentials, and~\eqref{potparts} for type
$\mathrm{II}_1$.

We will now derive a regular constrained dynamical
system for the HL case, and then perform a heuristic analysis of the
HL models. In the latter case, when there are several potential terms~\eqref{mpots},
we heuristically argue that there is a single dominant potential, and that
the scale-automorphism group for this dominant potential is intimately linked to
the asymptotic dynamics toward the singularity, in the same manner as for the
GR and $\lambda$-$R$ models.

\subsubsection*{Derivation of the HL evolution equations}

To obtain a \emph{regular} dynamical system for the HL models,
we first consider the Hamiltonian equations with the Hamiltonian~\eqref{LambdaRham}
for the variables $\beta^\lambda$, $\beta^\pm$ and the canonical momenta
$p_\lambda$, $p_\pm$. The kinetic part $T$ depends on $p_\lambda$, $p_\pm$
and is given by~\eqref{Tkinetics}, while the potential $V$ depends on $\beta^\lambda$,
$\beta^\pm$ according to~\eqref{VHL} and~\eqref{mpots}.
We then use the same $\Sigma_\pm$ and $\Sigma_\alpha$ variables as in the
$\lambda$-$R$ case, defined in~\eqref{SigmaNvariables} and~\eqref{app:MisnerPar},
and the time variable $\tau_-:=-\beta^\lambda$.
The Hamiltonian equations then result in the evolution equations
\begin{subequations}\label{hamprime}
\begin{align}
\frac{d\beta^\lambda}{d\tau_-} &= -1, \qquad
&&\frac{d p_\lambda}{d\tau_-} =-\Omega_\lambda p_\lambda, \\
\frac{d\beta^\pm}{d\tau_-} &= -\Sigma_\pm, \qquad &&
\frac{d\Sigma_\pm}{d\tau_-} =\Omega_\lambda\Sigma_\pm + {\cal S}_\pm,\label{Salphap}
\end{align}
\end{subequations}
subjected to the constraint
\begin{equation}\label{consf}
\Sigma_+^2 + \Sigma_-^2 + \Omega_k = 1,
\end{equation}
where we have introduced the following dimensionless quantities
\begin{equation}\label{OmS}
\Omega_k := \frac{2}{p_\lambda^2}V, \qquad
\Omega_\lambda := \frac{1}{p_\lambda^2}\frac{\partial V}{\partial\beta^\lambda}, \qquad
{\cal S}_\pm := -\frac{2}{p_\lambda^2}\frac{\partial V}{\partial\beta^\pm}.
\end{equation}
The different quantities in~\eqref{OmS} can be decomposed into objects that are
related to the individual potentials ${}^AV$ for $A = 1, \{2,3\},4,\{5,6\}$ as follows
\begin{subequations}\label{Omk}
\begin{align}
\Omega_k &= {}^1\Omega_k + {}^{2,3}\Omega_k + {}^4\Omega_k + {}^{5,6}\Omega_k,\\
\Omega_\lambda &= {}^1\Omega_\lambda + {}^{2,3}\Omega_\lambda +
{}^4\Omega_\lambda + {}^{5,6}\Omega_\lambda,\\
{\cal S}_\pm &= {}^1{\cal S}_\pm + {}^{2,3}{\cal S}_\pm + {}^4{\cal S}_\pm + {}^{5,6}{\cal S}_\pm, \label{sumSp}
\end{align}
\end{subequations}
where ${}^A\Omega_k:=  2p_\lambda^{-2} ({}^AV)$,
${}^A\Omega_\lambda:=  p_\lambda^{-2}\partial_{\partial\beta^\lambda} ({}^AV)$
and ${}^A{\cal S}_\pm:=  -2p_\lambda^{-2}\partial_{\partial\beta^\pm} ({}^AV)$.
Due to~\eqref{mpots}, $\Omega_\lambda$ is given by
\begin{equation}\label{Omlambda}
\Omega_\lambda = 4v({}^1\Omega_k) + 2v({}^{2,3}\Omega_k) +
v({}^4\Omega_k),
\end{equation}
where the coefficients are related to the scaling weights in~\eqref{weights}.
Note that all ${}^A\bar{V}$ are homogeneous polynomials in
$\bar{m}_1$, $\bar{m}_2$ and $\bar{m}_3$, which are invariant under
permutations of indices, as follows from~\eqref{mpots},
which is a consequence of that the potentials have been constructed
from curvature scalars. It follows from the definitions that ${}^A\Omega_k$
for $A = 1, \{2,3\},4,\{5,6\}$ are also homogeneous polynomials in
$\bar{m}_\alpha$ and that ${}^A\Omega_k$, $\Omega_k$ and $\Omega_\lambda$
are invariant under permutations.

To compute the equation for $\Sigma_+$, 
we need
to compute ${}^A{\cal S}_+$ and ${\cal S}_+$. To do so, note that the
equations~\eqref{barm} and~\eqref{mpm} yield
\begin{equation}\label{dermpm}
\frac{\partial\bar{m}_1}{\partial\beta^+} = - 4\bar{m}_1, \qquad
\frac{\partial\bar{m}_\pm}{\partial\beta^+} = 2\bar{m}_\pm,
\end{equation}
which together with the chain rule and~\eqref{barRCpm} gives
\begin{subequations}\label{barRpmCpm}
\begin{align}
\frac{\partial\bar{R}}{\partial\beta^+} &= 2(2\bar{m}_1^2 - \bar{m}_1\bar{m}_+ - \bar{m}_-^2),\label{DbarRp}\\
\frac{\partial\bar{R}_+}{\partial\beta^+} &= 8\bar{m}_1^2 - \bar{m}_1\bar{m}_+ + 2\bar{m}_-^2,\\
\frac{\partial\bar{R}_-}{\partial\beta^+} &= \sqrt{3}(\bar{m}_1 + 2\bar{m}_+)\bar{m}_-,\\
\frac{\partial\bar{C}_+}{\partial\beta^+} &=
-\frac92\left(4\bar{m}_1^3 - \bar{m}_1^2\bar{m}_+ + \bar{m}_+\bar{m}_-^2\right),\\
\frac{\partial\bar{C}_-}{\partial\beta^+} &=
-\frac{3\sqrt{3}}{2}(\bar{m}_1^2 + 2\bar{m}_+^2 + \bar{m}_-^2) \bar{m}_-.
\end{align}
\end{subequations}
These expressions in combination with the
chain rule applied to~\eqref{mpots} yields each ${}^A{\cal S}_+$ as a
homogeneous polynomial in $\bar{m}_1$, $\bar{m}_2$ and $\bar{m}_3$ of
the same degree as in ${}^A\Omega_k$. The polynomials are multiplied
with $p_\lambda^{-2}$ and exponentials in $\beta^\lambda$,
determined by the weights of ${}^{A}V$ in the same way as for
${}^A\Omega_k$. We then change the $\Sigma_\pm$ variables to
$\Sigma_\alpha$, $\alpha= 1,2,3$,
\begin{equation}\label{PMtoABC}
\left(\Sigma_1,{}^A{\cal S}_1,{\cal S}_1\right):=-2(\Sigma_+,{}^A{\cal S}_+,{\cal S}_+).
\end{equation}
This leads to
\begin{equation}
\Sigma_\alpha^\prime = \Omega_\lambda\Sigma_\alpha + {\cal S}_\alpha, \qquad \alpha=1,2,3,
\end{equation}
where the equations for $\alpha = 2$ and $\alpha=3$ are obtained by cyclic permutations of
$(123)$ in the expressions involving $\bar{m}_1$, $\bar{m}_2$ and $\bar{m}_3$ when $\alpha=1$.
There is thereby no need to compute ${\cal S}_-$.

Our next step is to introduce dimensionless variables that replace $\beta^\lambda$, $\beta^\pm$
in~\eqref{hamprime} and $p_\lambda$ in order to obtain a \emph{regular}
dynamical system. In the $\lambda$-$R$ case, the single potential term given by $V = {}^{1}V$
had a specific weight under conformal scaling transformations, which yielded a symmetry
that decoupled the evolution equation for $p_\lambda$. We now have potential
terms in~\eqref{mpots} with different weights~\eqref{weights}, where adding them breaks
this symmetry. We therefore introduce separate dimensionless variables
${}^AN_\alpha$, $A=1,\{2,3\},4,\{5,6\}$, that respect the different weights and
the polynomial nature of the potential in terms of $\bar{m}_\alpha$. This
leads to twelve variables ${}^AN_\alpha$, three for each value of $A$,
replacing the four variables $\beta^\lambda$, $\beta^\pm$ and $p_\lambda$.
Due to the construction, the equation for the dimensional variable $p_\lambda$,
\begin{equation}\label{HLplambda}
p_\lambda^\prime = - \Omega_\lambda p_\lambda
\end{equation}
decouples, but the redundance of variables ${}^AN_\alpha$ results
in constraints between them.

To explicitly define the variables ${}^AN_\alpha$, which will lead to
explicit constraints, we first consider the type $\mathrm{II}_1$ potentials
${}^AV_{\mathrm{II}_1}$ in equation~\eqref{potparts} and define
${}^A\Omega_{\mathrm{II}_1} := 2({}^AV_{\mathrm{II}_1})/(-p_\lambda)^2$.
Permutation of axis allows us to obtain the dimensionless expression
\begin{equation}
{}^A\Omega_{\mathrm{II}_\alpha} = \frac{2({}^AV_{\mathrm{II}_\alpha})}{(-p_\lambda)^2}
= c e^{4a\,v\,\beta^\lambda}\frac{\bar{m}_\alpha^a}{(-p_\lambda)^{2}},
\end{equation}
where we refrain from writing the superscript $A$ on ${}^Ac$,
${}^Aa$, ${}^Av$ for brevity. We recall that $p_\lambda<0$ for
an expanding model.

Throughout we will restrict considerations to HL models for which
$({}^Ac)(n_\alpha)^{({}^Aa)}\geq 0$ when $n_\alpha \neq 0$, i.e., models
with non-negative Bianchi type II potentials and non-negative
${}^A\Omega_{\mathrm{II}_\alpha}$. This amounts to sign conditions
on the constants $k_A$ in~\eqref{potparts}.
For example, $k_1\leq 0$ in ${}^1V_{\mathrm{II}_1}$ when $n_1\neq 0$,
whereas
\begin{equation}\label{k2356}
k_{2,3} := k_2 + 3k_3\geq 0, \qquad k_{5,6} := 12k_5 - k_6\geq 0,
\end{equation}
are associated with ${}^1V_{\mathrm{II}_{2,3}}$ and ${}^1V_{\mathrm{II}_{5,6}}$,
respectively. Note that the term ${}^4V_{\mathrm{II}_1}$ when $n_1\neq 0$ is
special: it is positive if $k_4$ is negative and $n_1$ is chosen to be positive.
But in Bianchi type VIII, one of the potentials ${}^4V_{\mathrm{II}_1}$,
${}^4V_{\mathrm{II}_2}$, ${}^4V_{\mathrm{II}_3}$ by necessity has an opposite
sign compared to the other two, irrespective of the sign of $k_4$. This occurs
since two of the constants $n_1$, $n_2$, $n_3$ have opposite signs compared to
the third, as in Table~\ref{intro:classAmodels}, and these constants appear
with odd powers in~\eqref{Vepsilon}. For this reason, we exclude the type
VIII models with $k_4 \neq 0$, except when $k_{5,6}>0$, since ${}^{5,6}V$
asymptotically `dominates' over ${}^4V$, as will be seen below.

We then define new dimensionless variables that are linear in $\bar{m}_\alpha$,
as in the $\lambda$-$R$ case. This can be done because the potentials ${}^AV$
are homogeneous polynomials in $\bar{m}_\alpha$, $\alpha = 1, 2, 3$, and thus
the evolution equations has variables that respect these polynomial relationships.
We therefore introduce the following variables,
\begin{equation}\label{Ndefintro}
{}^AN_\alpha := \sqrt[a]{{}^A\Omega_{\mathrm{II}_\alpha}} =
\sqrt[a]{c} e^{4v\beta^\lambda}\frac{\bar{m}_\alpha}{\sqrt[a]{(-p_\lambda)^{2}}} ,
\end{equation}
where we again drop the superscript $A$ on ${}^Av$, ${}^Aa$, ${}^Ac$. This
leads to the following:
\begin{subequations}\label{Ndef}
\begin{align}
{}^1N_\alpha &:= \sqrt{-12k_1} e^{4v\beta^\lambda} \left(\frac{\bar{m}_\alpha}{-p_\lambda}\right),\label{1Ndef}\\
{}^{2,3}N_\alpha &:= \sqrt[4]{6k_{2,3}} e^{v\beta^\lambda} \left(\frac{\bar{m}_\alpha}{\sqrt{-p_\lambda}}\right),\\
{}^4N_\alpha &:= \sqrt[5]{-24k_4} e^{2v\beta^\lambda/5} \left(\frac{\bar{m}_\alpha}{\sqrt[5]{p_\lambda^2}}\right),\\
{}^{5,6}N_\alpha &:= \sqrt[6]{3k_{5,6}} \left(\frac{\bar{m}_\alpha}{\sqrt[3]{-p_\lambda}}\right).
\end{align}
\end{subequations}
As in the $\lambda$-$R$ case, considering the Hamiltonian/Codazzi constraint
in these variables shows that they are dimensionless. Note that the
dimension of $k_A$ determines how both $\beta^\lambda$ and $p_\lambda$
enter the definitions, once one has decided to adapt the variables to the
polynomials in $\bar{m}_\alpha$ (or, equivalently, $m_\alpha$). Finally
we choose to fix the remaining dimensionless constants that can multiply
the ${}^AN_\alpha$ variables by requiring that the coefficients for the
Bianchi type II terms in each ${}^A\Omega_k$ are equal to one.

The evolution equations for the variables ${}^AN_\alpha$, expressed in
the variables $\Sigma_\alpha$, are obtained from the above definitions,
\eqref{hamprime}, \eqref{PMtoABC}, and are given by
\begin{equation}\label{ANeqabs}
{}^AN_\alpha^\prime = -2\left[2({}^Av) + \Sigma_\alpha - \frac{\Omega_\lambda}{{}^Aa}\right]({}^AN_\alpha),
\end{equation}
or, explicitly,
\begin{subequations}\label{ANeq}
\begin{align}
{}^1N_\alpha^\prime &= (\Omega_\lambda - 4v - 2\Sigma_\alpha)({}^1N_\alpha),\label{1Neq}\\
{}^{2,3}N_\alpha^\prime &= \frac12\left(\Omega_\lambda - 2v - 4\Sigma_\alpha\right)({}^{2,3}N_\alpha),\\
{}^4N_\alpha^\prime &= \frac25\left(\Omega_\lambda - v - 5\Sigma_\alpha\right)({}^4N_\alpha),\\
{}^{5,6}N_\alpha^\prime &= \frac13\left(\Omega_\lambda - 6\Sigma_\alpha\right)({}^{5,6}N_\alpha),
\end{align}
\end{subequations}
where $\alpha=1,2,3$, and where ${}^Av$ has been replaced with $v$, defined in~\eqref{vdef},
according to~\eqref{HLII1speed}.

As mentioned, there are twelve variables ${}^AN_\alpha$, since
$\alpha=1,2,3$ and $A=1,\{2,3\},4,\{5,6\}$, and thus they are not all
independent since they are functions of four variables,
$\beta^\lambda$, $\beta^\pm$, $p_\lambda$. The above evolution equations~\eqref{ANeq}
are therefore constrained. Using that there are integer weight relations between
the different potentials ${}^AV$, $A=1,\{2,3\},4,\{5,6\}$,
and inserting the definitions~\eqref{Ndef} into these relations yield
the following constraints
\begin{subequations}\label{Nconstraints}
\begin{alignat}{2}
\sqrt{k_{2,3}^{3}}({}^1N_\alpha)({}^4N_\alpha)^5 &= -k_4\sqrt{-2^5k_1}({}^{2,3}N_\alpha)^6,
&\quad  \alpha &= 1, 2, 3,\label{cons1}\\
k_{2,3}({}^1N_\alpha)({}^{5,6}N_\alpha)^3 &= \sqrt{-k_1k_{5,6}}({}^{2,3}N_\alpha)^4, & \quad \alpha &= 1, 2, 3,\label{cons2}\\
({}^AN_\alpha)({}^BN_\beta) &= ({}^AN_\beta)({}^BN_\alpha), &\quad \quad\alpha\beta &= 12, 23, 31;\\
& & \quad A,B &= 1, \{2,3\}, 4, \{5,6\}.\nonumber
\end{alignat}
\end{subequations}
Only nine of the above constraints above turn out to be independent,
e.g., \eqref{cons1}, \eqref{cons2} and
$({}^1N_\alpha)({}^{2,3}N_\beta) = ({}^1N_\beta)({}^{2,3}N_\alpha)$,
since the other ones can be written in terms of these nine equations if
$k_1, k_{2,3}=k_2+3k_3,k_4, k_{5,6}=12k_5-k_6$ are all non-zero. If any of these
coefficients are zero, one can choose a different set among the available constraints,
but most three of the twelve variables ${}^AN_\alpha$ are independent.

The HL dynamical system thereby consists of the following evolution equations, 
\begin{subequations}\label{fullHLdynsys}
\begin{align}
\Sigma_\alpha^\prime &= \Omega_\lambda\Sigma_\alpha + {\cal S}_\alpha,\\
{}^1N_\alpha^\prime &= (\Omega_\lambda - 4v - 2\Sigma_\alpha)({}^1N_\alpha),\label{1Neq2}\\
{}^{2,3}N_\alpha^\prime &= \frac12\left(\Omega_\lambda - 2v - 4\Sigma_\alpha\right)({}^{2,3}N_\alpha),\\
{}^4N_\alpha^\prime &= \frac25\left(\Omega_\lambda - v - 5\Sigma_\alpha\right)({}^4N_\alpha),\\
{}^{5,6}N_\alpha^\prime &= \frac13\left(\Omega_\lambda - 6\Sigma_\alpha\right)({}^{5,6}N_\alpha),
\end{align}
subjected to the constraints
\begin{alignat}{2}
0 &= \Sigma_1 + \Sigma_2 + \Sigma_3, & &\label{HLusualconst1}\\
1 &= \Sigma^2 + \Omega_k, & \qquad\qquad \Sigma^2 :=\,\, & (\Sigma_1^2+\Sigma_2^2 + \Sigma_3^2)/6,\label{HLusualconst2}\\
\sqrt{k_{2,3}^{3}}({}^1N_\alpha)({}^4N_\alpha)^5 &= -k_4\sqrt{-2^5k_1}({}^{2,3}N_\alpha)^6,
& \quad \alpha &= 1, 2, 3,\label{cons1f}\\
k_{2,3}({}^1N_\alpha)({}^5N_\alpha)^3 &= \sqrt{-k_1k_{5,6}}({}^{2,3}N_\alpha)^4, & \quad \alpha &= 1, 2, 3,\label{cons2f}\\
({}^AN_\alpha)({}^BN_\beta) &= ({}^AN_\beta)({}^BN_\alpha), &\quad \quad \alpha\beta &= 12, 23, 31; \label{cons3f}\\
& & \quad A,B &= 1, \{2,3\}, 4, \{5,6\}.\nonumber
\end{alignat}
\end{subequations}
where 
$\Omega_k$ and $\Omega_\lambda$ are obtained from~\eqref{Omk}
and~\eqref{Omlambda}; ${\cal S}_\alpha$ is obtained by permuting the indices
in ${\cal S}_1$, which is computed from ${\cal S}_+$ according to~\eqref{PMtoABC}. 

There are in total fifteen variables
(three $\Sigma_\alpha$ and twelve ${}^AN_\alpha$) subjected to
eleven constraints (two constraints~\eqref{HLusualconst1}
and~\eqref{HLusualconst2}, which also hold for the $\lambda$-$R$ models and GR,
and nine new independent constraints~\eqref{cons1f},\eqref{cons2f}
and~\eqref{cons3f}, relating the variables
${}^AN_\alpha$). This results in a rather formidable
constrained dynamical system. However, the system~\eqref{fullHLdynsys}
contains several less complicated
special invariant sets, which illustrate the above procedure
of how to obtain the explicit equations. The special invariant
sets are of two types:
\begin{itemize}
\item[(i)] sets obtained by setting some potentials ${}^AV$ to zero,
$A=1,\{2,3\},4,\{5,6\}$, which corresponds to setting the corresponding
variables ${}^AN_\alpha$ with $\alpha=1,2,3$ to zero;
\item[(ii)] special Bianchi types obtained by setting one or more
constants $n_\alpha$ to zero, which implies that the corresponding
variables ${}^AN_\alpha$ are zero, for all values of $A$.
\end{itemize}

To illustrate the algorithmic procedure to obtain explicit equations,
which require expressing ${}^A\Omega_k$, ${}^A\Omega_\lambda$ and
${}^A{\cal S}_\alpha$, based on~\eqref{OmS} in the variables
${}^AN_\alpha$, let us first consider the case where only
$k_1\neq 0$ (recall that $k_1=-1$ yields the $\lambda$-$R$ case).
Thus the invariant set defined by ${}^AN_\alpha=0$ for all $A=\{2,3\},4,\{5,6\}$,
$\alpha=1,2,3$ provides an example of an invariant subset of type
(i) above, where $\Omega_k={}^1\Omega_k$,
$\Omega_\lambda={}^1\Omega_\lambda$, ${\cal S}_\alpha={}^1{\cal S}_\alpha$.
Equations~\eqref{barR} and~\eqref{bar1V} yield
\begin{equation}\label{1Vpot}
\begin{split}
{}^1V &= 12k_1e^{8v\beta^\lambda}\bar{R}
= -6k_1e^{8v\beta^\lambda}(\bar{m}_1^2 - 2\bar{m}_1\bar{m}_+ + \bar{m}_-^2)\\
&= -6k_1e^{8v\beta^\lambda}(\bar{m}_1^2 + \bar{m}_2^2 + \bar{m}_3^2
- 2\bar{m}_1\bar{m}_2 - 2\bar{m}_2\bar{m}_3 - 2\bar{m}_3\bar{m}_1),
\end{split}
\end{equation}
where we have used~\eqref{mpm}.

Together with~\eqref{OmS} and~\eqref{1Ndef}, this leads to
\begin{equation}
\Omega_k = N_1^2 + N_2^2 + N_3^2 - 2N_1N_2 - 2N_2N_3 - 2N_3N_1.
\end{equation}
Equation~\eqref{Omlambda} yields $\Omega_\lambda = 4v\Omega_k$.
It remains to determine ${\cal S}_\alpha$.
Equations~\eqref{1Vpot}, \eqref{DbarRp}, \eqref{mpm} and~\eqref{1Ndef} result in
\begin{equation}
{\cal S}_1 = -4\left[(N_2 - N_3)^2 - N_1(2N_1 - N_2 - N_3)\right],
\end{equation}
where cyclic permutations of $(123)$ yield ${\cal S}_2$ and ${\cal S}_3$.

A comparison shows that the dynamical system with $k_1<0$ is identical to the
system~\eqref{intro_dynsyslambdaR}, \eqref{LambdaRquantities}, for the $\lambda$-$R$
case for which $k_1=-1$. The reason for this is that single curvature terms,
associated with the constant $k_A$, with $A= 1, \{2,3\}, 4$, admit a scaling
symmetry, which correspond to a translation in $\beta^\lambda$. This symmetry
makes it possible to scale $k_A$ with an arbitrary positive number,
and hence scale the negative coefficient $k_1$ so that $k_1=-1$.

To illustrate invariant subsets obtained by restricting to a
particular Bianchi type (i.e., invariant sets of type (ii) above),
we consider Bianchi types I and II. As in the $\lambda$-$R$ case,
the Bianchi type I subset is just the Kasner circle $\mathrm{K}^\ocircle$.
In the Bianchi type $\mathrm{II}_1$ case, $n_2=n_3=0$ implies
${}^AN_2 = {}^AN_3 = 0$ for all $A$. This thereby leaves four variables ${}^AN_1$
when all constants $k_1$, $k_{2,3}$, $k_4$, $k_{5,6}$ are non-zero.
The constraints~\eqref{cons3f} are all identically zero since they involve
${}^AN_2$ and ${}^AN_3$. Similarly the constraints~\eqref{cons1f} and~\eqref{cons2f}
are identically zero for $\alpha=2,3$. This leaves two
constraints~\eqref{cons1f} and~\eqref{cons2f} for $\alpha=1$, and hence
there are two independent variables ${}^AN_1$, $A=1,\{2,3\},4,\{5,6\}$, in Bianchi
type $\mathrm{II}_1$. Similar statements hold for Bianchi type $\mathrm{II}_2$
and $\mathrm{II}_3$. The Bianchi type II models break the formal permutation
symmetry, e.g.,the type $\mathrm{II}_1$ models lead to that $\bar{m}_2$ and
$\bar{m}_3$, and thereby $\bar{m}_\pm$, are set to zero. In this case, it is
convenient to use equation~\eqref{potparts} and compute ${\cal S}_-=0$
from its definition~\eqref{OmS}, since $\beta^-$ is a cyclic variable,
and then transform the result to obtain ${\cal S}_\alpha$, which yields
the following quantities 
\begin{subequations}\label{HLBianchiII}
\begin{align}
\Omega_k &= ({}^1N_1)^2 + ({}^{2,3}N_1)^4
+ ({}^4N_1)^5 + ({}^{5,6}N_1)^6, \\
\Omega_\lambda &= 4v({}^1N_1)^2 + 2v({}^{2,3}N_1)^4 + v({}^4N_1)^5,\\
({\cal S}_1,{\cal S}_2,{\cal S}_3) &= 4\left[2({}^1N_1)^2 + 4({}^{2,3}N_1)^4 + 5({}^4N_1)^5 + 6({}^{5,6}N_1)^6\right] \mathrm{T}_1,
\end{align}
\end{subequations}
from which it is straightforward to obtain the type $\mathrm{II}_1$ dynamical system.

Equation~\eqref{HLBianchiII} illustrates that
${}^A\Omega_k$, ${}^A\Omega_\lambda$ and ${}^A{\cal S}_\alpha$ for $\alpha = 1,2,3$ are
homogeneous polynomials of ${}^AN_\alpha$ of degrees $2, 4, 5, 6$
when $A= 1, \{2,3\}, 4, \{5,6\}$, respectively.
In Bianchi type VIII and IX, the number of terms in ${}^A\Omega_k$, ${}^A\Omega_\lambda$ and
${}^A{\cal S}_\alpha$ increases as the degree of the polynomials become higher due
to an increase in the number of cross terms, which leads to a daunting
number of terms in $\Omega_k$, $\Omega_\lambda$ and ${\cal S}_\alpha$, when
$k_A\neq 0$ for all $A$.

\subsubsection*{Heuristic HL considerations}

Here we will heuristically argue that the heteroclinic network obtained by
concatenation of Bianchi type II orbits in the $\lambda$-$R$ models given
by~\eqref{dynsyslambdaR} describes the asymptotics of a broad class of HL
models~\eqref{fullHLdynsys}.

Equation~\eqref{uniII_1pot} shows that the Bianchi type II potential walls
for the ${}^AV$ potential has a speed ${}^Av$ given by~\eqref{HLII1speed} in the negative
$\beta^+$-direction, which is obtained in the same way as
equation~\eqref{1wallv} in the $\lambda$-$R$ model. Moreover, for the same
reason as in the $\lambda$-$R$ case, the cross terms for a potential
${}^AV$ with $A = 1,\{2,3\},4$ have approximating walls with higher speeds
than the Bianchi type II terms, and are not expected to affect generic
asymptotic dynamics toward the singularity, as will be discussed below;
the case $A=5,6$ is special. Furthermore, a similar argument
holds for the type II terms that belongs to different potentials:
${}^1V_{\mathrm{II}}$ yields slower moving walls than all the other
type II potentials, so if any of these are present, the ${}^1V_{\mathrm{II}}$
contribution to the dynamics is expected to not affect the generic past asymptotic
dynamics; similarly ${}^{2,3}V_{\mathrm{II}}$ is negligible
if ${}^4V$ or ${}^{5,6}V$ (or both) are non-zero; while if $k_{5,6}>0$ all the
other terms are expected to be dominated by ${}^{5,6}V$.

This leads to a situation where the generic past dynamics
is expected to be characterised by a `dominant' Hamiltonian of the form
\begin{equation}
H_\mathrm{Dom} = {\cal N}(T + V_\mathrm{Dom}) = 0,\label{totdomham}
\end{equation}
where the kinetic part $T$ depends on the canonical momenta $ p_\lambda$, $p_\pm$,
according to~\eqref{Tkinetics}, while the `dominant' potential $V_\mathrm{Dom}$ is the
sum of the type $\mathrm{II}_1$, $\mathrm{II}_2$ and $\mathrm{II}_3$ terms in the
dominant potential ${}^AV$, $A= \mathrm{Dom}$, i.e., all potentials with larger ${}^Bv$
and all the cross terms have been dropped. The dominant potential depends on
$\beta^\lambda$, $\beta^\pm$ as follows:
\begin{equation}\label{VdomHL}
V_\mathrm{Dom} =
\frac12 c\left(n_1^a\,e^{4a(v\beta^\lambda - \beta^+)} +
n_2^a\,e^{2a(2v\beta^\lambda + \beta^+ + \sqrt{3}\beta^-)} +
n_3^a\,e^{2a(2v\beta^\lambda + \beta^+ - \sqrt{3}\beta^-)}\right),
\end{equation}
where we drop the superscript ${}^A=\mathrm{Dom}$
on the constants ${}^Ac$, ${}^Aa$ and ${}^Av$ for brevity. The
value of $A$ in~\eqref{VdomHL} is determined by the dominant potential,
i.e., if all potentials with larger $A$
are zero. For example, $A=1$ corresponds to that $k_{2,3}=k_4=k_{5,6}=0$,
while $A=4$ requires $k_{5,6}=0$.

The case $k_{5,6}>0$ requires special attention. Although the ${}^{5,6}V$ potential is
expected to suppress all other potentials toward the singularity, note that ${}^{5,6}V$
only depends on $\beta^\pm$. Hence $\beta^\lambda$ is a cyclic variable in this case
(using e.g. ${\cal N} = \mathrm{constant}$ as determining the time variable), which
results in that $p_\lambda$ becomes a conserved quantity. The term $E=p_\lambda^2/2$
can thus be viewed as an energy for the reduced Hamiltonian problem with
potential ${}^{5,6}V(\beta^\pm)$. This potential yields a
generalized Toda problem in two dimensions, see~\cite{bog85}. Once the existence
of ${}^{5,6}V$ has suppressed the effects of the other potentials, the full remaining
Toda problem must be addressed. In the limit, $E\rightarrow \infty$, one expects that
the dynamics is described by the $v=0$ Bianchi type I and II
heteroclinic network in the dynamical systems picture,
but for small $E$, all terms in ${}^{5,6}V$ comes into play and one can
expect a complicated dynamical behaviour, in agreement with the
discussions in~\cite{baketal09} and~\cite{myuetal10a}.

We can adapt dynamical systems variables to the present dominant Hamiltonian
system~\eqref{totdomham}. Based on~\eqref{VdomHL}, which consists of the
three dominant ${}^AV_{\mathrm{II}_\alpha}$, $A=\mathrm{Dom}$, potentials,
we define the following dimensionless variables
\begin{subequations}\label{HLSigmaN}
\begin{xalignat}{2}
\qquad & \qquad & \tilde{N}_1 &:= \sqrt{cn_1^{a}}\left(\frac{e^{2a(v\beta^\lambda - \beta^+)}}{-p_\lambda}\right), \\
\Sigma_\pm &:= -\frac{p_\pm}{ p_\lambda}, \quad  & \quad
\tilde{N}_2 &:= \sqrt{cn_2^{a}}\left(\frac{e^{a(2v\beta^\lambda + \beta^+ + \sqrt{3}\beta^-)}}{-p_\lambda}\right),\\
\qquad & \qquad & \tilde{N}_3 &:= \sqrt{cn_3^{a}}\left(\frac{e^{a(2v\beta^\lambda + \beta^+ - \sqrt{3}\beta^-)}}{-p_\lambda}\right),
\end{xalignat}
\end{subequations}
%
%
where we again drop the superscript ${}^{A=\mathrm{Dom}}$ for brevity.
In comparison with the variables in~\eqref{SigmaNvariables} for the $\lambda$-$R$ models,
we keep the same variables $\Sigma_\pm$, but the variables $\tilde{N}_\alpha$
are slightly modified. In particular,
there exists an overall factor ${}^Aa/2$ in the exponent times an expression that
is formally the same as in the $\lambda$-$R$ case, but with $v$ replaced with ${}^Av$.

The multiplicative factor ${}^Aa/2$ can be eliminated by a change of the time variable
$\tau_-$ according to $\tilde{\tau}_- := 2\tau_-/{}^Aa$.
Letting ${}^\prime$ denote the new derivative $d/d\tilde{\tau}_-$
yields the following system of evolution equations
\begin{subequations}\label{domsys}
\begin{align}
\Sigma_+^\prime &= 2\left(2v(1-\Sigma^2)\Sigma_+ + \tilde{N}_2^2 + \tilde{N}_3^2 - 2\tilde{N}_1^2\right),\\
\Sigma_-^\prime &= 2\left(2v(1-\Sigma^2)\Sigma_- + \sqrt{3}\tilde{N}_2^2 - \sqrt{3}\tilde{N}_3^2\right),\\
\tilde{N}_1^\prime &= -4(v\Sigma^2 - \Sigma_+)\tilde{N}_1,\\
\tilde{N}_2^\prime &= -2(2v\Sigma^2 + \Sigma_+ + \sqrt{3}\Sigma_-)\tilde{N}_2,\\
\tilde{N}_3^\prime &= -2(2v\Sigma^2 + \Sigma_+ - \sqrt{3}\Sigma_-)\tilde{N}_3,
\end{align}
subjected to the constraint
\begin{equation}
1 - \Sigma^2 - \tilde{N}_1^2 - \tilde{N}_2^2 - \tilde{N}_3^2 = 0.
\end{equation}
\end{subequations}
This dynamical system is formally the same as that in~\eqref{dynsyslambdaR},
but with absent cross terms $\tilde{N}_1\tilde{N}_2$, $\tilde{N}_2\tilde{N}_3$,
$\tilde{N}_3\tilde{N}_1$ and with $v$ replaced by ${}^Av$, where the superscript $A$
refers the dominant potential ${}^{A=\mathrm{Dom}}V$.
Thus the two dynamical systems generated by~\eqref{dynsyslambdaR} and~\eqref{domsys}
share the same Bianchi type I and II heteroclinic structure. Since this structure is
expected to describe the generic asymptotic dynamics toward the singularity
(at least when $k_{5,6} = 0$), the analysis of the heteroclinic structure in the main
part of the paper of the $\lambda$-$R$ models is therefore arguably of relevance
for the generic singularity of a large class of HL models.


Let us now deduce the `dominant' dynamical system~\eqref{domsys}
from the general HL dynamical system~\eqref{fullHLdynsys}. Recall that the
dominant dynamical system was obtained by:
\begin{enumerate}
\item[(i)] setting all potentials ${}^AV$ to zero, except for the
potential ${}^AV$ with $A=\mathrm{Dom}$, i.e., the potential with the largest value
of $A \in 1, \{2,3\}, 4, \{5,6\}$ with non-zero coefficient $k_A$.
This corresponds to the invariant subset of~\eqref{fullHLdynsys} for
which all variables ${}^AN_\alpha$ are set to zero, except for
$A = \mathrm{Dom}$, which leads to three non-zero ${}^\mathrm{Dom}N_\alpha$
variables. Note that the constraints~\eqref{Nconstraints} are automatically satisfied
for subsets that only involve one of the potentials ${}^AV$ where $A=1,\{2,3\},4,\{5,6\}$.
We will refer to this invariant subset of~\eqref{fullHLdynsys}
as the \emph{invariant dominant subset};
\item[(ii)] setting all the cross terms in the potential ${}^AV$
with $A = \mathrm{Dom}$ to zero, which thereby yields
$V_\mathrm{Dom} = {}^AV_{\mathrm{II}_1} + {}^AV_{\mathrm{II}_2} + {}^AV_{\mathrm{II}_3}$ for $A = \mathrm{Dom}$.
In the dynamical system this is achieved by setting all cross terms involving
${}^\mathrm{Dom}N_1$, ${}^\mathrm{Dom}N_2$, ${}^\mathrm{Dom}N_3$
in the invariant dominant subset in~\eqref{fullHLdynsys} to zero.
This results in a system where
$\Omega_k$, $\Omega_\lambda$ and ${\cal S}_\alpha$ are linear in
$({}^{2,3}N_\alpha)^4$, $({}^4N_\alpha)^5$, $({}^{5,6}N_\alpha)^6$, if
$A = \mathrm{Dom}=\{2,3\},4,\{5,6\}$, respectively.
This makes it possible to perform a variable transformation
from the dominant variables ${}^\mathrm{Dom}N_\alpha$ to new variables
${}^\mathrm{Dom}\tilde{N}_\alpha$ which yield quadratic polynomials, e.g.,
${}^{2,3}\tilde{N}_\alpha = ({}^{2,3}N_\alpha)^2$ if $A=\mathrm{Dom}=\{2,3\}$.
To finally obtain the system~\eqref{domsys} from~\eqref{fullHLdynsys},
replace $v$ with ${}^Av$, $A = \mathrm{Dom}$, according to~\eqref{HLII1speed},
and replace ${}^\mathrm{Dom}N_\alpha$ with the new variables $\tilde{N}_\alpha$.
Finally, change the time variable to $\tilde{\tau}_- := 2\tau_-/{}^Aa$,
where ${}^Aa$ is defined in~\eqref{apot} for $A = \mathrm{Dom}$.
\end{enumerate}

The heuristic arguments in this appendix thus suggest that the
$\omega$-limits (as $\tau_-\rightarrow \infty$) for generic Bianchi
type IX solutions (and type VIII, if $A=\mathrm{Dom}\neq 4$, as discussed
previously) of the evolution equation~\eqref{fullHLdynsys} reside on the
union of the Bianchi type I and II subsets on the invariant dominant subset.
Replacing $v$ with ${}^Av$, $A = \mathrm{Dom}$, and using the
Hamiltonian/Gauss constraint to solve for the single ${}^\mathrm{Dom}N_\alpha$
variable in each of the type $\mathrm{II}_\alpha$ subsets, leads to the
equations in $(\Sigma_1,\Sigma_2,\Sigma_3)$-space used in the main text to
discuss the heteroclinic network on the union of the type I and II subsets
for the $\lambda$-$R$ models, if one changes the time variable according to
$\tilde{\tau}_- := 2\tau_-/{}^\mathrm{Dom}a$.\footnote{This is the
reason we obtain the results in~\cite{giakam17} as special cases of
our results for the Bianchi type II $\lambda$-$R$ models. Incidentally, we
could have introduced  the Kasner parameter $u$, as in said reference.
The Kasner map describing how $u$ changes follows from~\eqref{KasnerCirc},
\eqref{pdef} and~\eqref{ueq}. However, the range and domain of $u$ differ from
the critical GR case when $v\neq 1/2$. This suggests that
one should use an extended Kasner parameter, see~\cite{ugg13a}. However, since the
parameter $v$ leads to a complicated expression for the Kasner map for $u$,
we do not pursue this possibility.} The results in the main part of the paper
thereby heuristically apply to a broad class of HL models.

\section{First principles and the Bianchi hierarchy}\label{app:heterosym}

In this appendix, we derive monotone functions and conserved quantities at
each level of the class~A Bianchi hierarchy from the associated scale and
automorphism symmetry hierarchy. These structures are inherited from the first
principles of scale and diffeomorphism invariance, as shown for the GR case
in~\cite{heiugg10}. The present models do not change the automorphism group,
but they do have different scale symmetries, which yield different results
for the $\lambda$-$R$ and HL models. 

\subsection{$\lambda$-$R$ models}\label{app:lrscaleaut}

\subsubsection*{Bianchi types VIII and IX}

We here derive a monotone function, called $\Delta$. The decay
of $\Delta$ in time implies that the type VIII and IX solutions converge to
next level in the class~A Bianchi hierarchy, the union of the invariant
type $\mathrm{VI}_0$ and $\mathrm{VII}_0$ boundary sets, as
discussed in Section~\ref{sec:firstprinciples}.

The Hamiltonian for Bianchi type VIII and IX is characterized by
\begin{equation}\label{appB:Hamilt}
T + V = \frac12\left(-p_\lambda^2 + p_+^2 + p_-^2\right) + 6e^{8v\beta^\lambda}\bar{V}(\beta^\pm)=0.
\end{equation}
The kinetic part defines the DeWitt metric
$\eta_{AB} = \mathrm{diag}(-1,1,1)$ for $A,B = \lambda, \pm$, and its inverse
$\eta^{AB} = \mathrm{diag}(-1,1,1)$, since we can write the kinetic part as
$T =\eta^{AB}p_Ap_B/2$. The diagonal type VIII and IX models admit no
(diagonal) automorphisms since all the structure constants,
$n_1$, $n_2$, $n_3$, are non-zero. However, the field equations of all
vacuum $\lambda$-$R$ models admit a scale symmetry, which thereby leads to a
scale symmetry for the potential in~\eqref{appB:Hamilt}, obtained by translations
in $\beta^\lambda$. Moreover, in the potential $V= 6e^{8v\beta^\lambda}\bar{V}(\beta^\pm)$
the exponent $8v\beta^\lambda$ is clearly timelike with respect to $\eta_{AB}$ in
$(\beta^\lambda,\beta^+,\beta^-)$-space when $v\in (0,1)$.
Similarly as in GR, see ch. 10 in~\cite{waiell97} and~\cite{heiugg10},
this leads to a monotone function, given by
$e^{8v\beta^\lambda}/p_\lambda^2 \propto |N_1N_2N_3|^{2/3}$. Choosing to scale
this with $3$ so that $\Omega_k + \Delta \geq 0$ in Section~\ref{sec:firstprinciples}
yields
\begin{subequations}\label{Deltaapp}
\begin{align}
\Delta :=&\, 3|N_1N_2N_3|^{2/3},\\
\Delta^\prime =&\, -8v\Sigma^2\Delta,
\end{align}
\end{subequations}
where we have used the chain rule and~\eqref{intro_dynsyslambdaR_N}.

\subsubsection*{Bianchi types $\mathrm{VI}_0$ and $\mathrm{VII}_0$}

We now show that the scale-automorphism group for type $\mathrm{VI}_0$
and $\mathrm{VII}_0$ yields the functions ($1+2v\Sigma_+$, $Z_{\mathrm{sub}}$,
$Z_{\mathrm{sup}}$ and $Z_\mathrm{{crit}}$). These functions have different
consequences for the subcritical, supercritical and critical cases,
discussed in Section~\ref{sec:firstprinciples}. In particular,
$1+2v\Sigma_+$ is useful in all cases and is derived first; then we
derive $Z_{\mathrm{sub}}$ ($Z_{\mathrm{sup}}$), which is useful for
the subcritical (supercritical) case, where
$Z_{\mathrm{sub}}=Z_{\mathrm{sup}}=Z_\mathrm{{crit}}$ for
the critical case.

Let $n_1=0$, $n_2=n_3=1$ for type $\mathrm{VII}_0$
and $n_1=0$, $n_2=-n_3=1$ for type $\mathrm{VI}_0$, without loss of generality.
Then, according to equation~\eqref{tilde1V}, the Hamiltonian
is described by,
\begin{equation}\label{HVIVIIapp}
T + V = \frac12\left(-p_\lambda^2 + p_+^2 + p_-^2\right) + 6e^{4(2v\beta^\lambda + \beta^+)}\tilde{m}_-^2 = 0,
\end{equation}
where we recall that
$\tilde{m}_- = n_2e^{2\sqrt{3}\beta^-} - n_3e^{-2\sqrt{3}\beta^-}$.

There are two special cases characterized by $\beta^-=p_-=0$,
discussed in the dynamical systems framework in Section~\ref{sec:firstprinciples}.
The first is given by the locally rotationally symmetric
(LRS) type $\mathrm{VII}_0$ models, which have an extra space-time isometry and thereby
a 4-dimensional multiply transitive isometry group. 
Since $\beta^-=p_-=0$ implies $\tilde{m}_-=0$ and thus $V=0$, both $\beta^\lambda$
and $\beta^+$ become cyclic variables and hence $p_\lambda$ and $p_+$ are constants.
Moreover, the Hamiltonian constraint yields $p_+ = \pm p_\lambda$, which corresponds
to the two invariant disjoint lines $\Sigma_+=\pm 1$, $\Sigma_-=0$, $N:=N_2=N_3$.
The second case results in the special type $\mathrm{VI}_0$ models,
which exist due to the discrete symmetry $\beta^- \rightarrow -\beta^-$,
and correspond to a space-time with a discrete isometry, in contrast to the
continuous extra isometry in the LRS type $\mathrm{VII}_0$ case.
Since $\beta^-=p_-=0$ is an invariant set, it follows that so is
$\Sigma_-=0$, $N_2=-N_3$. Moreover, since $\beta^-=p_-=0$
implies $\tilde{m}_-^2 = \mathrm{constant} >0$, the Hamiltonian constraint yields
$|\Sigma_+| < 1$. Below we will treat the special type $\mathrm{VI}_0$ models
together with the general ones.

Excluding the special cases with $\beta^-=p_-=0$, the exponent
$4(2v\beta^\lambda + \beta^+)$ and $\tilde{m}_-$ in the potential
in~\eqref{HVIVIIapp} shows that there are only two independent variables
in the Hamiltonian, $2v\beta^\lambda + \beta^+$ and $\beta^-$. It hence
follows that there is a cyclic variable and an associated conserved
quantity. The underlying reason is that the models with $n_1=0$ admit a
non-unimodular automorphism in addition to the scale symmetry.
Following~\cite{heiugg10} and~\cite{rosetal90a} and combining the non-unimodular
autmorphism and the scale symmetry appropriately yields a variational
symmetry and thereby a conserved quantity, given by
\begin{equation}\label{VIVIIppp}
p_\lambda - 2vp_+ = \mathrm{constant},
\end{equation}
as follows from Hamilton's equations.

Since $p_\lambda$ is monotone, apart from in the LRS type $\mathrm{VII}_0$ case, as follows
from Hamilton's equation~\eqref{p0prime}, dividing~\eqref{VIVIIppp} with
$p_\lambda$ yields a monotone function, except when $p_\lambda - 2vp_+=0$.
This, however, can only happen in the supercritical case $v\in(1/2,1)$, since
the Hamiltonian constraint yields $|p_\lambda| > |p_+|$. Expressing
the quotient $(p_\lambda - 2vp_+)/p_\lambda$ in the dynamical systems variables
results in
\begin{equation}\label{VIVII1app}
\frac{p_\lambda - 2vp_+}{p_\lambda} = 1 + 2v\Sigma_+,
\end{equation}
which evolves according to
\begin{equation}\label{MonotoneVIVIIapp}
(1+2v\Sigma_+)^\prime = 4v(1 - \Sigma^2)(1+2v\Sigma_+).
\end{equation}
%
Further insights come from explicitly introducing cyclic variables
that respect the kinetic part of the Hamiltonian, which is done next.
More specifically, in the subcritical and supercritical cases,
we make a Lorentz transformation in the $(\beta^\lambda,\beta^\pm)$-space
with respect to $\eta_{AB}$, where these transformations
preserve the form of the kinetic part in~\eqref{HVIVIIapp} by definition,
i.e., $T = \eta^{AB}p_Ap_B/2$. However, note that with respect to $\eta_{AB}$, the
exponent $4(2v\beta^\lambda + \beta^+)$ is spacelike for the subcritical case $v\in(0,1/2)$,
null for the critical case $v=1/2$, and timelike for the supercritical case $v\in(1/2,1)$.
The different causal characters again reflect that a bifurcation takes place when $v=1/2$.

In the subcritical case, $v<1/2$, a boost with velocity $-2v$ results in
\begin{subequations}\label{boost1}
\begin{align}
\tilde{\beta}^\lambda &= \Gamma(\beta^\lambda + 2v\beta^+), &\qquad
\beta^\lambda &= \Gamma(\tilde{\beta}^\lambda - 2v\tilde{\beta}^+),\\
\tilde{\beta}^+ &= \Gamma(2v\beta^\lambda + \beta^+), &\qquad
{\beta}^+ &= \Gamma(-2v\tilde{\beta}^\lambda + \tilde{\beta}^+),
\end{align}
\end{subequations}
where $\Gamma= (1 - (2v)^{2})^{-1/2}$. Hence~\eqref{HVIVIIapp}
is transformed to
\begin{equation}\label{HsubcritVIVIIapp}
T + V = \frac12\left(-\tilde{p}_\lambda^2 + \tilde{p}_+^2 + p_-^2\right) +
6e^{4\tilde{\beta^+}/\Gamma}\tilde{m}_-^2 = 0.
\end{equation}
The fact that $\tilde{p}_\lambda$ is conserved leads to a reduced problem
for $\tilde{\beta}^+$ and $\beta^-$ with energy $E = \tilde{p}_\lambda^2/2$.
Note that the Hamiltonian thereby takes the same form as when $v=0$,
as for the HL models with dominant potential ${}^{5,6}V$. The reduced problem
thereby yields a generalized Toda problem in two dimensions, see~\cite{bog85}.
The conserved quantity $\tilde{p}_\lambda = \Gamma(p_\lambda - 2vp_+)$
results in~\eqref{VIVII1app}, and consequently~\eqref{MonotoneVIVIIapp}.
Moreover, $\tilde{p}_\lambda \neq 0$ due to~\eqref{HsubcritVIVIIapp},
and since we are considering expanding models, $\tilde{p}_\lambda < 0$.
Since $\tilde{p}_\lambda$ has the same sign as $p_\lambda$, it follows
that $\tilde{p}_\lambda/p_\lambda > 0$, which implies that $1 + 2v\Sigma_+>0$
in the subcritical case $v<1/2$. As a consequence, $1 + 2v\Sigma_+$ is a
monotone function in the entire state space, as described in~\eqref{MonotoneVIVIIapp},
apart from when $\Sigma^2 = 1$, which only happens for the Bianchi type I
and the LRS type $\mathrm{VII}_0$ invariant sets.

When $p_-\neq 0$, the Hamiltonian equations for the reduced Toda problem
for $\tilde{\beta}^+$ and $\beta^-$ implies that a solution originates at
$\tilde{\beta^+}\rightarrow - \infty$, and reaches a maximal but finite value
of $\tilde{\beta^+}$, and then turn back and ends at $\tilde{\beta^+}\rightarrow - \infty$,
where the asymptotic origin and end correspond to $\tau_-\to \pm\infty$.
To translate these claims into rigorous dynamical results, note that
the exponential $e^{4\tilde{\beta^+}/\Gamma}$ in the potential
plays a key role. Dividing the conserved quantity $\tilde{p}_\lambda^2$ with
this exponential yields a dimensionless quantity, which when expressed
in the dynamical systems variables results in
\begin{subequations}\label{Z2app}
\begin{align}
Z_{\mathrm{sub}} &= \frac{(1 + 2v\Sigma_+)^2}{|N_2N_3|},\\
Z_{\mathrm{sub}}^\prime &= 4(2v + \Sigma_+)Z_{\mathrm{sub}}.
\end{align}
\end{subequations}
%

In the supercritical case, $v>1/2$, consider a boost with velocity $-1/(2v)$,
i.e.,
\begin{subequations}\label{boost2}
\begin{align}
\tilde{\beta}^\lambda &= \Gamma\left(\beta^\lambda + \frac{\beta^+}{2v}\right), &\qquad
\beta^\lambda &= \Gamma\left(\tilde{\beta}^\lambda - \frac{\tilde{\beta}^+}{2v}\right),\\
\tilde{\beta}^+ &= \Gamma\left(\frac{\beta^\lambda}{2v} + \beta^+\right), &\qquad
{\beta}^+ &= \Gamma\left(-\frac{\tilde{\beta}^\lambda}{2v} + \tilde{\beta}^+\right),
\end{align}
\end{subequations}
where $\Gamma= (1 - (2v)^{-2})^{-1/2}$. This results in that~\eqref{HVIVIIapp}
takes the form
\begin{equation}\label{Hsupercritapp}
T + V = \frac12\left(-\tilde{p}_\lambda^2 + \tilde{p}_+^2 + p_-^2\right) +
6e^{8v\tilde{\beta^\lambda}/\Gamma}\tilde{m}_-^2 = 0.
\end{equation}
In this case $\tilde{p}_+$ is conserved, which implies that
$\tilde{p}_+/p_\lambda = \Gamma(1 + 2v\Sigma_+)$ is
monotone when $\tilde{p}_+\neq 0$, since $p_\lambda$ is monotone.
Setting $\tilde{p}_+=0$ yields the invariant set $1+2v\Sigma_+=0$.
Invariance under the transformation
$(\tilde{\beta^+},\tilde{p}_+) \rightarrow - (\tilde{\beta^+},\tilde{p}_+)$
shows that the models exhibit a discrete symmetry. As a consequence
the invariant subset $1+2v\Sigma_+=0$ forms a separatrix surface which
divides the remaining state space into two disjoint sets. Moreover,
the discrete symmetry results in that the flow
of~\eqref{MonotoneVIVIIapp} is equivariant under a change of sign
of the monotone function $1 + 2v\Sigma_+$.
In addition, the intersection of the special type $\mathrm{VI}_0$ subset,
$\beta^-=p_-=0$ (i.e., $\Sigma_-=0$ and $N_2=-N_3$) and the subset
$1 + 2v\Sigma_+ = 0$ (i.e., $\tilde{\beta^+}=\tilde{p}_+=0$) yields the fixed point
$\Sigma_+ = -1/(2v)$, $\Sigma_-=0$, $N_2 = -N_3 = \sqrt{1 - (2v)^{-2}}$.

In the supercritical case, the above structures are not the only ones
that can be extracted from the scale-automorphism group. As in the type VIII and IX models,
we have a potential with an exponential with a timelike variable with respect
to $\eta_{AB}$ that multiplies a function that depends on spacelike variables
(only $\beta^-$ in this case), see~\eqref{Hsupercritapp}, after the
transformation~\eqref{boost1}.
Following ch. 10 in~\cite{waiell97}, there is a monotone function given by
$6e^{8v\tilde{\beta^\lambda}/\Gamma}/\tilde{p}_\lambda^2$,
except when $\tilde{\beta}^+ = \tilde{p}_+ = \beta^- = p_-=0$ in type
$\mathrm{VI}_0$, i.e., at the fixed point in these models. This results
in that $Z_{\mathrm{sup}} \propto \tilde{p}_\lambda^2 e^{-8v\tilde{\beta^\lambda}/\Gamma}$
is monotone, and expressing this function in the state space variables
results in
\begin{subequations}\label{Mapp}
\begin{align}
Z_{\mathrm{sup}} &= \frac{(2v + \Sigma_+)^2}{N_2N_3},\label{Mdef}\\
Z_{\mathrm{sup}}^\prime &= 4\left[\frac{(1 + 2v\Sigma_+)^2 + (4v^2-1)\Sigma_-^2}{2v + \Sigma_+}\right]Z_{\mathrm{sup}},\label{Mprimapp}
\end{align}
\end{subequations}
where the last equation is obtained from~\eqref{dynsyslambdaRVIVII}.
Hence $Z_{\mathrm{sup}}$ is monotonically increasing, except at the
type $\mathrm{VI}_0$ fixed point~\eqref{p_VI}. 
In type $\mathrm{VII}_0$, the variables $\Sigma_+ = -1/(2v)$, $\Sigma_-=0$ do not correspond
to an invariant subset. If an orbit passes through these values, this
implies that this only yields an inflection point for the monotonically
increasing $Z_{\mathrm{sup}}$.


In the critical GR case, $v=1/2$, we introduce the null variables
\begin{equation}\label{uwapp}
u := \beta^\lambda + \beta^+,\qquad w := \beta^\lambda - \beta^+,
\end{equation}
which results in
\begin{equation}\label{Huwapp}
T + V = -2p_up_w + \frac12 p_-^2 + 6e^{2u}\tilde{m}_-^2 = 0.
\end{equation}
Since $w$ is a cyclic variable, $p_w = (p_\lambda - p_+)/2 = \mathrm{constant} \leq 0$,
where the inequality follows from the Hamiltonian constraint and from $p_\lambda < 0$,
which holds for expanding models. The Hamiltonian constraint implies that the equality
only occurs for the LRS type $\mathrm{VII}_0$ models. Apart from this
special case, $1 + \Sigma_+$ is a monotone function in both type
$\mathrm{VI}_0$ and $\mathrm{VII}_0$ according to~\eqref{MonotoneVIVIIapp} with $v=1/2$.
%
%

In the critical GR case, $v=1/2$, the functions $Z_{\mathrm{sub}}=Z_{\mathrm{sup}} = Z_{\mathrm{crit}}$
in~\eqref{Z2app} and~\eqref{Mapp} yields
%
%
%
\begin{subequations}\label{Mprimcrit}
\begin{align}
Z_{\mathrm{crit}} &= \frac{(1+\Sigma_+)^2}{N_2N_3},\\
Z_{\mathrm{crit}}^\prime &= 4(1+\Sigma_+)Z_{\mathrm{crit}}
\end{align}
\end{subequations}
Thus $Z_{\mathrm{sub}}=Z_{\mathrm{sup}} = Z_{\mathrm{crit}}$ is thereby also a monotone function in the
critical case, except at the LRS type $\mathrm{VII}_0$ subset $\Sigma_+=-1$,
$\Sigma_-=0$, $N_2=N_3=N$.

The underlying reason for the existence of the monotone function
$Z_{\mathrm{sup}} = Z_{\mathrm{crit}}$ in~\eqref{Mprimcrit}
is the scaling property of the potential obtained by a translation in $u$,
see~\eqref{Huwapp}, and the conserved momentum $p_w$, which in turn is
a consequence of the scale-automorphism group. Apart from the
LRS type $\mathrm{VII}_0$ subset where $p_w=0$ and thereby $\Sigma_+ = -1$,
these two features taken together yield the monotone function
$Z_{\mathrm{sup}} = Z_{\mathrm{crit}} \propto p_w^2e^{-4u} = (p_we^{-2u})^2$.
Thus $Z_{\mathrm{sup}}$ is a monotone function when $v\in [1/2,1)$,
but not in the subcritical case $v\in (0,1/2)$. The reason for this is the
change in causal character of the exponent $4(2v\beta^\lambda + \beta^+)$
in the potential and the Hamiltonian
constraint~\eqref{HsubcritVIVIIapp}, which prevents
$\tilde{p}_+^{-2}e^{4\tilde{\beta^+}/\Gamma}$ from being a
monotonically changing `energy', as described in the
qualitative picture of the dynamics in~\cite{uggetal91} and
ch. 10 in~\cite{waiell97} when the exponent is timelike.

Incidentally, the Hamiltonian~\eqref{Hsupercritapp} for type $\mathrm{VI}_0$
is mathematically closely related to the GR Bianchi type II models with a perfect
fluid obeying a linear equation of state $p=w\rho$, $w\in [0,1)$,
where $p$ is the pressure and $\rho$ the energy density,
see~\cite{ugg88}, ch. 10 in~\cite{waiell97}. The difference is that due to
steeper walls in the heuristic wall description of the Hamiltonian~\eqref{Hsupercritapp}
the present models give rise to a heteroclinic cycle, which is not the case for
the type II perfect fluid models. Also note that in the subcritical case, $v\in (0,1/2)$,
the special type $\mathrm{VI}_0$ models with $\beta^- = p_-=0$ yield the same
mathematical problem as the type II models discussed next when
restricted to the type II LRS case with $p_-=0$,
after appropriate translations and rescalings of $\tilde{\beta}^\lambda$, $\tilde{\beta^+}$
and $\tau_-$. Moreover, in the supercritical case, the special type
$\mathrm{VI}_0$ models yield the same mathematical problem as the LRS GR Bianchi type
I models with a perfect fluid with $p=w\rho$.

\subsubsection*{Bianchi types II}

Next we derive the key building block for the heteroclinic structure from
scale-automorphism symmetries, i.e., the straight Bianchi type II trajectories
in $\Sigma_\pm$-space, and thereby those in
$(\Sigma_1, \Sigma_2, \Sigma_3)$-space, given by~\eqref{BIIstraight}.

Without loss of generality, we consider the Bianchi type
$\mathrm{II}_1$ case with the Hamiltonian:
\begin{equation}\label{HamII_I}
T + V = \frac12\left(-p_\lambda^2 + p_+^2 + p_-^2\right) + 6e^{8(v\beta^\lambda - \beta^+)} = 0.
\end{equation}
Since $\beta^-$ is a cyclic variable, $p_-$ is constant.
This occurs since the type II models with $n_2=n_3=0$ admit a unimodular automorphism,
which generates a variational symmetry and thereby the conserved
momentum $p_-$. 
As in the type $\mathrm{VI}_0$ and $\mathrm{VII}_0$ cases, these models also admit a
scale-automorphism symmetry, obtained by combining the scale symmetry with the
remaining non-unimodular automorphism, which yields a variational symmetry and
an additional cyclic variable. This is made explicit by performing a boost in
the $\beta^+$-direction in $(\beta^\lambda,\beta^\pm)$-space with
a velocity $v$, i.e.,
\begin{subequations}\label{boost3}
\begin{align}
\tilde{\beta}^\lambda &= \Gamma(\beta^\lambda - v\beta^+), &\qquad
\beta^\lambda &= \Gamma(\tilde{\beta}^\lambda + v\tilde{\beta}^+),\\
\tilde{\beta}^+ &= \Gamma(-v\beta^\lambda + \beta^+), &\qquad
{\beta}^+ &= \Gamma(v\tilde{\beta}^\lambda + \tilde{\beta}^+),
\end{align}
\end{subequations}
where $\Gamma = (1-v^2)^{-1/2}$. This leads to the following expression,
\begin{equation}\label{HamIIBoost2}
T + V = \frac12\left(-\tilde{p}_\lambda^2 + \tilde{p}_+^2 + p_-^2\right) +
6e^{-8\tilde{\beta}^+/\Gamma} = 0,
\end{equation}
which shows that not only $\beta^-$ but also $\tilde{\beta}^\lambda$
is a cyclic variable. Thus both $p_-$ and $\tilde{p}_\lambda$ are
constant.\footnote{Note that in the heuristic moving wall description,
the wall moves in the \emph{positive} $\beta^+$-direction in
$(\beta^\lambda,\beta^\pm)$-space with a speed $v$,
while the wall moves in the negative $\beta^+$-direction with
time $\tau_- = -\beta^\lambda$. The speed of the wall is also the speed
of the above boost, which thereby transforms the moving wall to a
motionless wall. This yields the bounce law~\eqref{BounceLaw} for
the moving particle by means of the conserved quantities.}

Since both $p_-$ and $\tilde{p}_\lambda = \Gamma(p_\lambda + vp_+)$
are constants, it follows that dividing the following relation
between the constants ${p}_- \propto p_\lambda/v + p_+$
with $-p_\lambda$, and using that $\Sigma_\pm = p_\pm/(- p_\lambda)$, leads to
\begin{equation}\label{BIIpm}
\Sigma_- = \mathrm{constant}\left(\Sigma_+ - \frac{1}{v}\right),
\end{equation}
where the $\mathrm{constant}$ parametrizes the various heteroclinic Bianchi type
II orbits. This equation also holds for the initial values $\Sigma_\pm^{\mathrm{i}}$
of $\Sigma_\pm$ on $\mathrm{K}^\ocircle$ and dividing the above equation with
$\Sigma_-^{\mathrm{i}} = \mathrm{constant}\,(\Sigma_+^{\mathrm{i}} - v^{-1})$ yields
\begin{equation}
\left(\Sigma_+^{\mathrm{i}} - \frac{1}{v}\right)\Sigma_-
= \Sigma_-^{\mathrm{i}}\left(\Sigma_+ - \frac{1}{v}\right).
\end{equation}
%
%
%
Equation~\eqref{BIIstraight} then follows from the definitions
$\Sigma_1 = - 2\Sigma_+$ and $\Sigma_{2,3} = \Sigma_+ \pm \sqrt{3}\Sigma_-$.

\subsubsection*{Bianchi type I}

The Kasner circle of fixed points $\mathrm{K}^\ocircle$ follows
straightforwardly from the scale-automorphism symmetry group.
Bianchi type I is obtained by a Lie contraction of Bianchi type II,
which results in that all structure constants become zero, which yield an
Abelian symmetry group. This leads to one more special automorphism,
which, together with the other automorphisms and the (trivial) scale symmetry,
implies that all variables $\beta^\lambda$, $\beta^+$, $\beta^-$ are cyclic,
and hence that all momenta $p_\lambda$, $p_+$, $p_-$ are conserved.
Thus $\Sigma_+$ and $\Sigma_-$ are constants,
and due to the Hamiltonian constraint, $T + V = (-p_\lambda^2 + p_+^2 + p_-^2)/2 = 0$,
they satisfy $\Sigma_+^2 + \Sigma_-^2=1$.

\subsection{HL models}

Equation~\eqref{HLconstants} and~\eqref{potuni} provide a unified picture
of the individual ${}^AV$ potentials for the HL class~A Bianchi hierarchy,
which we here, for the reader's convenience, repeat:
\begin{subequations}\label{potuni2}
\begin{alignat}{2}
{}^AV &= e^{4av\beta^\lambda}({}^A\bar{V}),&\qquad\qquad &\text{for types } \mathrm{IX} \text{ and } \mathrm{VIII},\\
{}^AV_{\mathrm{VII}_0,\mathrm{VI}_0} &=
e^{2a(2v\beta^\lambda + \beta^+)}({}^A\tilde{V}), &\qquad\qquad &\text{for types }
\mathrm{VII}_0 \text{ and } \mathrm{VI}_0, \text{ with } n_1=0,\\
{}^AV_{\mathrm{II}_1} &= \frac{c\,n_1^{a}}{2} \,e^{4a(v\beta^\lambda - \beta^+)},  &\qquad\qquad &\text{for type } \mathrm{II}_1,\label{HLIIpot2}
\end{alignat}
\end{subequations}
where, for notational brevity, we have refrained from writing the superscript $A$
on ${}^Aa$, ${}^Av$ and ${}^Ac$, where
\begin{subequations}
\begin{alignat}{7}
{}^1v &= v := \frac{1}{\sqrt{2(3\lambda-1)}}, &\quad {}^{2,3}v &= \frac{v}{4},
&\quad {}^4v &= \frac{v}{10}, &\quad {}^{5,6}v &= 0,\label{HLII1v2}\\
{}^1a &= 2, &\quad {}^{2,3}a &= 4, &\quad {}^4a &= 5, &\quad {}^{5,6}a &= 6,\\
{}^1c &= -12k_1, &\quad {}^{2,3}c &= 6k_{2,3}, &\quad {}^4c &= -24k_4, &\quad {}^{5,6}c &= 3k_{5,6}.
\end{alignat}
\end{subequations}

The automorphism group is the same for all models, but the scale-property of the individual
potentials is different for different $A$. Nevertheless, as seen from~\eqref{potuni2}
there is a close relationship, one simply replace the constants ${}^1v=v$, ${}^1a=2$ and ${}^1c=12$
in the $\lambda$-$R$ case with ${}^Av$, ${}^Aa$ and ${}^Ac$ to take care of this difference.
There is thereby a close connection between all single potential term HL models. However, note that
for type IX and VIII the exponent $e^{4av\beta^\lambda}$ is timelike when $A=1,\{2,3\},4$
while it is a constant when $A=5,6$ where $A=5,6$ represent a bifurcation since ${}^{5,6}v=0$.
Hence, for the same reason as for the $\lambda$-$R$ models, $e^{4av\beta^\lambda}/p_\lambda^2$
yields a monotone function for each HL model with a specific value of $A \in 1, \{2,3\}, 4$.
When $A=5,6$, $p_\lambda$ is conserved, which results in that
$({}^{5,6}N_1)({}^{5,6}N_2)({}^{5,6}N_3) = \mathrm{constant}$.

In type $\mathrm{VII}_0$, $\mathrm{VI}_0$ and $\mathrm{II}_1$, one just replaces
the boost in the $\lambda$-$R$ case with an analogous boost that follows
from~\eqref{potuni2}, to obtain similar conserved quantities and monotone
functions for each HL model. However, in the dynamical systems description these
quantities sometimes take a different form due to the different relations
with the associated $^{}N_\alpha$ variables, see~\eqref{Ndefintro} and~\eqref{Ndef},
but e.g., $1 + 2v\Sigma_+$ in the $\lambda$-$R$ type $\mathrm{VII}_0$ and
$\mathrm{VI}_0$ models is just replaced with $1 + 2{}^Av\Sigma_+$.
Similarly in type $\mathrm{II}_1$, $\Sigma_-=\mathrm{constant}(\Sigma_+ - v^{-1})$
is replaced with $\Sigma_-=\mathrm{constant}(\Sigma_+ - ({}^Av)^{-1})$,
and thus models with $A=1,\{2,3\},4$ have formally the same heteroclinic
type II structure as the $\lambda$-$R$ models with $v\in(0,1)$, although
recall that ${}^Av$ for those values of $A$ are differently related to $\lambda$
than ${}^1v=v$, see~\eqref{HLII1v2}. There are thus very strong relationships between
the dynamics of the $\lambda$-$R$ models and the HL models with single
curvature terms as potentials.

Then recall the heuristic argument that asymptotically toward the singularity
there exists a dominant single potential (the one with the largest value of $A$),
and an associated invariant subset in the HL dynamical systems formulation.
With the exception that if this is the ${}^{5,6}V$ potential, which corresponds
to a bifurcation since ${}^{5,6}v=0$, the correspondence between conserved
quantities and monotone functions between the $\lambda$-$R$ models and the
remaining HL models suggests that generic dynamics toward the singularity
is going to be described by the heteroclinic Bianchi type II and I structure
on the dominant invariant subset. This is also suggested by the dominant
Hamiltonian and the associated dominant dynamical system. Hence we conjecture
that the discrete analysis of the heteroclinic structure in the $\lambda$-$R$
case in the main part of the paper is also describing the asymptotic
dynamics of HL models for which ${}^{5,6}V=0$. The above also suggests that the
there are similar dynamical conjectures for these HL models as those
in Section~\ref{sec:conjectures} for the $\lambda$-$R$ case.

\section{A unified critical and supercritical treatment}\label{app:unifying}

In this Appendix we modify the proof about chaos within the non-generic
Cantor set of the supercritical case, given in Section~\ref{subsec:proofCantor},
to also accommodate the critical GR case, in which chaos is generic.
The method pursued in order to achieve chaoticity that suits both the critical
and the supercritical cases is the construction of a topological conjugacy to a
shift map, in analogy with the use of the encoding map $h$ in~\eqref{defofh}.
This yields a new proof for chaos in GR and relates the supercritical
symbolic dynamics construction to the limiting case of GR, thereby providing a
unified treatment of the two cases.

Before we proceed, we mention that there are different ways to incorporate the
methods of symbolic dynamics used in the supercritical case to also
describe chaos in the critical GR case. We will give a description that is a
\emph{continuous} transition from the supercritical case $v>1/2$ to the critical
case $v=1/2$. This is accomplished by designing a new encoding map $\tilde{h}$ which
is continuous in $v\in [1/2,1)$. This new map behaves in a similar manner as the
encoding map $h$ in~\eqref{defofh} for infinite heteroclinic sequences when $v>1/2$,
but it also appropriately encodes points that reach the Taub points when $v=1/2$,
and thus it remains a well-defined homeomorphism in the limit $v\to 1/2$.

The challenge of a unified treatment lies in the following continuity issue.
For all $v\in [1/2,1)$, define the set $C_v$ of points that never reach the
set $S$, as in~\eqref{defofC}. When $v>1/2$ decreases, the set $S$ shrinks and collapses
to the Taub points at $v=1/2$ (i.e., $S = \mathrm{T}_1\cup\mathrm{T}_2\cup\mathrm{T}_3$
when $v=1/2$), where $C_{1/2}$ thereby consists of points that never reach the Taub points
via the Kanser circle map ${\cal K}$.
On the other hand, the heteroclinic chains with period 2, see Figure~\ref{FIG:minMAX},
which are in $C_v$ (and behaves like its `boundary') when $v>1/2$, converge to
the Taub points as $v\to 1/2$, which do not belong to $C_{1/2}$. This implies that the set
$C_v$ is not continuous with respect to the parameter $v$ at $v=1/2$. In other words,
the set $\lim_{v\to 1/2} C_v$ is different than $C_{1/2}$.\footnote{Recall that
$C_v$ is a Cantor set (closed, without isolated points and nowhere dense) for $v\in (1/2,1)$.
On the one hand, the only common feature the set $C_{1/2}$ possesses when compared to $C_v$ with
$v>1/2$ is that both sets have no isolated points, whereas $C_{1/2}$ is not closed,
nor nowhere dense, since the (countably many) pre-images of Taub points are removed from the Kasner circle.
On the other hand, the limiting set $\lim_{v\to 1/2} C_v$ is the whole Kasner circle, and
is thereby closed, but it has no isolated points and is dense.}
Thus one should not expect that the encoding map $h$, given by~\eqref{defofh}, is
continuous (in $v$) at $v=1/2$. To deal with this discrepancy, and guarantee an accurate
continuous transition of non-generic to generic chaos, we also have to encode the Taub
points (and their pre-images) in the limit $v=1/2$, which are the limits of the heteroclinic
chains with period 2 when $v>1/2$.

Two problems arise when trying to encode the Taub points when $v=1/2$.
Consider $(\alpha\beta\gamma)$ a permutation of $(123)$.
First, each Taub point lies in two different arcs,
$\mathrm{T}_\alpha\in \mathrm{A}_\beta\cap \mathrm{A}_{\gamma}$,
and could thereby be described by two different symbols, $\beta$ or $\gamma$,
which would make the encoding map ill-defined.
Second, it is not clear how to encode each Taub point in order to obtain infinite
sequences in $W_\infty$, since it is possible to assign different infinite tails
to the finite heteroclinic chains that end at the Taub points.

To resolve these problems, recall that each heteroclinic chain
with period 2 (where the sequence of points in the set $\mathrm{K}^\ocircle$
of the heteroclinic chain is encoded by $\beta\gamma\beta\gamma\dots$
or $\gamma\beta\gamma\beta\dots$, in Figure~\ref{FIG:minMAX}) converges to the Taub
point $\mathrm{T}_\alpha$ when $v\to 1/2$. In order to guarantee a continuous limit,
it is natural that \emph{both} infinite sequences,
$\overline{\beta\gamma} := \beta\gamma\beta\gamma\dots$
and $\overline{\gamma\beta} := \gamma\beta\gamma\beta\dots$,
encode the Taub point $\mathrm{T}_\alpha$ at $v= 1/2$.
To ensure that the encoding map is well-defined, each Taub point should be encoded
by a single infinite sequence of symbols, and thus the two periodic sequences given
by $\overline{\beta\gamma}$ and $\overline{\gamma\beta}$ will be considered to be
in the same equivalence class in the space of infinite words $W_\infty$.
This assures that each chain with period 2 is encoded by a single infinite sequence
for $v>1/2$, and that each Taub point is encoded by the same sequence when $v=1/2$,
which results in continuity in $v$.

More precisely, we define an equivalence relation $\sim $ in $W_\infty$ as follows.
Two sequences $(a_k)_{k\in \mathbb{N}_0}$ and $(b_k)_{k\in \mathbb{N}_0}$ in $W_\infty$ are
equivalent $(a_k)_{k\in \mathbb{N}_0} \sim  (b_k)_{k\in \mathbb{N}_0}$ if, and only if
there is an $n\in \mathbb{N}_0$ such that $a_k=b_k$ for all $k=0,\ldots, n-1$
with $(a_k)_{k\geq n}= \overline{\beta\gamma}$ and $(b_k)_{k\geq n}= \overline{\gamma\beta}$
for some $\beta\neq\gamma\in \{1,2,3\}$. We then consider the quotient space endowed with the
quotient topology
\begin{equation}\label{Wtilde}
\tilde{W}_\infty := {W_\infty}/ \sim
\end{equation}
whose elements are the equivalence classes of sequences $(a_k)_{k\in
\mathbb{N}_0}$, denoted by $\left[ (a_k)_{k\in \mathbb{N}_0}\right]$.
An equivalence class thereby contains two or one element(s), if the
tail of $(a_k)_{k\in \mathbb{N}}$ is with period 2 or
not, respectively.

The above construction solves the issue of encoding the Taub points,
but it introduces a new problem for the heteroclinic chains with period 2:
for $v>1/2$ the two distinct points of the heteroclinic chain
with period 2 given by $\overline{\beta\gamma}$ and $\overline{\gamma\beta}$
have the same encoding in $\tilde{W}_\infty$, since
$\left[\overline{\beta\gamma}\right]=\left[\overline{\gamma\beta}\right]$.
To guarantee injectivity of the encoding map, we must relate
these two points by means of another equivalence relation. When $v>1/2$
we consider the quotient space
\begin{equation}\label{Ctilde}
\tilde{C}_v:=C_v / \sim,
\end{equation}
where two points $p,q\in \tilde{C}_v$ are equivalent
if, and only if, they are contained in the same heteroclinic
chain with period 2 of the Kasner circle
map ${\cal K}$.
To summarize: elements of $\tilde{C}_v$ are the equivalence classes of points $p$,
denoted by $\left[p\right]$, where an equivalence class contains two or one element(s)
if $p$ has period 2 or not, respectively.

The \emph{encoding map} is now defined as
\begin{equation}
\begin{array}{llll}
\tilde{h}: &D(\tilde{h}) &\rightarrow& \tilde{W}_\infty\\
& [p] & \mapsto & \tilde{h}([p]):=\left[(a_k)_{k \in \mathbb{N}}\right],
\end{array}
\end{equation}
where the domain is the set
\begin{equation}
D(\tilde{h}) = \begin{cases}
\tilde{C}_v & \text{ if } v>1/2,\\
\displaystyle\lim_{v\to 1/2} \tilde{C}_v  & \text{ if } v=1/2,
\end{cases}
\end{equation}
and where the sequence $(a_k)_{k \in \mathbb{N}_0}$ is built as follows:
\begin{enumerate}
\item If $\mathcal{K}^k(p)\neq \mathrm{T}_\alpha$ for all $k\in\mathbb{N}_0$,
then the symbol $a_k$ is uniquely defined for all $k\in\mathbb{N}_0$ as the
index of the open arc where $\mathcal{K}^k(p)$ lies, i.e. $\mathcal{K}^k(p)\in \mathrm{int}(A_{a_k})$.
\item If $\mathcal{K}^n(p)= \mathrm{T}_\alpha$ for some $n\in\mathbb{N}_0$,
where $n$ is the minimum of such values, let
\begin{equation}\label{hTilde}
h([p]):=\left[a_0\ldots a_{n-1} \overline{\beta\gamma}\right],
\end{equation}
where $(\alpha\beta\gamma)$ is a permutation of $(123)$.
\end{enumerate}

For $v>1/2$ the domain is the previously constructed Cantor set $C_v=C$,
which does not contain the Taub points, and thus case 2 above never happens.
Moreover, the  encoding $\tilde{h}$ coincides with $h$ in~\eqref{defofh}, except for
the heteroclinic chains with period 2: they consist of two distinct points
in $C_v$ which are identified in $\tilde{C}_v$ by the equivalence relation
in~\eqref{Ctilde}, and their two encodings in $W_\infty$ are identified in
$\tilde{W}_\infty$ by the equivalence relation in~\eqref{Wtilde} (for example
$\left[\overline{\beta\gamma}\right]=\left[\overline{\gamma\beta}\right]$).
Furthermore, it is only for $v=1/2$ that the Taub points can be reached, so that
case 2 above occurs, where a period 2 tail has been
added in order to obtain a continuous limit.

Hence, $\tilde{h}$ is a well-defined homeomorphism in the following commuting diagram,
\\
\centerline{
\xymatrix{
D(\tilde{h})  \ar[r]^{\mathcal{K}} \ar[d]_{\tilde{h}}  & D(\tilde{h})  \ar[d]^{\tilde{h}} \\
\tilde{W}_\infty \ar[r]_\sigma & \tilde{W}_\infty }}
where $\sigma$ is the shift to the right of sequences, which thereby establishes that
the map $\mathcal{K}$ is chaotic.

\end{appendix}

\subsection*{Acknowledgements}
JH was supported by the DFG collaborative research center SFB647 Space, Time,
Matter. PL was supported by FAPESP, 17/07882-0, 18/18703-1,
in addition to the encouraging, enduring and enlightening discussions
with Bernold Fiedler and Hauke Sprink. CU would like to thank the
Institut f\"ur Mathematik at Freie Universit\"at in Berlin for kind
hospitality.

\bibliographystyle{plain}

\begin{thebibliography}{90}

\bibitem{alhetal15} A.~Alho, J.~Hell and C.~Uggla.
\newblock Global dynamics and asymptotics for monomial scalar field potentials and perfect fluids.
\newblock {\it Class. Quant. Grav.} {\bf 32}, (14:145005), (2015).

\bibitem{andetal05} L.~Andersson, H.~van Elst, W.~C.~Lim and
C.~Uggla.
\newblock  Asymptotic Silence of Generic Singularities.
\newblock {\em Phys. Rev. Lett.} {\bf 94}, 051101, (2005).

\bibitem{Anzai51} H. Anzai.
\newblock Ergodic Skew Product Transformations on the Torus.
\newblock {\it Osaka\ Math.\ J.} {\bf 3}, 1, (1951).

\bibitem{ArJuSa17} A.~Arbieto, A.~Junqueira and B.~Santiago.
\newblock On Weakly Hyperbolic Iterated Function Systems.
\newblock {\it Bull.\ Braz.\ Math.\ Soc.} {\bf 48}, 111-140, (2017).

\bibitem{ashsam91} A.~Ashtekar and J.~Samuel.
\newblock Bianchi cosmologies: the role of spatial topology.
\newblock {\it Class. Quant. Grav.} {\bf 8}, 2191-2215, (1991).

\bibitem{baketal09} I.~Bakas, F.~Bourliot, D.~Lust and M.~Petropoulos.
\newblock Mixmaster universe in Ho\v{r}ava-Lifshitz gravity.
\newblock {\it Class.\ Quant.\ Grav.} {\bf 27}, 045013, (2010).

\bibitem{baketal10} I.~Bakas, F.~Bourliot, D.~L{\"u}st, and M.~Petropoulos.
\newblock Geometric flows in Ho\v{r}ava-Lifshitz gravity.
\newblock {\it J.\ High Energ.\ Phys.} {\bf 131}, (2010).

\bibitem{bkl70} V.~A.~Belinski\v{\i}, I.~M.~Khalatnikov, and E.~M.~Lifshitz.
\newblock Oscillatory approach to a singular point
in the relativistic cosmology.
\newblock {\it Adv.\ Phys.} {\bf 19}, 525, (1970).

\bibitem{bkl82} V.~A.~Belinski\v{\i}, I.~M.~Khalatnikov, and E.~M.~Lifshitz.
\newblock A general solution of the Einstein equations with a time singularity.
\newblock {\it Adv.\ Phys.} {\bf 31}, 639, (1982).

\bibitem{beg10} F.~B\'eguin.
\newblock Aperiodic oscillatory asymptotic behavior for some Bianchi
spacetimes.
\newblock {\it Class.\ Quant.\ Grav.} {\bf 27}, 185005, (2010).

\bibitem{dut19} F.~B\'eguin and T.~Dutilleul.
\newblock Chaotic dynamics of spatially homogeneous spacetimes.
\newblock {\it Commun. Math. Phys.} {\bf 399}, 737–927, (2023).

\bibitem{belres12} J.~Bellor{i}n and A.~Restuccia.
\newblock On the consistency of the Ho\v{r}ava theory.
\newblock {\it Int.\ J.\ Mod.\ Phys.\ D} {\bf 21}, 1250029, (2012).

\bibitem{BDV05} C.~Bonatti, L. J.~Diaz and M.~Viana.
\newblock {\em Dynamics Beyond Uniform Hyperbolicity}.
\newblock Springer-Verlag, (2005).

\bibitem{bog85} O. I.~Bogoyavlensky.
\newblock {\em Methods in the qualitative Theory of Dynamical Systems in
Astrophysics and Gas Dynamics}.
\newblock Springer-Verlag, (1985).

\bibitem{bre16} B.~Brehm.
\newblock Bianchi VIII and IX vacuum cosmologies: Almost every solution forms particle
horizons and converges to the Mixmaster attractor.
\newblock arXiv:1606.08058, (2016).

\bibitem{cal09} G.~Calcagni.
\newblock Cosmology of the Lifshitz universe.
\newblock {\it J. High Energy Physics} {\bf 2009}, 112, (2009).

\bibitem{caretal10}	S.~Carloni, M.~Chaichian, S.~Nojiri, S.~D.~Odintsov,
M.~Oksanen and A.~Tureanu.
\newblock Modified first-order Ho\v{r}ava-Lifshitz gravity: Hamiltonian analysis of the
general theory and accelerating FRW cosmology in power-law F(R) model.
\newblock {\it Phys.\ Rev.\ D} {\bf 82}, 065020, (2010).

\bibitem{chi72}
D.~M.~Chitr\'e.
\newblock {\em Ph.D.~Thesis}.
\newblock University of Maryland, (1972).

\bibitem{LappicyLessard} K. E. M. Church, O. Hénot, P. Lappicy, J.-P. Lessard and H. Sprink.
Periodic orbits in Hořava-Lifshitz cosmologies, \newblock {\it Gen. Rel. Grav.} {\bf 55},  2, (2023).

\bibitem{dametal03} T. Damour, M. Henneaux,  and H. Nicolai.
\newblock Cosmological billiards.
\newblock {\it Class.\ Quant.\ Grav.} {\bf 20}, R145, (2003).

\bibitem{dametal02} T.~Damour, M.~Henneaux, A.~D.~Rendall, and M.~Weaver.
\newblock Kasner like behavior for supercritical Einstein matter systems.
\newblock {\it Ann. Henri Poincar\'e} {\bf 3}, 1049, (2002).

\bibitem{damlec11} T.~Damour, O.~M.~Lecian.
\newblock Statistical Properties of Cosmological Billiards.
\newblock {\it Phys.\ Rev.\ D} {\bf 83}, 044038, (2011).

\bibitem{Dev03} R.~Devaney.
\newblock {\em An introduction to chaotic dynamical systems}.
\newblock Westview press, Second Edition, (2003).

\bibitem{elsugg97} H.~van~Elst and C.~Uggla.
\newblock General relativistic 1+3 orthonormal frame approach revisited
\newblock {\it Class.\ Quant.\ Grav.} {\bf 14}, 2673, (1997).

\bibitem{Falc13} K.~Falconer.
\newblock {\em Fractal Geometry: Mathematical Foundations and Applications}.
\newblock John Wiley and Sons, Second Edition, (2013).

\bibitem{FiLi02} B. Fiedler and S. Liebscher. \textit{Bifurcations without parameters: Some ODE and PDE examples.} T.-T. Li et al. (eds.), Proc. International Congress of Mathematicians, ICM 2002, Vol. III: Invited Lectures. Higher Education Press, Beijing, 305-316 (2002)

\bibitem{fos98} S.~Foster.
\newblock {\em Scalar Field Cosmological Models With Hard Potential Walls}.
\newblock arXiv:gr-qc/9806113, (1998).

\bibitem{gar04} D.~Garfinkle.
\newblock  Numerical simulations of generic singularities.
\newblock {\it Phys.\ Rev.\ Lett.} {\bf 93}, 161101, (2004).

\bibitem{giakam17} L.~Giani and A.~Y.~Kamenshchik.
\newblock Ho{\v{r}}ava{\textendash}Lifshitz gravity inspired Bianchi-{II} cosmology and the mixmaster universe.
\newblock  {\it Class.\ Quant.\ Grav.} {\bf 34}, 085007, (2017).

\bibitem{giukie94} D.~Giulini and C.~Kiefer.
\newblock Wheeler-DeWitt metric and the attractivity of gravity.
\newblock {\it Phys.\ Lett.\ A} {\bf 193}, 21, (1994).


\bibitem{grebogi83} C.~Grebogi, E.~Ott, S.~Pelikanc and J.~Yorke.
\newblock  Crises, sudden changes in chaotic attractors, and transient chaos.
\newblock {\it Physica\ D } {\bf 7}, 181-200, (1983).


\bibitem{grebogi82} C.~Grebogi, E.~Ott and J.~Yorke.
\newblock  Chaotic Attractors in Crisis.
\newblock {\it Phys.\ Rev.\ Lett. } {\bf 48}, 1507, (1982).

\bibitem{heiugg09b} J.~M.~Heinzle and C.~Uggla.
\newblock A new proof of the Bianchi type IX attractor theorem.
\newblock {\it Class.\ Quant.\ Grav.} {\bf 26}, 075015, (2009).

\bibitem{heiugg09a} J.~M.~Heinzle and C.~Uggla.
\newblock Mixmaster: Fact and Belief.
\newblock {\it Class.\ Quant.\ Grav.} {\bf 26}, 075016, (2009).

\bibitem{heiugg06} J.~M.~Heinzle and C.~Uggla.
\newblock Dynamics of the spatially homogeneous Bianchi type I Einstein-Vlasov equations.
\newblock {\it Class.\ Quant.\ Grav.} {\bf 23}, 3463, (2009).

\bibitem{heiugg10} J.~M. Heinzle and C.~Uggla.
\newblock Monotonic functions in Bianchi models: why they exist and how to find
them.
\newblock {\it Class.\ Quant.\ Grav.} {\bf 27}, 015009, (2010).

\bibitem{heiugg13} J.~M. Heinzle and C.~Uggla.
\newblock Spike statistics.
\newblock {\it Gen.\ Rel.\ Grav.} {\bf 45}, 939, (2013). 

\bibitem{heietal12} J.~M. Heinzle, C. Uggla and W.~C.~Lim.
\newblock Spike oscillations.
\newblock {\it Phys.\ Rev.\ D} {\bf 86}, 104049, (2012).

\bibitem{heietal09} J.~M.~Heinzle, C. Uggla, and N. R\"ohr.
\newblock The cosmological billiard attractor.
\newblock {\em Adv.\ Theor.\ Math.\ Phys.} {\bf 13}, 293-407, (2009).

\bibitem{henetal10} M.~Henneaux, A.~Kleinschmidt and G.~L.~G\'omez.
\newblock A dynamical inconsistency of Horava gravity.
\newblock {\it Phys.\ Rev.\ D} {\bf 81}, 064002, (2010).

\bibitem{hor09a} P.~Ho\v{r}ava.
\newblock Membranes at Quantum Criticality.
\newblock {\it J.\ High Energy Phys.} {\bf 0903}, 020, (2009).

\bibitem{hor09b} P.~Ho\v{r}ava.
\newblock Quantum Gravity at a Lifshitz Point.
\newblock {\it Phys.\ Rev.\ D} {\bf 79}, 084008, (2009).

\bibitem{Hutch81} J.~Hutchinson.
\newblock Fractals and Self Similarity.
\newblock {\it Indiana\ Univ.\ Math.\ J.} {\bf 30}, 713-747, (1981).

\bibitem{jan79} R. T. Jantzen.
\newblock The dynamical degrees of freedom in spatially homogeneous cosmology.
\newblock {\em Commun.\ Math.\ Phys.} {\bf 64}, 211–232, (1979).

\bibitem{jan01} R. T. Jantzen.
\newblock {\em Spatially Homogeneous Dynamics: A Unified
Picture}.
\newblock in {\it Proc.\ Int.\ Sch.\ Phys.\ ``E. Fermi" Course LXXXVI on
``Gamov Cosmology"\/}, R. Ruffini, F. Melchiorri, Eds. North Holland,
Amsterdam, (1987) and in  {\it Cosmology of the Early Universe\/}, R. Ruffini,
L.Z. Fang, Eds., World Scientific, Singapore, (1984).

\bibitem{khaetal85} I.~M.~Khalatnikov, E.~M.~Lifshitz, K.~M.~Khanin, L.~N.~Shur, and Y.~G.~Sinai.
\newblock On the stochasticity in relativistic cosmology.
\newblock {\it J.\ Stat.\ Phys.} {\bf 38}, 97, (1985). 

\bibitem{Katok95} A.~Katok and L.~Mendoza.
\newblock Dynamical Systems with Non-Uniformly Hyperbolic Behavior.
\newblock {\it Supplement to "Introduction to the Modern Theory of Dynamical Systems", Encyclopedia of Mathematics and its Applications} {\bf 54}, Cambridge University Press, Cambridge, (1995). 

\bibitem{kirkof09} E.~Kiritsis and G.~Kofinas.
\newblock Ho\v{r}ava-Lifshitz cosmology.
\newblock {\it Nuclear Physics B} {\bf 821}, 467, (2009)

\bibitem{LappicyDaniel} 
\newblock P. Lappicy and V. H. Daniel.
\newblock Chaos in spatially homogeneous Hořava-Lifshitz subcritical cosmologies.
\newblock {\it Class. Quant. Grav.} \textbf{39} 13, 135017, (2022).

\bibitem{leenun18} H.~Lee and E.~Nungesser.
\newblock Self-Similarity Breaking of Cosmological Solutions with Collisionless Matter.
\newblock {\it Ann. Henri Poincar\'e} {\bf 19}, 2137–2155, (2018).
	
\bibitem{leeetal20} H.~Lee, E.~Nungesser, and P.~Tod.
\newblock On the future of solutions to the massless Einstein–Vlasov
system in a Bianchi I cosmology.
\newblock {\it Gen.\ Rel.\ Grav.} {\bf 52}, 48, (2020).

\bibitem{leopal19} G.~Leon and A.~Paliathanasis.
\newblock Extended phase-space analysis of the Ho\v{r}ava-Lifshitz cosmology.
\newblock {\it Eur. Phys. J. C} {\bf 79} 746 (2019).

\bibitem{LiYorke75} Y. T.~Li and J. A.~Yorke.
\newblock Period 3 Implies Chaos.
\newblock {\it American\ Math.\ Month.} {\bf 82}, 985-992, (1975).

\bibitem{lieetal10} S.~Liebscher, J.~H\"arterich, K.~Webster, and M.~Georgi.
\newblock Ancient Dynamics in Bianchi Models: Approach to Periodic
Cycles.
\newblock {\it Commun.\ Math.\ Phys.} {\bf 305}, 59, (2011).

\bibitem{lieetal12} S.~Liebscher, A.~D.~Rendall, and S.~B.~Tchapnda.
\newblock Oscillatory singularities in Bianchi models with magnetic fields.
\newblock {\it Ann. Henri Poincar\'e} {\bf 14}, 1043--1075, (2013)

\bibitem{Li15} S. Liebscher. \textit{Bifurcation without Parameters}. Lecture Notes in Mathematics, Springer Cham. (2015)

\bibitem{lk63} E.~M.~Lifshitz and I.~M.~Khalatnikov.
\newblock Investigations in relativistic cosmology.
\newblock {\it Adv.\ Phys.} {\bf 12}, 185, (1963).


\bibitem{limetal06} W.~C.~Lim, C.~Uggla and J.~Wainwright.
\newblock  Asymptotic Silence-breaking Singularities.
\newblock {\it Class.\ Quant.\ Grav.} {\bf 23}, 2607, (2006). 

\bibitem{lim08} W.~C.~Lim.
\newblock New explicit spike solution -- non-local component of the generalized
Mixmaster attractor.
\newblock {\it Class.\ Quant.\ Grav.} {\bf 25}, 045014, (2008).

\bibitem{lim15} W.~C.~Lim.
\newblock Non-orthogonally transitive G2 spike solution.
\newblock {\it Class.\ Quant.\ Grav.} {\bf 32} 16, 162001, (2015).

\bibitem{lim21} W.~C.~Lim.
\newblock Cylindrical spikes.
\newblock {\it Class.\ Quant.\ Grav.} {\bf 38} 075029, (2021).

\bibitem{lim22} W.~C.~Lim.
\newblock Numerical confirmations of joint spike transitions in G2 cosmologies.
\newblock {\it Class.\ Quant.\ Grav.} {\bf 39} 065001, (2022).

\bibitem{lolpir14} R.~Loll and L.~Pires.
\newblock Role of the extra coupling in the kinetic term in Ho\v{r}ava-Lifshitz gravity.
\newblock {\it Phys.\ Rev.\ D} {\bf 90}, 124050, (2014).

\bibitem{Van09} P.~V.~Loocke.
\newblock The Art of Iterated Function Systems with Expanding Functions.
\newblock In {\em Proceedings of Bridges 2009:
Mathematics, Music, Art, Architecture, Culture}, eds.
C. Kaplan and R. Sarhangi. Tarquin Publications, 55--62, (2009).

\bibitem{ma88} P.~K.-H.~Ma.
\newblock A Dynamical Systems Approach to the Oscillatory Singularity in
Cosmology.
\newblock M.~Math.\ thesis, University of Waterloo, unpublished (1988).

\bibitem{mawai92} P.~K.-H.~Ma and J.~Wainwright.
\newblock A dynamical systems approach to the oscillatory singularity in
Bianchi cosmologies.
\newblock In {\em Relativity Today}, ed. Z.~Perjes. Nova Science Publishers, (1992).

\bibitem{Mac73} M.~A.~H.~MacCallum.
\newblock {\em Cosmological models from a geometric point ov view}.
\newblock In {\em Carg\`ese Lectures in Physics}, Vol. 6 (1973) ed. E.~Schatzman.
\newblock Gordon \& Breach.

\bibitem{MatDiaz} E.~Matias and L. J.~Diaz.
\newblock Non-hyperbolic Iterated Function Systems: semifractals and the chaos game.
\newblock {\it Fund.\ Math.} {\bf 250}, 21-39, (2020).

\bibitem{Milnor85} J.~Milnor.
\newblock On the Concept of Attractor.
\newblock {\it Commun.\ Math.\ Phys.} {\bf 99}, 177-195, (1985).

\bibitem{mis69a} C.~W.~Misner.
\newblock Mixmaster universe.
\newblock {\it Phys.\ Rev.\ Lett.} {\bf 22}, 1071, (1969).

\bibitem{mis69b} C.~W.~Misner.
\newblock Quantum cosmology I.
\newblock {\it Phys.\ Rev.} {\bf 186}, 1319, (1969).

\bibitem{grav73}
C.~W.~Misner, K.~S.~Thorne, and J.~A.~Wheeler.
\newblock {\em Gravitation}.
\newblock W.~H.~Freeman and Company, San Francisco, (1973).

\bibitem{misetal11} Y.~Misonoh, K.~Maeda, and T.~Kobayashi.
\newblock Oscillating Bianchi IX universe in Ho\v{r}ava-Lifshitz gravity.
\newblock {\it Phys.\ Rev.\ D} {\bf 84}, 064030, (2011).

\bibitem{monetal08}
G. Montani, M. V. Battisti, R. Benini, and G. Imponente.
\newblock Classical and Quantum Features of the Mixmaster Singularity.
\newblock {\it Int.\ J.\ Mod.\ Phys. A} {\bf 23}, 2353, (2008).

\bibitem{muk10} S.~Mukohyama.
\newblock Ho\v{r}ava-Lifshitz Cosmology: A Review.
\newblock {\it Class.\ Quant.\ Grav.} {\bf 27} 223101 (2010).

\bibitem{myuetal10a} Y.~S.~Myung, Y.-W.~Kim, W.-S.~Son and Y.-J.~Park.
\newblock Universe in the z=3 deformed Ho\v{r}ava-Lifshitz gravity.
\newblock {\it J.\ High Energ.\ Phys.} {\bf 1003} 085 (2010).

\bibitem{myuetal10b} Y.~S.~Myung, Y.-W.~Kim, W.-S.~Son and Y.-J.~Park.
\newblock Chaotic universe in the z = 2 Ho\v{r}ava-Lifshitz gravity.
\newblock {\it Phys.\ Rev.\ D} {\bf 82} 043506 (2010).

\bibitem{Pesin96} Y.~Pesin and H.~Weiss.
\newblock On the Dimension of Deterministic and
Random Cantor-like Sets, Symbolic Dynamics,
and the Eckmann-Ruelle Conjecture.
\newblock {\it Commun. Math. Phys.} {\bf 182}, pp. 105-153, (1996).

\bibitem{PruSan88} P.~Prusinkiewicz and G.~Sandness.
\newblock Attractors and repellers of Koch curves.
\newblock In {\em Proceedings of Graphics Interface '88}, pp. 217--228, (1988).

\bibitem{reitru10} M.~Reiterer and E.~Trubowitz.
\newblock The BKL Conjectures for Spatially Homogeneous
Spacetimes.
\newblock arXiv:1005.4908v2 (2010).

\bibitem{ren97} A.~D.~Rendall.
\newblock Global dynamics of the mixmaster model.
\newblock {\it Class. Quantum Grav.} {\bf 14} 2341 (1997). 

\bibitem{ren07} A.~D.~Rendall.
\newblock Late-time oscillatory behaviour for self-gravitating scalar fields.
\newblock {\it Class. Quant. Grav.} {\bf 24}, 667-677, (2007).

\bibitem{rin00} H.~Ringstr\"om.
\newblock Curvature blow up in Bianchi VIII and IX vacuum
spacetimes.
\newblock {\it Class.\ Quant.\ Grav.} {\bf 17} 713 (2000). 

\bibitem{rin01} H.~Ringstr\"om.
\newblock The Bianchi IX attractor.
\newblock {\it Ann. Henri Poincar\'e} {\bf 2} 405 (2001).

\bibitem{ring01} H.~Ringstr\"om.
\newblock The future asymptotics of Bianchi VIII vacuum
solutions.
\newblock {\it Class. Quantum Grav.} {\bf 18} 3791-3823 (2001).

\bibitem{ring03} H.~Ringstr\"om.
\newblock Future asymptotic expansions of Bianchi VIII vacuum metrics.
\newblock {\it Class. Quantum Grav.} {\bf 20} 1943-1989 (2003).

\bibitem{rosetal90a} K. Rosquist, C. Uggla, and R. T. Jantzen.
\newblock Extended dynamics and symmetries in vacuum Bianchi
cosmologies.
\newblock {\it Class.\ Quant.\ Grav.} {\bf 7} 611 (1990).

\bibitem{rosetal90b} K. Rosquist, C. Uggla, and R. T. Jantzen.
\newblock Extended dynamics and symmetries in perfect fluid Bianchi
cosmologies.
\newblock {\it Class.\ Quant.\ Grav.} {\bf 7} 625 (1990).

\bibitem{rohugg05} N.~R\"ohr and C.~Uggla.
\newblock  Conformal regularization of Einstein's field equations.
\newblock {\it Class.\ Quant.\ Grav.} {\bf 22} 3775 (2005). 

\bibitem{rug94} S.~E.~Rugh.
\newblock Chaos in the Einstein equations -- characterization and importance.
\newblock In {\em Deterministic Chaos in General Relativity}, eds. D.~Hobill,
A.~Burd and A.~Coley. Plenum Press, (1994).

\bibitem{Shark64} A.~Sharkovskii.
\newblock Co-existence of cycles of a continuous mapping of the line into itself.
\newblock {\it Ukrainian\ Math.\ J.} {\bf 16}, 61-71, (1964).

\bibitem{sot11} T.~P.~Sotiriou.
\newblock Ho\v{r}ava-Lifshitz gravity: a status report.
\newblock {\it J. of Physics: Conference Series} {\bf 283} 012034 (2011).

\bibitem{Tak94} F. Takens.
\newblock Heteroclinic attractors: Time averages and moduli of topological conjugacy.
\newblock {\it Bull.\ Br.\ Math.\ Soc.} {\bf 25}, 107-120, (1994).

\bibitem{Taub51} A.~H.~Taub.
\newblock Empty Space-Times Admitting a Three Parameter Group of Motions.
\newblock {\it Ann.\ Math.} {\bf 53}, 472-490, (1951).

\bibitem{ugg88} C.~Uggla.
\newblock Asymptotic cosmological solutions: Orthogonal Bianchi Type II models.
\newblock {\it Class. Quant. Grav.} {\bf 6}, 383-396, (1989).


\bibitem{ugg13a} C. Uggla.
\newblock Recent developments concerning generic spacelike singularities.
\newblock {\it Gen.\ Rel.\ Grav.} {\bf 45}, 1669, (2013).

\bibitem{ugg13b} C.~Uggla.
\newblock Spacetime Singularities: Recent Developments.
\newblock {\it Int.\ J.\ Mod.\ Phys.\ D} {\bf 22}, 1330002, (2013). 

\bibitem{uggetal03} C.~Uggla, H.~van Elst, J.~Wainwright and G. F. R.~Ellis.
\newblock The past attractor in inhomogeneous cosmology.
\newblock {\it Phys.\ Rev.\ D} {\bf 68}, 103502, (2003).

\bibitem{uggetal91} C.~Uggla R.~T.~Jantzen, K.~Rosquist and K.~H.~Zur-M{\"u}hlen.
\newblock Remarks about late stage homogeneous cosmological dynamics.
\newblock {\it Gen. Rel. Grav.} {\bf 23}, 947, (1991).

\bibitem{waietal99} J.~Wainwright, M.~J.~Hancock and C.~Uggla.
\newblock Asymptotic self-similarity breaking at late times in cosmology.
\newblock {\it Class. Quant. Grav.} {\bf 16}, 2577-2598, (1999).

\bibitem{waihsu89} J. Wainwright and L. Hsu.
\newblock A dynamical systems approach to Bianchi cosmologies:
orthogonal models of class A.
\newblock {\it Class. Quant. Grav.} {\bf 6}, 1409-1431, (1989).

\bibitem{waiell97} J. Wainwright and G. F. R. Ellis.
\newblock {\em Dynamical systems in cosmology}.
\newblock Cambridge University Press, Cambridge, (1997).


\end{thebibliography}

\end{document}